\documentclass[11pt]{article}
\synctex=1

\usepackage{amssymb,amsfonts,amsmath,amsthm,amscd,dsfont,mathrsfs,bm}
\usepackage{graphicx,float,psfrag,epsfig}
\usepackage{wrapfig}
\usepackage{hyperref}
\usepackage{algorithm,algorithmic}
\usepackage{booktabs}
\usepackage{color}
\usepackage{xcolor}

\DeclareMathAlphabet{\mathpzc}{OT1}{pzc}{m}{it}


\voffset 0in
\hoffset 0in
\topmargin -0.4in  
\headsep 0.4in  
\textheight 8.5in
\oddsidemargin 0.in  
\evensidemargin 0.in  
\textwidth 6.5in

\newtheorem{propo}{Proposition}[]
\newtheorem{lemma}{Lemma}

\newtheorem{coro}[propo]{Corollary}
\newtheorem{thm}[propo]{Theorem}

\DeclareMathOperator*{\argmin}{arg\,min}

\def\cE{{\cal E}}

\def\cM{{\cal M}}

\def\tD{{\tilde D}}

\def\hp{{\hat p}}
\def\hq{{\hat q}}

\def\hd{{\widehat d}}

\def\var{{\rm Var}}
\def\Poi{{\rm Poi}}

\def\cM{{\cal M}}

\def\cQ{{\cal Q}}

\def\reals{{\mathbb R}}
\def\prob{{\mathbb P}}
\def\Z{\mathbb Z}
\def\E{\mathbb E}
\def\Var{{\rm Var}\,}
\def\Poi{{\rm Poi}\,}

\def\ind{{\mathbb I}}

\def\TV{{\rm TV}}
\def\Bmu{{\bm{ \mu}}}

\def\tnu{{\tilde \nu}}
\def\cD{{\cal D}}
\def\cQ{{\cal Q}}

\begin{document}
\title{Minimax Rates of Estimating Approximate Differential Privacy}
\author{Xiyang Liu and Sewoong Oh\thanks{Emails: xiyangl@cs.washington.edu and sewoong@cs.washington.edu}\\
Allen School of Computer Science and Engineering, University of Washington}
\date{}

\maketitle

%
%
\begin{abstract} 

Differential privacy has become a  widely accepted notion of privacy, 
leading to the introduction and deployment of numerous 
privatization mechanisms.   
However, ensuring the privacy guarantee is an error-prone process, 
both in designing mechanisms and in implementing those mechanisms.  
Both types of errors will be greatly reduced, 
if we have a data-driven approach to verify privacy guarantees, 
from a black-box access to a mechanism.  
We pose it as a property estimation problem, 
and study the fundamental trade-offs involved in 
the accuracy in estimated privacy guarantees and the number of samples required. 
We introduce a novel estimator that uses polynomial approximation of a carefully chosen degree to 
optimally trade-off bias and variance. 
With $n$ samples, 
we show that this estimator achieves performance of 
a straightforward plug-in estimator with $n \ln n$ samples, a phenomenon referred to as effective sample size amplification. 
The minimax optimality of the proposed estimator is proved by comparing it to 
a matching fundamental lower bound. 

\end{abstract} 

%
%


\section{Introduction}
\label{sec:intro}

Differential privacy is gaining popularity as an agreed upon measure of privacy leakage, widely used by 
the government to publish Census statistics \cite{census}, 
Google to aggregate user's choices in web-browser features \cite{google1,google2}, 
Apple to aggregate mobile user data \cite{apple1}, 
and smart meters in telemetry \cite{telemetry}.
As increasing number of  privatization mechanisms are introduced and deployed in the wild, 
it is critical to have countermeasures to check the fidelity of those mechanisms. 
Such techniques will allow us to hold accountable the deployment of privatization mechanisms 
if the claimed privacy guarantees are not met, and 
find and fix bugs in implementations of those mechanisms. 

A user-friendly tool for checking privacy guarantees    
is necessary for several reasons. 
Writing a program for a privatization mechanism is error-prone, 
as it involves complex probabilistic computations. 
Even with  customized languages for 
differential privacy, checking the end-to-end 
privacy guarantee of an implementation remains challenging \cite{BKO12,WDW19}. 
Furthermore, even when the implementation is error-free, 
there have been several cases where the mechanism designers have made errors in calculating the 
privacy guarantees, and falsely reported higher level of privacy \cite{LSL17,CM15}. 
This is evidence of an alarming issue that 
analytically checking the proof of a privacy guarantee 
is a challenging process even for an expert. 
An automated and data-driven algorithm for checking privacy guarantees will 
significantly reduce such errors in the implementation and the design. 
On other cases,  we are given very limited information about how 
the mechanism works, like Apple's white paper \cite{apple1}. 
The users are left to trust the claimed privacy guarantees. 

To address these issues, we propose 
a data-driven approach to 
estimate how much privacy is guaranteed, 
from a black-box access to a purportedly private mechanism. 
Our approach is based on 
an optimal polynomial approximation 
that gracefully trades off  bias and variance. 
We study the fundamental limit of how many samples are 
necessary to achieve a desired level of accuracy in the estimation, 
and show that the proposed approach achieves this fundamental bound.  









\bigskip\noindent{\bf Problem formulation.}
Differential privacy (DP) introduced in \cite{Dwo11} is a formal mathematical notion of privacy that 
is widespread, due to several key advantages.  
It gives one of the strongest guarantees, 
allows for precise mathematical analyses, 
and is intuitive to explain even to non-technical end-users. 
When accessing a database through a query, we say the query output is private if 
the output did not reveal whether a particular person's entry is in the database or not. 
Formally, we say two databases are {\em neighboring} if they only differ in one entry (one row in a table, for example).
Let $P_{\cQ,\cD}$ denote the distribution of the randomized output to a query $\cQ$ on a database $\cD$. 
We consider discrete valued mechanisms taking one of $S$ values, 
i.e.~the response to a query is in $[S]=\{1,\ldots,S\}$ for some integer $S$.
We say a mechanism guarantees $(\varepsilon,\delta)$-DP, \cite{Dwo11}, if the following holds 
\begin{eqnarray}
	\label{eq:defDP}
	P_{\cQ,\cD}(E) \;\leq\; e^\varepsilon P_{\cQ,\cD'}(E) +\delta \;, 
\end{eqnarray}
for some $\varepsilon \geq 0$, $\delta\in[0,1]$, and all subset $E\subseteq [S]$ and 
for all neighboring databases $\cD$ and $\cD'$. 
When $\delta=0$, $(\varepsilon,0)$-DP is referred to as (pure) differential privacy, and 
the general case of $\delta\geq0$ is referred to as {\em approximate} differential privacy. 
For pure DP, 
the above condition can be relaxed as 
\begin{eqnarray}
	\label{eq:defDP_pure}
	P_{\cQ,\cD}(x) \;\leq\; e^\varepsilon P_{\cQ,\cD'}(x) \;, 
\end{eqnarray}
for all {\em output symbol} $x \in [S]$, and 
for all neighboring databases $\cD$ and $\cD'$. 
This condition can now be checked, one symbol $x$ at a time from $[S]$, without having to enumerate all subsets $E\subseteq [S]$. 
This naturally leads to the following  algorithm. 

For a query $\cQ$ and two neighboring databases $\cD$ and $\cD'$ of interest, 
we need to verify the condition in Eq.~\eqref{eq:defDP_pure}. 
As we only have a black-box access to the mechanism, 
we collect $n$ responses from the mechanism on the two databases.  
We check the condition on the empirical distribution of those collected samples, for each $x\in[S]$.  
If it is violated for any $x$, we assert the mechanism to be not $(\varepsilon,0)$-DP and present $x$ as an evidence. 
Focusing only on pure DP, \cite{DWW18} proposed an approach similar to this, 
where they also give guidelines for choosing the databases  $\cD$ and $\cD'$ to test. 
However, their approach is only evaluated empirically, 
 no statistical analysis is provided, 
and a more general case of 
approximate DP is left as an open question, as  
the condition in Eq.~\eqref{eq:defDP} 
cannot be decoupled like Eq.~\eqref{eq:defDP_pure} when $\delta>0$. 

We propose an alternative approach from first principles 
to check the general approximate DP guarantees, and prove its minimax optimality.  
%
Given two probability measures $P=[p_1, \ldots,p_S]$ and $Q=[q_1,\ldots,q_S]$ over $[S] = \{1,\ldots,S\}$, 
we define the following {\em approximate DP divergence} with respect to $\varepsilon$ as 
\begin{eqnarray}
	d_\varepsilon(P\| Q) \;\; \triangleq \;\;  \sum_{i=1}^S [p_i - e^\varepsilon q_i]^+ \;\; = \;\; \E_{x\sim P}\Big[  \,\Big[ 1 - e^\varepsilon \frac{q_x}{p_x} \Big]^+ \, \Big] \;. 
	\label{eq:def_d}
\end{eqnarray} 
where $[x]^+=\max\{x,0\}$. 
The last representation indicates that this metric falls under a broader class of 
metrics known as $f$-divergences, with a special choice of $f(x)=[ 1 - e^\varepsilon x]^+$. 
From the definition of DP, it follows that a mechanism is $(\varepsilon,\delta)$-DP if and only if 
$d_\varepsilon( P_{\cQ,\cD}\|P_{\cQ,\cD'})  \leq \delta $ for all neighboring databases $\cD$ and $\cD'$. 
We propose estimating this divergence 
$d_\varepsilon(P_{\cQ,\cD}\|P_{\cQ,\cD'})$ from samples, and comparing it to the target $\delta$. 
This only requires number of operations scaling as $S\ln n$ where $n$ is the sample size.






In this paper, we suppose there is a specific query $\cQ$ of interest, 
and two neighboring databases $\cD$ and $\cD'$ have been already selected 
either by a statistician who has some side information on the structure of the mechanism 
or by some algorithm, such as those from  \cite{DWW18}. 
Without exploiting the structure 
(such as symmetry, exchangeability, or invariance to the entries of the database conditioned on the true output of the query), 
one cannot avoid having to check all possible combinations of neighboring databases. 
As a remedy, \cite{GM18} proposes checking randomly selected databases. 
This in turn ensures  a relaxed notion of privacy known as random differential privacy. 
Similarly, \cite{DJR13} proposed checking the {\em typical} databases, 
assuming there we have access to a prior  distribution over the databases. 
Our framework can be seamlessly incorporated with such higher-level routines to select databases. 
 
\bigskip
\noindent{\bf Contributions.} 
We study the problem of estimating the approximate differential privacy guaranteed by 
a mechanism, from a black-box access where we can sample from 
the mechanism output given a query $\cQ$, a database $\cD$, and a target $(\varepsilon,\delta)$. 
We first show that a straightforward  plug-in estimator of $d_\varepsilon(P\|Q)$ 
achieves mean squared error scaling as $(e^\varepsilon S )/ n$, where 
$S$ is the size of the alphabet and  $n$ is the number of samples used (Section~\ref{sec:Qmle}). 

In the regime where we fix $S$ and increase the sample size, this achieves the parametric rate of $1/n$, and cannot be improved upon. 
However, in many cases of practical interest where $S$ is comparable to $n$, 
we show that this can be improve upon with a more sophisticated estimator.
To this end, we introduce a novel estimator of $d_\varepsilon(P\|Q)$. 
The main idea is to identify the regimes of non-smoothness in $[p_i-e^\varepsilon q_i]^+$
where the plug-in  estimator has a large bias. 
We replace it by the uniformly best polynomial approximation of the non-smooth regime of the function, 
and estimate those polynomial from samples. 
By selecting appropriate  degree of the polynomial, 
we can optimally trade off the bias and variance. 
We provide an upper bound  on the error scaling as $(e^\varepsilon S )/ (n \ln n)$, 
when $S$ and $n$ are comparable. 
We prove that this is the best one can hope for, by providing a minimax lower bound that matches. 

We first show this for the case when we know $P$ and sample from $Q$ in Section~\ref{sec:Q}, 
to lay out the main technical insights while maintaining simple exposition. 
Then, we consider the practical scenario where both $P$ and $Q$ are accessed via samples, and 
provide an minimax optimal estimator in Section~\ref{sec:main}. 
This phenomenon is referred to as {\em effective sample size amplification}; 
one can achieve with $n$ samples a desired error rate, that would require $n \ln n$  samples for 
a plug-in estimator. 
We present numerical experiments supporting our theoretical predictions in 
Section~\ref{sec:exp}, and 
use our estimator to identify those real-world and artificial mechanisms that have 
incorrectly reported the privacy guarantees.

\bigskip
\noindent{\bf Related work.}
Formally guaranteeing differential privacy is a challenging and error-prone task. 
Principled approaches have been introduced to derive the end-to-end privacy loss 
of a software program that uses multiple differentially private accesses to the data, 
including SQL-based languages \cite{Mcs09}, 
higher-order functional languages \cite{RP10}, and 
imperative languages \cite{CGL11}. 
A unifying recipe of these approaches is to 
combine standard mechanisms like Laplacian, exponential, and Gaussian mechanisms 
\cite{Dwo11,MT07,GKOV15},
and calculate the end-to-end privacy loss using the composition theorem \cite{KOV17}. 
To extend these approaches to 
a more general notion of approximate differential privacy 
and allow non-standard components, principled approaches have been proposed 
in \cite{BKO12,WDW19}. The main idea is to perform fine-grained 
reasoning about  the complex probabilistic computations involved. 
These existing approaches have been preemptive measures 
 aimed at automated verification of privacy of the source code. 
Instead, we seek a post-hoc measure of 
estimating the privacy guarantee, given a black-box access to a privatization mechanism and not the source code. 

 Several published mechanisms have mistakes in the privacy guarantees. 
 These are variations of a popular mechanism known as Sparse Vector Technique (SVT). 
The original SVT first generates a random threshold. 
To answer a sequence of queries, it adds a random noise to each query output, 
and respond whether this is higher than the threshold (true) or not (false). 
This process continues until either all queries are answered, each with a privatized Boolean response or 
if the number of trues meets a pre-defined bound. 
Several variations violate claimed DP guarantees. 
\cite{SCM14} proposes 
variant of SVT with no noise adding and no bound on the number of trues, 
which does not satisfy DP for any values of $\varepsilon$. 
\cite{CXZ15} makes a similar false claim, proposing a version of SVT with no bound on the number of trues.
\cite{LC14} adds a smaller noise independent of the bound on the trues. 
\cite{Rot11} outputs  the actual noisy vector instead of the Boolean vector.

A formal investigation into verifying DP guarantees of a given mechanism was addressed in \cite{DJR13}. 
DP condition is translated into a certain 
Lipschitz condition on $P_{\cQ,\cD}$ over the databases $\cD$, 
and a Lipschitz tester is proposed to check the conditions. 
However, this approach is not data driven, as it requires the knowledge of the  distribution $P_{\cQ,\cD}$ 
and no sampling of the mechanism outputs is involved. 
 \cite{GM18} analyzes tradeoffs involved in testing DP guarantees. 
It is shown that one cannot get accurate testing without sacrificing the privacy of the databases used in the testing. 
Hence, when testing DP guarantees, one should not use databases that contain sensitive data. 
We compare some of the techniques involved in Section~\ref{sec:Qmle}.

Our techniques are inspired by 
a long line of research in property estimation of a distribution from samples. 
In particular, there has been significant recent advances 
for high-dimensional estimation problems, 
starting from entropy estimation for discrete random variables in 
\cite{VV13,JKYT15,WY16}. 
 The general recipe is to identify the regime where the property to be estimated is not smooth, 
 and use functional approximation to estimate a smoothed version of the property. 
 This has been widely successful in support recovery \cite{WY19}, 
 density estimation with $\ell_1$ loss \cite{HJW15}, 
and estimating Renyi entropy \cite{AOS17}. 
more recently, this technique has been applied to 
estimate certain divergences between two unknown distributions, for 
Kullback-Leibler divergence \cite{HJW16}, 
total variation distance \cite{JYT18}, and 
identity testing \cite{DKW18}. 
With carefully designed estimators, these approximation-based approaches can 
achieve improvement over typical parametric rate of $1/n$ error rate, 
sometimes referred to as 
{\em effective sample size amplification}. 

\bigskip\noindent{\bf Notations.} 
We let the alphabet of a discrete distribution be 
$[S] =\{1,\ldots, S\}$ for some positive integer $S$ denoting the size of the alphabet. 
We let $\cM_S$ denote the set of probability distributions over $[S]$. 
We use $f(n) \gtrsim g(n)$ to denote that $\sup_n f(n)/g(n) \geq C$ for some constant $C$, and  
 $f(n) \lesssim g(n)$ is analogously defined.  $f(n)\asymp g(n)$ denotes that $f(n) \gtrsim g(n)$ and 
 $f(n) \lesssim g(n)$.


%
%
\section{Estimating differential privacy guarantees from samples}
\label{sec:main1}

We want to estimate $d_\varepsilon(P\|Q)$ from a blackbox access to 
the mechanism outputs accessing two databases, i.e.~$P=P_{\cQ,\cD}$ and $Q=P_{\cQ,\cD'}$. 
We first consider a simpler case, where $P=[p_1,\ldots,p_S]$ is known and we observe samples from 
an unknown distribution $Q=[q_1,\ldots,q_S]$ in Section~\ref{sec:Q}. 
We cover this simpler case first to 
demonstrate  the main ideas on the algorithm design and 
analysis technique 
while maintaining the exposition simple. 
This paves the way for our main algorithmic and theoretical results in Section~\ref{sec:main}, 
where we only have access to samples from  both $P$ and $Q$.

%
%
\subsection{Estimating $d_\varepsilon(P\| Q)$ with known P}
\label{sec:Q}

For a given budget $n$, representing an upper bound on the expected number of samples we can collect,  
we propose sampling a random number $N$ of samples 
from Poisson distribution with mean $n$, i.e.~$N\sim \Poi(n)$.  
Then, each sample $X_j \in[S]$ is drawn from $Q$ for $j\in\{1,\ldots,N\}$, 
and we let $Q_n=[\hq_1,\ldots,\hq_S]$ denote the 
resulting histogram divided by $n$, 
such that 
$\hq_i \triangleq |\{j\in[N]: X_i=j\}|/n$. 

Note that $Q_n$ is not the standard  {\em empirical distribution}, as $\sum_i \hq_i \neq 1$ with high probability. 
However, in this paper we refer to $Q_n$ as empirical distribution of the samples. 
The empirical distribution would have been divided by $N$ instead of $n$. 
Instead, $Q_n$ is the maximum likelihood estimate of the true distribution $Q$.   
This Poisson sampling, together with the MLE construction of $Q_n$, 
ensures independence among $\{\hq_i\}_{i=1}^S$, 
making the analysis simpler. 
We first analyze the sample complexity of a simple plug-in estimator $d_\varepsilon(P\| Q_n )$ in Section~\ref{sec:Qmle}. 
This is well-defined regardless of whether $Q_n$ sums to one or not. 
We next show that we can 
  significantly improve the sample complexity by using a novel estimator, Algorithm~\ref{Qalgo},  in Section~\ref{sec:Qub}.  
We show minimax optimality of the proposed estimator by 
proving a matching  lower bound in Section~\ref{sec:Qlb}.

\subsubsection{Performance of the plug-in estimator}
\label{sec:Qmle}

The following result shows that it is necessary and sufficient to have $n \approx e^\varepsilon \, S$ samples 
to achieve an arbitrary desired error rate, if we use this plug-in estimator $d_\varepsilon(P\|Q_n)$, under the worst-case  $P$ and $Q$.  
Some assumption on $(P,Q)$ is inevitable as 
it is trivial to achieve zero error for any sample size, 
for example if $P$ and $Q$ have disjoint supports. 
Both $d_\varepsilon(P\| Q)$ and $d_\varepsilon(P\|Q_n)$ are 1 with probability one. 
We provide a proof in Section~\ref{sec:proof_Qmle}.
The bound in Eq.~\eqref{eq:Qmle_bound} also holds for $d_\varepsilon(P_n\|Q)$. 
The proof technique is analogous, and we omit the proof here. 

\begin{thm}
	\label{thm:Qmle}
	For any $\varepsilon\geq 0$, support size $S\in\Z^+$, and  distribution $P\in\cM_S$, 
	the plug-in estimator satisfies
	\begin{eqnarray}
	\sup_{Q \in \cM_S } \E_{Q} \big[\,|d_\varepsilon(P\| Q_n)-d_\varepsilon(P \| Q)|^2\,\big] 
	\;\; \lesssim \;\;	 \Big(\, \sum_{i=1}^S \, p_i \wedge \sqrt{\frac{e^\varepsilon p_i}{n }}\,  \Big)^2 +\frac{e^\varepsilon}{n}\;,
	\end{eqnarray}
	with expected number of samples $n$.
	If $S\geq 2$, we can also lower bound the worst case mean squared error as 
	\begin{eqnarray}
		\sup_{Q \in \cM_S } \E_{Q} \big[\,|d_\varepsilon(P\| Q_n)-d_\varepsilon(P \| Q)|^2\,\big] 
	\;\; \gtrsim \;\;	 \Big(\, \sum_{i=1}^S \, p_i \wedge \sqrt{\frac{e^\varepsilon p_i}{n }} \, \Big)^2\;.
	\end{eqnarray}
\end{thm}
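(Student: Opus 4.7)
The plan is the standard bias-variance decomposition,
\[
\E_Q\!\big[|d_\varepsilon(P\|Q_n) - d_\varepsilon(P\|Q)|^2\big] \;=\; \var\!\big(d_\varepsilon(P\|Q_n)\big) \;+\; \big(\E_Q d_\varepsilon(P\|Q_n) - d_\varepsilon(P\|Q)\big)^2,
\]
which decouples coordinate-wise since Poisson sampling makes $n\hat q_1,\ldots,n\hat q_S$ independent $\Poi(nq_i)$ variables. Throughout I use the structural facts that $x\mapsto[p_i-e^\varepsilon x]^+$ is convex, $e^\varepsilon$-Lipschitz, lies in $[0,p_i]$, and vanishes once $x\ge p_i/e^\varepsilon$.

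For the bias, Jensen on the convex $[\cdot]^+$ shows that every per-coordinate bias $b_i := \E[p_i-e^\varepsilon \hat q_i]^+-[p_i-e^\varepsilon q_i]^+\ge 0$; the identity $[x]^+=\tfrac12(x+|x|)$ together with $|X|-|\E X|\le|X-\E X|$ gives $b_i\le\tfrac{e^\varepsilon}{2}\E|\hat q_i-q_i|\le\tfrac{e^\varepsilon}{2}\sqrt{q_i/n}$, and trivially $b_i\le p_i$. I then case-split on $i$: when $e^\varepsilon q_i\le p_i$, the Lipschitz bound sharpens to $b_i\lesssim\sqrt{e^\varepsilon p_i/n}$; when $e^\varepsilon q_i>p_i$ the true term vanishes, so $b_i=\E[p_i-e^\varepsilon\hat q_i]^+\le p_i\,\prob(\hat q_i<p_i/e^\varepsilon)$, and a Poisson lower-tail estimate recovers the same $\sqrt{e^\varepsilon p_i/n}$ rate. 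Combining gives $b_i\lesssim p_i\wedge\sqrt{e^\varepsilon p_i/n}$, and squaring produces the first term of the upper bound. For the variance, the contraction $\var(f(X))\le L^2\var(X)$ with $L=e^\varepsilon$ yields $\var([p_i-e^\varepsilon\hat q_i]^+)\le e^{2\varepsilon}q_i/n$, which on the main indices $e^\varepsilon q_i\le p_i$ is $\le e^\varepsilon p_i/n$; on the opposite indices one uses $\var\le p_i\E[f_i]$ together with the bias estimate above, and the total sums to the $e^\varepsilon/n$ term.

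For the matching plug-in lower bound I would exhibit a single adversarial $Q^*$ with $d_\varepsilon(P\|Q^*)=0$ but $\E[d_\varepsilon(P\|Q_n)]$ large, so that $\mathrm{MSE}\ge\mathrm{Bias}^2$ finishes. The natural choice sets $q_i^*=e^{-\varepsilon}p_i$ on the indices with $p_i\gtrsim e^\varepsilon/n$ and places the residual mass $1-e^{-\varepsilon}\sum_i p_i$ on a single auxiliary atom outside the support of $P$ (which contributes $0$ to $d_\varepsilon$). Then $[p_i-e^\varepsilon q_i^*]^+\equiv 0$, while $\E[e^\varepsilon\hat q_i]=p_i$ together with Poisson anti-concentration (for $nq_i^*\gtrsim 1$) yields $\E[p_i-e^\varepsilon\hat q_i]^+\asymp e^\varepsilon\sqrt{q_i^*/n}\asymp\sqrt{e^\varepsilon p_i/n}$; for indices with $p_i\lesssim e^\varepsilon/n$ a small-mean Poisson computation gives $\E[p_i-e^\varepsilon\hat q_i]^+\asymp p_i$. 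The resulting expected plug-in value is $\gtrsim\sum_i p_i\wedge\sqrt{e^\varepsilon p_i/n}$, and squaring yields the stated lower bound.

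The main technical obstacle is the ``wrong-sign'' regime $e^\varepsilon q_i>p_i$ in the bias upper bound: the cheap Lipschitz/Jensen estimate only delivers $e^\varepsilon\sqrt{q_i/n}$, which can be arbitrarily larger than the target $\sqrt{e^\varepsilon p_i/n}$ when $q_i$ is large, so a Poisson lower-tail bound tight enough to convert the factor $q_i$ into $p_i/e^\varepsilon$ is essential. A secondary subtlety is choosing the auxiliary atom in the lower-bound construction so that enforcing $\sum q_i^*=1$ does not contaminate either $d_\varepsilon(P\|Q^*)$ or the estimator's expectation.
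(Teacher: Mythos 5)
Your high-level structure is correct and matches the paper's: bias--variance decomposition, independence from Poissonization, Jensen for $b_i\ge 0$, and the identity $[x]^+=\tfrac12(x+|x|)$ with the reverse triangle inequality giving $b_i\le\tfrac{e^\varepsilon}{2}\E|\hat q_i-q_i|$. The lower-bound construction $q_i^*=e^{-\varepsilon}p_i$ is also essentially the paper's (the paper spreads the residual mass over a constant fraction of coordinates rather than one auxiliary atom, which avoids assuming $P$ has a hole in its support, but the idea is the same). However, both of the places where you hand off to a ``Poisson tail estimate'' in the wrong-sign regime $e^\varepsilon q_i>p_i$ have real gaps.

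For the bias, the inequality $b_i\le p_i\,\prob(\hat q_i<p_i/e^\varepsilon)$ cannot deliver the target rate. Take $q_i$ just barely above $e^{-\varepsilon}p_i$: then $\prob(\hat q_i<p_i/e^\varepsilon)\approx 1/2$ and your bound gives only $b_i\lesssim p_i$, which far exceeds $\sqrt{e^\varepsilon p_i/n}$ whenever $p_i\gg e^\varepsilon/n$. No tail inequality rescues a Markov-type bound that charges the full height $p_i$ against the entire lower-tail event; you need the shape of $[a-\hat q_i]^+$ near $a$, not just its support. The paper's Lemma~\ref{lem:poiss_shiftedrelu} handles this with a stochastic-domination coupling: write $n\hat q_i = X + Z$ with $X\sim\Poi(ne^{-\varepsilon}p_i)$ independent of $Z\sim\Poi(n(q_i-e^{-\varepsilon}p_i))$, so $\hat q_i\ge X/n$ almost surely, hence $[e^{-\varepsilon}p_i-\hat q_i]^+\le[e^{-\varepsilon}p_i-X/n]^+$; since $\E[X/n]=e^{-\varepsilon}p_i$, Lemma~\ref{lem:poiss_relu} gives $\E[e^{-\varepsilon}p_i-X/n]^+\le e^{-\varepsilon}p_i\wedge\sqrt{e^{-\varepsilon}p_i/n}$ regardless of how large $q_i$ is. (One could alternatively bound $\int_0^{e^{-\varepsilon}p_i}\prob(\hat q_i<s)\,ds$ with a Chernoff bound in $s$, but the pointwise bound $p_i\prob(\cdot)$ as you wrote it is not enough.)

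For the variance, the gap is more consequential. On the wrong-sign indices your bound $\var(f_i)\le p_i\E[f_i]\lesssim p_i\sqrt{e^\varepsilon p_i/n}$ sums to $\sqrt{e^\varepsilon/n}\,\sum_i p_i^{3/2}\le\sqrt{e^\varepsilon/n}$, which is \emph{not} $\lesssim e^\varepsilon/n$ (it is larger by $\sqrt{n/e^\varepsilon}$). The unrefined Lipschitz contraction on those indices gives only $e^{2\varepsilon}q_i/n$, summing to $e^{2\varepsilon}/n$, a factor $e^\varepsilon$ too large --- the paper explicitly notes this $e^{2\varepsilon}S/n$ is the prior rate of \cite{GM18} that it improves. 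To reach $e^\varepsilon/n$ one needs the uniform per-coordinate bound $\var([p_i-e^\varepsilon\hat q_i]^+)\lesssim e^\varepsilon p_i/n$ even when $q_i\gg e^{-\varepsilon}p_i$; that is precisely the paper's Lemma~\ref{lem:poiss_varub}, whose proof splits on $n e^{-\varepsilon}p_i$ versus $nq_i$ and, in the regime where $q_i$ dominates, exploits the exponential lower-tail decay $\prob(\Poi(nq_i)\le ne^{-\varepsilon}p_i)\le e^{-D_{\rm KL}(ne^{-\varepsilon}p_i\|nq_i)}$ to show the event on which $[p_i-e^\varepsilon\hat q_i]^+$ is nonzero already has negligible probability. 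You correctly flagged the wrong-sign regime as the crux, but the specific estimates you propose there do not close either the bias or, more importantly, the variance.
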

When $n\geq e^\varepsilon S$, it follows that right-hand side of the lower bound is at least $e^\varepsilon S / n $, 
and the matching upper and lower bounds can be simplified as follows, for the worst case $P$ and $Q$. 
\begin{coro}
	\label{coro:Qmle}
	If $n\geq e^{\varepsilon}S$ and $S\geq 2$, we have
	\begin{eqnarray}
		\label{eq:Qmle_bound}
	\sup_{P,Q \in \cM_S } \E_{Q} \big[\,|d_\varepsilon(P \| Q_n )-d_\varepsilon(P \| Q)|^2\,\big] 
	\;\; \asymp \;\;	 \frac{e^{\varepsilon}S   \, }{n}\;.
	\end{eqnarray}
\end{coro}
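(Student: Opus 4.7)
The plan is to derive Corollary 1 directly from Theorem 1 by taking the supremum over $P$ on both sides of the already-established inequalities and tracking how the quantity $\sum_{i=1}^S p_i \wedge \sqrt{e^\varepsilon p_i/n}$ behaves in the regime $n \geq e^\varepsilon S$.

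For the upper bound, I would start from the bound in Theorem 1, which for any fixed $P$ gives a worst-case-in-$Q$ MSE of order $(\sum_i p_i \wedge \sqrt{e^\varepsilon p_i/n})^2 + e^\varepsilon/n$. I would drop the minimum and keep only the second argument, so that $\sum_i p_i \wedge \sqrt{e^\varepsilon p_i/n} \leq \sum_i \sqrt{e^\varepsilon p_i/n}$. Applying Cauchy--Schwarz with $\sum_i p_i = 1$ yields $\sum_i \sqrt{e^\varepsilon p_i/n} \leq \sqrt{S \cdot e^\varepsilon/n}$, so the first term contributes at most $e^\varepsilon S/n$ uniformly in $P$. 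The additive $e^\varepsilon/n$ is absorbed since $S \geq 2$. Taking $\sup_P$ then gives the upper bound $e^\varepsilon S/n$.

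For the matching lower bound, the natural candidate is $P$ uniform, i.e.\ $p_i = 1/S$. The key observation is that the regime condition $n \geq e^\varepsilon S$ is precisely what forces the minimum in each summand to be attained at the square-root branch: indeed, $(1/S) \leq \sqrt{e^\varepsilon/(nS)}$ iff $1/S \leq e^\varepsilon/n$, which is equivalent to $n \leq e^\varepsilon S$, so under our hypothesis the opposite inequality $\sqrt{e^\varepsilon/(nS)} \leq 1/S$ holds. Thus each $p_i \wedge \sqrt{e^\varepsilon p_i/n}$ equals $\sqrt{e^\varepsilon/(nS)}$, and summing over the $S$ coordinates gives $\sqrt{e^\varepsilon S/n}$. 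Squaring and applying the Theorem 1 lower bound for this particular $P$ gives the matching $\gtrsim e^\varepsilon S/n$ when we take $\sup_{P,Q}$.

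There is no real obstacle here; the corollary is essentially a two-line consequence of Theorem 1. The only item that requires a moment of attention is the threshold calculation identifying when the minimum $p_i \wedge \sqrt{e^\varepsilon p_i/n}$ switches branches, and verifying that the uniform $P$ indeed sits in the correct regime under the assumption $n \geq e^\varepsilon S$. Everything else is Cauchy--Schwarz and arithmetic, with no delicate analytic step.
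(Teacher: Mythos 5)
Your proof is correct, and it fills in exactly the steps the paper leaves implicit: the paper's own treatment of this corollary is just the remark that the right-hand side of Theorem~\ref{thm:Qmle}'s lower bound is at least $e^\varepsilon S/n$ when $n \geq e^\varepsilon S$, together with the Cauchy--Schwarz (equivalently Jensen) reduction used again for Corollary~\ref{coro:Qopt}. Your branch-threshold calculation for uniform $P$ (showing the minimum falls on the $\sqrt{e^\varepsilon p_i/n}$ side precisely when $n \geq e^\varepsilon S$), the Cauchy--Schwarz bound $\sum_i \sqrt{e^\varepsilon p_i/n} \leq \sqrt{e^\varepsilon S/n}$, and the absorption of the additive $e^\varepsilon/n$ term are all the intended reasoning.
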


A similar analysis was done in \cite{GM18}, 
which gives an upper bound scaling as $e^{2\varepsilon} S / n$. 
We tighten the analysis by a factor of $e^\varepsilon$, and provide a matching lower bound.


\subsubsection{Achieving optimal sample complexity with a polynomial approximation}
\label{sec:Qub}

We construct a minimax optimal estimator using techniques first introduced in \cite{WY16,JKYT15}
and adopted in several property estimation problems including \cite{HJW15,HJW16,AOS17,JYT18,DKW18,WY19}.

\floatname{algorithm}{Algorithm}
\renewcommand{\algorithmicrequire}{\textbf{Input:}}
\renewcommand{\algorithmicensure}{\textbf{Output:}}
\begin{algorithm}[h]
	\begin{algorithmic}
	\REQUIRE{target privacy $\varepsilon\in\reals^+$, query $\cQ$, neighboring databases $(\cD,\cD')$, pmf of $P_{\cQ,\cD}$ \\
	samples from $P_{\cQ,\cD'}$, 
	degree $K\in{\mathbb Z}^+$, constants $c_1,c_2 \in\reals^+$, expected sample size $2n$}
	\ENSURE{estimate $\hd_{\varepsilon,K,c_1,c_2}(P \| Q_n)$ of $d_\varepsilon( P_{\cQ,\cD}\| P_{\cQ,\cD'})$}
	\STATE{$P\leftarrow P_{\cQ,\cD}$}
	\STATE{Draw two independent sample sizes: $N_1 \leftarrow \Poi(n)$ and $N_2 \leftarrow \Poi(n)$}
	\STATE{Sample from $P_{\cQ,\cD'}$: $\{ X_{i,1} \}_{i=1}^{N_1} \in [S]^{N_1}$ and $\{ X_{i,2} \}_{i=1}^{N_2} \in [S]^{N_2}$}
	\STATE{$\hq_{i,j} \leftarrow \frac{|\{\ell\in [N_j]: X_{\ell,j} = i\}|}{n}$ for all $i\in[S]$ and $j\in\{1,2\}$}
	\STATE{$Q_{n,1} \leftarrow [\hq_{1,1},\ldots,\hq_{S,1}]$ and $Q_{n,2} \leftarrow [\hq_{1,2},\ldots,\hq_{S,2}]$ }
	\FOR{$i=1$ \TO $S$}
	\STATE{
	\hspace{2cm}$
	\delta_i \gets  \left\{ \begin{array}{rl}
	0 	&,\;\text{ if }  \hq_{i,1} > U(p_i;c_1,c_2) \\ 
	\tD_K(\hq_{i,2}; p_i)	&,\;\text{ if }  \hq_{i,1} \in U(p_i;c_1,c_2) \\
	{[p_i-e^\varepsilon \hq_{i,2}]^+} 	&,\;\text{ if }  \hq_{i,1} < U(p_i;c_1,c_2) 
	\end{array} \right. \label{eq:Qopt_estimator}
	$
	}
	\ENDFOR
	\STATE{$\hd_{\varepsilon,K,c_1,c_2}(P \| Q_n ) \gets  0 \vee ( 1 \wedge \sum_{i=1}^S \delta_i $})
	\end{algorithmic}
	\caption{Differential Privacy (DP) estimator with known $P$}
	\label{Qalgo}
\end{algorithm}

To simplify the analysis, 
we split the samples randomly into two partitions, each having an independent and identical distribution 
of $\Poi(n)$ samples from the multinomial distribution $Q$. 
We let $Q_{n,1}=[\hq_{1,1},\ldots,\hq_{S,1}]$ denote the count of the first set of $N_1\sim\Poi(n)$ samples (normalized by $n$), 
and $Q_{n,2}=[\hq_{1,1},\ldots, \hq_{S,1} ]$ the second set of $N_2\sim\Poi(n)$ samples. 
See Algorithm~\ref{Qalgo} for a formal definition. 
Note that for the analysis we are collecting $2n$ samples in total on average.
In all the experiments, however, we apply our estimator without  partitioning the samples. 
A major challenge  in achieving the minimax optimality is 
in handling the non-smoothness of the function $f(\hq_i;p_i) \triangleq [p_i-e^\varepsilon \hq_i]^+$ 
at $p_i \simeq e^\varepsilon \hq_i$. 
We use one set of samples to identify whether an outcome $i \in [S]$ is in 
the smooth regime ($\hq_{i,1} \notin U(p;c_1,c_2)$) or not ($\hq_{i,1} \in U(p;c_1,c_2)$), with an appropriately defined set function: 
\begin{eqnarray}
	U(p;c_1,c_2) \;\; \triangleq \;\; \left\{ 
	\begin{array}{rl}
		{[0,\frac{(c_1+c_2) \ln n }{n}]} &,\; \text{ if }p \leq\frac{c_1 e^{\varepsilon} \ln n}{n}\;, \\
		{\Big[e^{-\varepsilon}p-\sqrt{\frac{c_2 e^{-\varepsilon} p \ln n}{n}} , e^{-\varepsilon} p + \sqrt{\frac{c_2 e^{-\varepsilon} p \ln n}{n}}  \Big]}  & ,\; \text{ otherwise}, \\
	\end{array}
	\right.
	\label{eq:Qopt_region1}
\end{eqnarray}
for $c_1\geq c_2>0$ and $p \in[0,1]$. 
The scaling of the interval  is chosen carefully such that 
$(a)$ it is large enough for the probability of making a mistake on the which regime $(p_i,q_i)$ falls into 
to vanishes (Lemma~\ref{lem:Qopt_error}); and 
$(b)$ it is small enough for the variance of the polynomial approximation in the non-smooth regime to match that of the other regimes 
(Lemma~\ref{lem:Qopt_key}).
In the smooth regime, we use the plug-in estimator. 
In the non-smooth regime, we can improve the estimation error by  
using the best polynomial approximation of $f(x;p_i)=[p_i-e^\varepsilon\,x]^+$, which has a smaller bias: 
\begin{eqnarray}
	\label{eq:Qopt_poly}
	D_{K}(x;p) \;\; \triangleq \;\; \argmin_{P \in {\rm poly_K}} \max_{\tilde{x} \in U(p;c_1,c_1)} \big|\, [p-e^\varepsilon \tilde{x} ]^+-P(\tilde{x} ) \,\big| \;,
\end{eqnarray}
where ${\rm poly_K}$ is the set of polynomial functions of degree at most $K$, 
and we approximate $f(x;p)$ in an interval $U(p;c_1,c_1) \supset U(p;c_1,c_2)$ for any $c_1>c_2$. 
Having this slack of $c_1>c_2$ in the approximation allows us to guarantee the approximation quality, 
even if the actual $q_i$ is not exactly in the non-smooth regime 
$U(p;c_1,c_2)$. 
Once we have the polynomial approximation, we estimate this polynomial function $D_K(q_i;p_i)$ 
from samples, using 
 the {\em uniformly minimum variance unbiased estimator (MVUE)}. 

There are several advantages 
that makes this two-step process attractive. 
As we use an unbiased estimate of the polynomial, 
the {\em bias} is exactly the 
polynomial approximation error of $D_K(x;p_i)$, 
which scales as $(1/K)\sqrt{(p_i \ln n )/ n}$. Larger degree $K$ reduces the approximation error, 
and larger $n$ reduces the support of the domain we apply the approximation to in $U(p;c_1,c_1)$ 
(Lemma~\ref{lem:Qopt_key}). 
The {\em variance} is due to the sample estimation of the polynomial $D_K(x;p_i)$, 
which scales as $(B^K p_i \ln n )/ n$ for some universal constant $B$ (Lemma~\ref{lem:Qopt_key}).
Larger degree $K$ increases the variance. 
We prescribe choosing $K=c_3 \ln n$ for appropriate constant $c_3$ 
to optimize the bias-variance tradeoff in Algorithm~\ref{Qalgo}.
%
The resulting two-step polynomial estimator, 
has two characterizations, depending on where $p_i$ is. 
Let $\Delta = (c_1 \ln n )/n$. 

\vspace{0.2cm}
\noindent{\bf Case 1: $p_i \leq e^\varepsilon\Delta$ and $\hq_i\in U(p_i;c_1,c_1) = [0,2\Delta]$.}
We consider the function
$g(y)= [p_i - e^\varepsilon 2\Delta y ]^+$ by substituting $2\Delta y = x$ into $f(p_i,x)=[p_i-e^\varepsilon x ]^+$. 
Let $H_K(y)$ be the best polynomial approximation of $g(y)\in C[0,1]$ with order $K$, i.e. $H_K(y) = \argmin_{P \in {\rm poly_K}} \max_{y'\in [0,1]} \big|\, g(y')-P(y') \,\big| $ and denote it as $H_K(y)=\sum_{j=0}^Ka_jy^j$. Then $D_K(x;p_i) = H_K({x}/({2\Delta})) = \sum_{j=0}^K a_j (2\Delta)^{-j} x^j$. 
Once we have the polynomial approximation, we estimate with
the uniformly minimum variance unbiased estimator (MVUE) to estimate $D_K(\hq_i ; p_i)$.
\begin{eqnarray}
\tD_K(\hq_i ; p_i ) \;\; = \;\; \sum_{j=0}^K a_j (2\Delta)^{-j} \prod_{k=0}^{j-1} (\hq_i - \frac{k}{n}) \;.
\label{eq:Qopt_mvue}
\end{eqnarray}
Computing the $a_j$'s can be challenging, and we discuss this for the general case when $P$ is not known in Section~\ref{sec:main}. 

\vspace{0.2cm}
\noindent{\bf Case 2: $p_i > e^\varepsilon\Delta$ and $\hq_i \in [e^{-\varepsilon}p_i - \sqrt{e^{-\varepsilon} p_i \Delta}, e^{-\varepsilon}p_i + \sqrt{e^{-\varepsilon}p_i \Delta}]$.}
In this regime, the best polynomial approximation $D_K(x;p_i)$ of
$[p_i-e^\varepsilon x]^+$ is given by
\begin{eqnarray*}
D_K(x;p_i) \;\; =\;\; \frac{e^\varepsilon}{2}\sum_{j=0}^Kr_j \big(\sqrt{e^{-\varepsilon}p_i \Delta}\big)^{-j+1}(x-e^{-\varepsilon}p_i )^j+\frac{p_i -e^\varepsilon x}{2} \;,
\end{eqnarray*}
where $r_j$'s are defined from the best polynomial approximation $R_K(y)$ of $g(y) = |y|$ on $[-1,1]$ with order $K$:
$R_K(y) = \sum_{j=0}^K r_j y^j$.
The unique uniformly minimum variance unbiased estimator (MVUE) for $(q_i - p_i )^j$ is
\begin{eqnarray*}
g_{j, p_i}(\hq_i ) \;\; \triangleq \;\; \sum_{k=0}^{j}{j \choose k}(-p_i)^{j-k} \prod_{h=0}^{k-1}\left(\hq_i - \frac{h}{n}\right)\;,
\end{eqnarray*}
shown in  Lemma~\ref{lem:Qopt_moment}. 
Hence,
\begin{eqnarray*}
\tD_K(x;p_i) = \frac{e^\varepsilon}{2}\Big (\sum_{j=0}^Kr_j \big(\sqrt{e^{-\varepsilon}p_i \Delta}\big)^{-j+1}g_{j,e^{-\varepsilon}p_i }(\hq_i )+ g_{1,e^{-\varepsilon}p_i }(\hq_i )\Big)\;.
\end{eqnarray*}
The coefficients $r_j$'s only depend on $K$ and can be pre-computed and stored in a table. 

Choosing an appropriate $K$ scaling as $\ln n$, 
we can prove an upper bound on the error scaling as  $1/ (n \ln n)$.
This proves that the proposed estimator 
achieves the minimax optimal performance. 

\begin{thm} 
	\label{thm:Qopt}
	For any $P$, suppose  $ c \ln S \leq \ln n \leq C \ln(e^{-\varepsilon} \sum_{i=1}^S \sqrt{e^\varepsilon p_i} \wedge p_i \sqrt{n \ln n}) $ for some constants $c$ and $C$, then there exist constants $c_1, c_2$ and $c_3$ that only depends on $c$, $C$ and $\varepsilon$ such that 
	\begin{eqnarray}
		\label{eq:Qopt}
		\sup_{Q \in \cM_S}  \E_Q \big[ \, \big|\, \hd_{\varepsilon,K,c_1,c_2}( P \| Q_n )  - d_\varepsilon(P\| Q) \,\big|^2 \, \big] 
		\;\; \lesssim \;\;  \Big( \sum_{i=1}^S \, p_i \wedge \sqrt{\frac{e^\varepsilon p_i}{n \ln n}}  \Big)^2 \;,
	\end{eqnarray}
	for $K=c_3 \ln n $ and where  $\hd_{\varepsilon,K,c_1,c_2}$ is defined in Algorithm~\ref{Qalgo}.
\end{thm}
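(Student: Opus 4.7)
The plan is to bound the MSE via the variance-plus-squared-bias decomposition of the summed estimator $X=\sum_{i=1}^S\delta_i$, then control each piece coordinate-by-coordinate using the classifier based on $\hq_{i,1}$ together with the polynomial approximation guarantees. Since the final output is clipped to $[0,1]$ and $d_\varepsilon(P\|Q)\in[0,1]$, clipping only reduces error, so
\begin{eqnarray*}
\E\big[|\hd_{\varepsilon,K,c_1,c_2}(P\|Q_n)-d_\varepsilon(P\|Q)|^2\big]\;\leq\;\mathrm{Var}\Big(\sum_{i=1}^S\delta_i\Big)+\Big(\sum_{i=1}^S\big(\E[\delta_i]-[p_i-e^\varepsilon q_i]^+\big)\Big)^2.
\end{eqnarray*}
By Poisson sampling the pairs $(\hq_{i,1},\hq_{i,2})$ are independent across $i\in[S]$, so $\mathrm{Var}(X)=\sum_i\mathrm{Var}(\delta_i)$. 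I would then show that both $\sum_i|\mathrm{Bias}(\delta_i)|$ and $\sum_i\sqrt{\mathrm{Var}(\delta_i)}$ are each $\lesssim\sum_i p_i\wedge\sqrt{e^\varepsilon p_i/(n\ln n)}$, whose squares match the claimed rate.

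For each fixed $i$ I condition on which branch of \eqref{eq:Qopt_estimator} is selected. Using Lemma~\ref{lem:Qopt_error}, a Chernoff/Bernstein deviation bound that exploits the slack $c_1>c_2$ between the classification threshold $U(p;c_1,c_2)$ and the wider approximation interval $U(p;c_1,c_1)$, the probability that $\hq_{i,1}$ places the coordinate in the wrong branch is at most $n^{-c}$, with $c$ tunable by enlarging $c_1-c_2$. Since both $|\delta_i|$ and $[p_i-e^\varepsilon q_i]^+$ are of order at most $p_i$, the misclassification contribution to bias and variance is absorbed into the target once $c$ is large enough, so it suffices to analyze the correct branches.

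On the correct smooth-above branch, $\delta_i=0$ matches $(p_i-e^\varepsilon q_i)^+=0$ exactly, contributing nothing. On the correct smooth-below branch, $\hq_{i,2}<e^{-\varepsilon}p_i$ with high probability so $[p_i-e^\varepsilon\hq_{i,2}]^+=p_i-e^\varepsilon\hq_{i,2}$ is unbiased for $p_i-e^\varepsilon q_i$ with variance $e^{2\varepsilon}q_i/n\lesssim e^\varepsilon p_i/n$; this fits within the per-coordinate budget because in this regime the deterministic gap $p_i-e^\varepsilon q_i$ itself exceeds $\sqrt{e^\varepsilon p_i\ln n/n}$. On the non-smooth branch, Lemma~\ref{lem:Qopt_key} says the bias of $\tD_K(\hq_{i,2};p_i)$ equals the uniform error of the best degree-$K$ polynomial approximation to $[p_i-e^\varepsilon x]^+$ on $U(p_i;c_1,c_1)$; the classical fact that the best polynomial approximation of $|y|$ on $[-1,1]$ has error $\Theta(1/K)$ yields bias $\lesssim\sqrt{e^\varepsilon p_i\Delta}/K\asymp\sqrt{e^\varepsilon p_i/(n\ln n)}$ in the large-$p_i$ regime and $\lesssim p_i/K$ in the small-$p_i$ regime, with $K=c_3\ln n$ and $\Delta=c_1\ln n/n$. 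For the variance of the MVUE I would combine the bounds $|a_j|,|r_j|\leq B^j$ (Markov brothers' inequality applied to the Chebyshev coefficients) with the Poisson factorial-moment identity $\E\big[\prod_{k=0}^{j-1}(\hq_i-k/n)\big]=q_i^j$, which keeps the variance at $\lesssim e^\varepsilon p_i/(n\ln n)$ provided $c_3$ is chosen small enough that $B^{2K}=n^{2c_3\ln B}$ stays polynomial in $n$.

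Summing the per-coordinate bounds gives the theorem. The main obstacle is the polynomial branch: $K$ must be large enough that the $1/K$ approximation error beats the plug-in rate $\sqrt{e^\varepsilon p_i/n}$ by a factor of $\sqrt{\ln n}$, yet small enough that the $B^K$ blow-up of the MVUE coefficients does not inflate the variance beyond $e^\varepsilon p_i/(n\ln n)$. The hypothesis $\ln n\leq C\ln\big(e^{-\varepsilon}\sum_i\sqrt{e^\varepsilon p_i}\wedge p_i\sqrt{n\ln n}\big)$ is precisely the window in which this balance is feasible; outside it, Theorem~\ref{thm:Qmle} already shows the plug-in rate is tight and no effective-sample-size amplification is possible.
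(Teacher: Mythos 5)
Your high-level plan tracks the paper's: classify each symbol by which branch the estimator takes, exploit the slack $c_1>c_2$ (the paper's Lemma~\ref{lem:Qopt_error}) to drive the misclassification probability below $n^{-\beta}$, use the clipping-contracts-error observation, and then control bias and variance on the correctly classified branches. The paper organizes this through a single good event $E$ and bounds $\E[(\cE_1+\cE_2)^2]+\prob(E^c)$, but that is the same decomposition you describe.

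The gap is in the variance analysis of the polynomial branch. You claim that the MVUE's variance can be kept \emph{per coordinate} at $\lesssim e^\varepsilon p_i/(n\ln n)$ once $c_3$ is chosen so that $B^{2K}=n^{2c_3\ln B}$ ``stays polynomial in $n$.'' That conclusion does not follow: any positive power of $n$ already dominates every power of $\ln n$, so the factor $B^K=n^{c_3\ln B}$ makes the per-coordinate bound $\text{Var}(\delta_i)\lesssim B^K\,(p_i+e^\varepsilon q_i)\,\frac{e^\varepsilon c_1\ln n}{n}$ (Lemma~\ref{lem:Qopt_key}) far exceed $e^\varepsilon p_i/(n\ln n)$ for every $c_3>0$. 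The resolution in the paper is different and essential: you must first \emph{sum} the variance over all coordinates, where $\sum_i (p_i+e^\varepsilon q_i)=O(1)$ collapses the sum to $\lesssim \frac{c_1\ln n}{n^{1-c_3\ln B}}e^{2\varepsilon}$, and only then compare the resulting \emph{total} variance against the squared \emph{total} bias $(\sum_i p_i\wedge\sqrt{e^\varepsilon p_i/(n\ln n)})^2$. That comparison is what the hypothesis $\ln n\leq C\ln(e^{-\varepsilon}\sum_i\sqrt{e^\varepsilon p_i}\wedge p_i\sqrt{n\ln n})$ is for; it is not a window in which a per-coordinate balance becomes possible. Stated differently, your proposed goal ``show $\sum_i\sqrt{\text{Var}(\delta_i)}\lesssim\sum_i p_i\wedge\sqrt{e^\varepsilon p_i/(n\ln n)}$'' is strictly harder than what is true and is in fact false under the optimal $K=c_3\ln n$; the bias satisfies such a per-coordinate bound but the variance must be handled globally.

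A secondary imprecision: on the small-$p_i$ polynomial branch you quote the bias as $\lesssim p_i/K$. The bound actually used (Lemma~\ref{lem:delf2}) is $\lesssim p_i\wedge\frac{1}{K}\sqrt{e^\varepsilon p_i\Delta}$, and it is the second term, not $p_i/K$, that yields the needed $\sqrt{e^\varepsilon p_i/(n\ln n)}$ rate when $p_i$ is not tiny compared to $e^\varepsilon\Delta$.
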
 
We provide a proof in Section \ref{sec:proof_Qopt}, and a matching lower bound in Theorem~\ref{thm:Qlb}. 
A simplified upper bound follows from Jensen's inequality.
\begin{coro}
	\label{coro:Qopt}
	For worst case $P$ and $Q$, 
	if $ \ln n \leq C' (\ln S-\varepsilon)$ 
	for some $C'$, then 
	there exist constants $c_1, c_2$ and $c_3$, 
	such that 	for $K=c_3 \ln n $,
	\begin{eqnarray}
		\label{eq:Qopt2}
		\sup_{P,Q \in \cM_S}  \E_Q \big[ \, \big|\, \hd_{\varepsilon,K,c_1,c_2}( P \| Q_n )  - d_\varepsilon(P\| Q) \,\big|^2 \, \big] 
		\;\; \lesssim \;\;  \frac{e^\varepsilon S}{n \ln n} \;.
	\end{eqnarray}
\end{coro}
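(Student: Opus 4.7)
The plan is to derive Corollary~\ref{coro:Qopt} directly from Theorem~\ref{thm:Qopt} via a one-line application of Cauchy--Schwarz (equivalently, Jensen's inequality for the concave map $p \mapsto \sqrt{p}$). First I would use the trivial pointwise bound $p_i \wedge \sqrt{e^\varepsilon p_i/(n\ln n)} \leq \sqrt{e^\varepsilon p_i/(n\ln n)}$ to discard the first branch of the minimum. Then, factoring out the constant and applying Cauchy--Schwarz to $\sum_i \sqrt{p_i}\cdot 1$,
\[
\Big(\sum_{i=1}^S \sqrt{\tfrac{e^\varepsilon p_i}{n\ln n}}\Big)^{\!2} \;=\; \tfrac{e^\varepsilon}{n\ln n}\,\Big(\sum_{i=1}^S \sqrt{p_i}\Big)^{\!2} \;\leq\; \tfrac{e^\varepsilon}{n\ln n}\cdot S\sum_{i=1}^S p_i \;=\; \tfrac{S e^\varepsilon}{n\ln n}\,,
\]
which, substituted into Eq.~\eqref{eq:Qopt}, gives Eq.~\eqref{eq:Qopt2} uniformly in $Q$ and in $P$.

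Second, I would check that the corollary's hypothesis $\ln n \leq C'(\ln S-\varepsilon)$ implies the upper bound on $\ln n$ required by Theorem~\ref{thm:Qopt}. The Cauchy--Schwarz bound above is tight precisely for a (near-)uniform $P$, which is the worst case for the MSE. Under uniform $P$ one has $e^{-\varepsilon}\sum_i \sqrt{e^\varepsilon p_i} = \sqrt{S e^{-\varepsilon}}$, so the theorem's upper bound reads $C\ln(\sqrt{Se^{-\varepsilon}}) = (C/2)(\ln S-\varepsilon)$, matching the corollary's hypothesis provided one sets $C' \leq C/2$. The lower bound $c\ln S\leq\ln n$ in the theorem is built into the asymptotic regime we consider.

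The one mildly delicate point, which I expect to be the only real obstacle, is verifying this uniformly over $P \in \cM_S$. For very peaked $P$, the quantity $e^{-\varepsilon}\sum_i(\sqrt{e^\varepsilon p_i}\wedge p_i\sqrt{n\ln n})$ can fall well below $\sqrt{S e^{-\varepsilon}}$, so the theorem's precondition may fail even when the corollary's does hold. However, for such $P$ the right-hand side of Theorem~\ref{thm:Qopt} tightens at the same rate, since $\sum_i \sqrt{p_i}\ll \sqrt{S}$, and the target bound $Se^\varepsilon/(n\ln n)$ becomes loose. In that regime I would either (i) restrict the sup in Theorem~\ref{thm:Qopt} to $P$ satisfying the precondition and invoke the trivial bound $d_\varepsilon\leq 1$ together with $n\ln n \lesssim Se^\varepsilon$ (implied by $\ln n \leq C'(\ln S-\varepsilon)$ for a suitable $C'$) to absorb the remaining cases, or (ii) fall back on the plug-in bound of Corollary~\ref{coro:Qmle}, which already yields $Se^\varepsilon/n$ and is therefore dominated by $Se^\varepsilon/(n\ln n)$ only up to a logarithmic factor; sharpening this final step is the only nontrivial part of the argument.
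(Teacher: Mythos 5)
Your Cauchy--Schwarz step is right and is exactly the ``Jensen'' step the paper alludes to, and you correctly put your finger on the real subtlety: the hypothesis of Theorem~\ref{thm:Qopt} depends on $P$ (both the lower bound $\ln n \geq c\ln S$ and the upper bound $\ln n \lesssim \ln(e^{-\varepsilon}\sum_i\sqrt{e^\varepsilon p_i}\wedge p_i\sqrt{n\ln n})$), so one cannot apply the theorem as a black box uniformly over $P\in\cM_S$. But neither of your two workarounds closes the gap. Workaround (ii) does not work at all: Corollary~\ref{coro:Qmle} controls $d_\varepsilon(P\|Q_n)$, not $\hd_{\varepsilon,K,c_1,c_2}(P\|Q_n)$, so it says nothing about the estimator in Eq.~\eqref{eq:Qopt2}, and even setting that aside the rate $e^\varepsilon S/n$ is a $\ln n$ factor \emph{larger} than the target, not smaller. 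Workaround (i) is closer in spirit, but it would need a careful choice of $C'$ and anyway reduces to a trivial statement exactly when $n\ln n\lesssim Se^\varepsilon$ holds, which is not the interesting regime for a ``sample-size amplification'' claim.

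The paper's route does not invoke Theorem~\ref{thm:Qopt} as a black box; it re-enters the theorem's proof at the intermediate bound
\[
\E\big[(\hd_{\varepsilon,K,c_1,c_2}(P\|Q_n)-d_\varepsilon(P\|Q))^2\big]
\;\lesssim\; \frac{c_1\ln n}{n^{1-c_3\ln B}}\,e^{2\varepsilon}
\;+\;\Big(\sum_{i=1}^{S} p_i\wedge \sqrt{\tfrac{e^\varepsilon p_i c_1}{c_3^2 n\ln n}}\Big)^2
\;+\;\frac{S}{n^{\beta}},
\]
which holds for \emph{every} $P$, with no precondition. Your Cauchy--Schwarz bound then gives the middle term $\lesssim e^\varepsilon S/(n\ln n)$ uniformly in $P$. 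For the tail term $S/n^\beta$, taking $c_1$ large and $c_2=c_1/2$ makes $\beta>1$ and forces $S/n^{\beta}\lesssim e^\varepsilon S/(n\ln n)$ with no $\ln n\gtrsim\ln S$ requirement (the paper explicitly notes this in the closing remark of Section~\ref{sec:proof_Qopt}). For the first term, the hypothesis $\ln n\leq C'(\ln S-\varepsilon)$ gives $n^{c_3\ln B}\leq(Se^{-\varepsilon})^{c_3 C'\ln B}$, and choosing $c_3$ small enough that $c_3C'\ln B<1$ makes $c_1 e^\varepsilon\ln^2 n\cdot n^{c_3\ln B}\lesssim S$, i.e.\ the first term is also $\lesssim e^\varepsilon S/(n\ln n)$. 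So the hypothesis you need is exactly the corollary's $P$-free condition, not the theorem's $P$-dependent one, and the $P$-dependence evaporates because you are bounding the residual terms directly rather than quoting the theorem. Your remark that the lower bound $c\ln S\leq\ln n$ is ``built into the asymptotic regime'' glosses over this: the corollary drops that lower bound entirely and the paper has to say why.
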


Note that the plug-in estimator in Corollary~\ref{coro:Qmle} achieves  
the parametric rate of $1/n$. 
In the low-dimensional regime, where we fix $S$ and grow $n$, this cannot be improved upon. 
To go beyond the parametric rate, we need to consider a high-dimensional regime, 
where $S$ grows with $n$.  
Hence, a condition similar to $\ln n \leq C' \ln S$ is necessary, although it might be possible to further relax it.

%
\subsubsection{Matching minimax lower bound}
\label{sec:Qlb}

In the high-dimensional regime, where $S$ grows with $n$ sufficiently fast, 
we can get a tighter lower bound then Theorem~\ref{thm:Qmle}, 
that matches the upper bound in Theorem~\ref{thm:Qopt}. 
Again, supremum over $Q$ is necessary as there exists $(P,Q)$ 
where it is trivial to achieve zero error, for any sample size (see Section~\ref{sec:Qmle} for an example). 
For any given $P$ we provide a minimax lower bound in the following. 

\begin{thm}
	\label{thm:Qlb}
	Suppose  $S\geq 2$ and  there exists a constant $C>0$ such that $\ln n\geq C\ln S$. 
	Then for any $P$, there exists a constant $C'$ that only depends on $C$ such that if 
	$\sum_{j=1}^{S}p_j \wedge \sqrt{{e^{\varepsilon} p_j}/({n \ln n })}\geq C'\left(\sqrt{{(e^{\varepsilon}\ln n)}/{n}}+{(e^\varepsilon\sqrt{S}\ln n)}/{n}\right)$, then
	\begin{eqnarray}
		\inf_{\hd_\varepsilon(P \| Q_n )} \sup_{Q \in \cM_S}  \E_Q \big[ \, \big|\, \hd_{\varepsilon}( P \| Q_n)  - d_\varepsilon(P\| Q) \,\big|^2 \, \big] 
		\;\; \gtrsim \;\;  \Big( \, \sum_{i=1}^S \, p_i \wedge \sqrt{\frac{e^\varepsilon p_i}{n \ln n}} \, \Big)^2\;,
	\end{eqnarray}
	where the infimum is taken over all possible estimator. 
\end{thm}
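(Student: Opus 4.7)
The plan is to follow the classical moment-matching / polynomial-duality recipe that has been used for similar functional estimation lower bounds (Cai--Low, Wu--Yang, Jiao et al.), adapted to the non-smoothness structure of $f_i(q)=[p_i-e^\varepsilon q]^+$ in each coordinate. I will construct two Bayesian priors on $Q$ that the Poisson-sampling likelihood cannot distinguish but that induce different values of $d_\varepsilon(P\|Q)$, and then apply Le Cam's two fuzzy hypotheses method (or equivalently the generalized Le Cam).

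First, I would reduce to a one-dimensional approximation subproblem. For each coordinate $i$, I want two priors $\nu_0^{(i)}, \nu_1^{(i)}$ supported on the non-smooth interval around $e^{-\varepsilon}p_i$ (inside $U(p_i;c_1,c_2)$ in the two regimes of Eq.~\eqref{eq:Qopt_region1}) with matching moments up to order $K \asymp \ln n$, and maximally separated means of $f_i$. By the LP duality between best polynomial approximation and moment matching, the attainable gap $|\E_{\nu_0^{(i)}}[f_i]-\E_{\nu_1^{(i)}}[f_i]|$ equals twice the degree-$K$ uniform approximation error of $f_i$ on the chosen interval. Rescaling $f_i$ to the function $|y|$ or $[a-y]^+$ on $[-1,1]$ and using Bernstein's classical $\Theta(1/K)$ error for $|y|$, each coordinate contributes a gap of order $p_i \wedge \sqrt{e^\varepsilon p_i/(n\ln n)}$, which is precisely the per-coordinate term in the target rate.

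Second, I would take the product prior $\nu_b=\bigotimes_i \nu_b^{(i)}$ over coordinates and show that in the Poissonized model the two resulting data distributions have vanishing total variation. Under Poisson sampling each $\hq_i$ is an independent $\Poi(n q_i)/n$, so the $\chi^2$ (or Hellinger) distance between the two mixture marginals at coordinate $i$ is controlled by the matched first $K$ Poisson moments; since the priors agree on moments up to $K$, the local deviation is $O((n \Delta_i)^{K+1}/(K+1)!)$, which is $o(1/S)$ once $K\asymp \ln n$ and the interval radius $\Delta_i$ is of the prescribed size. Tensorizing across the $S$ coordinates gives $\TV \leq 1/4$, and hence by Le Cam the worst-case squared error is lower bounded by the square of the aggregated functional gap $\sum_i p_i\wedge\sqrt{e^\varepsilon p_i/(n\ln n)}$.

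The main obstacle is the probability-simplex constraint $\sum_i q_i =1$: the product prior construction produces $Q$'s that need not sum to one, so I must project onto $\cM_S$ while preserving the moment matching, the approximation gap, and the Poisson indistinguishability. The standard fix is to reserve a ``sink'' symbol absorbing the slack, pair the two priors with the same mean at every coordinate so that the sink behaves identically under $\nu_0$ and $\nu_1$, and enforce $\sum_i \E_{\nu_b}[q_i] < 1$ with high probability using the extra side assumption $\sum_j p_j\wedge\sqrt{e^\varepsilon p_j/(n\ln n)}\geq C'(\sqrt{e^\varepsilon\ln n/n}+e^\varepsilon\sqrt{S}\ln n/n)$; the first additive term in that hypothesis controls the shift needed to make $\sum \E_{\nu_b}[q_i]\leq 1$, and the second controls the Bernstein-type fluctuation of $\sum_i q_i$ around its mean so that conditioning on the simplex event costs negligible TV. After this normalization one verifies that the induced two fuzzy hypotheses still satisfy Tsybakov's conditions, completing the minimax lower bound matching Theorem~\ref{thm:Qopt}.
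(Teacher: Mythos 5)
Your high-level plan (moment-matching priors, duality between best polynomial approximation and moment matching, tensorized TV bound via Tsybakov's fuzzy-hypothesis lemma, and a sink symbol plus a reduction from the approximate simplex $\cM_S(\zeta)$ back to $\cM_S$) is the same scaffolding the paper uses, and your treatment of the normalization constraint is close in spirit to the paper's Lemma~\ref{lem:zeta}. However, there is a real gap in the small-$p_i$ regime that your proposal does not address.

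When $p_i \gg e^{\varepsilon}\ln n / n$, your recipe works verbatim: the priors can be centered at $e^{-\varepsilon}p_i$ with radius $\Theta\bigl(\sqrt{e^{-\varepsilon}p_i\ln n/n}\bigr)$, the rescaled target is $|y|$ on $[-1,1]$, and Bernstein's $\Theta(1/K)$ error with $K\asymp\ln n$ yields the separation $\sqrt{e^{\varepsilon}p_i/(n\ln n)}$ (this is the paper's Case~1). But when $p_i \lesssim e^{\varepsilon}\ln n/n$ that centered interval leaves $[0,1]$, so the priors must live on $[0, M]$ with $M\asymp \ln n/n$. Rescaling $[p_i - e^{\varepsilon}x]^+$ on $[0,M]$ to $[a-y]^+$ on $[0,1]$ with $a = p_i/(e^{\varepsilon}M)\ll 1$ only yields a degree-$L$ approximation gap of order $p_i/L \asymp p_i/\ln n$, not the required $p_i\wedge\sqrt{e^{\varepsilon}p_i/(n\ln n)}$ --- the paper states this explicitly at the start of its Case~2. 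The Ditzian--Totik modulus (which governs the \emph{upper} bound proof) is $\min\{p_i, \sqrt{p_i e^{\varepsilon}M}/K\}$, but the converse, $\Delta_L \gtrsim \omega^2_\varphi(f,1/L)$, does not hold here, so ``rescaling to $[a-y]^+$ and invoking Bernstein'' is not enough when the kink sits near the endpoint.

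The paper's fix is to build moment-matching priors not for $[a-e^{\varepsilon}x]^+$ directly but for the reweighted function $f(x;a)=\bigl([a-e^{\varepsilon}x]^+-a\bigr)/(e^{\varepsilon}x)$ on $\{0\}\cup[\eta,1]$ (Lemma~\ref{lem:moment matching measures}), tilt via a Radon--Nikodym factor $\eta/t$ to push residual mass onto the atom at $0$ while pinning the first moment to $\eta$, and prove a separate approximation lower bound $\Delta_L[f(\cdot;a);[a/D,1]]\gtrsim 1\wedge 1/(L\sqrt a)$ (Lemma~\ref{lem:tvbound}). Only then does the per-coordinate separation become $\eta\, e^{\varepsilon} M\cdot\Delta_L[f]\asymp p_i\wedge\sqrt{e^{\varepsilon}p_i/(n\ln n)}$. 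Without a reweighting of this type your argument caps out at $p_i/\ln n$ for the small-$p_i$ coordinates and does not prove the claimed rate.
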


A proof is provided in Section~\ref{sec:proof_Qlb}. 
When $ \ln n \leq C\ln S $ and $n\geq {c(e^\varepsilon S)}/{\ln S}$, 
the condition 
$\sum_{j=1}^{S}p_j \wedge \sqrt{{e^{\varepsilon} p_j}/({n \ln n })}\geq C'\left(\sqrt{{(e^{\varepsilon}\ln n)}/{n}}+{(e^\varepsilon\sqrt{S}\ln n)}/{n}\right)$ is satisfied 
for uniform distribution $P$, and we can simplify the right-hand side of the lower bound. 
Together, we get the following minimax lower bound, matching the upper bound in 
Corollary~\ref{coro:Qopt}.
\begin{coro}
\label{coro:Qlb}
	If there exist constants $c,C>0$, such that
	$n\geq c{(e^\varepsilon S)}/{\ln S}$ and $\ln n \leq C\ln S$, then
	\begin{eqnarray}
		\sup_{P\in \cM_S}\inf_{\hd_\varepsilon(P \| Q_n )} \sup_{Q \in \cM_S}  \E_Q \big[ \, \big|\, \hd_{\varepsilon}( P \| Q_n )  - d_\varepsilon(P\| Q) \,\big|^2 \, \big] 
		\;\; \gtrsim \;\;
		 \frac{e^\varepsilon S}{n\ln n}\;.
	\end{eqnarray}

\end{coro}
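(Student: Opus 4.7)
\textbf{Proof plan for Corollary~\ref{coro:Qlb}.}

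The plan is to reduce this to Theorem~\ref{thm:Qlb} by specializing to the uniform distribution $P^\star$ with $p_j = 1/S$ for all $j \in [S]$, and then verifying the two things needed: (i) the hypothesis of Theorem~\ref{thm:Qlb} is satisfied under the corollary's assumptions, and (ii) the resulting lower bound simplifies to $e^\varepsilon S / (n \ln n)$.

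First I would compute the quantity
\[
T \;\triangleq\; \sum_{j=1}^S p_j \wedge \sqrt{\frac{e^\varepsilon p_j}{n \ln n}}
\]
for $P = P^\star$. Each summand equals $\min\bigl(1/S,\sqrt{e^\varepsilon /(S n \ln n)}\bigr)$, and the second term is the smaller one precisely when $n \ln n \geq e^\varepsilon S$. The hypothesis $n \geq c e^\varepsilon S / \ln S$ together with $\ln n \leq C \ln S$ gives $n \ln n \geq n \ln S / C \geq (c/C)\, e^\varepsilon S$, so (up to absolute constants) the min is indeed $\sqrt{e^\varepsilon /(S n \ln n)}$. Summing $S$ such terms yields $T \asymp \sqrt{e^\varepsilon S/(n \ln n)}$, and hence $T^2 \asymp e^\varepsilon S/(n \ln n)$, which is exactly the target bound.

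Next I would verify the hypothesis of Theorem~\ref{thm:Qlb}, namely
\[
T \;\geq\; C'\!\left(\sqrt{\frac{e^\varepsilon \ln n}{n}} + \frac{e^\varepsilon \sqrt{S}\,\ln n}{n}\right).
\]
Using $T \asymp \sqrt{e^\varepsilon S/(n\ln n)}$, the first inequality $T \gtrsim \sqrt{e^\varepsilon \ln n / n}$ reduces to $S \gtrsim (\ln n)^2$, which follows from $\ln n \leq C \ln S$ once $S$ exceeds a constant (handled by choosing $c$ large enough, since $n \geq c e^\varepsilon S/\ln S$ forces $S$ to grow with $n$). The second inequality $T \gtrsim e^\varepsilon \sqrt{S}\ln n / n$ squares to $1/(n \ln n) \gtrsim e^\varepsilon (\ln n)^2 / n^2$, i.e., $n \gtrsim e^\varepsilon (\ln n)^3$; again this follows from $n \gtrsim e^\varepsilon S/\ln S$ together with $S \gtrsim (\ln n)^4$, which is implied by $\ln n \leq C \ln S$ for $S$ large.

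Plugging $P^\star$ into Theorem~\ref{thm:Qlb} then gives the lower bound $\gtrsim T^2 \asymp e^\varepsilon S/(n\ln n)$, and taking $\sup_{P}$ only increases the left-hand side, yielding the claim. The main (minor) obstacle is bookkeeping the absolute constants so that the two regime conditions in Theorem~\ref{thm:Qlb} hold simultaneously under the single scaling assumption of the corollary; this is handled by choosing the constant $c$ in $n \geq c e^\varepsilon S/\ln S$ large enough relative to the $C'$ and $C$ from Theorem~\ref{thm:Qlb}.
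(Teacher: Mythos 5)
Your plan matches the paper's own proof exactly: specialize Theorem~\ref{thm:Qlb} to the uniform distribution $P^\star$ (so $p_j=1/S$), check that the hypotheses of that theorem are satisfied under the corollary's scaling assumptions, and observe that the resulting bound $T^2$ simplifies to $e^\varepsilon S/(n\ln n)$. The final conclusion and the structure are correct.

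One step of your bookkeeping, however, is misattributed and as written would not go through. You claim that ``$\ln n \leq C\ln S$ gives $n\ln n\geq n\ln S/C$,'' which would require $\ln n\geq \ln S/C$; but $\ln n\leq C\ln S$ is an \emph{upper} bound on $\ln n$ and gives no such lower bound. The needed inequality $\ln n\gtrsim \ln S$ actually comes from the \emph{other} hypothesis: from $n\geq c\, e^\varepsilon S/\ln S$ one gets $\ln n\geq \ln c + \varepsilon + \ln S-\ln\ln S\geq \ln c+\varepsilon+\tfrac{1}{2}\ln S$, so for $c$ not too small (or $S$ large) indeed $\ln n\gtrsim \ln S$. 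This is also the argument needed to verify the hypothesis ``$\ln n\geq C_0\ln S$ for some constant $C_0>0$'' appearing in Theorem~\ref{thm:Qlb}, which your write-up does not explicitly check. With this correction the rest of the verification ($S\gtrsim (\ln n)^2$ and $n\gtrsim e^\varepsilon(\ln n)^3$ from the two stated constraints) is sound.
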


\subsection{Estimating $d_\varepsilon(P\|Q)$  from samples} 
\label{sec:main}

We now consider the general case where $P=P_{\cQ,\cD}$ and $Q=P_{\cQ,\cD'}$ are both unknown, 
and we access them through samples. 
We propose sampling a random number of samples 
$N_1\sim\Poi(n)$ and $N_2\sim\Poi(n)$ from each distribution, respectively. 
Define the empirical distributions $P_n=[\hp_1,\ldots,\hp_S]$ and $Q_n=[\hq_1,\ldots,\hq_S]$ as in the previous section. 
From the proof of Theorem~\ref{thm:Qmle}, we get the same sample complexity for the plug-in estimator: 
If $n\geq e^{\varepsilon}S$ and $S\geq 2$, we have
\begin{eqnarray}
\sup_{P,Q \in \cM_S } \E_{Q} \big[\,|d_\varepsilon(P_n \| Q_n )-d_\varepsilon(P \| Q)|^2\,\big] 
\;\; \asymp \;\;	 \frac{e^{\varepsilon}S   \, }{n}\;. \label{eq:mle}
\end{eqnarray}
Using the same two-step process, we construct an estimator that improves upon this parametric rate of plug-in estimator.

%
\subsubsection{Estimator for $d_\varepsilon(P\| Q)$ }
\label{sec:algo}

We present an estimator using similar techniques as in Algorithm~\ref{Qalgo}, but there are 
several challenges in moving to a multivariate case.   
The multivariate function $f(x,y) = [x-e^\varepsilon y ]^+$ 
in non-smooth in a region $ x=e^\varepsilon y$. 
We first 
define a two-dimensional non-smooth set $U(c_1,c_2)\subset [0,1] \times [0,e^\varepsilon]$ as
\begin{eqnarray}
	U(c_1,c_2) \;\;=\;\; \left\{( p , e^\varepsilon q): | p - e^\varepsilon q\,|\, \leq \sqrt{\frac{ (c_1+c_2)\ln n}{n}}(\sqrt{ p}+\sqrt{e^\varepsilon q}),\; p\in [0,1], \;q\in [0,1]  \right\}\;,
\end{eqnarray}
where $0<c_2<c_1$. 
As before, the plug-in estimator is good enough in the smooth regime, i.e.~$(p,e^\varepsilon q)\notin U(c_1,c_2)$. 

We construct a polynomial approximation of this function with order $K$, in this non-smooth regime.
We will set $K = c_3\ln n$ to achieve the optimal tradeoff. 
We split the samples randomly into four partitions, each having an independent and identical distribution 
of $\Poi(n)$ samples, two from the multinomial distributions $P$ and other two from $Q$.  
See Algorithm~\ref{algo} for a formal definition. We use one set of samples to identify the regime, and the other for estimation.

\begin{algorithm}[h]
\begin{algorithmic}
\REQUIRE{target privacy $\varepsilon\in\reals^+$, query $\cQ$, neighboring databases $(\cD,\cD')$, \\
samples from $P_{\cQ,\cD}$ and $P_{\cQ,\cD'}$,
degree $K\in{\mathbb Z}^+$, constants $c_1,c_2 \in\reals^+$, expected sample size $2n$}
\ENSURE{estimate $\hd_{\varepsilon,K,c_1,c_2}(P_n \| Q_n)$ of $d_\varepsilon( P_{\cQ,\cD}\| P_{\cQ,\cD'})$}
\STATE{$P\leftarrow P_{\cQ,\cD}$, $Q\leftarrow P_{\cQ,\cD'}$}
\STATE{Draw four independent sample sizes: $N_{1,1},N_{1,2},N_{2,1},N_{2,2}\sim \Poi(n)$}
\STATE{Sample from $P_{\cQ,\cD}$: $\{ X_{i,1} \}_{i=1}^{N_{1,1}} \in [S]^{N_{1,1}}$ and $\{ X_{i,2} \}_{i=1}^{N_{1,2}} \in [S]^{N_{1,2}}$}
\STATE{Sample from $P_{\cQ,\cD'}$: $\{Y_{i,1} \}_{i=1}^{N_{2,1}} \in [S]^{N_{2,1}}$ and $\{ Y_{i,2} \}_{i=1}^{N_{2,2}} \in [S]^{N_{2,2}}$}

\STATE{$\hp_{i,j} \leftarrow \frac{|\{\ell\in [N_{1,j}]: X_{\ell,j} = i\}|}{n}$ and $\hq_{i,j} \leftarrow \frac{|\{\ell\in [N_{2,j}]: Y_{\ell,j} = i\}|}{n}$ for all $i\in[S]$ and $j\in\{1,2\}$}

\STATE{$P_{n,1} \leftarrow [\hp_{1,1},\ldots,\hp_{S,1}]$, $P_{n,2} \leftarrow [\hp_{1,2},\ldots,\hp_{S,2}]$, $Q_{n,1} \leftarrow [\hq_{1,1},\ldots,\hq_{S,1}]$ and $Q_{n,2} \leftarrow [\hq_{1,2},\ldots,\hq_{S,2}]$ }

\FOR{$i=1$ \TO $S$}
\STATE{
\hspace{2cm}$\delta_i \gets \left\{ \begin{array}{rl}
0 &,\;\text{ if } \hp_{i,1}- e^{\varepsilon}\hq_{i,1} < -\sqrt{\frac{(c_1+c_2)\ln n}{n}}(\sqrt{\hp_{i,1}}+\sqrt{e^\varepsilon\hq_{i,1}}) \\

\hp_{i,2}-e^\varepsilon \hq_{i,2} &,\;\text{ if }\hp_{i,1}- e^{\varepsilon}\hq_{i,1} > \sqrt{\frac{(c_1+c_2)\ln n}{n}}(\sqrt{\hp_{i,1}}+\sqrt{e^\varepsilon\hq_{i,1}})\\

\tD^{(1)}_K(\hp_{i,2}, \hq_{i,2}) &,\;\text{ if } \hp_{i,1}+e^\varepsilon\hq_{i,1} < \frac{c_1\ln n}{n}\\

\tD^{(2)}_K(\hp_{i,2}, \hq_{i,2};\hp_{i,1}, \hq_{i,1}) &,\;\text{ if } (\hp_{i,1},e^{\varepsilon}\hq_{i,1}) \in U(c_1,c_2),\; \hp_{i,1}+e^\varepsilon\hq_{i,1} \geq \frac{c_1\ln n}{n}

\end{array} \right. \label{eq:opt_estimator}$}
\ENDFOR
\STATE{$\hd_{\varepsilon,K,c_1,c_2}(P_n \| Q_n ) \gets 0 \vee ( 1 \wedge \sum_{i=1}^S \delta_i $})
\end{algorithmic}
\caption{Differential Privacy (DP) estimator}
\label{algo}
\end{algorithm}

\vspace{0.2cm}
\noindent
 {\bf case 1: 
For $(x,  e^{\varepsilon}  y)\in \left[0, ({2c_1\ln n})/{n}\right]^2$.}
 
A straightforward polynomial approximation of 
$[x-e^\varepsilon y]^+$ on $[0,(2c_1 \ln n) / n]^2$ cannot achieve approximation error smaller than 
$(1/K)((2c_1\ln n) / n )$. As $K=c_3 \ln n$, this gives a bias of $1/n$ for each symbol in $[S]$, resulting in 
total bias of $S/n$. This requires $n\gg S$ to achieve arbitrary small error, as opposed to $n\gg S/\ln S$ which is what we are targeting. 
This is due to the fact that we are requiring multivariate approximation, and 
the bias is dominated by the worst case $y$ for each $x$. 
If $y$ is fixed, as in the case of univariate approximation in Lemma~\ref{lem:Qopt_key}, 
the bias would have been $(1/K)\sqrt{(e^\varepsilon y 2c_1 \ln n)/n}$, with $y=q_i$, 
where total bias scales as $\sqrt{S/n \ln n}$ when summed over all symbols $i$. 

Our strategy is to use the decomposition  $[x - e^\varepsilon y ]^+ =  (\sqrt{x}+\sqrt{e^{\varepsilon}y})\left[\sqrt{x}-\sqrt{ e^{\varepsilon}y }\right]^+$. 
Each function can be approximated up to a bias of $ (1/K)\sqrt{(\ln n) / n}$, and the dominant term in the bias becomes 
$(1/K)\sqrt{(e^\varepsilon q_i \ln n) / n}$. This gives the desired bias. 
Concretely, we use two bivariate polynomials $u_K(x,y)$ and $v_K(x,y)$ to  approximate $\sqrt{x}+\sqrt{y}$ and $[\sqrt{x}-\sqrt{y}]^+$ in $[0,1]^2$, respectively. 
Namely,
\begin{eqnarray}
	\sup_{(x,y)\in [0,1]^2}\left|u_K(x,y)- (\sqrt{x}+\sqrt{y})\right| \;\; = \;\; \inf_{P\in {\rm poly}_K^2}\sup_{(x',y')\in [0,1]^2}\left|P(x',y')- (\sqrt{ x'}+\sqrt{y'})\right|\;, \text{ and }
\end{eqnarray}
\begin{eqnarray}
	\sup_{(x,y)\in [0,1]^2}\left|v_K(x,y)- [\sqrt{ x}-\sqrt{y}]^+\right| \;\; = \;\; \inf_{P\in {\rm poly}_K^2}\sup_{(x',y')\in [0,1]^2}\left|P(x',y')- [\sqrt{ x'}-\sqrt{y'}]^+\right|\;.
\end{eqnarray}

Then denote $h_{2K}(x,y) = u_K(x,y)v_K(x,y)-u_K(0,0)v_K(0,0)$. For $(x, e^{\varepsilon} y)\in \left[0, ({2c_1\ln n})/{n}\right]^2$, define 
\begin{eqnarray}
	D_K^{(1)}(x,y) \;\;=\;\; \frac{2c_1 \ln n}{n} h_{2K}\left(\frac{ x n }{2c_1\ln n}, \frac{ e^{\varepsilon} y n}{2c_1\ln n}\right)\;.\label{eq:Opt1_dk}
\end{eqnarray}
In practice, one can use the best Chebyshev polynomial expansion to achieve the same uniform error rate, efficiently \cite{MNS13}.

\vspace{0.2cm}
\noindent
 {\bf case 2: For $( x , e^{\varepsilon} y)\in U(c_1,c_1)$ and $ x + e^{\varepsilon} y\geq ({c_1\ln n})/{2n}$.} 

We utilize the best polynomial approximation of $|t|$ on $[-1,1]$ with order $K$. Denote it as $R_K(t) = \sum_{j=0}^K r_j t^j$. Define
\begin{eqnarray}
	D^{(2)}_K\left(x,y; \hp_{i,1}, \hq_{i,1}\right) \;\;=\;\;\frac{1}{2}\sum_{j=0}^Kr_j W^{-j+1}(e^{\varepsilon}y-x)^j+\frac{x-e^\varepsilon y }{2}\;,
\end{eqnarray}
where $W = \sqrt{ ({8c_1\ln n})/{n} }\left(\sqrt{ \hp_{i,1}+ e^{\varepsilon} \hq_{i,1}}\right)$. 
Finally, we use second part of samples to construct unbiased estimator for $D_K^{(1)}(x,y)$ and 
$D_K^{(2)}(x,y; \hp_{i,1},\hq_{i,1})$ by Lemma~\ref{lem:unbiased_estimate_pq}. Namely, 
\begin{eqnarray}
	\E\left[\tD_K^{(1)}(\hp_{i,2},\hq_{i,2})\right] \;\;=\;\; D_K^{(1)}(p,q)\;,
\end{eqnarray}
and
\begin{eqnarray}
	\E\left[ \tD_K^{(2)}(\hp_{i,2},\hq_{i,2}; \hp_{i,1},\hq_{i,1})|\hp_{i,1},\hq_{i,1}\right] \;\;=\;\; D_K^{(2)}(p,q; \hp_{i,1},\hq_{i,1})\;.
\end{eqnarray}

These unbiased estimator are easy to construct by using following lemma.

\begin{lemma}\cite[Exercise~2.8]{withers1987bias}
\label{lem:unbiased_estimate_pq}
	For any $r,s\geq 1$, $r,s \in \Z$, $(n\hp,n\hq)~\Poi(np)\times\Poi(nq)$, we have
	\begin{eqnarray}
		\mathbb{E}\left[\prod_{i=0}^{r-1}\left(\hat{p}-\frac{i}{n}\right) \prod_{j=0}^{s-1}\left(\hat{q}-\frac{j}{n}\right)\right]\;\;=\;\;p^{r} q^{s}\;.
	\end{eqnarray}	
\end{lemma}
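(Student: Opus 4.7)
The plan is to exploit the independence of the two Poisson samples together with the classical falling-factorial moment identity for a single Poisson random variable. Since $n\hp \sim \Poi(np)$ is independent of $n\hq \sim \Poi(nq)$ (by the construction in Algorithm~\ref{algo} using independent sample sizes $N_{1,j}, N_{2,j}$), the expectation on the left-hand side factors as
\begin{equation*}
\mathbb{E}\!\left[\prod_{i=0}^{r-1}\!\left(\hp-\frac{i}{n}\right)\!\prod_{j=0}^{s-1}\!\left(\hq-\frac{j}{n}\right)\right]
= \mathbb{E}\!\left[\prod_{i=0}^{r-1}\!\left(\hp-\frac{i}{n}\right)\right]\cdot \mathbb{E}\!\left[\prod_{j=0}^{s-1}\!\left(\hq-\frac{j}{n}\right)\right].
\end{equation*}
Thus it suffices to prove the univariate identity $\mathbb{E}\!\left[\prod_{i=0}^{r-1}(\hp - i/n)\right] = p^{r}$, since the statement for $\hq$ is symmetric.

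For the univariate step, I would pull the $1/n^r$ factor out and work with $M := n\hp \sim \Poi(np)$. The key observation is that $\prod_{i=0}^{r-1}(M-i)$ equals the falling factorial $M(M-1)\cdots(M-r+1)$, which vanishes whenever $M < r$ because the factor with $i = M$ is zero, and equals $M!/(M-r)!$ when $M \geq r$. Substituting into the Poisson pmf and re-indexing $k = M - r$,
\begin{equation*}
\mathbb{E}\!\left[\prod_{i=0}^{r-1}(M-i)\right] = \sum_{M=r}^{\infty}\frac{M!}{(M-r)!}\,\frac{(np)^{M}e^{-np}}{M!} = (np)^{r}\sum_{k=0}^{\infty}\frac{(np)^{k}e^{-np}}{k!} = (np)^{r},
\end{equation*}
where the last equality uses that the shifted sum is exactly the total mass of a $\Poi(np)$ distribution. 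Dividing by $n^{r}$ gives the univariate identity, and the analogous computation for $\hq$ together with the factorization above yields $p^{r}q^{s}$.

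The main obstacle here is essentially bookkeeping rather than conceptual: one has to be careful that the product over $i$ correctly truncates the Poisson sum at $M = r$ (so that the expectation is well defined despite the falling factorial being negative for small $M$), and that the reindexing $k = M-r$ converts the tail sum back into a complete Poisson sum. Once these two points are in place the identity follows in a couple of lines. One could alternatively invoke the well-known probability generating function identity $\mathbb{E}[z^{M}] = e^{np(z-1)}$ and read off factorial moments by differentiating $r$ times at $z=1$, but the direct summation above is the shortest self-contained route and is what I would include.
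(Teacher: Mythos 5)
Your proof is correct. The paper does not actually give a proof of this lemma; it is cited to Withers~\cite[Exercise~2.8]{withers1987bias}, so there is no internal argument to compare against. Your argument is the standard one: the independence of $n\hp$ and $n\hq$ lets the expectation factor, and the univariate identity is the classical falling-factorial moment formula for a Poisson variable, which you derive by direct summation (the falling factorial vanishes for $M < r$ because the factor with index $i=M$ is zero, and the reindexed tail sum is the full Poisson mass). All steps check out; the alternative PGF route you mention would also work but adds nothing here. One very minor presentational point: you could state explicitly that independence of $n\hp$ and $n\hq$ follows from the independent Poissonized sampling of $P$ and $Q$, which the paper assumes throughout, but you already allude to this.
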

The explicit formula for the unbiased estimators 
can be found in Eq.~\eqref{eq:opt1_estimator} and Eq.~\eqref{eq:opt2_estimator}. 

%
%
\subsubsection{Minimax optimal upper bound}
\label{sec:test}

We provide an upper bound on the error achieved by the proposed estimator. 
The analysis uses similar techniques as the proof of Theorem~\ref{thm:Qopt}. 
However, it is not an immediate corollary, and we provide a proof in Section~\ref{sec:proof_opt}. 

\begin{thm} 
	\label{thm:opt}
	Suppose  there exists a constant $C>0$ such that $\ln n \leq C \ln S $. Then there exist constants $c_1, c_2$ and $c_3$ that only depends on $C$  and $\varepsilon$ such that 
	\begin{eqnarray}
		\label{eq:opt}
		\sup_{P,Q \in \cM_S}  \E_{P\times Q} \big[ \, \big|\, \hd_{\varepsilon,K,c_1,c_2}( P_n \| Q_n)  - d_\varepsilon(P\| Q) \,\big|^2 \, \big] 
		\;\; \lesssim \;\;   \frac{e^\varepsilon S}{n\ln n} \;,
	\end{eqnarray}
	for $K=c_3 \ln n $ where  $\hd_{\varepsilon,K,c_1,c_2}$ is defined in Algorithm~\ref{algo}.
\end{thm}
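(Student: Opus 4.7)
The plan is to follow the symbol-by-symbol decomposition used for Theorem~\ref{thm:Qopt}, adapted to the bivariate setting. By Poisson sampling, the per-symbol estimators $\delta_i$ are mutually independent across $i\in[S]$, so $\E[\hd] - d_\varepsilon(P\|Q) = \sum_i(\E[\delta_i] - [p_i-e^\varepsilon q_i]^+)$ and $\var(\hd) = \sum_i \var(\delta_i)$; it suffices to bound per-symbol bias and variance in each of the four branches of Algorithm~\ref{algo}. First I would verify that whenever $(p_i,e^\varepsilon q_i)$ lies deep in the smooth regime (outside a slightly shrunken $U$), the classifier based on the first split $(\hp_{i,1},\hq_{i,1})$ picks the intended plug-in branch with probability $1-n^{-\Omega(1)}$ by Poisson/Bernstein concentration, analogous to Lemma~\ref{lem:Qopt_error}; a symmetric statement covers the non-smooth regime. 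On these good events the plug-in branch is exactly unbiased with variance $\lesssim (p_i+e^{2\varepsilon}q_i)/n$, and the polynomial branches are conditionally unbiased by Lemma~\ref{lem:unbiased_estimate_pq}. The misclassification events contribute only $O(n^{-c})$ per symbol and can be absorbed using the trivial $[0,1]$ bound on each $\delta_i$.

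For case~1, the crucial idea is the decomposition $[x-e^\varepsilon y]^+ = (\sqrt{x}+\sqrt{e^\varepsilon y})\,[\sqrt{x}-\sqrt{e^\varepsilon y}]^+$, which reduces the bivariate approximation to approximation of $\sqrt{x}+\sqrt{y}$ and $[\sqrt{x}-\sqrt{y}]^+$ on $[0,1]^2$; standard best-approximation theory gives uniform error $O(1/K)$ and coefficient magnitudes at most $B^K$ for some absolute $B$. Rescaling back to the box $[0,(2c_1\ln n)/n]^2$ produces an approximation error $\lesssim (1/K)\sqrt{(p_i+e^\varepsilon q_i)(\ln n)/n}$, so $K=c_3\ln n$ gives a per-symbol bias summing via Cauchy--Schwarz to $\sqrt{e^\varepsilon S/(n\ln n)}$, matching the target, rather than the $S/n$ one would obtain by direct bivariate approximation of $[x-e^\varepsilon y]^+$. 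The variance of $\tD_K^{(1)}$ follows from Lemma~\ref{lem:unbiased_estimate_pq} together with the Poisson moment bound $\E[\prod_{k<j}(\hp - k/n)^2]\leq (p+j/n)^j/n^{j}$, giving $\var(\tD_K^{(1)})\lesssim B^{2K}(p_i+e^\varepsilon q_i)(\ln n)/n$; choosing $c_3$ small makes $B^{2K}\leq n^{o(1)}$, so summing over $i$ contributes $O(e^\varepsilon/n)\cdot n^{o(1)}$, dominated by the bias term under the hypothesis $\ln n \leq C\ln S$.

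For case~2, the estimator uses $R_K(t)=\sum_j r_j t^j$, the best uniform approximation of $|t|$ on $[-1,1]$, rescaled by the data-dependent width $W=\sqrt{(8c_1\ln n)/n}\sqrt{\hp_{i,1}+e^\varepsilon\hq_{i,1}}$. The main obstacle is precisely that $W$ is random, so the approximation interval and the coefficients are random. The way to handle this is to condition on the first split: given $(\hp_{i,1},\hq_{i,1})$, Lemma~\ref{lem:unbiased_estimate_pq} shows that $\tD_K^{(2)}(\cdot;\hp_{i,1},\hq_{i,1})$ is conditionally unbiased for $D_K^{(2)}(p_i,q_i;\hp_{i,1},\hq_{i,1})$. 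On the high-probability event $W/W^\star\in[1/2,2]$, with $W^\star$ using the true $p_i+e^\varepsilon q_i$, the rescaled point $(p_i-e^\varepsilon q_i)/W$ still lies in $[-1,1]$, so $R_K$ delivers the expected uniform error, and the rescaled coefficients $|r_j W^{-j+1}|$ stay within a constant of their deterministic counterparts; this gives the same $(1/K)\sqrt{(p_i+e^\varepsilon q_i)(\ln n)/n}$ bias and $B^{2K}(p_i+e^\varepsilon q_i)(\ln n)/n$ variance as in case~1. Off this event, Poisson tails contribute $n^{-\Omega(1)}$. Summing bias and variance across all $S$ symbols and invoking $\ln n \leq C\ln S$ to absorb the $n^{o(1)}$ factor into $\ln n$ yields the stated $e^\varepsilon S/(n\ln n)$ bound; the final clipping to $[0,1]$ in Algorithm~\ref{algo} can only decrease the error since $d_\varepsilon(P\|Q)\in[0,1]$.
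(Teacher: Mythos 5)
Your proposal matches the paper's proof of Theorem~\ref{thm:opt} in all structural respects: sample-splitting with a good-event classification over the four branches (the analogue of Lemma~\ref{lem:bad_event}), the factorization $[x-e^\varepsilon y]^+=(\sqrt{x}+\sqrt{e^\varepsilon y})[\sqrt{x}-\sqrt{e^\varepsilon y}]^+$ so the case-1 bias is $O(K^{-1}\sqrt{(p+e^\varepsilon q)\ln n/n})$ instead of $O(\ln n/(nK))$, conditional unbiasedness of the $R_K$-based estimator given the data-dependent scale $W$ from the first split (Lemma~\ref{lem:biasvariance2}), and using $\ln n\le C\ln S$ to absorb $B^K$ into the $1/\ln n$ gain. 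Your handling of $W$ via a two-sided concentration event $W/W^\star\in[1/2,2]$ is a valid alternative to the paper's one-sided event $\hp_{i,1}+e^\varepsilon\hq_{i,1}\ge(p_i+e^\varepsilon q_i)/2$ followed by taking expectation over the first split; both reach the same bounds.

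Two technical slips are worth correcting, neither of which is a structural gap. First, the quoted Poisson moment bound $\E\big[\prod_{k<j}(\hp-k/n)^2\big]\le(p+j/n)^j/n^j$ is false as stated: already for $j=1$ the left side is $p/n+p^2$, which exceeds the right side $p/n+1/n^2$ whenever $p>1/n$. The bound you want is $\E\big[\prod_{k<j}(\hp-k/n)^2\big]\le(2\max\{np,j\}p/n)^j$ from Lemma~\ref{lem:Qopt_moment} (with $q=0$); in the case-1 regime $p\lesssim\ln n/n$ this carries an extra $\Theta(\ln n)$ per factor, and together with the $\Theta(K^2)$ from summing $\sum_{0\le i,j\le 2K}x^iy^j$ it produces the $\ln^5 n$ factor in Lemma~\ref{lem:biasvariance1}; both are absorbed by $B^K=n^{O(c_3)}$, so your conclusion still stands. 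Second, the individual $\delta_i$ are not bounded in $[0,1]$ (the plug-in branch can be negative, the polynomial branches can overshoot); what makes the bad-event contribution $\lesssim\prob(E^c)$ is that the final clipped estimator $\hd=0\vee(1\wedge\sum_i\delta_i)$ and $d_\varepsilon(P\|Q)$ both lie in $[0,1]$, so the squared error is at most $1$ there.
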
 

It follows from the proof of Theorem~\ref{thm:Qlb} that 
\begin{eqnarray*}
		\inf_{\hd_\varepsilon(P_n \| Q_n )} \sup_{P,Q \in \cM_S}  \E_{P\times Q} \big[ \, \big|\, \hd_{\varepsilon}( P_n \| Q_n)  - d_\varepsilon(P\| Q) \,\big|^2 \, \big] 
		\;\; \gtrsim \;\;  \frac{e^\varepsilon S}{n \ln n}\;.
\end{eqnarray*}
 Together, the above upper and lower bounds 
 prove that the proposed estimator in Algorithm~\ref{algo} is minimax optimal and cannot be improved upon in terms of sample complexity. 
 We want to emphasize that we do not require to know the size of the support $S$, 
 as opposed to exiting methods in \cite{DWW18}, which 
 requires collecting enough samples to identify the support. 
Comparing it to the  error rate of plug-in estimator in Theorem~\ref{thm:Qmle}, 
this minimax rate of $e^\varepsilon S /( n \ln n)$ demonstrates the  
 {\em effective sample size amplification} holds;
with $n$ samples,  
a sophisticated estimator can achieve the error rate 
equivalent to a plug-in estimator with $n\ln n $ samples.


%
%

%
%
\section{Experiments}
\label{sec:exp}
We implemented both plug-in estimator and Algorithm~\ref{algo}. Note that the coefficients of bivariate polynomial $u_K(x,y)$, $v_K(x,y)$ on $[0,1]^2$ and $R_K(t)$ on $[-1,1]$ are independent of data and $\varepsilon$. 
 We pre-compute the coefficients and look them up from a table when running our estimators. Using numerical computation provided by Chebfun toolbox \cite{Driscoll2014}, we obtained the coefficients of $R_K$ by Remez algorithm, and the coefficients of bivariate polynomial $u_K$, $v_K$ by lowpass filtered Chebyshev expansion \cite{MNS13}. The time complexity of Algorithm~\ref{algo} is $O(n\ln^2 n)$.
We compare the proposed Algorithm~\ref{algo} and the plug-in estimator on synthetic data (Section~\ref{sec:syn}), 
and demonstrate the $(\varepsilon, \delta)$ regions for some of the popular differential privacy mechanisms and their variations 
(Section~\ref{sec:real}).
All experiments are done on a Macbook Pro  with Intel\textsuperscript{\textregistered} Core\textsuperscript{TM} i5 processor and $8$ GB memory. Our estimator is implemented in Python 3.6. We provide the code as a supplementary material.

\subsection{Synthetic experiments}
\label{sec:syn}

Although in the analysis we make conservative choices of the constants $c_1$, $c_2$ and $c_3$, 
Algorithm~\ref{algo} is not sensitive to those choices in practice and 
we fix them to be $c_1 = 4$, $c_2 = 0.1$ and $c_3 = 1.5$ in the experiments. 
The degree of polynomial approximation is chosen as $K = \lfloor c_3\ln(n)\rfloor$.
Figure~\ref{fig:synthetic} illustrates the 
Mean Square Error (MSE) for estimating $d_\varepsilon(P\|Q)$ between uniform distribution $P$ and Zipf distribution $Q$, where the support size is fixed to be $S = 100$, ${\rm Zipf}(\alpha)\propto 1/i^\alpha$, and $\alpha = -0.6$ for $i\in [S]$. The $\varepsilon$ is fixed to be $\varepsilon=0.4$. Each data point represents $100$ random trials, with standard error (SE) error bars smaller than the plot marker. 
This suggests that the Algorithm 2 consistently improves upon the plug-in estimator, as predicted by Theorem~\ref{thm:opt}.

\begin{figure}[h]
	\centering
	\includegraphics[width=0.4\linewidth]{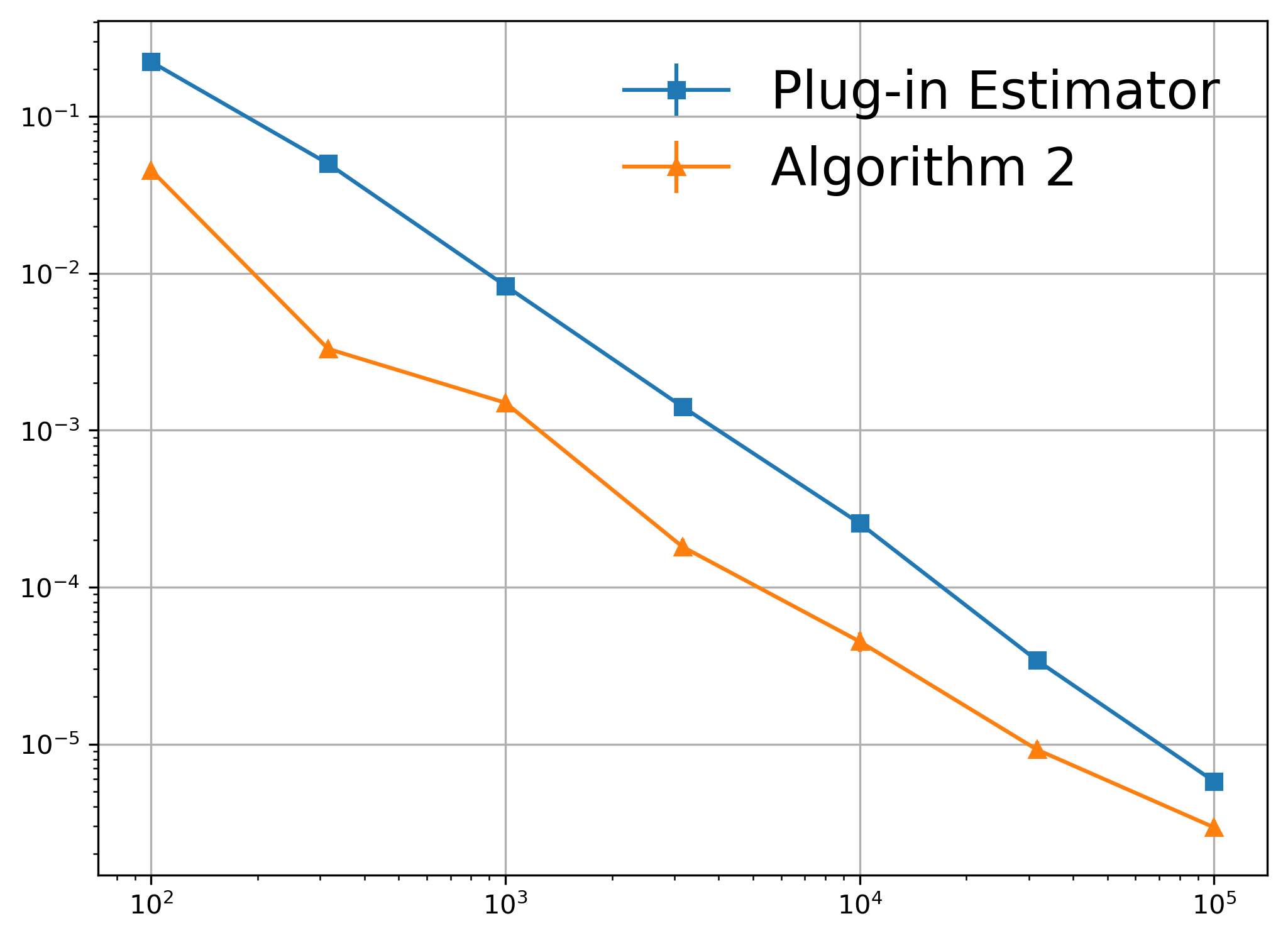}
	\put(-215,60){MSE}
	\put(-110,-10){sample size $n$}
	\caption{The proposed minimax optimal estimator in Algorithm~\ref{algo} consistently improves upon the plug-in estimator
	 on synthetic data.}
	\label{fig:synthetic}
\end{figure}

\subsection{Detecting violation of differential privacy with Algorithm~\ref{algo}}
\label{sec:real}

We demonstrate how we can use Algorithm~\ref{algo} 
to detect mechanisms with false claim of DP guarantees on four types of mechanisms: 
Report Noisy Max \cite{dwork2014algorithmic}, Histogram \cite{Dwork2006},  
Sparse Vector Technique \cite{LSL17} and Mixture of Truncated Geometric Mechanism. 
Following the experimental set-up of \cite{DWW18},  the test query and databases 
defining $(\cQ,\cD,\cD')$   are chosen by some heuristics, shown in Table~\ref{tab:databases}.  
 However, unlike the approach from \cite{DWW18}, 
  we do not require to know the size of the support $S$, 
  we don't have to specify candidate bad events $E\subseteq [S]$, 
  and we can estimate general approximate DP with $\delta>0$.
Throughout all the experiments, we fix $c_1 = 4$, $c_2 = 0.1$, $c_3 = 0.9$, and the mean of number of samples $n = 100000$. 
 For the examples in 
 Report Noisy Max, Histogram, and Sparse Vector Technique, we compose 5 and 10  queries together to form a one giant query. 
There are several categories  of queries and databases to be tested, 
each represented by the true answer of the 5 queries in the table. 
 When testing, 
 we test all categories, and report the largest estimate $\hat{\delta}$ for each given $\varepsilon$. For those faulty mechanisms, Algorithm~\ref{algo} also provides a certificate in the form of a set $T\subseteq[S]$ such that $[P(T)-e^\varepsilon_0 Q(T)]^+ -\delta_0 > 0$.

\begin{table}[h]
\centering
\caption{Database categories and samples \cite{DWW18}}
\label{tab:databases}
\begin{tabular}{@{}lll@{}}
\toprule
Category              & $[\cQ_1(\cD),\ldots,\cQ_5(\cD)]$      & $[\cQ_1(\cD'),\ldots,\cQ_5(\cD')]$     \\ \midrule
One Above             & [1, 1, 1, 1, 1] & [2, 1, 1, 1, 1] \\
One Below             & [1, 1, 1, 1, 1] & [0, 1, 1, 1, 1] \\
One Above Rest Below  & [1, 1, 1, 1, 1] & [2, 0, 0, 0, 0] \\
One Below Rest Above  & [1, 1, 1, 1, 1] & [0, 2, 2, 2, 2] \\
Half Half             & [1, 1, 1, 1, 1] & [0, 0, 0, 2, 2] \\
All Above \& All Below & [1, 1, 1, 1, 1] & [2, 2, 2, 2, 2] \\
X Shape               & [1, 1, 1, 1, 1] & [0, 0, 1, 1, 1] \\ \bottomrule
\end{tabular}
\end{table}

\noindent
{\bf Report Noisy Max.} For privacy budget $\varepsilon_0$, {\em Report Noisy Argmax with Laplace noise (RNA+Lap)} adds independent $Lap(2/\varepsilon_0)$ noise to query answers $\cQ(\cD)$ and return the index of the largest noisy query answer. {\em Report Noisy Argmax with Exponential noise (RNA+Exp) } adds $Exp(2/\varepsilon_0)$ noise instead of Laplace noise. 
Their claimed level of $(\varepsilon_0)$-DP is correctly guaranteed  \cite[Claim~3.9 and Theorem~3.10]{dwork2014algorithmic}.  
On the other hand, {\em Report Noisy Max with Laplace noise (RNM+Lap)} or {\em Report Noisy Max with Exponential noise (RNM+Exp)} return the largest noisy answer itself, instead of its index. 
This reveals more information than intended, leading to violation of claimed $(\varepsilon_0,0)$-DP.
Figure~\ref{fig:noisy_max} shows $(\varepsilon,\delta)$ regions for the above variations of noisy max mechanisms. 
As expected, for each $\varepsilon_0$, 
 {\em RNA+Lap} and {\em RNA+Exp} satisfy $(\varepsilon_0,0)$-DP, 
 whereas  {\em RNM+Lap} and {\em RNM+Exp} have $\hat\delta>0$. 


\begin{figure}[h]
	\centering
	\begin{tabular}{ccccc}
\includegraphics[width=.35\textwidth]{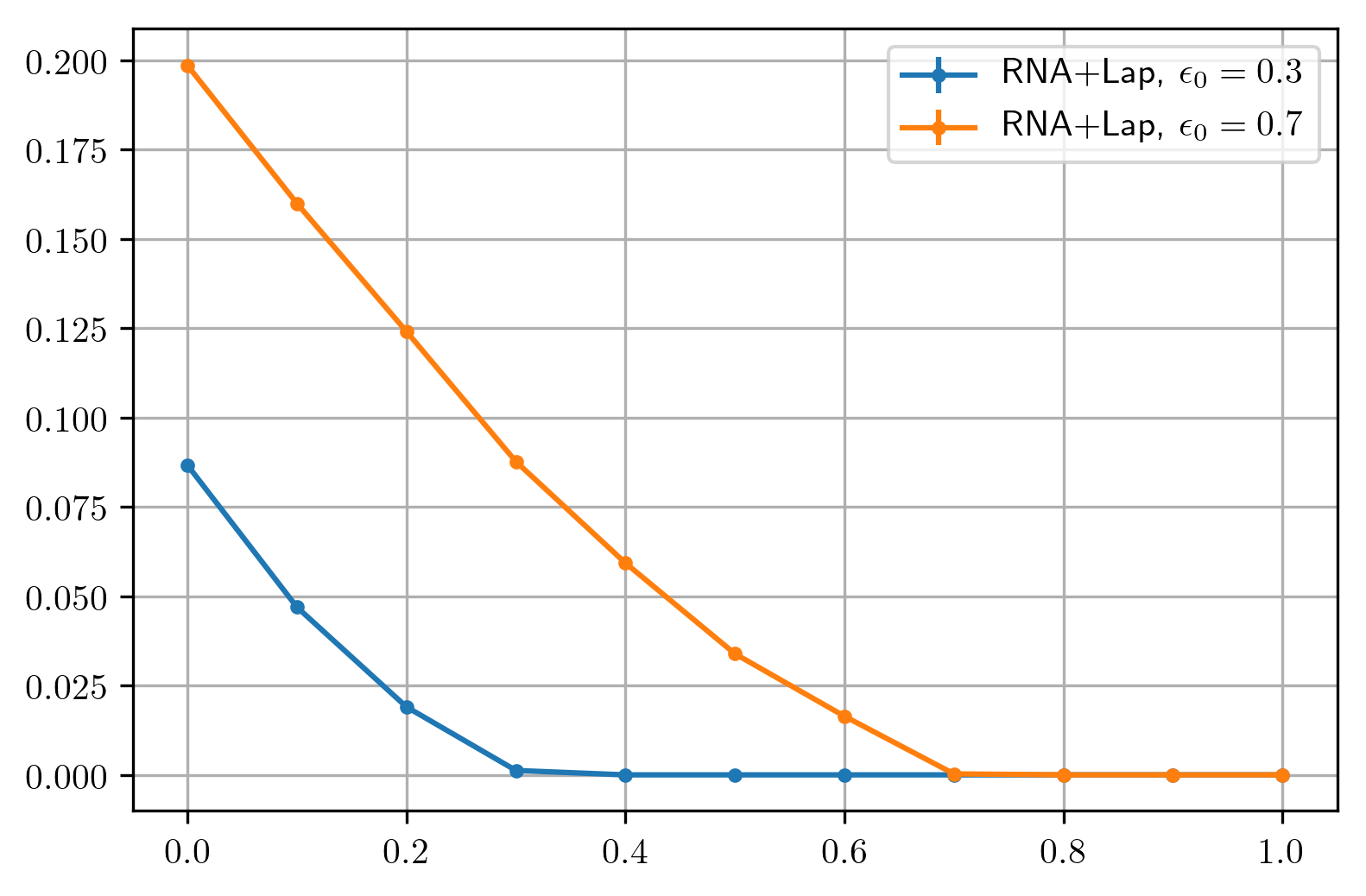}&
\put(-80,-5){$\varepsilon$}
\put(-180,50){$\hat{\delta}$}
\hspace{0.1cm}
\includegraphics[width=.35\textwidth]{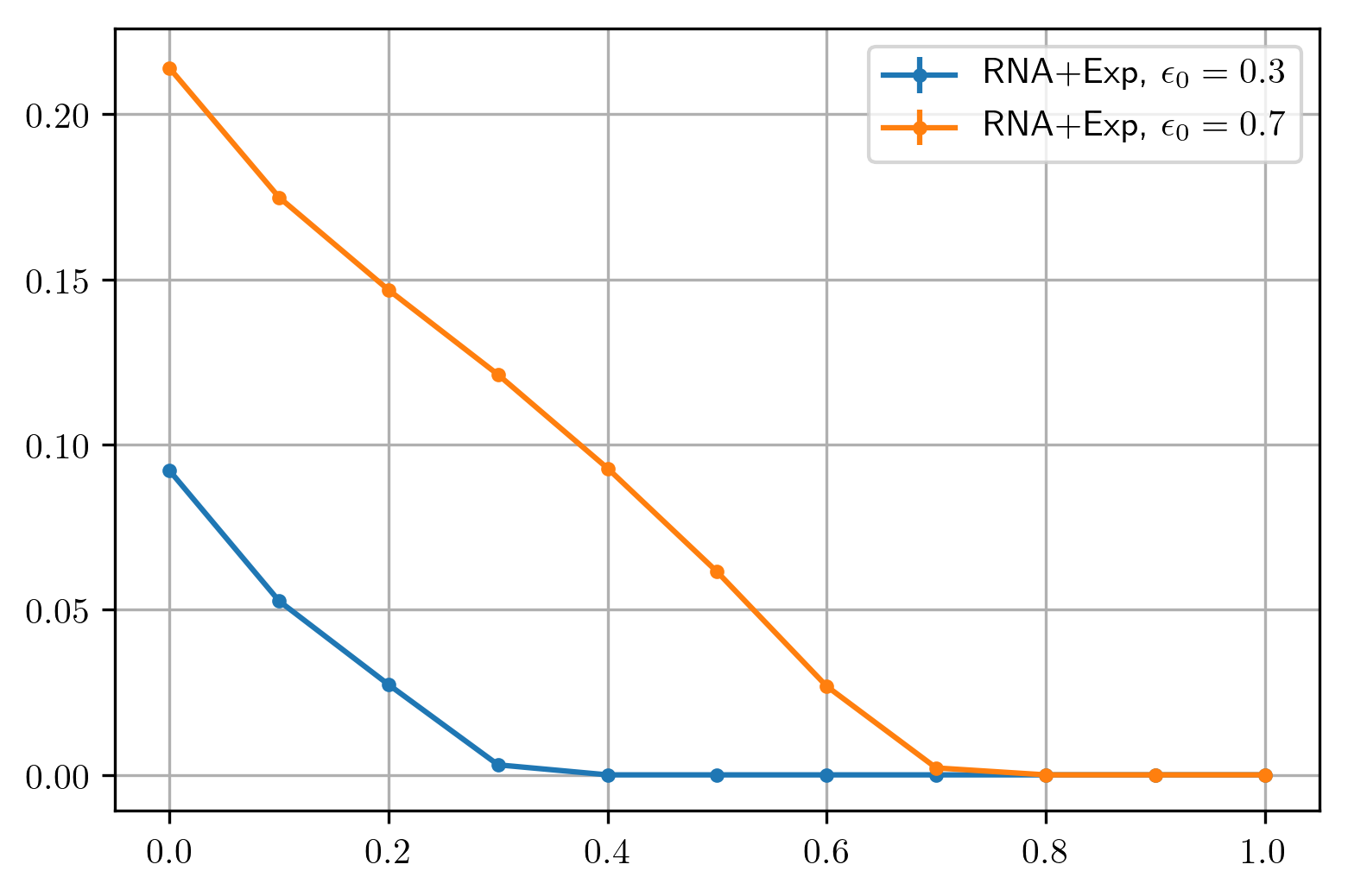}
\put(-80,-5){$\varepsilon$}
\put(-173,50){$\hat{\delta}$}
\\

\includegraphics[width=.35\textwidth]{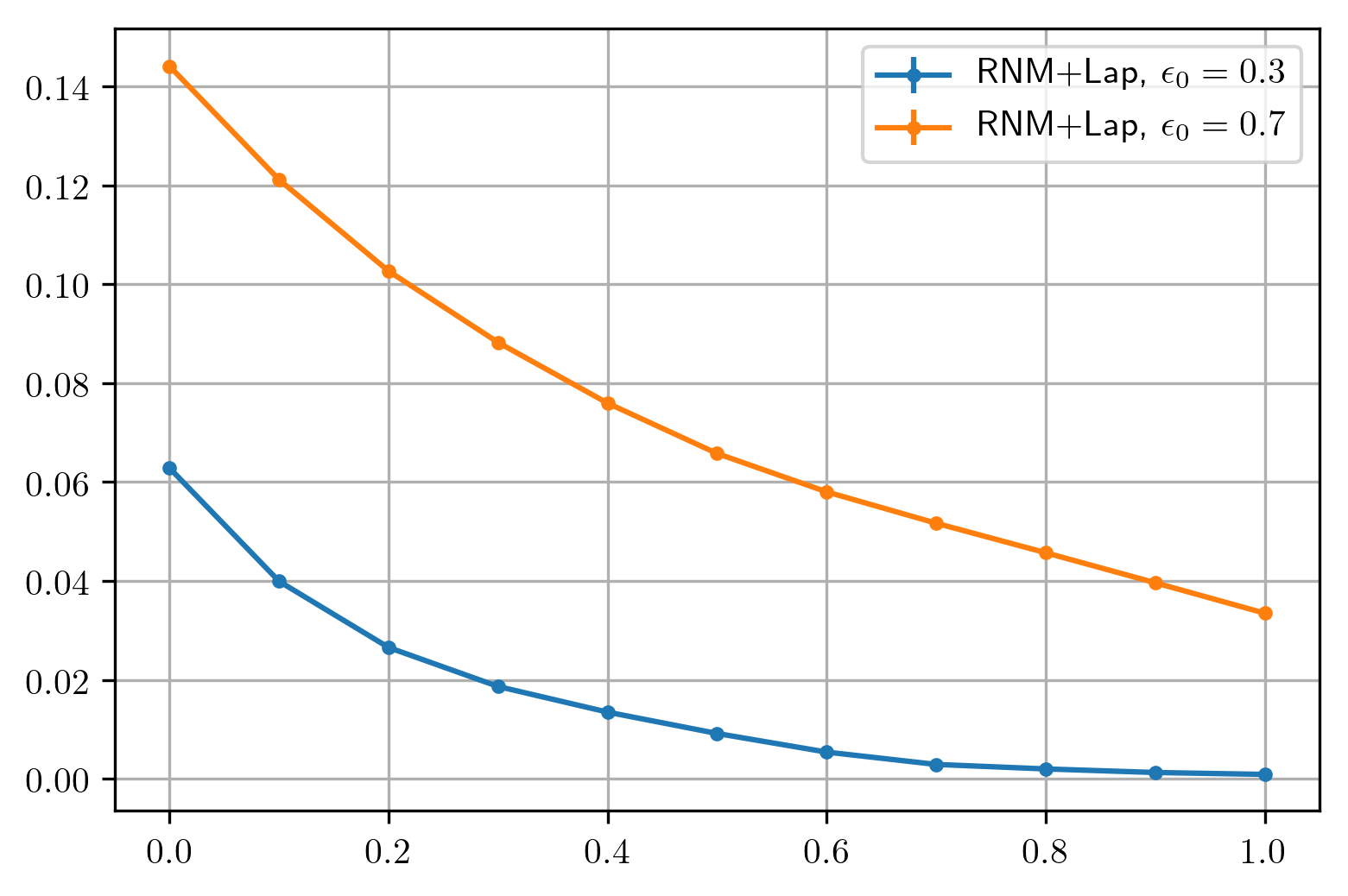}&
\put(-80,-5){$\varepsilon$}
\put(-180,50){$\hat{\delta}$}
\hspace{0.1cm}

\includegraphics[width=.35\textwidth]{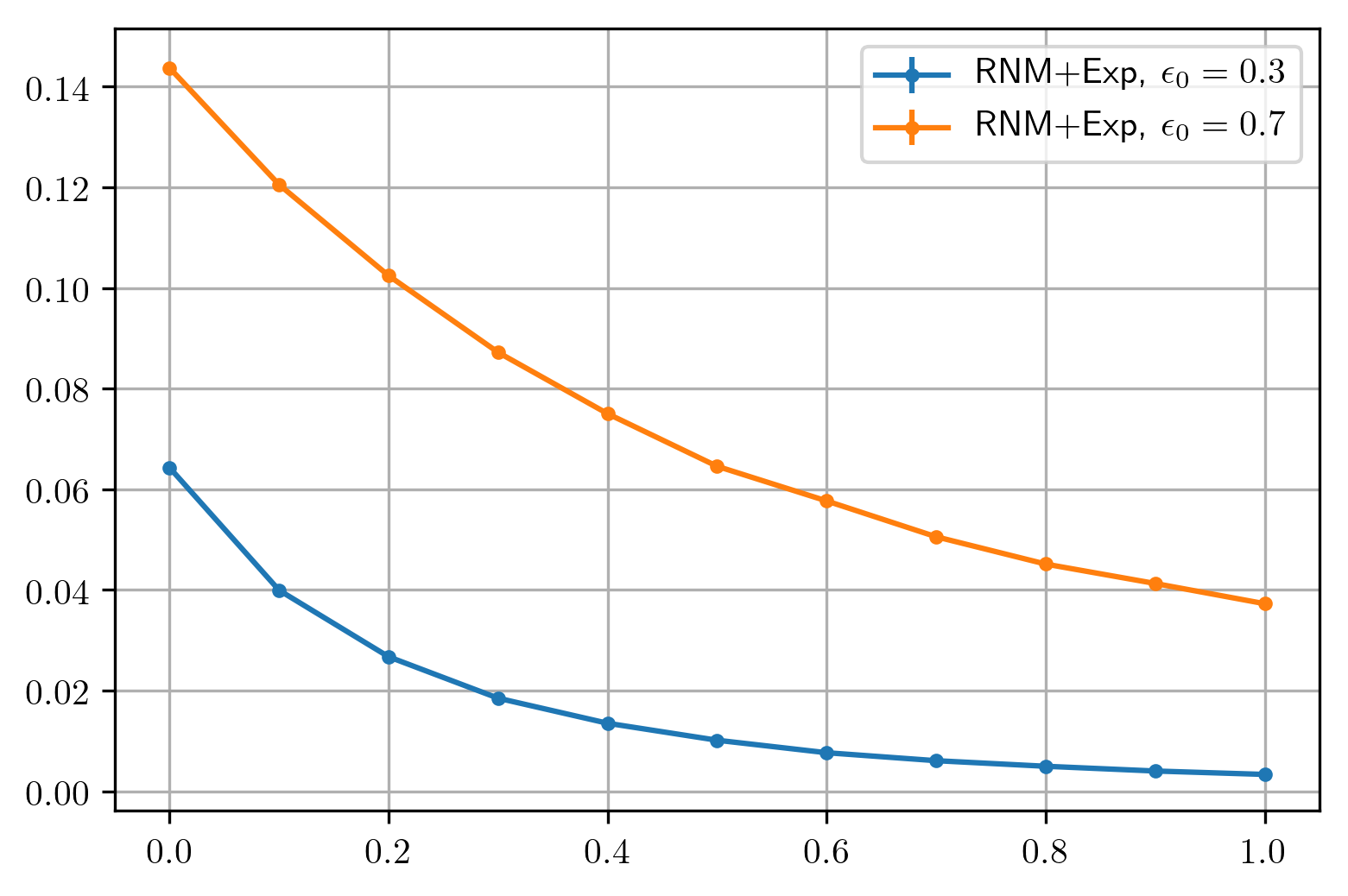}
\put(-80,-5){$\varepsilon$}
\put(-173,50){$\hat{\delta}$}

\end{tabular}

	\caption{Estimate $\hat{\delta}$ from Algorithm~\ref{algo} of $\delta$ given $\varepsilon$, and privacy budget $\varepsilon_0$ for noisy max mechanisms. Each point is showing an average over  $10$ random trials with standard error.}
	\label{fig:noisy_max}
\end{figure}

\bigskip
\noindent{\bf Histogram.}
For privacy budget $\varepsilon_0$, {\em Histogram} takes histogram queries as input, adds independent $Lap(1/\varepsilon_0)$ noise to each query answers, and output the randomized query answers directly, which is proved to be $(\varepsilon_0,0)$-DP \cite{dwork2014algorithmic}. As a comparison, {\em Histogram with incorrect noise} adds incorrect noise $Lap(\varepsilon_0)$. Note that as we use histogram queries, we require $\cQ(\cD)$ and $\cQ(\cD')$ to be different in at most one element, which is tested on One Above and One Below samples as shown in Table~\ref{tab:databases}.
With the setting of privacy budget $\varepsilon_0 = 0.5$, Figure~\ref{fig:histogram} left panel shows that the incorrect histogram is likely to be $(1/\varepsilon_0,0)$-DP. Both mechanisms claim $(0.5,0)$-DP, but the figure shows that the incorrect mechanism ensures $(1/0.5,0)$-DP instead. 

\begin{figure}[h]
	\centering
	\includegraphics[width=0.31\linewidth]{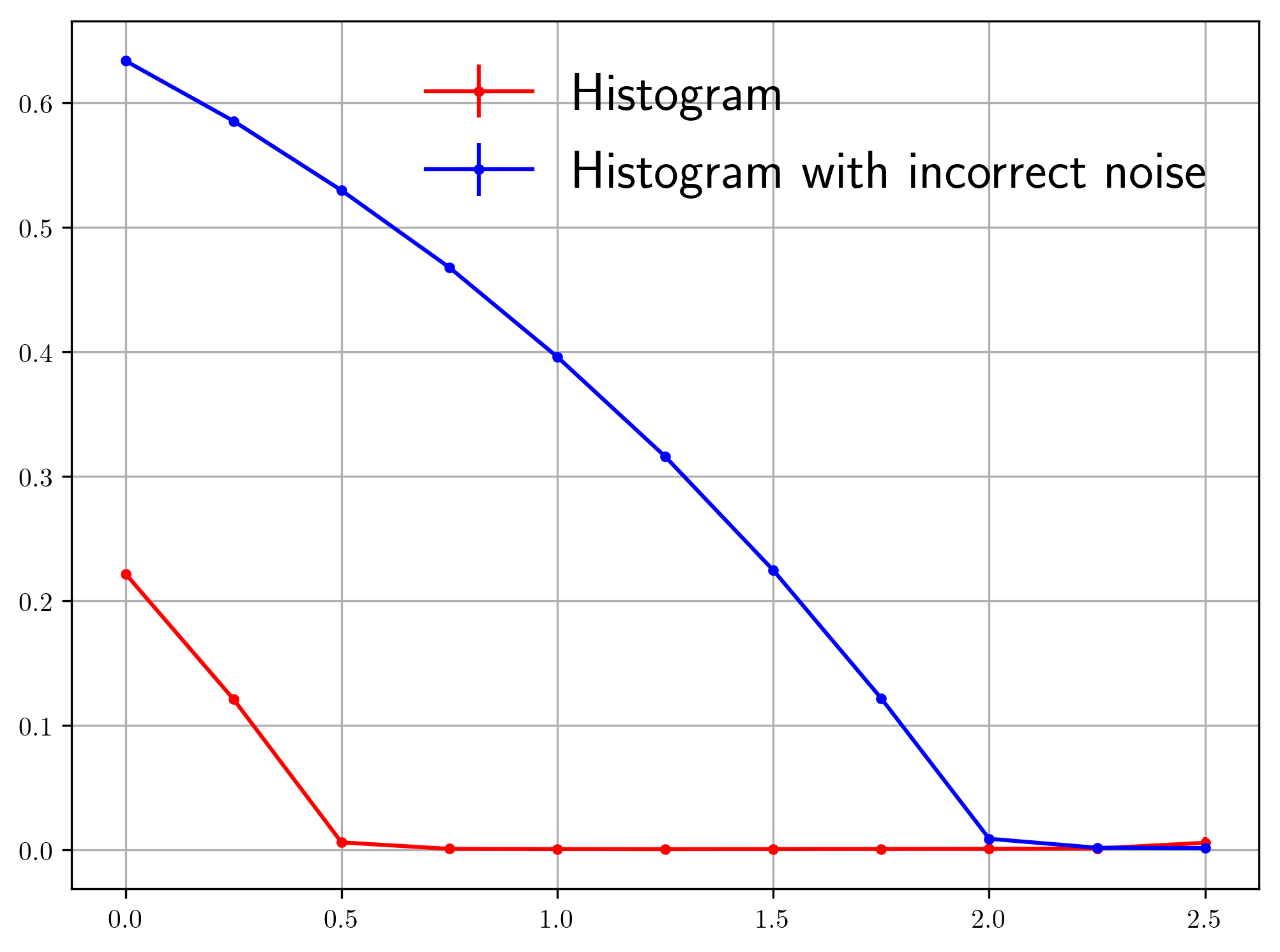}
	\put(-70,-5){$\varepsilon$}
	\put(-155,50){$\hat{\delta}$}
	\hspace{0.3cm}
	\includegraphics[width=0.31\linewidth]{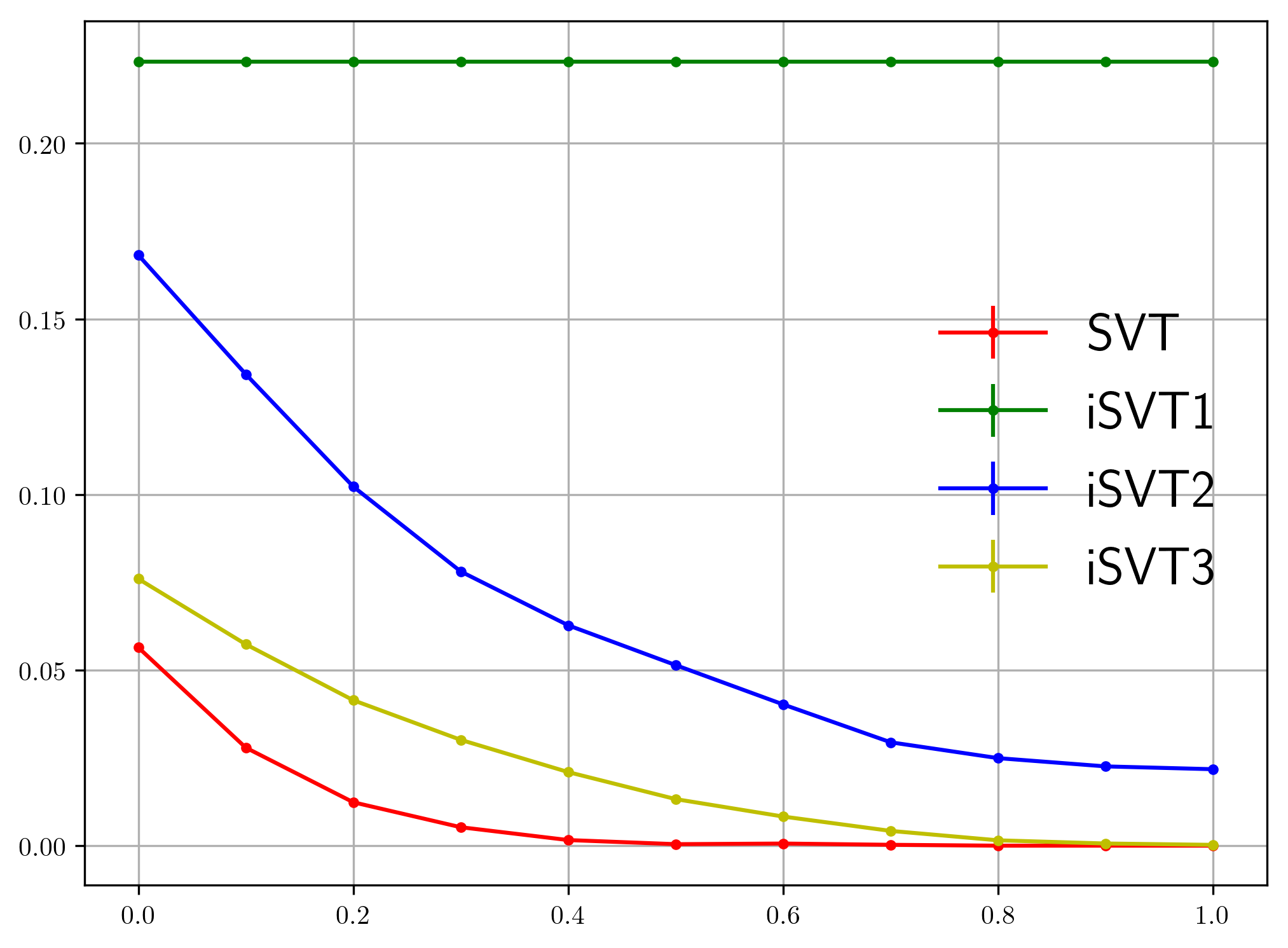}
	\put(-70,-5){$\varepsilon$}
	\put(-155,50){$\hat{\delta}$}
	\hspace{0.3cm}
	\includegraphics[width=0.31 \linewidth]{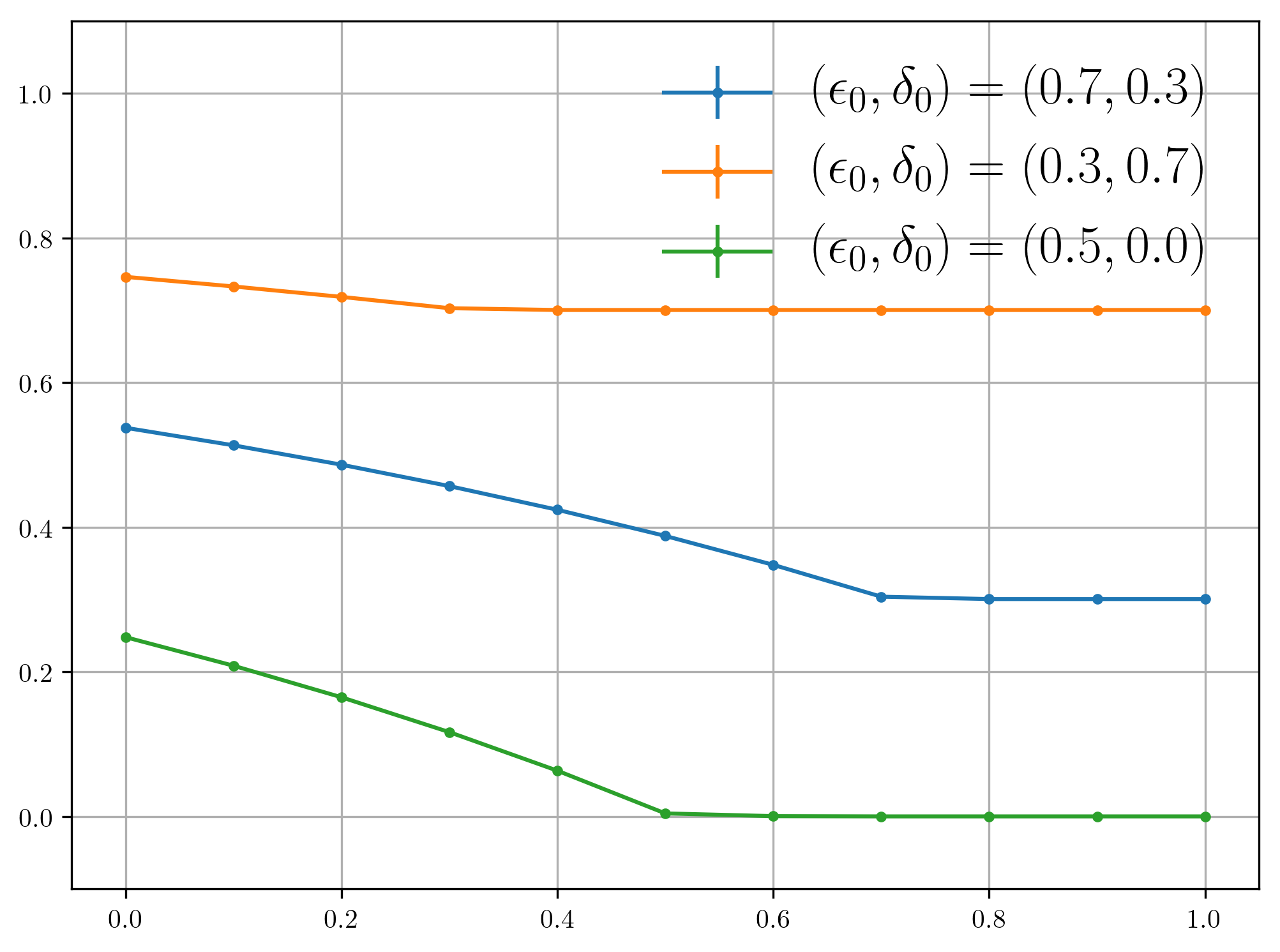}
	\put(-70,-5){$\varepsilon$}
	\put(-155,50){$\hat{\delta}$}
	\caption{Estimates of $\delta$ given $\varepsilon$, and privacy budget $\varepsilon_0$ for histogram (left),  
	 SVT mechanisms (middle),   Mixture of Truncated Geometric Mechanisms (right). 
	Each point represents $10$ random trials with standard error bars.
	}
	\label{fig:histogram}
\end{figure}

\bigskip\noindent
{\bf Sparse Vector Technique (SVT).} 
We consider original sparse vector technique mechanism {\em SVT} \cite{LSL17}, and its variations {\em iSVT1} \cite{SCM14}, {\em iSVT2}  \cite{CXZ15}, and {\em iSVT3} \cite{LC14}. They are discussed in Section~\ref{sec:intro} and also studied and tested in \cite{DWW18}.
With the setting of privacy budget $\varepsilon_0 = 0.5$, Figure~\ref{fig:histogram} shows that {\em SVT} is likely to be $(\varepsilon_0, 0)$-DP. However, {\em iSVT1} and {\em iSVT2} are not likely to be pure differentially private for $\varepsilon \in [0,1]$ with budget $\varepsilon_0=0.5$. As discussed in \cite{LSL17}, {\em iSVT3} is in fact $(\frac{(1+6N)}{4}\varepsilon_0,0)$-DP, where $N$ is the bound of number of trues in the output Boolean vector and set as $N=1$ in this experiment. Figure~\ref{fig:histogram} middle panel shows that with $\hat{\delta} =0$, $\varepsilon$ is likely to be in the range $[0.8, 0.9]$, which verifies the theoretic guarantee.

\bigskip\noindent
{\bf Mixture of Truncated Geometric Mechanism (MTGM)}
With privacy budget $\varepsilon_0$, {\em Truncated Geometric Mechanism (TGM)} proposed by \cite{GRS12} is provably to be $(\varepsilon_0, 0)$-DP. With probability privacy budget $\varepsilon_0$ and $\delta_0\in [0,1]$, {\em Mixture of Truncated Geometric Mechanism (MTGM)} outputs the original query answer with probability $\delta_0$, and outputs the randomized query answer with probability $1-\delta_0$. {\em MTGM} can be proved to be $(\varepsilon_0,\delta_0)$-DP by composition theorem. Note that {\em TGM} and {\em MTGM} both take single counting query as query function $\cQ$. In the experiment, we consider the single counting query with range $\{0,1,2,3\}$.
Figure~\ref{fig:histogram} right panel confirms that {\em MTGM} satisfied the claimed $(\varepsilon_0,\delta_0)$ differential privacy.

%
%
\newpage
\section{Proof}
\label{sec:proof}

\subsection{Auxiliary lemmas}
\label{sec:proof_aux}

\subsubsection{Lemmas on Poisson distribution}

\begin{lemma}[{\cite[Exercise~4.7]{mitzenmacher2005probability}}]
\label{lem:poisson_tail}
If $X\sim \Poi(\lambda)$, then for any $\delta>0$, we have
\begin{eqnarray}
	\prob\Big(X\geq (1+\delta)\lambda \Big)&\;\;\leq\;\;& \Big(\frac{e^{\delta}}{(1+\delta)^{1+\delta}}\Big)^\lambda \;\;\leq\;\; e^{-\delta^2\lambda/3}\;\vee\; e^{-\delta\lambda/3}\;,\\
	\prob\Big(X\leq (1-\delta)\lambda \Big)&\;\;\leq\;\;& \Big(\frac{e^{-\delta}}{(1-\delta)^{1-\delta}}\Big)^\lambda \;\;\leq\;\; e^{-\delta^2\lambda/2}\;.
\end{eqnarray}
\end{lemma}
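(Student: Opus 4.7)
The plan is to apply the standard Chernoff bounding technique, since both inequalities are classical Poisson tail bounds. The moment generating function of $X\sim\Poi(\lambda)$ is $\E[e^{tX}]=e^{\lambda(e^t-1)}$, obtained directly from summing the Poisson series.

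For the upper tail, I would invoke Markov's inequality on $e^{tX}$ for $t>0$:
$$\prob\bigl(X\ge(1+\delta)\lambda\bigr)\;\le\;e^{-t(1+\delta)\lambda}\,\E[e^{tX}]\;=\;\exp\!\bigl(\lambda(e^t-1-t(1+\delta))\bigr)\;,$$
then optimize over $t>0$. Setting the derivative of the exponent to zero gives $t^*=\ln(1+\delta)$, which plugged in yields exactly $\bigl(e^{\delta}/(1+\delta)^{1+\delta}\bigr)^{\lambda}$, establishing the first inequality. For the second (looser) inequality, I would let $\psi(\delta)\triangleq(1+\delta)\ln(1+\delta)-\delta$ and show by elementary calculus that $\psi(\delta)\ge\delta^2/3$ on a moderate range of $\delta$ (using the Taylor expansion $\psi(\delta)=\delta^2/2-\delta^3/6+\cdots$) while $\psi(\delta)\ge\delta/3$ for larger $\delta$ (checking that $\psi(\delta)/\delta$ is increasing and exceeds $1/3$ once $\delta$ is, say, at least $1$). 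Taking the better of the two bounds in the exponent translates to the $e^{-\delta^2\lambda/3}\vee e^{-\delta\lambda/3}$ form.

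For the lower tail, the same Chernoff approach with $t<0$ (equivalently $s=-t>0$) gives
$$\prob\bigl(X\le(1-\delta)\lambda\bigr)\;\le\;\exp\!\bigl(\lambda(e^{-s}-1+s(1-\delta))\bigr)\;,$$
optimized at $s^*=-\ln(1-\delta)$, which produces $\bigl(e^{-\delta}/(1-\delta)^{1-\delta}\bigr)^{\lambda}$. The sharper exponential bound $e^{-\delta^2\lambda/2}$ reduces to verifying $\phi(\delta)\triangleq(1-\delta)\ln(1-\delta)+\delta\ge\delta^2/2$ for $\delta\in[0,1)$, which follows by noting $\phi(0)=0$ and $\phi'(\delta)-\delta=-\ln(1-\delta)-\delta\ge 0$ from the standard series for $-\ln(1-\delta)$.

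There is no substantive obstacle here: the Chernoff step is mechanical, and the only care needed is in the elementary inequalities on $\psi$ and $\phi$, which can be handled by comparing Taylor series term by term or by checking monotonicity of auxiliary ratios. Since the statement is cited verbatim from \cite{mitzenmacher2005probability}, I would likely omit the proof and reference the cited source.
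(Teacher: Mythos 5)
Your argument is correct and is precisely the standard Chernoff bounding derivation that Mitzenmacher and Upfal use for this exercise; the paper itself supplies no proof, citing the textbook directly, so there is no alternative route to compare against. One minor phrasing point: when you say ``taking the better of the two bounds,'' what the displayed inequality actually asserts is the weaker statement that the probability is at most the \emph{maximum} of $e^{-\delta^2\lambda/3}$ and $e^{-\delta\lambda/3}$, which follows once you have $\psi(\delta)\ge\delta^2/3$ on $(0,1]$ and $\psi(\delta)\ge\delta/3$ on $[1,\infty)$, since each regime's bound is dominated by the max.
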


\begin{lemma}[{\cite[Exercise~4.14]{mitzenmacher2005probability}}]
\label{lem:weighted_poisson_tail}
	Suppose $X\sim \Poi(\lambda_1)$, $Y\sim\Poi(\lambda_2)$, and $Z = \alpha X+ Y$, where $\alpha>1$ is a constant. Then $\E[Z] = \alpha\lambda_1+\lambda_2$,
	and for any $\delta>0$, we have
	\begin{eqnarray}
		\prob\Big(Z\geq (1+\delta)(\alpha\lambda_1+\lambda_2)\Big) &\;\;\leq\;\;& e^{-\delta^2(\alpha\lambda_1+\lambda_2)/3}\vee e^{-\delta(\alpha\lambda_1+\lambda_2)/3}\;,\\
		\prob\Big(Z\leq (1-\delta)(\alpha\lambda_1+\lambda_2)\Big) &\;\;\leq\;\;& e^{-\delta^2(\alpha\lambda_1+\lambda_2)/2}\;.
		\end{eqnarray}
\end{lemma}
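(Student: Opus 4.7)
The plan is to prove both tails via the standard Chernoff / moment-generating-function argument, mirroring the proof of Lemma~\ref{lem:poisson_tail}. By independence of $X$ and $Y$, the MGF of $Z$ factors as
\begin{eqnarray*}
M_Z(t) \;=\; \E[e^{\alpha t X}]\,\E[e^{tY}] \;=\; \exp\!\big(\lambda_1(e^{\alpha t}-1) + \lambda_2(e^{t}-1)\big).
\end{eqnarray*}
The mean $\E[Z]=\alpha\lambda_1+\lambda_2=:\mu$ comes out by differentiating at $t=0$ (or directly from linearity). From here, Markov's inequality applied to $e^{tZ}$ gives, for $t>0$,
\begin{eqnarray*}
\prob\big(Z \geq (1+\delta)\mu\big) \;\leq\; \exp\!\big(-(1+\delta)\mu t + \lambda_1(e^{\alpha t}-1) + \lambda_2(e^{t}-1)\big),
\end{eqnarray*}
and an analogous expression with $t<0$ for the lower tail.

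Next I would optimize (or near-optimize) $t$. Motivated by the Poisson-Chernoff choice $t=\ln(1+\delta)$, I would try $t=\tfrac{1}{\alpha}\ln(1+\delta)$ to neutralize the $e^{\alpha t}$ factor, yielding the exponent $\lambda_1\delta + \lambda_2((1+\delta)^{1/\alpha}-1) - \tfrac{(1+\delta)\ln(1+\delta)}{\alpha}\mu$, and then compare with $t=\ln(1+\delta)$, keeping whichever is smaller in the relevant regime. Using the textbook estimates $(1+\delta)\ln(1+\delta)-\delta \geq \delta^2/3$ for $\delta\in(0,1]$ and $\geq \delta/3$ for $\delta\geq 1$, the two exponential forms $e^{-\delta^2\mu/3}$ and $e^{-\delta\mu/3}$ separate cleanly, which is why the statement has the $\vee$. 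For the lower tail I would pick $t=\tfrac{1}{\alpha}\ln(1-\delta)$ and use $(1-\delta)\ln(1-\delta)+\delta \geq \delta^2/2$ on $\delta\in(0,1)$.

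The main obstacle is the factor $e^{\alpha t}-1$ in the MGF: since $e^{\alpha t}-1 \geq \alpha(e^t-1)$ by convexity, the MGF of $Z$ dominates that of $\Poi(\mu)$, so the raw Chernoff exponent coming out of the $t=\ln(1+\delta)$ substitution is larger than in the pure Poisson case and does not immediately produce the advertised $\delta^2\mu/3$ rate. The technical heart of the proof is therefore to show that choosing $t$ proportional to $1/\alpha$ absorbs the extra growth in $e^{\alpha t}$ against the $(1+\delta)\mu t$ term, and to verify carefully that the resulting exponent still dominates $\delta^2\mu/3$ (respectively $\delta\mu/3$) in the stated regimes; if that direct route proves too lossy, a fallback would be to split $Z$ as $\alpha X + Y$ and apply Lemma~\ref{lem:poisson_tail} to each component with parameters tuned so that the exponents sum to the desired $(\alpha\lambda_1+\lambda_2)$-weighted rate.
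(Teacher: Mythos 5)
The paper provides no proof of this lemma; it is stated as a citation to a textbook exercise, so there is no "paper's proof" to compare your attempt against. Your Chernoff / MGF strategy is the natural one, your MGF computation is correct, and your instinct that the direct substitution $t=\alpha^{-1}\ln(1+\delta)$ may be "too lossy" is exactly right. In fact it must be: the lemma as written is false. Take $\lambda_2=0$, $\lambda_1=1$, $\alpha=10$, $\delta=1$, so $\mu=\alpha\lambda_1+\lambda_2=10$. Then
\begin{eqnarray*}
\prob(Z\geq 2\mu)=\prob(10X\geq 20)=\prob(X\geq 2)=1-2e^{-1}\approx 0.264 ,
\end{eqnarray*}
while the claimed bound is $e^{-\delta^2\mu/3}\vee e^{-\delta\mu/3}=e^{-10/3}\approx 0.036$. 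Carrying your computation with $t=\alpha^{-1}\ln(1+\delta)$ to the end in the $\lambda_2=0$ case gives exponent $-\lambda_1[(1+\delta)\ln(1+\delta)-\delta]=-(\mu/\alpha)[(1+\delta)\ln(1+\delta)-\delta]$, and this choice of $t$ is optimal there. So the correct exponent necessarily carries a $1/\alpha$ factor: the right statement is $\prob(Z\geq(1+\delta)\mu)\leq e^{-\delta^2\mu/(3\alpha)}\vee e^{-\delta\mu/(3\alpha)}$ and $\prob(Z\leq(1-\delta)\mu)\leq e^{-\delta^2\mu/(2\alpha)}$. Your fallback (write $\{Z\geq(1+\delta)\mu\}\subseteq\{\alpha X\geq(1+\delta)\alpha\lambda_1\}\cup\{Y\geq(1+\delta)\lambda_2\}$ and union bound) is a clean way to get an $\alpha$-dependent bound, but it cannot recover the $\alpha$-free constants either, for the same reason.

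For the paper this discrepancy is harmless: the lemma is invoked with $\alpha=e^\varepsilon$, and everywhere it is used (e.g.\ in the proof of Lemma~\ref{lem:bad_event}) the conclusion is of the form $\prob(\cdot)\leq S/n^\beta$ where $\beta$ is allowed to depend on $\varepsilon$. Replacing $\beta$ by $\beta e^{-\varepsilon}$ changes nothing downstream, since the constants $c_1,c_2,c_3$ in Theorem~\ref{thm:opt} are already permitted to depend on $\varepsilon$. So the right fix is simply to restate the lemma with the $1/\alpha$ factor in the exponent; once you do that, the argument you sketched (Chernoff with $t=\alpha^{-1}\ln(1+\delta)$ for the upper tail, $t=\alpha^{-1}\ln(1-\delta)$ for the lower, followed by the standard estimates on $(1\pm\delta)\ln(1\pm\delta)\mp\delta$) closes cleanly, since the mixed $\lambda_2$ term $\lambda_2\big((1+\delta)^{1/\alpha}-1-\alpha^{-1}(1+\delta)\ln(1+\delta)\big)$ is nonpositive and can be dropped.
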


\begin{lemma}
	\label{lem:poiss_relu}
	Suppose $n\hq\sim \Poi(nq)$, then 
	\begin{eqnarray}
		\E\big[ \,[\hq - q]^+\,\big]\; \in \;\left\{ 
		\begin{array}{rl}
		q \, e^{-nq}							& \;, 0 \leq q \leq \frac{1}{n}\\
		\Big[\, \sqrt{\frac{q}{4n}} , \sqrt{\frac{q}{2n}} \, \Big] 	& \;, q\geq \frac1n
		\end{array}
		\right.
	\end{eqnarray}
	Hence, 
	\begin{eqnarray}
		\frac{1}{2} \Big( q \wedge \sqrt{\frac{q}{n}}\Big) \;\leq \; \E[[\hq-q]^+]\;\leq \; 
		\Big( q \wedge \sqrt{\frac{q}{n}}\Big)\;.
	\end{eqnarray}
\end{lemma}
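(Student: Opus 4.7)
The plan is to reduce the claim to standard Poisson tail estimates by writing $Y \triangleq n\hq \sim \Poi(\lambda)$ with $\lambda = nq$, so that $\E[[\hq-q]^+] = \frac{1}{n}\E[[Y-\lambda]^+]$. The two cases in the statement correspond exactly to $\lambda \leq 1$ (where $Y$ is usually zero, and the positive part admits a simple closed form) and $\lambda \geq 1$ (where fluctuations are of order $\sqrt{\lambda}$).

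For $\lambda \leq 1$, every nonzero outcome $Y=k \geq 1$ already satisfies $k \geq 1 \geq \lambda$, so $[Y-\lambda]^+ = (Y-\lambda)\ind(Y \geq 1)$. Using $\E[Y] = \lambda$ and $\prob(Y \geq 1) = 1-e^{-\lambda}$,
\begin{eqnarray*}
\E\big[[Y-\lambda]^+\big] \;\;=\;\; \E[Y] - \lambda\,\prob(Y \geq 1) \;\;=\;\; \lambda - \lambda(1 - e^{-\lambda}) \;\;=\;\; \lambda e^{-\lambda},
\end{eqnarray*}
which yields $\E[[\hq-q]^+] = qe^{-nq}$ exactly, as claimed.

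For $\lambda \geq 1$, the key identity is that $\E[Y-\lambda]=0$ forces
\begin{eqnarray*}
\E\big[[Y-\lambda]^+\big] \;\;=\;\; \E\big[[\lambda-Y]^+\big] \;\;=\;\; \tfrac{1}{2}\,\E\big[\,|Y-\lambda|\,\big].
\end{eqnarray*}
The upper bound is then a one-line Cauchy--Schwarz computation: $\E[|Y-\lambda|] \leq \sqrt{\Var(Y)} = \sqrt{\lambda}$, so $\E[[\hq-q]^+] \leq \sqrt{\lambda}/(2n) = \sqrt{q/(4n)} \leq \sqrt{q/(2n)}$. For the matching lower bound I would use the summation-by-parts identity obtained by writing $k \lambda^k / k! = \lambda \cdot \lambda^{k-1}/(k-1)!$ inside $\sum_{k>\lambda} k\,\prob(Y=k)$, which telescopes to the closed form
\begin{eqnarray*}
\E\big[[Y-\lambda]^+\big] \;\;=\;\; \lambda\,\prob\big(Y = \lfloor \lambda \rfloor\big).
\end{eqnarray*}
It then remains to show $\prob(Y = \lfloor\lambda\rfloor) \gtrsim 1/\sqrt{\lambda}$ for $\lambda \geq 1$, which follows from Stirling's inequality $k! \leq e\sqrt{k}(k/e)^k$ applied to $k = \lfloor\lambda\rfloor$ after comparing $\lambda^{\lfloor\lambda\rfloor}e^{-\lambda}$ with $\lfloor\lambda\rfloor^{\lfloor\lambda\rfloor}e^{-\lfloor\lambda\rfloor}$.

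The main obstacle is obtaining the exact constant $\tfrac{1}{2}$ claimed on the lower end: the asymptotic value of $\E[[Y-\lambda]^+]/\sqrt{\lambda}$ is the Gaussian one, $1/\sqrt{2\pi} \approx 0.399$, which is strictly less than $1/2$, so a naive Stirling bound will not recover the stated constant. I would therefore either invoke Robbins' refinement of Stirling to squeeze out the sharpest possible prefactor, or simply absorb a small absolute constant into the ``$\lesssim/\gtrsim$'' form of the corollary; the scaling $\sqrt{q/n}$ is unaffected. Finally, the ``Hence'' statement follows by a case split at $q = 1/n$: $q \wedge \sqrt{q/n}$ equals $q$ for $q \leq 1/n$ and $\sqrt{q/n}$ for $q \geq 1/n$, and since $e^{-nq} \in [e^{-1},1]$ on $[0,1/n]$, $qe^{-nq}$ and $q$ are comparable up to an absolute constant, matching the two-sided bound in the corollary.
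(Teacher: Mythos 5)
Your proof is correct as far as it goes, and it actually goes a step further than the paper's own argument in an important way. On the small-$q$ case you compute the same exact value $qe^{-nq}$ by a slightly different (and equally valid) decomposition. On the large-$q$ case you derive, via the telescoping identity, the same closed form $\E[[Y-\lambda]^+] = \lambda\,\prob(Y=\lfloor\lambda\rfloor)$ that the paper obtains in its first display; the paper then Stirling-bounds this quantity, while you get the upper bound more directly from the symmetrization $\E[[Y-\lambda]^+]=\tfrac12\E|Y-\lambda|$ and Cauchy--Schwarz. Both approaches are sound.

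Where your write-up adds real value is in flagging that the stated constants cannot be right. Your Cauchy--Schwarz step gives $\E[[\hq-q]^+]\le \sqrt{\lambda}/(2n)=\sqrt{q/(4n)}$, i.e.\ the true value is \emph{at most} the lemma's claimed \emph{lower} endpoint, and the CLT asymptotic $\E[[Y-\lambda]^+]/\sqrt{\lambda}\to 1/\sqrt{2\pi}\approx 0.399$ confirms that no sharpening of Stirling can rescue the factor $1/2$. In fact the paper's own proof runs into exactly this problem: it establishes that the quantity lies in $[1/e,\,1.12/\sqrt{2\pi}]\cdot\sqrt{\lambda}/n\approx[0.368,0.447]\cdot\sqrt{\lambda}/n$ and then asserts ``this gives the desired bound,'' but this bracket is disjoint from the claimed $[\sqrt{q/(4n)},\sqrt{q/(2n)}]=[0.5,0.707]\cdot\sqrt{\lambda}/n$. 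So the lemma statement has the wrong constant. Your proposed resolution -- replace $1/2$ by a smaller universal constant (say $1/e$), or read the ``Hence'' display as a $\asymp$ statement -- is the correct one, and it is harmless for everything downstream since Lemma~\ref{lem:poiss_relu} is only invoked via $\lesssim$/$\gtrsim$ bounds (e.g.\ in Lemma~\ref{lem:poiss_shiftedrelu} and the lower-bound argument in the proof of Theorem~\ref{thm:Qmle}).

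One small clarification worth making explicit in your Case~1 derivation: the step $\E[Y\ind(Y\geq 1)]=\E[Y]$ holds because $Y\geq 0$ and the summand vanishes at $Y=0$; as written it is correct but it would help the reader to state this, since it is the only place where the nonnegativity of $Y$ enters.
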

\begin{proof}
	Let $\lambda=nq$, then 
	\begin{eqnarray*}
		\E\big[ \,[\hq - q]^+\,\big] 
		&=& \frac{1}{n} \sum_{k= \lfloor \lambda \rfloor + 1}^\infty \frac{ \lambda^k e^{-\lambda} }{k!} (k-\lambda) \\
		&=& \frac{1}{n} \sum_{k= \lfloor \lambda \rfloor + 1}^\infty \frac{ \lambda^k e^{-\lambda} }{k!} k \;-\; \frac{1}{n} \sum_{k= \lfloor \lambda \rfloor + 1}^\infty \frac{ \lambda^k e^{-\lambda} }{k!}\lambda \\
		&=& \frac{1}{n} \lambda \sum_{k= \lfloor \lambda \rfloor}^\infty \frac{ \lambda^k e^{-\lambda} }{k!}  \;-\; \frac{1}{n} \lambda \sum_{k= \lfloor \lambda \rfloor + 1}^\infty \frac{ \lambda^k e^{-\lambda} }{k!} \\
		&=& \frac{ \lambda^{\lfloor\lambda\rfloor+1} e^{-\lambda} }{n\, \lfloor\lambda\rfloor!}
	\end{eqnarray*}
	For $\lambda=nq \leq1$, this is $qe^{-nq} $.
	For $\lambda \geq 1$, we use Stirling's approximation to get 
	\begin{eqnarray*}
		\frac{ \lambda^{\lfloor\lambda\rfloor+1} e^{-\lambda} }{n\, \lfloor\lambda\rfloor!} 
		\;\; \in \;\; \Big[\frac1e,\frac{1}{\sqrt{2\pi}}\Big] \times   \frac{ \lambda^{\lfloor\lambda\rfloor+1} \,e^{-\lambda} }
		{n\, \lfloor\lambda\rfloor^{\lfloor\lambda\rfloor+\frac12}\,e^{-\lfloor\lambda\rfloor}}\;.
	\end{eqnarray*} 
	As $ \frac{ \lambda^{\lfloor\lambda\rfloor+\frac12} \,e^{-\lambda} }
		{ \lfloor\lambda\rfloor^{\lfloor\lambda\rfloor+\frac12}\,e^{-\lfloor\lambda\rfloor}}$ is 
		in $[1\,,\,1.12]$ for $\lambda\geq 1$, this gives the desired bound. 
\end{proof}

\begin{lemma}
	\label{lem:poiss_shiftedrelu}
	Suppose $n\hq\sim \Poi(nq)$, then 
	\begin{eqnarray}
		\E\big[ \,[p-e^\varepsilon \hq ]^+\,-\, [p-e^\varepsilon q]^+ \big]\; \; \leq \;\; e^\varepsilon \, \min\, \Big\{\, q , e^{-\varepsilon}p ,\sqrt{\frac{q}{n}}  , \sqrt{\frac{e^{-\varepsilon}p}{n}} \, \Big\} \;.  
	\end{eqnarray}
\end{lemma}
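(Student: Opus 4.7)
The plan is to derive a sharp pointwise inequality that reduces the problem to a one-sided Poisson deviation handled by Lemma~\ref{lem:poiss_relu}. Write $\tilde p := e^{-\varepsilon}p$ and $m := \min\{q,\tilde p\}$. The map $\phi(x) := [p-e^\varepsilon x]^+$ is piecewise linear with $\phi \equiv 0$ on $[\tilde p,\infty)$ and slope $-e^\varepsilon$ on $[0,\tilde p]$. Splitting on the sign of $\hat q-q$ and on the position of $\hat q$ and $q$ relative to $\tilde p$ (four sub-cases), one verifies the pointwise inequality
\begin{eqnarray*}
	[p-e^\varepsilon\hat q]^+ \,-\, [p-e^\varepsilon q]^+ \;\;\leq\;\; e^\varepsilon\,[m-\hat q]^+ \,,
\end{eqnarray*}
with equality whenever $\hat q \leq q$; on the complementary event $\{\hat q>q\}$ the LHS is non-positive (since $\phi$ is non-increasing) while the RHS vanishes because $m\leq q<\hat q$.

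Taking expectations reduces the task to bounding $\E[[m-\hat q]^+]$ where $n\hat q \sim \Poi(nq)$ and $m \leq q$. I use the standard Poisson superposition coupling $n\hat q \stackrel{d}{=} n\hat q^* + Z$ with $n\hat q^*\sim\Poi(nm)$ and $Z\sim\Poi(n(q-m))$ independent, so that $\hat q \geq \hat q^*$ almost surely. Since $x\mapsto[m-x]^+$ is non-increasing, this gives $\E[[m-\hat q]^+] \leq \E[[m-\hat q^*]^+]$. Because $\E[\hat q^*]=m$, the zero-mean identity $[m-\hat q^*]^+ - [\hat q^*-m]^+ = m-\hat q^*$ yields $\E[[m-\hat q^*]^+] = \E[[\hat q^*-m]^+]$, and Lemma~\ref{lem:poiss_relu} applied with mean $m$ bounds the latter by $m\wedge\sqrt{m/n}$. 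Assembling these steps and using $e^\varepsilon(m\wedge\sqrt{m/n}) = e^\varepsilon\min\{q,e^{-\varepsilon}p,\sqrt{q/n},\sqrt{e^{-\varepsilon}p/n}\}$ delivers the claim.

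The main subtlety is the pointwise inequality of the first paragraph. The naive Lipschitz bound $|\phi(\hat q)-\phi(q)|\leq e^\varepsilon|\hat q-q|$ combined with $\E|\hat q-q| = 2\E[\hat q-q]^+ \leq 2(q\wedge\sqrt{q/n})$ only yields $e^\varepsilon(q\wedge\sqrt{q/n})$, which is strictly weaker than the claim when $\tilde p < q$ (i.e.\ when $[p-e^\varepsilon q]^+ = 0$). Replacing the symmetric deviation $|\hat q - q|$ by the one-sided deficit $[m-\hat q]^+$ is what exploits the flat part of $\phi$ on $[\tilde p,\infty)$, and the subsequent stochastic-dominance step is what lets us downgrade the Poisson parameter from $q$ to $m$ so that Lemma~\ref{lem:poiss_relu} produces the $\tilde p$-terms in the bound.
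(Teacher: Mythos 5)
Your proof is correct. It follows the same two core steps as the paper — the Poisson superposition coupling that replaces $\hat q$ by a stochastically smaller $\hat q^*$ with mean $m=\min\{q,e^{-\varepsilon}p\}$, and the invocation of Lemma~\ref{lem:poiss_relu} — but the reduction to a one-sided deviation is packaged differently and, I think, more cleanly. The paper's own proof expands $[x]^+ = \tfrac{1}{2}(x+|x|)$, rewrites the quantity as $\tfrac{1}{2}(\E|p-e^\varepsilon\hat q| - |p-e^\varepsilon q|)$, and then case-splits on the sign of $p-e^\varepsilon q$: in the case $p\ge e^\varepsilon q$ this becomes $\E[e^\varepsilon\hat q - p]^+ \le e^\varepsilon\E[\hat q-q]^+$ (no coupling needed), and in the case $p<e^\varepsilon q$ it becomes $\E[p-e^\varepsilon\hat q]^+$, for which the paper introduces the coupling $\hat q \ge e^{-\varepsilon}\hat p$ with $ne^{-\varepsilon}\hat p\sim\Poi(ne^{-\varepsilon}p)$ and then bounds the downward deviation. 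Your unified pointwise inequality $[p-e^\varepsilon\hat q]^+-[p-e^\varepsilon q]^+\le e^\varepsilon[m-\hat q]^+$ subsumes both branches in a single statement (reducing to $m=q$ and $m=e^{-\varepsilon}p$ respectively), which eliminates the algebraic identity and the case split at the source. One thing you do more carefully than the paper: you explicitly invoke the mean-zero identity $\E[m-\hat q^*]^+ = \E[\hat q^*-m]^+$ before applying Lemma~\ref{lem:poiss_relu}; the paper's second case silently uses this same fact, since Lemma~\ref{lem:poiss_relu} is stated for the upward deviation $[\hat q-q]^+$ but there it is being applied to the downward one.
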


\begin{proof}
	
	\begin{eqnarray*}
		\E\big[ \,[ p- e^{\varepsilon} \hq]^+ - [p- e^{\varepsilon} q]^+ \,\big] 
		&=& \E\big[ \,\frac{(p - e^{\varepsilon}\hq)+|p - e^{\varepsilon}\hq| }{2}- \frac{(p-e^{\varepsilon}q)+|p-e^{\varepsilon}q|}{2} \,\big]  \\
		&=& \frac12 \, \Big( \E \big[\, |p-e^{\varepsilon}\hq|\,\big] - | p - e^{\varepsilon}q | \Big)\; .
	\end{eqnarray*}

	If $ p\geq e^{\varepsilon} q$, then 
	\begin{eqnarray*}
		\frac12 \, \Big( \E \big[\, |p-e^{\varepsilon}\hq|\,\big] - | p - e^{\varepsilon}q | \Big)
		&=& \frac12\,\Big( \E \big[\, |p-e^{\varepsilon}\hq| - ( p-e^\varepsilon q ) \,\big]\Big)\\
		&=& \frac12\, \Big(\E \big[\, (p-e^{\varepsilon}\hq) - ( p-e^{\varepsilon}q)+ 2 [e^\varepsilon\hq-p]^+ \,\big] \Big)\\
		&=& \E \big[\, [e^\varepsilon\hq-p]^+ \,\big]\\
		&\leq & e^\varepsilon\E \big[\, [\hq - q]^+ \,\big]\\
		&\leq &e^\varepsilon \Big( q\wedge \sqrt{\frac{q}{n}}\Big)\;,
	\end{eqnarray*}
	where the second equality is because of the fact that $x = [x]^+-[-x]^+$ and $|x| = [x]^++[-x]^+$, the first inequality follows from the fact that $[x]^+$ is monotone, and the last inequality follows from Lemma \ref{lem:poiss_relu}.

	If $ p< e^\varepsilon q$, then 
	\begin{eqnarray*}
		\frac12 \, \Big( \E \big[\, |p-e^{\varepsilon}\hq|\,\big] - | p - e^{\varepsilon}q | \Big)
		&=& \frac12\,\Big( \E \big[\, |p-e^{\varepsilon}\hq| - ( e^\varepsilon q - p) \,\big]\Big)\\
		&=& \frac12\, \Big(\E \big[\, (e^{\varepsilon}\hq-p) - ( e^{\varepsilon}q-p)+ 2 [p-e^\varepsilon\hq]^+ \,\big] \Big)\\
		&=& \E \big[\, [p-e^{\varepsilon}\hq]^+ \,\big] \;,
	\end{eqnarray*}

	Now we construct new random variable $\hp$ by $n\hq = ne^{-\varepsilon} \hp+Z$, where $Z$ is independent of $\hp$ and $Z\sim \Poi(n(q-e^{-\varepsilon} p))$. Hence, $e^{-\varepsilon}\hp\leq \hq$ with probability one. And the marginal distribution satisfies $ne^{-\varepsilon}\hp\sim \Poi(ne^{-\varepsilon }p)$. We have

	\begin{eqnarray*}
		\E \big[\, [e^{-\varepsilon}p-\hq]^+ \,\big] 
		&\leq & \E \big[\, [e^{-\varepsilon}p-e^{-\varepsilon}\hp]^+ \,\big]\\
		&\leq &  \Big(e^{-\varepsilon} p \wedge \sqrt{\frac{e^{-\varepsilon}p}{n}}\Big)\;,
	\end{eqnarray*}
	where the first inequality follows from that fact that  $[x]^+$ is monotone, and the last inequality follows from Lemma \ref{lem:poiss_relu}.
\end{proof}


\begin{lemma}
	\label{lem:poiss_varub}
	Suppose $n\hq\sim\Poi(nq)$, then for any $p$ and $q$, we have
		\begin{eqnarray}
		\var\big(\, [p-e^\varepsilon\, \hq ]^+ \,\big) & \;\;\lesssim \;\;& \frac{e^{\varepsilon}p}{n}\;.
	\end{eqnarray}
\end{lemma}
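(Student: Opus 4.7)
The strategy is to combine the Poincar\'e inequality for Poisson random variables with the trivial boundedness $h(\hat q) \in [0,p]$ of $h(x) := [p - e^\varepsilon x]^+$. Writing $X := n\hat q \sim \Poi(nq)$ and $g(X) := h(X/n)$, the Poincar\'e (Chen) inequality $\var(g(X)) \le \lambda\,\E[(g(X+1)-g(X))^2]$ with $\lambda = nq$ is the starting point. Since $g$ decreases by at most $e^\varepsilon/n$ per unit increment in $X$ and equals zero for all $X > np/e^\varepsilon$, the squared increment is bounded by $(e^\varepsilon/n)^2 \ones\{X \le \lfloor np/e^\varepsilon\rfloor\}$, giving
\begin{eqnarray*}
	\var\big(h(\hat q)\big) \;\;\le\;\; \frac{e^{2\varepsilon}}{n}\, q\, \prob\!\big(X \le \lfloor np/e^\varepsilon\rfloor\big)\;.
\end{eqnarray*}

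The next step is the elementary identity $\mu\,\prob(X \le t) \le t+1$ for $X \sim \Poi(\mu)$ and integer $t \ge 0$, obtained by reindexing:
\begin{eqnarray*}
	\mu\,\prob(X \le t) \;\;=\;\; \sum_{k=0}^{t} \frac{\mu^{k+1} e^{-\mu}}{k!} \;\;=\;\; \sum_{j=1}^{t+1} j\,\prob(X=j) \;\;\le\;\; (t+1)\prob(X \le t+1) \;\;\le\;\; t+1\;.
\end{eqnarray*}
Applied with $\mu = nq$ and $t = \lfloor np/e^\varepsilon\rfloor$, this yields $q\,\prob(X \le t) \le p/e^\varepsilon + 1/n$, and hence
\begin{eqnarray*}
	\var\big(h(\hat q)\big) \;\;\le\;\; \frac{e^\varepsilon p}{n} + \frac{e^{2\varepsilon}}{n^2}\;.
\end{eqnarray*}

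To close, I would absorb the residual $e^{2\varepsilon}/n^2$ by a small case split on $p$: if $p \ge e^\varepsilon/n$ it is already dominated by $e^\varepsilon p/n$; if $p < e^\varepsilon/n$ I discard the Poincar\'e bound in favor of the trivial estimate $\var(h(\hat q)) \le \E[h(\hat q)^2] \le p^2 \le p \cdot e^\varepsilon/n$ that follows from $h(\hat q) \in [0,p]$. Either way $\var(h(\hat q)) \lesssim e^\varepsilon p/n$. The main obstacle is precisely this last step: the Poincar\'e inequality alone falls short in the very small-$p$ regime and must be paired with the boundedness of $h$. The tight constant in $\mu\,\prob(X \le t) \le t+1$ is the quantitative heart of the argument, and it is what prevents naive Lipschitz estimates (which would at best give $e^{2\varepsilon} q/n$) from contributing loose $\sqrt{q/n}$-type factors in the regime $e^\varepsilon q \gg p$.
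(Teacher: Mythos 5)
Your proof is correct and takes a genuinely different route from the paper's. The paper argues by a three-way case split on the relationship between $ne^{-\varepsilon}p$ and $nq$: it compares $[p-e^\varepsilon\hat q]^+$ to $[p-e^\varepsilon q]^+$ and uses Lipschitz-ness to land on $e^{2\varepsilon}q/n\lesssim e^\varepsilon p/n$ when $e^{-\varepsilon}p\gtrsim q$; it observes that $[ne^{-\varepsilon}p - n\hat q]^+$ is a Bernoulli variable with variance at most $p^2 < e^\varepsilon p/n$ when $ne^{-\varepsilon}p<1$; and in the intermediate regime $1\le ne^{-\varepsilon}p\le\lfloor nq\rfloor-1$ it sums the Poisson pmf explicitly to express $\E\big([p-e^\varepsilon\hat q]^+\big)^2$ in terms of a Chernoff-type factor $e^{-D_{\rm KL}(ne^{-\varepsilon}p\|nq)}$, which is then controlled via elementary inequalities such as $(y-1)^2e^{-x(y-1-\ln y)}\lesssim 1/x$. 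You instead invoke the Poisson Poincar\'e inequality $\var(g(X))\le\lambda\,\E[(g(X+1)-g(X))^2]$, localize the increment to the event $\{X\le\lfloor np/e^\varepsilon\rfloor\}$, and trade the resulting factor $q\,\prob(X\le t)$ for $e^{-\varepsilon}p + 1/n$ via the reindexing identity $\mu\,\prob(X\le t)\le t+1$, closing with the trivial bound $\var\le p^2$ when $p < e^\varepsilon/n$. Both arguments are correct. Your route is noticeably shorter and avoids the explicit pmf sums and the Stirling/KL estimates; as you observe, the reindexing identity is where the real work happens, since the bare Poincar\'e/Lipschitz estimate alone gives only $e^{2\varepsilon}q/n$, which fails when $q\gg e^{-\varepsilon}p$. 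The tradeoff is that you import the Poisson Poincar\'e inequality as a black-box lemma, whereas the paper's case analysis is self-contained.
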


\begin{proof}

\bigskip\noindent{\bf Case 1: If $nq< 1$ and $e^{-\varepsilon}p\geq q$ or if $nq\geq 1$ and $ne^{-\varepsilon}p > \lfloor n q \rfloor-1 $: }

\begin{eqnarray*}
	\var\big(\, [p - e^{\varepsilon} \hq]^+ \,\big)
	&\;\;=\;\;& \inf_a \E  \big(\, [ p - e^{\varepsilon} \hq]^+ -a \,\big)^2\\
	&\;\;\leq \;\;& \E  \big(\, [ p - e^{\varepsilon} \hq]^+ -[p - e^{\varepsilon} q]^+ \,\big)^2\\
	&\;\;=\;\;& \E  \big(\, \frac{ (p - e^{\varepsilon} \hq)-(p - e^{\varepsilon} q)}{2} +\frac{|p - e^{\varepsilon} \hq|-|p - e^{\varepsilon} q|}{2} \,\big)^2\\
	&\;\;\leq\;\; & \E  \big(\, \frac{ e^\varepsilon|\hq -q|}{2} +\frac{\big||p - e^{\varepsilon} \hq|-|p - e^{\varepsilon} q|\big|}{2} \,\big)^2\\
	&\;\;\leq\;\; & e^{2\varepsilon}\E  \big(\, \frac{ |\hq -q|}{2} +\frac{ |\hq -q|}{2} \,\big)^2\\
	&\;\;=\;\;& e^{2\varepsilon}\E  \big(\,  \hq -q \,\big)^2\\
	&\;\;=\;\;& \frac{e^{2\varepsilon}q}{n}  \label{eq:e2pn}\\
	&\;\;\leq\;\;& \frac{e^{\varepsilon}p}{n} \;, \label{eq:epn}
\end{eqnarray*}
where the last step follows from the assumption that $e^{-\varepsilon}p\geq q$.

\bigskip\noindent{\bf Case 2: If $nq< 1$ and $e^{-\varepsilon}p< q$ or if  $nq\geq 1$ and $ne^{-\varepsilon}p<1$ : }

In both cases, $ne^{-\varepsilon}p<1$, and we have
\begin{eqnarray*}
	[ne^{-\varepsilon}p - n\hq]^+  = \left\{ \begin{array}{ll}
		ne^{-\varepsilon}p \quad &\text{w.p.}\quad e^{-nq}\\
		0 \quad &\text{w.p.} \quad 1-e^{-nq}\\
		\end{array} \right.\;,
\end{eqnarray*}
which is a Bernoulli random variable. The variance of it is
\begin{eqnarray*}
	\Var\big([p - e^{\varepsilon}\hq]^+\big)  &\;\;=\;\;&\frac{e^{2\varepsilon}}{n^2}\Var\big([ne^{-\varepsilon}p - n\hq]^+\big)  \\
	&\;\;=\;\;& p^2(1-e^{-nq})e^{-nq}\\
	&\;\;\leq\;\;& p^2 \;\;<\;\; \frac{e^{\varepsilon}p }{n}\;,
\end{eqnarray*}
where we used the assumption that $ne^{-\varepsilon}p<1$.

\bigskip\noindent{\bf Case 3: If $nq\geq 1$ and $ 1\leq ne^{-\varepsilon}p \leq \lfloor n q \rfloor-1$: }

Let $\lambda = nq$, and denote the random variable $X\sim \Poi(\lambda)$.

If $1\leq  ne^{-\varepsilon}p  \leq \lfloor nq \rfloor-1$, we know from Lemma~\ref{lem:poisson_tail} that
\begin{eqnarray}
	\prob \big(X\leq ne^{-\varepsilon}p \big)\;\;\leq\;\;  e^{-D_{\rm KL}( ne^{-\varepsilon}p\|nq)} \;,
\end{eqnarray}
where $D_{\rm KL}(k\|m) \triangleq m-k+k\ln(k/m)$.

We have
\begin{eqnarray*}
	\E\big(\, [p - e^{\varepsilon} \hq]^+ \,\big)^2 &\;\;=\;\;& \sum_{k=0}^{\lfloor ne^{-\varepsilon}p\rfloor}\Big(p-e^{\varepsilon}\frac{k}{n}\Big)^2\Big(\frac{\lambda^ke^{-\lambda}}{k!}\Big)\\
	&\;\;=\;\;& p^2 \prob \big(X\leq ne^{-\varepsilon}p \big)-2e^{\varepsilon}pq\prob\big(X\leq ne^{-\varepsilon}p-1\big)\\
	&&+\frac{e^{2\varepsilon}q}{n}\prob\big(X\leq ne^{-\varepsilon}p-1\big)+e^{2\varepsilon}q^2\prob\big(X\leq ne^{-\varepsilon}p-2\big)\\
	&\;\;=\;\;& p^2 \prob \big(X\leq ne^{-\varepsilon}p \big)-2e^{\varepsilon}pq\prob\big(X\leq ne^{-\varepsilon}p\big)+2e^{\varepsilon}pq\prob\big(X= \lfloor ne^{-\varepsilon}p\rfloor\big)\\
	&&+\frac{e^{2\varepsilon}q}{n}\prob\big(X\leq ne^{-\varepsilon}p\big)-\frac{e^{2\varepsilon}q}{n}\prob\big(X= \lfloor ne^{-\varepsilon}p\rfloor\big)\\
	&&+e^{2\varepsilon}q^2\prob\big(X\leq ne^{-\varepsilon}p\big)-e^{2\varepsilon}q^2\prob\big(X=\lfloor ne^{-\varepsilon}p\rfloor\big)-e^{2\varepsilon}q^2\prob\big(X= \lfloor ne^{-\varepsilon}p\rfloor-1\big)\\
	&\;\;=\;\;& (p^2-2e^\varepsilon pq+\frac{e^{2\varepsilon}q}{n}+e^{2\varepsilon}q^2) \prob \big(X\leq ne^{-\varepsilon}p \big)\\
	&&+(2e^{\varepsilon}pq-\frac{e^{2\varepsilon}q}{n}-e^{2\varepsilon}q^2-\frac{e^{2\varepsilon}q\lfloor n e^{-\varepsilon}p\rfloor}{n})\prob\big(X= \lfloor ne^{-\varepsilon}p\rfloor\big)\\
	&\;\;\lesssim \;\;&\Big( (e^{\varepsilon}q-p)^2\vee \frac{e^{2\varepsilon}q}{n}\Big)  e^{-D_{\rm KL}(ne^{-\varepsilon}p\|nq)}\;.
\end{eqnarray*}

Let $y = q/(e^{-\varepsilon}p)$, as $1\leq  ne^{-\varepsilon}p  \leq \lfloor nq \rfloor-1$, we know $y>1$. 

We have
\begin{eqnarray*}
	\frac{e^{2\varepsilon}q}{n}  e^{-D_{\rm KL}(ne^{-\varepsilon}p\|nq)} &\;\; =  \;\;& \frac{e^\varepsilon p y}{n} e^{-ne^{-\varepsilon}p(y-1-\ln y)}\\
	&\;\; \lesssim \;\;& \frac{e^\varepsilon p}{n}\;,
\end{eqnarray*}
where we used the assumption that $ne^{-\varepsilon}p\geq 1$ and the inequality that $ye^{-(y-1-\ln y)}$ is bounded by some constant for $y>1$. 

We have
\begin{eqnarray*}
	(e^{\varepsilon}q-p)^2  e^{-D_{\rm KL}(ne^{-\varepsilon}p\|nq)} &\;\; =  \;\;& (y-1)^2p^2e^{-ne^{-\varepsilon}p(y-1-\ln y)}\\
	&\;\; \lesssim \;\;& \frac{e^\varepsilon p}{n}\;,
\end{eqnarray*}
where we used the assumption that $ne^{-\varepsilon}p\geq 1$ and the inequality that $(y-1)^2e^{-x(y-1-\ln y)}\lesssim 1/x$ for $x,y>1$. 

It suffices to show that $f_y(x)=\ln x-x(y-1-\ln y)+2\ln(y-1)$ is bounded by some constant for $x,y>1$. Indeed, when $y-1-\ln y>1$, $f_y'(x) = 1/x-(y-1-\ln y)<0$. $f_y(x)$ is monotonically decreasing and $f_y(x)< f_y(1) =2\ln(y-1) -y+1+\ln y$, which is bounded. When $y-1-\ln y<1$, $f_y(x)$ attains maximum at $x = 1/(y-1-\ln y)$. In this case, we have $f_y(x)\leq 2\ln(y-1)-\ln(y-1-\ln y)-1$, which is also bounded.

\end{proof}

\begin{lemma}
	\label{lem:pqtail}
	Suppose $n\hq\sim \Poi(nq)$. Then,
	\begin{eqnarray}
	\prob\left(e^{\varepsilon}\hq \notin U(e^{2\varepsilon} q;c_1,c_1)\right) \;\;\leq\;\; \frac{2}{n^{-c_1e^{-\varepsilon} /3}}\;\;\leq\;\;\frac{2}{n^{-c_1 /3}} \;,
	\end{eqnarray}
	and
	\begin{eqnarray}
	\prob\left(\hq \notin U(e^{\varepsilon} q;c_1,c_1)\right) \;\;\leq\;\; \frac{2}{n^{-c_1 /3}}\;.
	\end{eqnarray}
	
\end{lemma}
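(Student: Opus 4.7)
Both inequalities follow from a direct two-case analysis, driven by the piecewise definition of $U(\cdot;c_1,c_1)$, together with the Poisson tail bounds in Lemma~\ref{lem:poisson_tail}. For the first bound, I rewrite $e^\varepsilon \hat q \in U(e^{2\varepsilon} q; c_1, c_1)$ (equivalently, dividing by $e^\varepsilon$, $\hat q \in e^{-\varepsilon} U(e^{2\varepsilon}q;c_1,c_1)$), which by the definition of $U$ becomes: (i) if $q \leq c_1 e^{-\varepsilon}\ln n / n$, then $\hat q \in [0, 2c_1 e^{-\varepsilon}\ln n / n]$; (ii) if $q > c_1 e^{-\varepsilon}\ln n/n$, then $|\hat q - q| \leq \sqrt{c_1 e^{-\varepsilon} q \ln n / n}$. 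The target is then to control the Poisson random variable $n\hat q \sim \Poi(nq)$ around its mean $nq$.

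In case (i), set $\lambda = nq$ and $\mu = c_1 e^{-\varepsilon}\ln n$, so $\lambda \leq \mu$. Applying Lemma~\ref{lem:poisson_tail} with $\delta = (2\mu/\lambda) - 1 \geq 1$, the relevant quantity is $\delta\lambda = 2\mu - \lambda \geq \mu$, yielding
\begin{equation*}
\prob(n\hat q \geq 2\mu) \;\leq\; e^{-\delta\lambda/3} \;\leq\; e^{-\mu/3} \;=\; n^{-c_1 e^{-\varepsilon}/3}.
\end{equation*}
In case (ii), set $\delta = \sqrt{c_1 e^{-\varepsilon}\ln n/(nq)}$; the hypothesis $nq > c_1 e^{-\varepsilon}\ln n$ guarantees $\delta < 1$, so that $\delta^2 \lambda = c_1 e^{-\varepsilon}\ln n$. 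The two-sided Poisson tail bound in Lemma~\ref{lem:poisson_tail} then gives
\begin{equation*}
\prob\bigl(|n\hat q - nq| \geq \delta\lambda\bigr) \;\leq\; e^{-\delta^2\lambda/3} + e^{-\delta^2\lambda/2} \;\leq\; 2\,n^{-c_1 e^{-\varepsilon}/3}.
\end{equation*}
Combining both cases yields the first claim, and the second inequality $\leq 2 n^{-c_1/3}$ follows from $e^{-\varepsilon} \leq 1$.

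The second stated bound, involving $U(e^\varepsilon q;c_1,c_1)$, is proved by the identical argument, except that the condition $e^\varepsilon q \leq c_1 e^\varepsilon \ln n/n$ becomes $q \leq c_1 \ln n/n$, and the relevant threshold is $\mu = c_1 \ln n$ rather than $c_1 e^{-\varepsilon} \ln n$. This replaces $c_1 e^{-\varepsilon}$ by $c_1$ in every exponent, directly giving the bound $2/n^{c_1/3}$. There is no real obstacle in the proof beyond bookkeeping: the only subtlety is to ensure that in case (i) the multiplicative deviation $\delta$ satisfies $\delta \geq 1$ (so that the $e^{-\delta\lambda/3}$ form of Lemma~\ref{lem:poisson_tail} applies rather than the Gaussian form), and in case (ii) that $\delta < 1$ (so that $\delta^2\lambda$ is the governing rate), both of which are automatic from the regime assumption on $q$.
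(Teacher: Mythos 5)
Your proof is correct and follows the same two-case strategy as the paper's: split on whether $q$ is above or below the threshold $c_1 e^{-\varepsilon}\ln n/n$ and apply the Poisson tail bounds of Lemma~\ref{lem:poisson_tail} in each regime. The only cosmetic differences are that the paper handles case (i) via stochastic dominance (upper-bounding by $\Poi(c_1 e^{-\varepsilon}\ln n)$ and taking $\delta=1$) whereas you work directly with $\lambda=nq$ and observe $\delta\lambda=2\mu-\lambda\geq\mu$, and the paper simply cites \cite[Lemma~1]{JYT18} for the second inequality rather than re-deriving it as you do.
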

\begin{proof}
	The second inequality is exactly \cite[Lemma~1]{JYT18}. We now prove the first inequality.
	
	If $q\leq \frac{c_1  e^{-\varepsilon} \ln n}{n}$, we have
	\begin{eqnarray*}
		\prob\left(e^{\varepsilon}\hq \notin U(e^{2\varepsilon}q:c_1,c_1)\right) &\;\; = \;\;& \prob\left(n\hq \geq 2c_1 e^{-\varepsilon} \ln n\right) \\
		&\;\; \leq \;\;& \prob \left(\Poi(c_1 e^{-\varepsilon} \ln n)\geq 2c_1 e^{-\varepsilon} \ln n\right)\\
		& \;\;\leq\;\; & e^{-\frac{c_1 e^{-\varepsilon} \ln n}{3}}\;,
	\end{eqnarray*}
	where we applied Lemma~\ref{lem:poisson_tail} in the last inequality.
	
	If $q> \frac{c_1  e^{-\varepsilon} \ln n}{n}$, we have
	\begin{eqnarray*}
		\prob\left(e^{\varepsilon}\hq \notin U(e^{2\varepsilon}q;c_1,c_1)\right) &\;\; = \;\;& \prob\left(e^{\varepsilon}\hq>e^{\varepsilon}q+\sqrt{\frac{c_1e^{\varepsilon}q \ln n}{n}} \right)+\prob\left(e^{\varepsilon}\hq<e^{\varepsilon}q-\sqrt{\frac{c_1e^{\varepsilon}q \ln n}{n}} \right)\\
		&\;\; \leq \;\;& \prob\left(\Poi(nq)>nq+\sqrt{c_1qe^{-\varepsilon} n \ln n}\right)+\prob\left(\Poi(nq)<nq-\sqrt{c_1qe^{-\varepsilon} n\ln n}\right)\\
		&\;\; \leq \;\;& e^{-\frac{c_1e^{-\varepsilon} \ln n}{nq}\frac{nq}{3}}+e^{-\frac{c_1e^{-\varepsilon} \ln n}{nq}\frac{nq}{2}}\\
		&\;\; \leq \;\;& \frac{2}{n^{c_1e^{-\varepsilon}/3}}\;.
	\end{eqnarray*}
	
\end{proof}

\subsubsection{Lemmas on the best polynomial approximation}
The first-order and second-order symmetric difference with function $\varphi(x) = \sqrt{x(1-x)}$ are defined as
\begin{eqnarray}
	\Delta_{h\varphi} f(x) \;\;\triangleq \;\; f(x+\frac{h\varphi(x)}{2})-f(x-\frac{h\varphi(x)}{2})
	\;, \label{eq:first order symmetric difference}
\end{eqnarray}
and
\begin{eqnarray}
	\Delta_{h\varphi}^2 f(x) \;\;=\;\; f(x+h\varphi(x))-2f(x)+f(x-h\varphi(x))
	\;, \label{eq:second order symmetric difference}
\end{eqnarray}
respectively.

For function $f(x)$ with domain $[0,1]$, the first-order Ditzian-Totik modulus of smoothness is defined as
\begin{eqnarray}
	\omega_\varphi^1(f,t) \;\;\triangleq \;\; \sup_{0<h\leq t}\|\Delta_{h\varphi}^1 f(x)\|_\infty 
	\;, \label{eq:smooth1}
\end{eqnarray}
and the second-order Ditzian-Totik modulus of smoothness is defined as
\begin{eqnarray}
	\omega_\varphi^2(f,t) \;\;\triangleq \;\; \sup_{0<h\leq t}\|\Delta_{h\varphi}^2 f(x)\|_\infty 
	\;. \label{eq:smooth2}
\end{eqnarray}
The following lemma upper bounds the best polynomial approximation error by the Ditzian-Totik moduli.


\begin{lemma}[{\cite[Theorem~7.2.1 and 12.1.1]{ditzian1987totik}}]
\label{lem:poly_error}
	There exists a constant $M(r)>0$ such that for any function $f\in C[0,1]$, 
	\begin{eqnarray}
		\Delta_L[f;[0,1]]\;\;\leq\;\; M(r)\omega_\varphi^r(f,\frac{1}{L})\;, \quad L> r\;,
	\end{eqnarray} 
	where $\Delta_L[f;I]$ denotes the distance of the function $f$ to the space ${\rm poly_L}$ in the uniform norm $\|\cdot\|_\infty,I$ on $I\subset \reals$. Moreover, if $f(x):[0,1]^2\mapsto \reals$, we have
	\begin{eqnarray}
		\Delta_L[f;[0, 1]^2]\;\;\leq\;\;M \omega_{[0,1]^2}^r(f,\frac{1}{L})\;, \quad L>r\;,
	\end{eqnarray}
	where $M$ is independent of $f$ and $L$, and $\Delta_L[f;[0,1]^2]$ denotes the distance of the function $f$ to the space ${\rm poly_L^2}$ in the uniform norm on $[0,1]^2$.
\end{lemma}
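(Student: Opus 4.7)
This is a classical Jackson-type theorem from Ditzian--Totik's monograph, cited as a black box, so my plan traces the canonical argument one would give from first principles. The central object is the weighted $K$-functional
$$K_{r,\varphi}(f,t^r) \;=\; \inf_{g}\; \big\{ \|f-g\|_\infty + t^r\,\|\varphi^r g^{(r)}\|_\infty \big\},$$
and the first step is to establish the two-sided equivalence $K_{r,\varphi}(f,t^r) \asymp \omega_\varphi^r(f,t)$. The bound $\omega \lesssim K$ is straightforward: apply $\Delta_{h\varphi}^r$ to both $f$ and a smooth candidate $g$, then estimate the symmetric difference of $g$ by integrating its $r$-th derivative against the appropriate Peano kernel, noting that the kernel width scales like $h\varphi(x)$. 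The reverse bound $K \lesssim \omega$ is the hard direction: one constructs a smoother surrogate $g_h$ by iterated Steklov-type averaging of $f$ through $\Delta_{h\varphi}^r$, and verifies $\|f - g_h\|_\infty \lesssim \omega_\varphi^r(f,h)$ and $\|\varphi^r g_h^{(r)}\|_\infty \lesssim h^{-r}\omega_\varphi^r(f,h)$ via Hardy-type inequalities that remain valid as $\varphi$ vanishes at the endpoints.

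Next I would prove the Jackson inequality $\Delta_L[f;[0,1]] \lesssim K_{r,\varphi}(f,L^{-r})$ via the Chebyshev substitution $x = (1+\cos\theta)/2$, so that $\varphi(x) = \tfrac{1}{2}\sin\theta$ and the weighted derivative $\varphi^r g^{(r)}$ becomes, modulo lower-order terms, an ordinary $r$-th derivative of the even periodic pullback $\tilde g(\theta) = g((1+\cos\theta)/2)$. Classical Jackson's theorem on the circle then yields a trigonometric polynomial of degree $L$ approximating $\tilde g$ to within $L^{-r}\|\tilde g^{(r)}\|_\infty$; since even trigonometric polynomials in $\cos\theta$ pull back to algebraic polynomials in $x$, this transfers to a degree-$L$ algebraic polynomial bound on $[0,1]$. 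Combining with the $K$-functional equivalence completes the univariate case.

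For the bivariate case on $[0,1]^2$, the natural route is partial approximation: freeze $y$ and apply the 1D Jackson estimate to $f(\cdot,y)$, yielding a polynomial $P_1(x,y)$ of $x$-degree $L$ whose $x$-approximation error is bounded pointwise in $y$ by a multiple of $\omega^r_{[0,1]^2}(f,1/L)$; then apply the 1D estimate coefficient-wise in $y$ to $P_1$ to produce a full bivariate polynomial $P(x,y) \in \mathrm{poly}_L^2$. A triangle inequality with bookkeeping over the two partial approximation errors yields the stated bound, with the constant $M$ independent of $f$. The main technical obstacle throughout is controlling symmetric differences near the boundary where $\varphi$ degenerates; this is exactly where Hardy-type inequalities and the Chebyshev-substitution trick do the real work, and once the $K$-functional equivalence is in place the remaining steps are standard approximation-theoretic bookkeeping.
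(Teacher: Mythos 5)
The paper does not prove this lemma; it is imported verbatim as a textbook result from Ditzian and Totik (their Theorems 7.2.1 and 12.1.1), so there is no internal proof to compare your sketch against. Your outline does trace the canonical argument that one would find in that monograph: establish the two-sided equivalence $K_{r,\varphi}(f,t^r) \asymp \omega_\varphi^r(f,t)$ (via weighted Steklov averaging and Hardy-type inequalities in the hard direction), then push the Jackson estimate through the Chebyshev substitution $x = (1+\cos\theta)/2$ so that $\varphi(x) = \tfrac12\sin\theta$ and the weighted $r$-th derivative matches an ordinary periodic derivative of the even pullback up to lower-order terms, and finally invoke classical Jackson on the circle. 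This is the right skeleton.

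Two points deserve tightening if you were to expand this into a full proof. First, for general $r$ the substitution does not literally convert $\varphi^r g^{(r)}$ into $\tilde g^{(r)}$; the correction terms involve lower-order weighted derivatives, and controlling them requires an induction on $r$ (or the Hardy-type inequalities you mention) rather than a one-line identification. Second, your bivariate step, phrased as applying a 1D estimate ``coefficient-wise in $y$ to $P_1$,'' would fail as stated: if $P_1(x,y)$ is obtained by taking the (nonlinear) best $x$-approximation for each frozen $y$, its coefficient functions in $y$ need not inherit any smoothness. The correct way to carry out partial approximation is to use a linear near-best univariate operator $T_L$ (e.g., a de la Vall\'ee Poussin-type operator with operator norm $O(1)$), set $P = T_L^{(y)} T_L^{(x)} f$, and estimate $\|f - P\| \le \|f - T_L^{(y)}f\| + \|T_L^{(y)}\| \cdot \|f - T_L^{(x)}f\|$; boundedness of $T_L^{(y)}$ then closes the argument. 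Also worth noting: the paper only ever invokes the lemma for $r \in \{1,2\}$ (via Lemmas~\ref{lem:delf2} and~\ref{lem:smoothness_sqrt_xy}), so for the purposes of this paper a full general-$r$ treatment is more than is needed.
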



\begin{lemma}
	\label{lem:delf2}
	For $f(x)= [p-e^\varepsilon 2x\Delta  ]^+$, for some $\Delta>0$, $p\in[0,2e^\varepsilon\Delta]$, $x\in[0,1]$, and any integer $K\geq 1$, 
	\begin{eqnarray*}
	\omega_\varphi^2(f,K^{-1}) \;\;=\;\; \left\{ 
		\begin{array}{ll}
		p  &\frac{p}{2e^\varepsilon\Delta}\leq \frac{1}{1+K^2}\\ 
		 \frac{\sqrt{p(2e^\varepsilon\Delta-p)}}{K}  &\frac{1}{1+K^2}\leq\frac{p}{2e^\varepsilon\Delta}\leq \frac{K^2}{1+K^2}\\
		e^\varepsilon\Delta-p  &\frac{K^2}{1+K^2}\leq \frac{p}{2e^\varepsilon\Delta}\leq 1
		\end{array}
		\right.
		\;\; \lesssim \;\;  \min \Big\{p \;,\frac{\sqrt{p(2e^\varepsilon\Delta-p)}}{K} \;, e^\varepsilon\Delta-p    \Big\} \;,
	\end{eqnarray*}
	where $\omega_\varphi^2(f,t)$ is defined in Eq.~\eqref{eq:smooth2}.
	
\end{lemma}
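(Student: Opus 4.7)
The function $f(x)=[p-bx]^+$ with $b=2e^\varepsilon\Delta$ has a single kink at $x^*:=p/b\in[0,1]$ and is affine on each side of $x^*$, so a second symmetric difference of $f$ vanishes unless the kink $x^*$ lies strictly between the outer arguments $x\pm h\varphi(x)$. A direct case-by-case evaluation (distinguishing $x\le x^*$ from $x\ge x^*$, and using that the kink contributes linearly whenever it is sandwiched between the outer evaluation points) yields the clean nonnegative identity
\[
\Delta_{h\varphi}^2 f(x) \;=\; b\,\bigl(h\varphi(x)-|x-x^*|\bigr)_+ .
\]

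Plugging this in, $\omega_\varphi^2(f,1/K)= b\,\sup_{h\le 1/K}\sup_{x}\bigl(h\varphi(x)-|x-x^*|\bigr)_+$, where the inner sup is over the admissible window $x\pm h\varphi(x)\in[0,1]$, equivalently $x\in\bigl[h^2/(1+h^2),\,1/(1+h^2)\bigr]$. Monotonicity of the integrand in $h$ (for each fixed admissible $x$) lets me restrict to $h=1/K$ and the window $I_K:=[1/(1+K^2),\,K^2/(1+K^2)]$. Setting $G(x)=(1/K)\varphi(x)-|x-x^*|$, strict monotonicity of $\varphi'$ gives that the unconstrained max of $G$ is attained at $x^*$ precisely when $|\varphi'(x^*)|\le K$, which simplifies (square and collect) to $x^*\in I_K$. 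Splitting on where $x^*$ sits yields the three cases: (i) if $x^*\in I_K$, then $G_{\max}=G(x^*)=\sqrt{x^*(1-x^*)}/K$, giving $\sqrt{p(2e^\varepsilon\Delta-p)}/K$; (ii) if $x^*<1/(1+K^2)$, then $G$ is already decreasing at the left endpoint of $I_K$, so $G_{\max}=G(1/(1+K^2))$, and the boundary identity $\varphi(1/(1+K^2))=K/(1+K^2)$ collapses this to $G_{\max}=x^*$, giving $p$; (iii) the case $x^*>K^2/(1+K^2)$ is symmetric under $x\mapsto 1-x$ and yields $2e^\varepsilon\Delta-p$.

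The final $\lesssim\min\{p,\sqrt{p(2e^\varepsilon\Delta-p)}/K,\,2e^\varepsilon\Delta-p\}$ bound (the third argument in the statement appears to be a typo for $2e^\varepsilon\Delta-p$; with $e^\varepsilon\Delta-p$ it could be negative) is then immediate: a one-line algebraic check shows $p\le\sqrt{p(2e^\varepsilon\Delta-p)}/K$ iff $x^*\le 1/(1+K^2)$, and symmetrically for the opposite endpoint, so in each regime the three-case value equals the minimum of the three candidates. The delicate point, and the main obstacle, is the joint $(h,x)$ optimization: in the small-$x^*$ regime one must verify that choosing $h<1/K$ (which widens the admissible window to include $x^*$ itself) does not produce a larger value than $h=1/K$; this follows because at the critical $h=\sqrt{x^*/(1-x^*)}$ the boundary identity forces $G=x^*$, while for smaller $h$ the interior value $G(x^*)=h\sqrt{x^*(1-x^*)}$ is strictly smaller. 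The near-upper-endpoint case is handled by the same symmetry, and all remaining steps are routine calculus.
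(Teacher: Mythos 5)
Your proof takes a genuinely different and more self-contained route than the paper's. The paper observes that the second symmetric difference annihilates the affine part of $f$, so $\Delta_{h\varphi}^2 f=\tfrac12\Delta_{h\varphi}^2 g$ with $g(x)=|p-2e^\varepsilon\Delta\, x|$, and then simply cites \cite[Lemma~12]{JYT18} for $\omega_\varphi^2(g,K^{-1})$. You instead derive the closed form $\Delta_{h\varphi}^2 f(x)=b\,[\,h\varphi(x)-|x-x^*|\,]^+$ (with $b=2e^\varepsilon\Delta$, $x^*=p/b$) by a direct case check, and then optimize jointly in $(h,x)$ from scratch, recovering the three-regime expression without importing an external lemma. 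Your identity is correct (checked by splitting on whether the kink $x^*$ lies inside the window $[x-h\varphi(x),x+h\varphi(x)]$), the admissible window $x\in[h^2/(1+h^2),\,1/(1+h^2)]$ is correct, the boundary identity $\varphi(1/(1+K^2))=K/(1+K^2)$ is correct, and the analysis of the sup over $h\le 1/K$ in the small-$x^*$ regime closes the joint-optimization loophole you rightly flag. Your observation that the third arm should read $2e^\varepsilon\Delta-p$ rather than $e^\varepsilon\Delta-p$ is also correct: the symmetry $p\mapsto 2e^\varepsilon\Delta-p$, $x\mapsto 1-x$ forces the third arm to mirror the first, and as printed it could be negative.

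One intermediate claim is off as stated: $|\varphi'(x^*)|\le K$ is \emph{not} equivalent to $x^*\in I_K=[1/(1+K^2),K^2/(1+K^2)]$. Squaring $\frac{|1-2x^*|}{2\sqrt{x^*(1-x^*)}}\le K$ gives the interval $\bigl[\tfrac12-\tfrac{K}{2\sqrt{1+K^2}},\ \tfrac12+\tfrac{K}{2\sqrt{1+K^2}}\bigr]$, which strictly contains $I_K$ (for $K=2$, $[0.053,0.947]\supsetneq[0.2,0.8]$). Your case split only needs the one-sided implication $x^*\in I_K\Rightarrow$ max of $G$ over $I_K$ is at $x^*$, which holds since $I_K\subset\{x:|\varphi'(x)|\le K\}$; and in the boundary regimes the correct reason $G_{\max}$ sits at an endpoint is the monotonicity of $G$ on all of $I_K$, not that $x^*$ fails to be the unconstrained maximizer (when $x^*\in(\tfrac12-\tfrac{K}{2\sqrt{1+K^2}},\tfrac{1}{1+K^2})$ it still is). So the conclusion stands, but the stated equivalence should be replaced by the one-sided inclusion.
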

\begin{proof}

Let $g(x):=| p-2e^\varepsilon\Delta x|$. 
\begin{eqnarray*}
	\Delta_{h\varphi}^2 f(x) &=& [p-2e^\varepsilon\Delta (x+h\varphi(x))]^+
 -2[ p-2e^\varepsilon\Delta x]^++[ p-2e^\varepsilon\Delta (x-h\varphi(x))]^+\\
 &=&\frac{\big(p-2e^\varepsilon\Delta (x+h\varphi(x))\big)
 -2\big( p-2e^\varepsilon\Delta x\big)+\big(p-2e^\varepsilon\Delta (x-h\varphi(x))\big)}{2}+\\
 &&\frac{\big|p-2e^\varepsilon\Delta (x+h\varphi(x))\big|
 -2\big| p-2e^\varepsilon\Delta x\big|+\big|p-2e^\varepsilon\Delta (x-h\varphi(x))\big|}{2}\\
 &=& \frac{1}{2}\Delta_{h\varphi}^2g(x)
\end{eqnarray*}

Hence,
\begin{eqnarray}
	\omega_\varphi^2(f,t) = \frac{1}{2}\omega_\varphi^2(g,t) 
\end{eqnarray}

It follows from \cite[Lemma~12]{JYT18} that, for some $\Delta>0$, $p\in [0,2e^\varepsilon \Delta]$, $x\in [0,1]$ and any integer $K\geq 1$, we have:
\begin{eqnarray*}
	\omega_\varphi^2(g,K^{-1}) \;\;=\;\; \left\{ 
		\begin{array}{ll}
		2p  &\frac{p}{2e^\varepsilon\Delta}\leq \frac{1}{1+K^2}\\ 
		 \frac{2\sqrt{p(2e^\varepsilon\Delta-p)}}{K}  &\frac{1}{1+K^2}\leq\frac{p}{2e^\varepsilon\Delta}\leq \frac{K^2}{1+K^2}\\
		2e^\varepsilon\Delta-p &\frac{K^2}{1+K^2}\leq \frac{p}{2e^\varepsilon\Delta}\leq 1
		\end{array}
		\right.
		\;\;,
\end{eqnarray*}
which implies the desired bound.

\end{proof}
\begin{lemma}
\label{lem:smoothness_sqrt_xy}
	Suppose $f(x) = [\sqrt{x}-\sqrt{a}]^+$, $x\in [0,1]$ and $a\in [0,1]$. Then
	\begin{eqnarray}
	\omega_\varphi^1(f,t)\;\;\leq\;\;\frac{t}{\sqrt{2}}\;.
	\end{eqnarray}	
	Similarly, suppose $f(x) = [\sqrt{a}-\sqrt{x}]^+$, $x\in [0,1]$ and $a\in [0,1]$. Then
	\begin{eqnarray}
	\omega_\varphi^1(f,t)\;\;\leq\;\;\frac{t}{\sqrt{2}}\;.
	\end{eqnarray}	
\end{lemma}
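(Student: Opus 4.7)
The plan is to exploit two elementary facts: the hinge function $[\,\cdot\,]^+$ is $1$-Lipschitz, and the square root is $1/2$-H\"older with the standard identity $|\sqrt{u}-\sqrt{v}| = |u-v|/(\sqrt{u}+\sqrt{v})$. Combined with a simple lower bound on $\sqrt{u}+\sqrt{v}$ in terms of $\sqrt{u+v}$, these reduce the entire statement to an algebraic computation with the weight $\varphi(x) = \sqrt{x(1-x)}$.

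First I would fix $a\in[0,1]$, $0<h\le t$, and $x\in[0,1]$ with $x\pm h\varphi(x)/2 \in [0,1]$ (outside this range the symmetric difference is taken to be zero by convention, so such $x$ contribute nothing to the supremum). Write $u = x+h\varphi(x)/2$ and $v = x-h\varphi(x)/2$. Since $|\,[s]^+ - [s']^+\,|\le |s-s'|$ for any reals $s,s'$, applying this with $s = \sqrt{u}-\sqrt{a}$ and $s' = \sqrt{v}-\sqrt{a}$ yields
\begin{equation*}
|\Delta_{h\varphi} f(x)| \;=\; \bigl|[\sqrt{u}-\sqrt{a}]^+ - [\sqrt{v}-\sqrt{a}]^+\bigr| \;\le\; |\sqrt{u}-\sqrt{v}|.
\end{equation*}

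Next I would bound $|\sqrt{u}-\sqrt{v}|$. Using $(\sqrt{u}+\sqrt{v})^2 = u+v+2\sqrt{uv} \ge u+v = 2x$, we get $\sqrt{u}+\sqrt{v}\ge \sqrt{2x}$, hence
\begin{equation*}
|\sqrt{u}-\sqrt{v}| \;=\; \frac{|u-v|}{\sqrt{u}+\sqrt{v}} \;\le\; \frac{h\varphi(x)}{\sqrt{2x}} \;=\; \frac{h\sqrt{x(1-x)}}{\sqrt{2x}} \;=\; h\sqrt{\tfrac{1-x}{2}} \;\le\; \frac{h}{\sqrt{2}} \;\le\; \frac{t}{\sqrt{2}}.
\end{equation*}
Taking $\sup_{x\in[0,1]}$ and then $\sup_{0<h\le t}$ gives $\omega_\varphi^1(f,t)\le t/\sqrt{2}$, as required.

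For the second assertion, $f(x) = [\sqrt{a}-\sqrt{x}]^+$, the argument is identical: the Lipschitz step instead uses $s = \sqrt{a}-\sqrt{u}$, $s' = \sqrt{a}-\sqrt{v}$, after which the same root manipulation applies verbatim. I do not anticipate any genuine obstacle; the only subtlety is the domain caveat for the shifted points, which is handled by the usual convention that $\Delta_{h\varphi}f(x):=0$ when the arguments leave $[0,1]$, so it does not enter the supremum bound.
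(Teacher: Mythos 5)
Your proposal is correct, and it takes a genuinely different (and cleaner) route than the paper. The paper proves this lemma by first splitting $[\,s\,]^+ = \tfrac{1}{2}(s+|s|)$, so that $\Delta_{h\varphi}f(x) = \tfrac12\Delta_{h\varphi}g(x) + \tfrac12\bigl(\sqrt{u}-\sqrt{v}\bigr)$ where $g(x)=|\sqrt{x}-\sqrt{a}|$ and $u,v$ are the shifted points. The $g$-term is then handled by citing a modulus-of-smoothness bound for $|\sqrt{x}-\sqrt{a}|$ from prior work (Lemma~11 of \cite{JYT18}), and the linear term is handled by the same elementary estimate you use. You instead bypass the decomposition entirely: $[\,\cdot\,]^+$ is $1$-Lipschitz, so $|\Delta_{h\varphi}f(x)|\le|\sqrt{u}-\sqrt{v}|$, and then the identity $|\sqrt{u}-\sqrt{v}|=|u-v|/(\sqrt{u}+\sqrt{v})$ together with $\sqrt{u}+\sqrt{v}\ge\sqrt{u+v}=\sqrt{2x}$ gives $|\sqrt{u}-\sqrt{v}|\le h\sqrt{(1-x)/2}\le h/\sqrt{2}$. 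This is self-contained (no appeal to \cite{JYT18}), shorter, and makes the constant $1/\sqrt{2}$ transparent, since it comes directly from $\sqrt{u}+\sqrt{v}\ge\sqrt{2x}$ with nothing wasted; the paper's route has to budget the same $t/\sqrt{2}$ across two terms. Both arguments are sound; yours is the more elementary of the two. One minor caveat you already flagged correctly: the Ditzian--Totik modulus is defined with the convention that the symmetric difference vanishes when a shifted point leaves $[0,1]$, so restricting to $x$ with $x\pm h\varphi(x)/2\in[0,1]$ is legitimate.
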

\begin{proof}
Let $g(x) = |\sqrt{x}-\sqrt{a}|$.
	\begin{eqnarray*}
		\Delta^1_{h\varphi}f(x) &\;\; = \;\;& \left|f(x+\frac{h\varphi(x)}{2})-f(x-\frac{h\varphi(x)}{2})\right|\\
		&\;\; = \;\;& \left|\left[\sqrt{x+\frac{h\varphi(x)}{2}}-\sqrt{a}\right]^+-\left[\sqrt{x-\frac{h\varphi(x)}{2}}-\sqrt{a}\right]^+\right|\\
		&\;\; = \;\;&  \left|\frac{1}{2}\left(\left|\sqrt{x+\frac{h\varphi(x)}{2}}-\sqrt{a}\right|-\left|\sqrt{x-\frac{h\varphi(x)}{2}}-\sqrt{a}\right|\right)+\right. \nonumber\\
		&&\left.\frac{\sqrt{x+\frac{h\varphi(x)}{2}}-\sqrt{x-\frac{h\varphi(x)}{2}}}{2}\right|\\
		&\;\; \leq \;\;& \frac{1}{2}\nabla^1_{h\varphi}g(x)+\frac{1}{2} \left|\sqrt{x+\frac{h\varphi(x)}{2}}-\sqrt{x-\frac{h\varphi(x)}{2}}\right|\\
		&\;\; = \;\;& \frac{1}{2}\nabla^1_{h\varphi}g(x)+\frac{1}{2} \left|
		\frac{ h\varphi(x) }{\sqrt{x+h\varphi(x)/2}+\sqrt{x-h\varphi(x)/2}}\right|\\
		&\;\; \leq \;\;&\frac{1}{2}\nabla^1_{h\varphi}g(x) +\frac{1}{2}\left|
		\frac{ h\varphi(x) }{\sqrt{x+h\varphi(x)/2+x-h\varphi(x)/2}}\right|\\
		&\;\; \leq \;\;&\frac{1}{2}\nabla^1_{h\varphi}g(x) +\frac{h\sqrt{1-x}}{2\sqrt{2}}\\
		&\;\; \leq \;\;& \frac{t}{\sqrt{2}}\;,
	\end{eqnarray*}
	where we used the fact that $\sqrt x+\sqrt y\leq x+y$ and \cite[Lemma~11]{JYT18} for smoothness of $g$.
\end{proof}

\begin{lemma}[{\cite[Lemma~27]{HJW16}}]
	\label{lem:Qopt_magnitude}
	Let $p_n(x) = \sum_{v=0}^na_vx^v$ be a polynomial of degree at most $n$ such that $|p_n(x)|\leq A$ for $x\in [a,b]$. Then
	\begin{enumerate}
		\item If $a+b\neq 0$, then
		\begin{eqnarray}
			|a_v|\;\;\leq\;\; 2^{7n/2}A\left|\frac{a+b}{2}\right|^{-v}\left(\left|\frac{b+a}{b-a}\right|^n+1\right)\;,\qquad v= 0,1,\cdots, n\;.
		\end{eqnarray}
		\item If $a+b =0$, then
		\begin{eqnarray}
			|a_v|\;\;\leq\;\; Ab^{-v}(\sqrt(2)+1)^n\;,\qquad v= 0,1,\cdots, n\;.
		\end{eqnarray}
	\end{enumerate}
\end{lemma}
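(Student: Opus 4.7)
My plan is to reduce to the normalized interval $[-1,1]$ by a linear change of variables and then invoke Chebyshev-based coefficient bounds. Let $c = (a+b)/2$ and $d = (b-a)/2 > 0$, and define $q_n(t) := p_n(c + dt)$, so that $q_n$ is a polynomial of degree at most $n$ satisfying $\|q_n\|_{\infty,[-1,1]} \leq A$. Writing $q_n(t) = \sum_{v=0}^n b_v t^v$, I would first bound $|b_v|$ in terms of $A$ and $n$ alone: expanding $q_n$ in the Chebyshev basis $\{T_0,\ldots,T_n\}$, each Chebyshev coefficient has absolute value at most $2A$, and the $\ell^1$ norm of the monomial coefficient vector of $T_k$ grows at worst like $(1+\sqrt{2})^k$. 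Combining these yields $|b_v| \leq A \cdot C^n$ for a universal constant $C$; tracking the powers of $2$ carefully produces a bound compatible with the $2^{7n/2}$ appearing in the statement.

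Next, I translate back. Since $x = c + dt$,
\begin{equation*}
p_n(x) \;=\; q_n\!\left(\frac{x-c}{d}\right) \;=\; \sum_{v=0}^n b_v\, d^{-v}\,(x-c)^v,
\end{equation*}
and expanding $(x-c)^v$ by the binomial theorem and re-indexing,
\begin{equation*}
a_v \;=\; \sum_{j=v}^n b_j\, d^{-j}\, \binom{j}{v}\, (-c)^{j-v}.
\end{equation*}
In the symmetric case $a + b = 0$ (Case~2), the midpoint $c$ vanishes, so only the $j=v$ term survives and $|a_v| \leq |b_v|\, d^{-v} = |b_v|\, b^{-v}$; combined with the Chebyshev bound this gives precisely $A\, b^{-v}(1+\sqrt{2})^n$ as claimed.

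In the asymmetric case $a + b \neq 0$ (Case~1), I extract the factor $|c|^{-v}$ by writing $|c|^{j-v} d^{-j} = |c|^{-v}(|c|/d)^{j}$, which reduces the problem to bounding the sum $\sum_{j=v}^n \binom{j}{v} r^j$ where $r = |c|/d = \left|\tfrac{a+b}{b-a}\right|$. Splitting according to whether $r \leq 1$ or $r \geq 1$ and using $\binom{j}{v} \leq 2^n$ shows this sum is at most a polynomial in $n$ times $2^n \bigl(r^n + 1\bigr)$, producing the factor $\left(\left|\tfrac{b+a}{b-a}\right|^n + 1\right)$ in the final bound and, after absorbing the Chebyshev constant and the binomial sum into a single geometric factor, the overall $2^{7n/2}$.

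The main obstacle is purely bookkeeping: aligning constants so that $2^{7n/2}$ (rather than a larger power of $2$) drops out requires a sharp $\ell^1$ bound on Chebyshev monomial coefficients together with the crude $\binom{j}{v} \leq 2^n$. Since the lemma is reproduced verbatim from \cite[Lemma~27]{HJW16}, an alternative is simply to invoke it; the sketch above is the standard chain of reasoning that underlies their proof.
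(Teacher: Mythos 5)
This lemma is quoted verbatim from \cite[Lemma~27]{HJW16}; the paper offers no proof of its own beyond the citation, so there is nothing internal to compare your sketch against. Your outline — reduce to $[-1,1]$ by the affine substitution $t = (x-c)/d$ with $c = (a+b)/2$, $d=(b-a)/2$, bound the coefficients of the rescaled polynomial $q_n$, then translate back via the binomial expansion of $(x-c)^v$ — is indeed the standard route and is surely close to what \cite{HJW16} does.

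There is one concrete gap, in Case~2. You bound $|b_v|$ by expanding $q_n = \sum_k \alpha_k T_k$ with $|\alpha_k|\le 2A$ and then summing the $\ell^1$ norms of the monomial coefficients of each $T_k$. That argument produces $|b_v|\le 2A\sum_{k=0}^n (1+\sqrt2)^k = O\bigl(A(1+\sqrt 2)^n\bigr)$, with a multiplicative constant strictly larger than $1$ (roughly $2(1+\sqrt2)/\sqrt2\approx 3.4$). In Case~1 this slack is harmlessly absorbed by the generous $2^{7n/2}$ factor, but in Case~2 the stated bound is \emph{exactly} $A\,b^{-v}(\sqrt2+1)^n$ with no headroom, so the crude Chebyshev-sum bound cannot yield it. What is actually needed is the Markov brothers' coefficient-extremality theorem (see, e.g., Rivlin, \emph{Chebyshev Polynomials}, Thm.~2.1–2.2): if $|q_n|\le A$ on $[-1,1]$ then each Maclaurin coefficient $b_v$ of $q_n$ satisfies $|b_v|\le A\,|t_v^{(n)}|$ or $A\,|t_v^{(n-1)}|$ according to the parity of $n-v$, where $t_v^{(m)}$ is the $v$-th coefficient of $T_m$. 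Combined with $\sum_v |t_v^{(n)}| = |T_n(i)| = \tfrac12\bigl[(1+\sqrt2)^n + (\sqrt2-1)^n\bigr]\le (1+\sqrt2)^n$, this gives the sharp $|b_v|\le A(1+\sqrt2)^n$, and hence the exact Case~2 bound after dividing by $d^v=b^v$. Your Case~1 reasoning (split on $|c|/d\lessgtr 1$, use $\sum_{j=v}^n\binom{j}{v}=\binom{n+1}{v+1}\le 2^{n+1}$) is sound given the slack in $2^{7n/2}$.
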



\subsubsection{Lemmas on the uniformly unbiased minimum variance unbiased estimator}
\label{sec:MVUE}


\begin{lemma}[{\cite[Lemma~18]{JYT18}}]
\label{lem:Qopt_moment}
Suppose $nX \sim \Poi(np)$, $p\geq 0$, $q\geq 0$. Then, the estimator
\begin{eqnarray}
	g_{j,q}(X)\triangleq \sum_{k=0}^j{j \choose k} (-q)^{j-k}\prod_{h=0}^{k-1}\left(X-\frac{h}{n}\right)
\end{eqnarray}
is the unique uniformly minimum variance unbiased estimator for $(p-q)^j$, $j\geq 0$, $j\in \mathbb{N}$, and its second moment is given by
\begin{eqnarray}
	\E\left[\big(g_{j,q}(X)\big)^2\right] \;\; =\;\; \sum_{k=0}^j{j\choose k}^2(p-q)^{2(j-k)}\frac{p^kk!}{n^k}\;\;=\;\; j!\left(\frac{p}{n}\right)^jL_j\left(-\frac{n(p-q)^2}{p}\right)\text{assuming $p>0$}\;,
\end{eqnarray}
where $L_m(x)$ stands for the Laguerre polynomial with order $m$, which is defined as:
\begin{eqnarray}
	L_m(x)\;\;=\;\;\sum_{k=0}^m{m\choose k}\frac{(-x)^k}{k!}\;.
\end{eqnarray}
If $M\geq \max\left\{\frac{n(p-q)^2}{p},j\right\}$, we have
\begin{eqnarray}
	\E\left[\big(g_{j,q}(X)\big)^2\right]\;\;\leq\;\;\left(\frac{2Mp}{n}\right)^j\;.
\end{eqnarray}
When $k=0$, $\prod_{h=0}^{k-1}\left(X-\frac{h}{n}\right)\triangleq 1$. When $p=0$, $g_{j,q}(X)\equiv (-q)^j$, $\E\left[g_{j,q}(X)\right]^2\equiv q^{2j}$.
\end{lemma}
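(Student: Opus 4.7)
The plan is to prove the four claims in the lemma in sequence: (i) unbiasedness of $g_{j,q}(X)$ for $(p-q)^j$, (ii) uniqueness as UMVUE, (iii) the explicit second moment formula and its Laguerre representation, and (iv) the bound $(2Mp/n)^j$. The starting point is the standard Poisson factorial moment identity
\begin{equation*}
\E\Big[\prod_{h=0}^{k-1}\big(X - h/n\big)\Big] \;=\; p^k \qquad\text{when } nX\sim\Poi(np),
\end{equation*}
which I would derive in one line by writing $\prod_{h=0}^{k-1}(X-h/n) = (nX)_k/n^k$ (where $(Y)_k$ is the falling factorial) and using $\E[(Y)_k]=\lambda^k$ for $Y\sim\Poi(\lambda)$. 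Combined with the binomial theorem applied to $(p-q)^j=\sum_k\binom{j}{k}p^k(-q)^{j-k}$, this gives (i) immediately. For (ii), I would invoke the fact that the Poisson family is a complete exponential family with $nX$ a complete sufficient statistic, so Lehmann--Scheff\'e makes any unbiased function of $nX$ the unique UMVUE.

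The heart of the proof is (iii). I would compute $\E[g_{j,q}(X)^2]$ by expanding the square and applying the product-of-falling-factorials identity
\begin{equation*}
(Y)_a (Y)_b \;=\; \sum_{k=0}^{\min(a,b)} \binom{a}{k}\binom{b}{k} k! \,(Y)_{a+b-k},
\end{equation*}
which, after taking expectations with $Y=nX$ and $\lambda=np$, yields $\E[\prod_{h=0}^{a-1}(X-h/n)\prod_{h=0}^{b-1}(X-h/n)] = \sum_{k}\binom{a}{k}\binom{b}{k}k!\,p^{a+b-k}/n^k$. Substituting into the double sum defining $\E[g_{j,q}(X)^2]$ and reindexing with $a'=a-k,\;b'=b-k$, the binomial coefficient identity $\binom{j}{a'+k}\binom{a'+k}{k}=\binom{j}{k}\binom{j-k}{a'}$ collapses the inner double sum over $(a',b')$ into a perfect square $(p-q)^{2(j-k)}$, leaving precisely
\begin{equation*}
\E[g_{j,q}(X)^2] \;=\; \sum_{k=0}^{j}\binom{j}{k}^2 (p-q)^{2(j-k)}\,\frac{p^k k!}{n^k}.
\end{equation*}
The Laguerre form then follows by changing index $k\mapsto j-k$, using the definition $L_j(x)=\sum_k\binom{j}{k}(-x)^k/k!$, and matching the coefficients (the calculation uses only $\binom{j}{k}^2(j-k)! = \binom{j}{k}\frac{j!}{k!}$).

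For (iv), the key rewriting is $\binom{j}{k}^2 k! = \binom{j}{k}\cdot\frac{j!}{(j-k)!} = \binom{j}{k}(j)_k$. When $M\geq j$, the falling factorial satisfies $(j)_k \leq j^k \leq M^k$; and when $M\geq n(p-q)^2/p$, we have $(p-q)^{2(j-k)}\leq (Mp/n)^{j-k}$. Substituting both bounds into the second-moment formula produces $(p/n)^j\sum_k\binom{j}{k}M^k M^{j-k}=(2Mp/n)^j$ by the binomial theorem. The $p=0$ degenerate case is handled separately (as noted in the statement), and the convention $\prod_{h=0}^{-1}(\cdot)\triangleq 1$ makes the $k=0$ term in $g_{j,q}$ well-defined.

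The main obstacle I anticipate is the combinatorial bookkeeping in step (iii): managing the quadruple sum over $(a,b,k,\ell)$, applying the product-of-falling-factorials identity cleanly, and recognizing that the inner sum reindexes to a perfect square of a binomial expansion. Steps (i), (ii), and (iv) are comparatively mechanical once (iii) is in place, since (iv) is a direct corollary of the closed-form second moment.
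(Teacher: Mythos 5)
Your proposal is correct. Note that the paper does not supply its own proof of this lemma; it cites it as \cite[Lemma~18]{JYT18}, so there is no internal argument to compare against. Assessing the proposal on its own terms: the unbiasedness step via $\E[(nX)_k]=(np)^k$ and the binomial expansion of $(p-q)^j$ is right; Lehmann--Scheff\'e with completeness of the Poisson family gives uniqueness; the falling-factorial product identity $(Y)_a(Y)_b=\sum_k\binom{a}{k}\binom{b}{k}k!\,(Y)_{a+b-k}$ is correct, and after the reindexing $a'=a-k$, $b'=b-k$ together with $\binom{j}{a'+k}\binom{a'+k}{k}=\binom{j}{k}\binom{j-k}{a'}$, the inner double sum factors into $(p-q)^{2(j-k)}$ exactly as you claim. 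The Laguerre form then follows from $\binom{j}{k}^2k!=\binom{j}{k}\,j!/(j-k)!$ together with the substitution $k\mapsto j-k$, and the final bound follows from $(j)_k\leq M^k$ and $(p-q)^{2(j-k)}\leq(Mp/n)^{j-k}$ plus $\sum_k\binom{j}{k}=2^j$. This is the standard argument for this result and all the combinatorial steps check out.
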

\begin{lemma}
\label{lem:uumvue}
	Suppose $(n\hp,n\hq)\sim \Poi(np)\times\Poi(nq)$. Then the following estimator using $(\hp,\hq)$ is the unique uniformly minimum unbiased estimator for $(e^\varepsilon q-p)^j$, $j\geq 0$, $j\in \Z$:
	
	\begin{eqnarray}
		\hat{A}_j(\hp,\hq) \;\;=\;\; \sum_{k=0}^{j} {j\choose k }\prod_{i=0}^{k-1}\left(\hq-\frac{i}{n}\right)e^{\varepsilon k}(-1)^{j-k} \prod_{m=0}^{j-k}\left(\hp-\frac{m}{n}\right)\;.
	\end{eqnarray}
	Furthermore,
	\begin{eqnarray}
		\E \hat{A}_{j}^{2} \;\;\leq\;\;\left(2(e^\varepsilon q-p)^{2} \vee \frac{8 j(e^{2\varepsilon} q \vee p)}{n}\right)^{j}\;.
	\end{eqnarray}
\end{lemma}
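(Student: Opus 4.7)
My plan is to use a generating-function approach that reduces the bivariate lemma to the univariate computation already carried out in Lemma~\ref{lem:Qopt_moment}. Writing the falling-factorial products as $T_k(\hq) = n^{-k}(n\hq)^{\underline{k}}$ and $S_r(\hp) = n^{-r}(n\hp)^{\underline{r}}$ and applying the Vandermonde identity $\sum_k \binom{N}{k} u^k = (1+u)^N$ to each factor, the estimator has the exponential generating function
\begin{equation}
\sum_{j\geq 0}\frac{z^{j}}{j!}\,\hat{A}_j(\hp,\hq) \;=\; (1+e^\varepsilon z/n)^{n\hq}\,(1-z/n)^{n\hp}.
\end{equation}
Taking expectations with the Poisson identity $\E[u^{n\hq}] = e^{nq(u-1)}$ and using independence of $\hp,\hq$ produces $e^{(e^\varepsilon q - p)z}$, whose coefficient of $z^{j}/j!$ is $(e^\varepsilon q - p)^j$; this proves unbiasedness. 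Uniqueness as the UMVUE follows from the Lehmann--Scheff\'e theorem, since $(n\hp,n\hq)$ is a complete sufficient statistic for $(p,q)$ in the full-rank two-parameter Poisson exponential family.

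For the second moment I introduce two formal variables and, again by independence,
\begin{align}
\E\!\left[\Big(\sum_{j}\tfrac{z^{j}}{j!}\hat{A}_{j}\Big)\Big(\sum_{j'}\tfrac{w^{j'}}{j'!}\hat{A}_{j'}\Big)\right]
&= \E\!\left[\big((1+e^\varepsilon z/n)(1+e^\varepsilon w/n)\big)^{n\hq}\right]\,\E\!\left[\big((1-z/n)(1-w/n)\big)^{n\hp}\right] \notag \\
&= \exp\!\Big((e^\varepsilon q - p)(z+w) \;+\; \tfrac{e^{2\varepsilon}q + p}{n}\,zw\Big).
\end{align}
Extracting the coefficient of $z^{j}w^{j}$ from $e^{A(z+w)+Bzw}$ with $A = e^\varepsilon q - p$ and $B = (e^{2\varepsilon}q + p)/n$ yields the closed-form identity
\begin{equation}
\E[\hat{A}_{j}^{2}] \;=\; \sum_{m=0}^{j}\binom{j}{m}^{2}m!\,(e^\varepsilon q - p)^{2(j-m)}\Big(\tfrac{e^{2\varepsilon}q+p}{n}\Big)^{m} \;=\; j!\,\Big(\tfrac{e^{2\varepsilon}q + p}{n}\Big)^{j} L_{j}\!\Big(-\tfrac{n(e^\varepsilon q - p)^{2}}{e^{2\varepsilon}q + p}\Big),
\end{equation}
which is structurally identical to the univariate expression in Lemma~\ref{lem:Qopt_moment} after the substitutions $p\mapsto e^{2\varepsilon}q + p$ and $(p-q)^{2}\mapsto (e^\varepsilon q - p)^{2}$.

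The tail bound then follows by reusing the Laguerre estimate from Lemma~\ref{lem:Qopt_moment}: for any $M \geq \max\{n(e^\varepsilon q - p)^{2}/(e^{2\varepsilon}q + p),\,j\}$ one obtains $\E[\hat{A}_{j}^{2}] \leq (2M(e^{2\varepsilon}q + p)/n)^{j}$, and the choice $M = n(e^\varepsilon q - p)^{2}/(e^{2\varepsilon}q + p)\vee j$ together with the elementary inequality $e^{2\varepsilon}q + p \leq 2(e^{2\varepsilon}q \vee p)$ delivers the stated bound (with room to spare in the $8j$ constant). I expect the only nontrivial technical step to be justifying the generating-function identity as a formal identity in $z$; each $\hat{A}_{j}$ is a polynomial in $\hp,\hq$, so the sums are really finite and the identity amounts to bookkeeping of falling factorials, after which the rest of the argument is the bivariate analogue of a computation already executed for Lemma~\ref{lem:Qopt_moment}.
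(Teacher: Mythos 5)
Your proof is correct and takes a genuinely different route from the paper's. The paper proves the second-moment bound by rewriting the estimator through a shift-centered binomial expansion around $r = (e^\varepsilon q + p)/2$ — i.e.\ writing $(e^\varepsilon q - p)^j = \sum_k \binom{j}{k}(e^\varepsilon q - r)^k(-1)^{j-k}(p-r)^{j-k}$ — and then applying the triangle inequality for the $L^2$ norm termwise using the univariate moment bound of Lemma~\ref{lem:Qopt_moment}. You instead derive an \emph{exact} closed form via the exponential generating function: the identity $\E\big[\sum_j \frac{z^j}{j!}\hat A_j\big] = e^{(e^\varepsilon q - p)z}$ gives unbiasedness immediately, and the bivariate expansion of $e^{A(z+w)+Bzw}$ yields the exact Laguerre-polynomial expression $\E[\hat A_j^2] = j!\,B^j L_j(-A^2/B)$ with $A = e^\varepsilon q - p$ and $B = (e^{2\varepsilon}q+p)/n$. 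This is structurally identical to the univariate Laguerre formula in Lemma~\ref{lem:Qopt_moment}, so the algebraic tail bound $j!L_j(-M)\le (2M)^j$ for $M\ge \max\{A^2/B,j\}$ transfers directly and yields the stated bound (in fact with constant $4j$ rather than $8j$). The advantage of your route is that it exposes the exact second moment rather than only an upper bound, and the whole argument reduces to two applications of the Poisson PGF identity; the paper's shift trick is arguably more elementary and avoids formal power-series bookkeeping. One small note: the paper's displayed formula has a typo in the product bound — $\prod_{m=0}^{j-k}(\hp - m/n)$ should be $\prod_{m=0}^{j-k-1}(\hp - m/n)$ so that it is an unbiased estimator of $p^{j-k}$ — which you have correctly, if silently, normalized in defining $S_r(\hp)=n^{-r}(n\hp)^{\underline{r}}$.
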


\begin{proof}
	It follows from \cite[Lemma~19]{JYT18} and binomial theorem that $\hat{A}_j(\hp,\hq)$ is the unique uniformly minimum variance unbiased estimator for $(e^\varepsilon q-p)^j$. Now we show $\E \hat{A}_{j}^{2}$ is bounded.
	
	It follows from binomial theorem again that for any fixed $r>0$,
	\begin{eqnarray}
		(e^\varepsilon q- p )^j &\;\;=\;\;& (e^\varepsilon q -r + r- p)^j\\
		&\;\;=\;\;& \sum_{k=0}^j{j \choose k} (e^\varepsilon q-r)^k(-1)^{j-k}(p-r)^{j-k}\;.
	\end{eqnarray}
	The following estimator is also unbiased for estimating $(e^\varepsilon q-p)^j$,
	\begin{eqnarray}
		\sum_{k=0}^je^{\varepsilon k}g_{k, \frac{r}{e^\varepsilon}}(\hq)(-1)^{j-k}g_{j-k,r}(\hp)\;,
	\end{eqnarray}
	where $g_{i, q}(\hp)$ is defined in Lemma~\ref{lem:Qopt_moment}.
	
	Define $M_1 = \frac{n(q-\frac{r}{e^{\varepsilon}})^2}{p}\vee j$, $M_2 = \frac{n(p-r)^2}{p}\vee j$, $M = 2(e^\varepsilon q-p)^{2} \vee \frac{8 j(e^{2\varepsilon} q \vee p)}{n}$ and set $r = \frac{e^\varepsilon q+p}{2}$.
	
	Denote $\|X\|_2 = \sqrt{\E(X-\E X)^2}$ for random variable $X$. It follows from Lemma~\ref{lem:Qopt_moment} that
	\begin{eqnarray*}
		\|\hat{A}_j\|_2 &\;\;\leq\;\;& \sum_{k=0}^j {j \choose k} e^{\varepsilon k} \|g_{k, \frac{r}{e^\varepsilon}}(\hq)\|_2\cdot\|g_{j-k,r}(\hp)\|_2\\
		&\;\;\leq\;\;& \sum_{k=0}^j {j \choose k} e^{\varepsilon k} \left(\frac{2 M_{1} q}{n}\right)^{k / 2}\left(\frac{2 M_{2} p}{n}\right)^{(j-k) / 2}\\
		&\;\;=\;\;& \left(e^\varepsilon \sqrt{\frac{2 M_{1} q}{n}}+\sqrt{\frac{2 M_{2} p}{n}}\right)^{j}\\
		&\;\;=\;\;& \left( \sqrt{\frac{(e^\varepsilon q-p)^2}{2}\vee \frac{2j e^{2\varepsilon}q}{n}}+\sqrt{\frac{(e^\varepsilon q-p)^2}{2}\vee \frac{2j p}{n}}\right)^{j}\\
		&\;\;=\;\;& M^{j/2}\;.
	\end{eqnarray*}
\end{proof}

\subsection{Proof of Theorem~\ref{thm:Qmle}}
\label{sec:proof_Qmle}
	
Note that 
\begin{eqnarray} 
	\E_Q \big[\,|d_\varepsilon(P\| Q_n)-d_\varepsilon(P \| Q)|^2\,\big]  \;\;=\;\; 
	\Big( \sum_{i=1}^S \E_Q\big[ \, [p_i-e^\varepsilon \hq_i]^+\big]  - [p_i-e^\varepsilon q_i]^+  \Big)^2 + \var\big(d_\varepsilon(P\| Q_n)\big)\;. 
	\label{eq:Qmle_biasvariance}
\end{eqnarray}

We first claim the following upper bound for all $P$: 
\begin{eqnarray}
 	 \sum_{i=1}^S \E_Q\big[ \, [p_i-e^\varepsilon \hq_i]^+ \big] - [p_i-e^\varepsilon q_i]^+  
	\;\; \leq \;\;   \sum_{i=1}^S \, p_i \wedge \sqrt{\frac{e^\varepsilon p_i}{n }}   \;,
	\label{eq:Qmle_ub}
\end{eqnarray}
where the inequality follows from Lemma~\ref{lem:poiss_shiftedrelu}.

For the upper bound of the variance term in Eq.~\eqref{eq:Qmle_biasvariance}, we have
\begin{eqnarray}
	\var\big(d_\varepsilon(P_n\| Q)\big) \;\;=\;\;\sum_{i=1}^S\Var\big(\, [p_i - e^{\varepsilon} \hq_i]^+ \,\big)\;\;\lesssim  \;\; \sum_{i=1}^S\frac{e^\varepsilon q_i}{n}\;\;= \;\; \frac{e^\varepsilon}{n}\;,
\end{eqnarray}
where the inequality follows from Lemma~\ref{lem:poiss_varub}.

We next construct $Q$ to get the lower bound. 
Let
\begin{eqnarray}
q_i \;\;  = \;\; \left\{ 
	\begin{array}{rl}
		 e^{-\varepsilon} p_i& \;,\;i \in S_+\\
		\frac{1- Q(S_+)}{S-|S_ + | } & \;,\; i \in S_- 
	\end{array}\right.
\end{eqnarray}
where $S_+$ is a set of indices satisfying $Q(S_+) = \sum_{i\in S_+} q_i \leq e^{-\varepsilon}$. 

Note that each term in the bias of Eq.~\eqref{eq:Qmle_biasvariance} is non-negative via Jensen's inequality, which gives 
\begin{eqnarray}
	 \sum_{i=1}^{S} \E_Q\big[ \, [p_i-e^\varepsilon \hq_i]^+ \big] - [p_i-e^\varepsilon q_i]^+   
		&\;\;\geq\;\; & e^\varepsilon \sum_{i\in S_+} \E_Q\big[ \, [q_i- \hq_i ]^+ \big] \\
		&\;\;\gtrsim\;\; & \sum_{i\in S_+} \Big\{  p_i \wedge \sqrt{\frac{e^\varepsilon p_i}{n}} \Big\}  \;,
\end{eqnarray}
where we used Lemma~\ref{lem:poiss_relu}. Note that we can choose $Q$ such that $|S_+| = S/3$. This implies the desired lower bound when plugged into Eq.~\eqref{eq:Qmle_biasvariance}. 

\subsection{Proof of Theorem~\ref{thm:Qopt}} 
\label{sec:proof_Qopt}
	

Define good events, where our choice of the regimes are correct as 
\begin{eqnarray}
	E &\triangleq& \Big\{ \, \{ i : \hq_{i,1} > U(p_i;c_1,c_2) \} \subseteq S^+ \,\Big\}  \cap 
	\Big\{ \{ i: \hq_{i,1} < U(p_i;c_1,c_2)\} \subseteq S^- \Big\} \nonumber\\ 
	&&\;\;\;\;\; \cap 
	\Big\{ \, \{i: \hq_{i,1}\in U(p_i,c_1,c_2) \subseteq \{i:q_i \in U(p_i,c_1,c_1) \}  \}\,\Big\}   \;,
	\label{eq:Qopt_defE}
\end{eqnarray}
where $S^+=\{i:e^\varepsilon q_i \geq p_i \}$ and 
$S^- = \{i:e^\varepsilon q_i \leq p_i \}$.
Decompose the error under the good events as 
\begin{eqnarray}
	\cE_1 &\;\;\triangleq \;\;& \sum_{i\in I_1} \big\{ [p_i-e^\varepsilon \hq_{i,2} ]^+ -[p_i-e^\varepsilon q_i]^+ \big\} \;, \\
	\cE_2 &\;\;\triangleq \;\;& \sum_{i\in I_2} \big\{ \tD_K(\hq_{i,2};p_i) -[p_i-e^\varepsilon q_i]^+ \big\} \;.
\end{eqnarray}
where the indices of those regimes under the good event are 
\begin{eqnarray}
	I_1 &\;\;\triangleq \;\;& \{ i : \hq_{i,1} < U(p_i;c_1,c_2), e^\varepsilon q_i \leq p_i   \}\\
	I_2 &\;\;\triangleq \;\;& \{ i : \hq_{i,1} \in U(p_i;c_1,c_2), q_i \in U(p_i;c_1,c_1) \}\;.
\end{eqnarray}
We can  bound the squared error as 
\begin{eqnarray}
	\E \big[ \, \big( \hd_{\varepsilon,K,c_1,c_2}(P\| Q_n) - d_\varepsilon(P\|Q) \big)^2 \, \big] & \;\;\leq \;\;& 
		\E[ \, \big( \hd_{\varepsilon,K,c_1,c_2}(P\| Q_n) - d_\varepsilon(P\|Q) \big)^2\, \ind(E)\, ] +  \prob( E^c)    \nonumber \\
		&\;\;\leq\;\; & \E[(\cE_1+\cE_2)^2] + \prob( E^c)    \nonumber\\
		&\;\;\leq\;\;& 2  \E[(\cE_1)^2] + 2  \E[(\cE_2)^2]  +  \prob( E^c)  \;,
	\label{eq:Qopt_error}
\end{eqnarray}

The last term on the bad event is bounded by $ 3S/n^\beta $ as shown in the following lemma, 
and a proof is provided in Section~\ref{sec:proof_Qopt_error}.  This is a direct consequence of standard concentration inequality for Poisson variables. 
\begin{lemma}
	\label{lem:Qopt_error}
	Let $\beta=\min\{\frac{c_2^2}{3c_1}, \frac{(c_1 - c_2)^2}{4c_1}, \frac{(\sqrt{c_1} - \sqrt{c_2})^2}{3} \} $, then  for the good event $E$ defined in \eqref{eq:Qopt_defE}, 
	\begin{eqnarray}
		\prob(E^c)  \;\; \leq \;\; \frac{3\,S}{n^\beta}\;. \label{eq:Qopt_errorub}
	\end{eqnarray}
\end{lemma}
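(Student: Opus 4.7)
The proof reduces to a union bound. Since $E^c$ occurs only if one of the three inclusions in \eqref{eq:Qopt_defE} fails for at least one index $i \in [S]$, I can write
\begin{eqnarray*}
	\prob(E^c) \;\; \leq \;\; \sum_{i=1}^S \Big( \prob(A_i) + \prob(B_i) + \prob(C_i) \Big),
\end{eqnarray*}
where $A_i = \{\hq_{i,1} > U(p_i;c_1,c_2),\; i\notin S^+\}$, $B_i = \{\hq_{i,1} < U(p_i;c_1,c_2),\; i\notin S^-\}$, and $C_i = \{\hq_{i,1} \in U(p_i;c_1,c_2),\; q_i \notin U(p_i;c_1,c_1)\}$. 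Each of these events involves only the scalar Poisson variable $n\hq_{i,1} \sim \Poi(nq_i)$, so the bulk of the argument is pointwise for each $i$ using Lemma~\ref{lem:poisson_tail}. My goal is to show that each of the three terms is at most $n^{-\beta}$ with $\beta$ as stated; multiplying by $S$ and summing the three gives the $3S/n^\beta$ bound.

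\textbf{Case split.} For each $i$, I would split on whether $p_i \leq c_1 e^\varepsilon \ln n/n$ (so $U(p_i;c_1,c_2) = [0,(c_1+c_2)\ln n/n]$) or $p_i > c_1 e^\varepsilon \ln n/n$ (so $U$ is the interval of half-width $\sqrt{c_2 e^{-\varepsilon}p_i \ln n/n}$ centered at $e^{-\varepsilon}p_i$). In the first regime, $i\notin S^+$ forces $q_i \leq e^{-\varepsilon}p_i \leq c_1\ln n/n$, so $A_i$ asks $\Poi(nq_i)$ to exceed $(c_1+c_2)\ln n$; by monotonicity and the multiplicative Chernoff bound of Lemma~\ref{lem:poisson_tail} with $\delta = c_2/c_1$, this contributes $n^{-c_2^2/(3c_1)}$. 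The event $B_i$ is vacuous here since $\hq_{i,1} \geq 0$. For $C_i$ I have $q_i > 2c_1 \ln n/n$ and $\hq_{i,1} \leq (c_1+c_2)\ln n/n$; by monotonicity the worst case $\lambda = 2c_1 \ln n$ yields deficit $\delta = (c_1-c_2)/(2c_1)$ and the lower-tail Chernoff bound $e^{-\delta^2\lambda/2}$ gives $n^{-(c_1-c_2)^2/(4c_1)}$.

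\textbf{Second regime.} In the second regime $p_i > c_1 e^\varepsilon \ln n /n$, set $\mu = n e^{-\varepsilon}p_i \geq c_1 \ln n$ and $\lambda = nq_i$. For $A_i$, the event $\{n\hq_{i,1} > \mu + \sqrt{c_2 \mu \ln n}\}$ with $\lambda \leq \mu$ is dominated by $\prob(\Poi(\mu) > \mu + \sqrt{c_2 \mu \ln n})$, which by Lemma~\ref{lem:poisson_tail} (taking $\delta = \sqrt{c_2 \ln n/\mu}$) contributes at most $n^{-c_2/3}$. $B_i$ is symmetric. For $C_i$, $|\lambda - \mu| > \sqrt{c_1 \mu \ln n}$ while $|n\hq_{i,1} - \mu| \leq \sqrt{c_2 \mu \ln n}$, so the deviation of the Poisson variable from its mean is at least $(\sqrt{c_1}-\sqrt{c_2})\sqrt{\mu \ln n}$. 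Parameterizing by $u = \lambda/\mu - 1$ and optimizing, the worst case is at the boundary $\lambda = \mu + \sqrt{c_1 \mu \ln n}$, where the Chernoff bound gives a contribution of order $n^{-(\sqrt{c_1}-\sqrt{c_2})^2/3}$ (after noting $\mu \geq c_1 \ln n$ to bound $1+\sqrt{c_1 \ln n/\mu}$ by a constant).

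\textbf{Main obstacle.} The routine Poisson concentration is straightforward, but the delicate part is in case $C_i$ of the second regime: here both endpoints of the interval scale with $\sqrt{\mu \ln n}$, so one must combine the upper and lower deviations via the triangle inequality and then take a supremum over the unknown $\lambda$. Handling both the case $\lambda > \mu + \sqrt{c_1 \mu \ln n}$ and $\lambda < \mu - \sqrt{c_1 \mu \ln n}$, and verifying that the constraint $p_i > c_1 e^\varepsilon \ln n/n$ is precisely what makes $\mu$ large enough for the quadratic Chernoff regime to kick in, is the bookkeeping step that pins down the final exponent $(\sqrt{c_1}-\sqrt{c_2})^2/3$ in the definition of $\beta$. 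Taking $\beta$ to be the minimum of the three exponents and summing over $i \in [S]$ yields the claimed bound.
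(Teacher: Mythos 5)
Your proposal is correct and follows essentially the same route as the paper: a union bound over the three conditions and over $i\in[S]$, a case split at $p_i\lessgtr c_1 e^\varepsilon\ln n/n$, and pointwise Poisson Chernoff bounds from Lemma~\ref{lem:poisson_tail} (using stochastic monotonicity to push $q_i$ to the worst boundary value in each case), yielding the three exponents in $\beta$. The one place the paper is slightly more explicit is the $E_3$, regime-2 step, where it applies monotonicity to reduce to $\prob\big(\Poi(\mu+\sqrt{c_1\mu\ln n})\le \mu+\sqrt{c_2\mu\ln n}\big)$ directly rather than phrasing it via the triangle inequality, but these amount to the same calculation.
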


The first term is bounded by $e^{\varepsilon}/n$, as 
\begin{eqnarray}
	\E[(\cE_1)^2] & = &  \E [\var(\cE_1 | I_1) + (E[ \cE_1| I_1])^2] \; = \; \E[\var(\cE_1|I_1)] \; \leq \; \sum_{i=1}^S \frac{e^{\varepsilon} \, p_i}{n} \;\;\leq\;\; \frac{e^{\varepsilon}}{n}
\label{eq:Qopt_term2}
\end{eqnarray}
where we used Lemma \ref{lem:poiss_varub} and the fact that $\E[\cE_1|I_1] = 0$ with probability one.

The second term is bounded by the following lemma, with a proof in Section~\ref{sec:proof_Qopt_key}. 
\begin{lemma}
	\label{lem:Qopt_key}
	For $n\hq\sim\Poi(nq)$ and $q\in U(p;c_1,c_1)$, there exists a universal constant $B>0$ 
	such that 
	\begin{eqnarray}		
		\big|\, \E [ \tD_K(\hq;p) ]  - [p-e^\varepsilon\,q]^+ \,\big|  &\;\lesssim \;& 
			p\wedge \frac{1}{K}\sqrt{\frac{e^\varepsilon \,p\,c_1\, \ln n}{n}} \;, \text{ and }\label{eq:Qopt_keybias} \\
		  \var\big(\tD_K(\hq;p) \big) &\;\lesssim \;& \frac{B^K\,e^\varepsilon \, c_1\, \ln n}{n}(p+e^\varepsilon q) \;, \label{eq:Qopt_keyvar} 
	\end{eqnarray}
	where 
	$\tD_K(\hp;q)$ is the uniformly minimum variance unbiased estimate (MVUE) 
	defined in Eq.~\eqref{eq:Qopt_mvue}, 
	$U(q,c_1)$ is defined in Eq.~\eqref{eq:Qopt_region1}, and 
	$K=c_3 \ln n$ for some $c_3<c_1$.
	
\end{lemma}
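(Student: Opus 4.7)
\textbf{Proof proposal for Lemma~\ref{lem:Qopt_key}.} The plan is to split according to whether $p\leq e^{\varepsilon}\Delta$ (Case 1, the ``$\hq$ near $0$'' regime of Eq.~\eqref{eq:Qopt_region1}) or $p>e^{\varepsilon}\Delta$ (Case 2, the ``$\hq$ near $e^{-\varepsilon}p$'' regime). In either case, observe that $\tD_K(\hq;p)$ is, by construction, the uniformly minimum variance unbiased estimator of the polynomial $D_K(q;p)$. Hence the bias in estimating $[p-e^{\varepsilon}q]^{+}$ is exactly the pointwise polynomial approximation error,
\begin{equation*}
	\E[\tD_K(\hq;p)]-[p-e^{\varepsilon}q]^{+}\;=\;D_K(q;p)-[p-e^{\varepsilon}q]^{+},
\end{equation*}
which is at most $\sup_{x\in U(p;c_1,c_1)}|D_K(x;p)-[p-e^{\varepsilon}x]^{+}|$ because $q\in U(p;c_1,c_1)$ under the hypothesis. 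The bias and variance then decouple and can be handled with different tools.

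For the bias, in Case~1 I would rescale by $y=x/(2\Delta)$ so the function becomes $f(y)=[p-e^{\varepsilon}2\Delta y]^{+}$ on $[0,1]$, invoke Lemma~\ref{lem:delf2} to obtain $\omega_{\varphi}^{2}(f,1/K)\lesssim p\wedge\frac{\sqrt{p(2e^{\varepsilon}\Delta-p)}}{K}$, and then convert this Ditzian--Totik modulus into an approximation bound via Lemma~\ref{lem:poly_error}. Since $2e^{\varepsilon}\Delta-p\leq 2e^{\varepsilon}\Delta$, this yields the desired $p\wedge K^{-1}\sqrt{e^{\varepsilon}pc_{1}\ln n/n}$. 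In Case~2 the interval has length $2\sqrt{e^{-\varepsilon}p\Delta}$, so the affine change of variables $y=(x-e^{-\varepsilon}p)/\sqrt{e^{-\varepsilon}p\Delta}$ maps the problem to approximating $\tfrac{e^{\varepsilon}}{2}(|y|-y)\sqrt{e^{-\varepsilon}p\Delta}$ on $[-1,1]$; the classical best polynomial approximation of $|y|$ has uniform error $O(1/K)$, so the bias is $O(K^{-1}e^{\varepsilon}\sqrt{e^{-\varepsilon}p\Delta})=O(K^{-1}\sqrt{e^{\varepsilon}pc_{1}\ln n/n})$.

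For the variance, I would apply the triangle inequality to standard deviations, $\sqrt{\var(\tD_K)}\leq\sum_{j}|\text{coeff}_{j}|\cdot\sqrt{\E[g_{j}^{2}]}$, and control the two factors separately. In Case~1, the coefficients $a_{j}$ of the best approximant $H_{K}$ on $[0,1]$ are bounded via Lemma~\ref{lem:Qopt_magnitude} (with $a=0$, $b=1$) by $|a_{j}|\lesssim 2^{7K/2}\|H_{K}\|_{\infty}\cdot 2^{j}\lesssim(C_{0})^{K}p$, while Lemma~\ref{lem:Qopt_moment} gives $\E[g_{j,0}(\hq)^{2}]\leq(2Mq/n)^{j}$ with $M=\max\{nq,j\}\lesssim \ln n$ because $q\leq 2\Delta$ and $K=c_3\ln n$. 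Substituting $(2\Delta)^{-j}(Mq/n)^{j/2}\lesssim(C_1)^{j/2}$ lets the sum over $j\leq K$ collapse to a geometric series bounded by $(B')^{K}p$, so $\var(\tD_K)\lesssim B^{K}p^{2}$; using $p\leq e^{\varepsilon}\Delta$ once gives $B^{K}p^{2}\lesssim B^{K}pe^{\varepsilon}c_{1}\ln n/n$, which is dominated by the claimed RHS. In Case~2, Lemma~\ref{lem:Qopt_magnitude} (with $a=-1$, $b=1$) yields $|r_{j}|\leq(\sqrt{2}+1)^{K}$, and Lemma~\ref{lem:Qopt_moment} applied to $g_{j,e^{-\varepsilon}p}(\hq)$ gives $\E[g_{j,e^{-\varepsilon}p}(\hq)^{2}]\leq(2M'q/n)^{j}$ with $M'\geq\max\{n(q-e^{-\varepsilon}p)^{2}/q,j\}\lesssim\ln n$ (using $|q-e^{-\varepsilon}p|\leq\sqrt{e^{-\varepsilon}p\Delta}$ and $q\asymp e^{-\varepsilon}p$). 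The scaling factor $(\sqrt{e^{-\varepsilon}p\Delta})^{-j+1}$ cancels neatly with $\sqrt{(2M'q/n)^{j}}$ to leave geometric terms $(C_2)^{j/2}$, giving $\var(\tD_K)\lesssim B^{K}e^{\varepsilon}p\Delta$, which fits the claim.

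The main obstacle I expect is bookkeeping in the variance step: making sure that for each $j$ the ratio of (MVUE second moment)$^{1/2}$ to the scaling power cancels to a bounded geometric ratio independent of $n$, so that summing $j=0,\ldots,K$ produces only a $B^{K}$ factor with no residual polylogarithmic blowup. This hinges on the interval $U(p;c_1,c_1)$ being chosen at the right scale: large enough that $q$ lies in it (used for the bias via $c_1>c_2$), yet small enough that $M,M'\lesssim\ln n$ (used so that the $(2Mq/n)^{j/2}$ moment bound matches the negative powers $(2\Delta)^{-j}$ or $(\sqrt{e^{-\varepsilon}p\Delta})^{-j}$). Once the parameters are set accordingly, combining the bias and variance and choosing $K=c_3\ln n$ with $c_3$ small enough that $B^{c_3}<n^{\text{small}}$ will complete the proof.
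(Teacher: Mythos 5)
Your outline matches the paper's proof closely: bias via functional approximation error (Lemma~\ref{lem:delf2} and Lemma~\ref{lem:poly_error} for Case~1, Bernstein's $O(1/K)$ bound on $\Delta_K[|t|;[-1,1]]$ for Case~2), and variance via Cauchy–Schwarz on the MVUE coefficients with Lemma~\ref{lem:Qopt_magnitude} for the coefficient magnitudes and Lemma~\ref{lem:Qopt_moment} for the second moments, followed by checking that the per-degree ratio is a bounded constant so the sum over $j\leq K$ only contributes a $B^K$ factor.

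One statement you make in Case~2 is, however, false as written and should be repaired. You assert $q\asymp e^{-\varepsilon}p$ and deduce $M'\lesssim\ln n$. In Case~2 we only know $q\in[e^{-\varepsilon}p-\sqrt{e^{-\varepsilon}p\Delta}\,,\,e^{-\varepsilon}p+\sqrt{e^{-\varepsilon}p\Delta}]$ and $p>e^{\varepsilon}\Delta$; when $e^{-\varepsilon}p$ is only slightly larger than $\Delta$ the lower endpoint is nearly $0$, so $q$ can be arbitrarily small relative to $e^{-\varepsilon}p$, and consequently $M'=\max\{n(q-e^{-\varepsilon}p)^{2}/q,\,j\}$ can be much larger than $\ln n$. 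What rescues the argument — and what the paper's proof actually bounds — is not $M'$ itself but the ratio that enters the geometric series,
\begin{equation*}
	\frac{2M'q}{n\,e^{-\varepsilon}p\,\Delta}\;=\;\max\Bigl\{\frac{2(q-e^{-\varepsilon}p)^{2}}{e^{-\varepsilon}p\,\Delta}\,,\,\frac{2jq}{n\,e^{-\varepsilon}p\,\Delta}\Bigr\}\;,
\end{equation*}
which is bounded by a constant because $(q-e^{-\varepsilon}p)^{2}\leq e^{-\varepsilon}p\Delta$ and $q\leq 2e^{-\varepsilon}p$ (an upper bound on $q$ that always holds), $j\leq K=c_3\ln n$, and $\Delta=c_1\ln n/n$. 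Your next sentence (``the scaling factor cancels neatly'') is the correct statement; just drop the intermediate claim $M'\lesssim\ln n$ and the unjustified $q\asymp e^{-\varepsilon}p$, and bound the ratio directly as above. Everything else in your outline goes through.
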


We have
\begin{eqnarray}
	\E[(\cE_2)^2] & \lesssim & \sum_{i=1}^S\frac{B^K e^\varepsilon c_1\ln n}{n}(p_i+e^\varepsilon q_i)+ \Big(\sum_{i=1}^{S} p_i\wedge \frac{1}{K}\sqrt{\frac{e^\varepsilon p_i c_1 \ln n}{n}} \Big)^2\\
	&\lesssim & \frac{c_1 \ln n}{n^{1-c_3 \ln B}}e^\varepsilon(e^\varepsilon + 1) +\Big(\sum_{i=1}^{S} p_i\wedge \sqrt{\frac{e^\varepsilon p_i c_1 }{c_3^2n \ln n}} \Big)^2 \;.\label{eq:Qopt_term3} 
\end{eqnarray}

Substituting bounds \eqref{eq:Qopt_term3}, \eqref{eq:Qopt_term2} and \eqref{eq:Qopt_errorub}, we get that 
\begin{eqnarray}
	\E\big[\, \big(   \hd_{\varepsilon,K,c_1,c_2}(P \| Q_n) - d_\varepsilon(P\|Q)  \big)^2 \,\big] &\;\;\lesssim\;\; &  
	\frac{c_1 \ln n}{n^{1-c_3 \ln B}}e^{2\varepsilon} +\Big(\sum_{i=1}^{S} p_i\wedge \sqrt{\frac{e^\varepsilon p_i c_1 }{c_3^2n \ln n}} \Big)^2 + \frac{S}{n^\beta} 
\end{eqnarray}
where we use the fact that $\frac{e^{\varepsilon}}{n}\lesssim \frac{c_1 \ln n}{n^{1-c_3 \ln B}}e^\varepsilon(e^\varepsilon +1)$.

As $\ln n \gtrsim \ln S $,  one may choose  $c_1$ large enough to and let $c_2 = c_1/2$ to ensure that $\frac{S}{n^\beta}\lesssim  \frac{c_1 \ln n}{n^{1-c_3 \ln B}}e^{2\varepsilon}$. As $\ln n \lesssim \ln \left(e^{-\varepsilon}\sum_{i=1}^{S} \sqrt{e^\varepsilon p_{i}} \wedge p_{i} \sqrt{n \ln n}\right)$, one may choose $c_3$ small enough to ensure $\frac{c_1 \ln n}{n^{1-c_3 \ln B}}e^{2\varepsilon} \lesssim \Big(\sum_{i=1}^{S} p_i\wedge \sqrt{\frac{e^\varepsilon p_i c_1 }{c_3^2n \ln n}} \Big)^2$. The worst case of $P$ result is proved upon noting
\begin{eqnarray}
	\sum_{i=1}^{S} p_i\wedge \sqrt{\frac{e^\varepsilon p_i}{n \ln n}}
	\;\;\leq \;\; \sum_{i=1}^{S} \sqrt{\frac{e^\varepsilon p_i  }{n \ln n}}
x\end{eqnarray}

Note that in the worst case of $P$, we do not require $\ln n\gtrsim \ln S$, as we can take $c_1$ large enough and $c_2 = c_1/2$ to ensure $\frac{S}{n^\beta}\lesssim \frac{e^\varepsilon S}{n\ln n}$.

\subsubsection{Proof of Lemma~\ref{lem:Qopt_error}}
\label{sec:proof_Qopt_error}
	
Let $E_1 = \Big\{ \, \{ i : \hq_{i,1} > U(p_i;c_1,c_2) \} \subseteq S^+ \,\Big\}$, $E_2 = \Big\{ \{ i: \hq_{i,1} < U(p_i;c_1,c_2)\} \subseteq S^- \Big\}$ and $E_3 = \Big\{ \, \{i: \hq_{i,1}\in U(p_i,c_1,c_2) \subseteq \{i:q_i \in U(p_i,c_1,c_1) \}  \}\,\Big\}$. 
	We first show $\prob(E_1^c) \leq S n^{-\beta}$ for $\beta\leq (c_2)^2/(3c_1)$.
	\begin{eqnarray*}
		\prob(E_1^c) &=& \prob\Big(\, \bigcup_{i=1}^S \{\hq_{i,1} > U(p_i;c_1,c_2), e^\varepsilon q_i<p_i\} \,\Big)\\
		&\leq & S\, \max_{i\in S} \prob(\{\hq_{i,1} > U(p_i;c_1,c_2), e^\varepsilon q_i<p_i\} )\\
		&= & S\, \max_{i\in S} \prob(\{\Poi(nq_i) > n U(p_i;c_1,c_2), q_i<e^{-\varepsilon}p_i\} )\\
		&\leq & S\, \max_{i\in S} \prob(\{\Poi(ne^{-\varepsilon}p_i) > n U(p_i;c_1,c_2)\} )
	\end{eqnarray*}
If $p_i\leq {(c_1e^\varepsilon\ln n)}/{n}$, it follows from Lemma~\ref{lem:poisson_tail} that
	\begin{eqnarray*}
		\prob(\{\Poi(ne^{-\varepsilon}p_i) > n U(p_i;c_1,c_2)\} ) 
		&=& \prob(\{\Poi(ne^{-\varepsilon}p_i) >  (c_1+c_2)\ln n\} ) \\
		&\;\;\leq\;\; & \prob(\{\Poi(c_1\ln n) >  (c_1+c_2)\ln n\} ) \\
		&\leq & e^{-\frac{c_2^2}{3c_1}\ln n}\;.
	\end{eqnarray*}
If $p_i> {(c_1e^\varepsilon\ln n)}/{n}$, it follows from  Lemma~\ref{lem:poisson_tail}  that
	\begin{eqnarray*}
		\prob(\{\Poi(ne^{-\varepsilon}p_i) > n U(p_i;c_1,c_2)\} )
		&\;\;=\;\;& \prob(\{\Poi(ne^{-\varepsilon}p_i) >  ne^{-\varepsilon}p_i+\sqrt{c_2e^{-\varepsilon}p_i n\ln n}\} ) \\
		&\;\;\leq\;\; & e^{-\frac{c_2\ln n}{3}}\;.
	\end{eqnarray*}
Together, these bounds imply that $\prob(E_1^c)\leq S n^{-\beta}$. 
	
Next, we show $\prob(E_2^c)\leq S n^{-\beta}$, for positive constant $\beta\leq c_2^2/(3 c_1)$. Recall that 
	\begin{eqnarray*}
		\prob(E_2^c) &\;\;=\;\;& \prob\Big(\,\bigcup_{i=1}^S \{\hq_{i,1} < U(p_i;c_1,c_1), e^\varepsilon q_i\geq p_i\} \,\Big)\;.
	\end{eqnarray*}
	If $p_i\leq (c_1e^\varepsilon\ln n)/n$, $\prob(\{\hq_{i,1} < U(p_i;c_1,c_2), e^\varepsilon p_i\geq q_i\} )= 0$. 
	If  $p_i> (c_1e^\varepsilon\ln n)/n$, it follows from Lemma~\ref{lem:poisson_tail} that
	\begin{eqnarray*}
		\prob(\{\hq_{i,1} < U(p_i;c_1,c_2), e^\varepsilon q_i\geq p_i\})
		&\;\;=\;\;& \prob(\{\Poi(nq_i) < ne^{-\varepsilon}p_i-\sqrt{c_2e^{-\varepsilon}p_i n\ln n},  q_i\geq e^{-\varepsilon} p_i\})\\
		&\leq & \prob(\{\Poi(ne^{-\varepsilon} p_i) < ne^{-\varepsilon}p_i-\sqrt{c_2e^{-\varepsilon}p_i n\ln n},  q_i\geq e^{-\varepsilon} p_i\})\\
		&\leq & e^{-\frac{c_2\ln n }{2}}\;.
	\end{eqnarray*}
	As $c_2 /2 \geq c_2^2/(3c_1)$, we have $\prob(E_2^c)\leq S n^{-\beta}$.

	Finally, we show that $\prob(E_3^c) \leq S n^{-\beta}$ for $\beta\leq \min\{ (c_1-c_2)^2/(4 c_1), (\sqrt{c_1}-\sqrt{c_2})^2/3\}$.
	Recall  
	\begin{eqnarray*}
		\prob(E_3^c) &\;\;=\;\;& \prob(\bigcup_{i=1}^S \{\hq_{i,1} \in  U(p_i;c_1,c_2), q_i\notin U(p_i;c_1,c_1)\} )\;.
	\end{eqnarray*}
	If $p_i\leq (c_1e^\varepsilon\ln n)/n$,
	\begin{eqnarray*}
		\prob(\{\hq_{i,1} \in  U(p_i;c_1,c_2), q_i\notin U(p_i;c_1,c_1)\})
		&\;\; = \;\;&\prob(\{\Poi(nq_i) \leq (c_1+c_2)\ln n, nq_i\geq 2c_1\ln n\})\\
		&\leq &\prob(\{\Poi(2c_1\ln n) \leq (c_1+c_2)\ln n\})\\
		&\leq & e^{-\frac{(c_1-c_2)^2}{4c_1}\ln n}\;. 
	\end{eqnarray*}
	If $p_i> (c_1e^\varepsilon\ln n)/n$, 
	\begin{eqnarray*}
		&& \prob(\{\hq_{i,1} \in  U(p_i;c_1,c_2), q_i> U(p_i;c_1,c_1)\})\\
		&\;\;=\;\;& \prob(\{\Poi(nq_i) \leq ne^{-\varepsilon}p_i+\sqrt{c_2e^{-\varepsilon}p_i n\ln n}, nq_i> ne^{-\varepsilon}p_i+\sqrt{c_1e^{-\varepsilon}p_i n\ln n}\})\\
		&\;\;\leq\;\;&  \prob(\{\Poi(ne^{-\varepsilon}p_i+\sqrt{c_1e^{-\varepsilon}p_i n\ln n}) \leq ne^{-\varepsilon}p_i+\sqrt{c_2e^{-\varepsilon}p_i n\ln n} \})\\
		&\;\;\leq\;\;&  e ^ { - \left( \frac { ( \sqrt{c_1} - \sqrt {c_2} ) \sqrt {e^{-\varepsilon}p_i n \ln n } } { n e^{-\varepsilon}p_i + \sqrt { c_1 e^{-\varepsilon}p_i n \ln n } } \right)^2\frac { 1 } { 2 } ( ne^{-\varepsilon} p_i + \sqrt { c_1 e^{-\varepsilon}p_i n \ln n})}\\
		&\;\;\leq\;\;&  e^{-\frac{(\sqrt{c_1}-\sqrt{c_2})^2 \ln n}{4}}\;,
	\end{eqnarray*}
Similarly, we can show that $\prob(\{\hq_{i,1} \in  U(p_i;c_1,c_2), q_i < U(p_i;c_1,c_1)\}) \leq e^{ - ( \sqrt{c_1} - \sqrt {c_2})^2 \ln n/3 }$.

\subsubsection{Proof of Lemma~\ref{lem:Qopt_key}}
\label{sec:proof_Qopt_key}
	
Let $\Delta = (c_1 \ln n )/n$. We divide the analysis into two regimes. 

\bigskip\noindent{\bf Case 1: $p \leq e^\varepsilon\Delta$ and $q\in U(p,c_1,c_1) = [0,2\Delta]$.}

First we analyze the bias. As we apply the universally minimum variance unbiased estimator (MVUE) to $D_K (q;p)$, the bias is entirely due to the functional approximation. 
Recall that we consider the best polynomial approximation $H_K(y)$ of function
$g(y)= [p- e^\varepsilon 2\Delta y  ]^+$ on $[0,1]$ with order $K$, i.e. $H_K(y) = \argmin_{P \in {\rm poly_K}} \max_{y'\in [0,1]} \big|\, g(y')-P(y') \,\big| $.  Denote it as $H_K(y)=\sum_{j=0}^Ka_jy^j$. Then $D_K(x;p) = H_K({x}/({2\Delta}))$.
It follows from Lemma \ref{lem:delf2}  that there exists a 
universal constant $M\geq 0$ such that for all $K\geq 1$, 
\begin{eqnarray}
	\sup_{x\in[0,2\Delta]} \big|\, D_K(x;p) - [p-e^\varepsilon x ]^+ \,\big| &\;\;=\;\;& \sup_{y\in[0,1]} \big|\, D_K(2\Delta y;p) - [p-e^\varepsilon 2\Delta y ]^+ \,\big| \nonumber \\
	&=& \sup_{y\in [0,1]}|H_K(y) -g(y)| \nonumber\\
		&\leq& M\left(p \wedge \frac{1}{K}\sqrt{\frac{e^\varepsilon c_1  p \ln n}{n}}\right) \;. 
\end{eqnarray}

Next to analyze the variance, we upper bound the
 magnitude of the coefficients in $\tD_K$ using Lemma~\ref{lem:Qopt_magnitude}, 
 and upper bound the second moment of the unique MVUE using the tail bound of Poisson distribution 
 in Lemma~\ref{lem:Qopt_moment}.
As the universal MVUE is of the form 
$\sum_{j=0}^K a_j (2\Delta)^{-j} \prod_{k=0}^{j-1} (\hp- k/n )$ as shown in Section \ref{sec:MVUE}, 
The variance is upper bounded by 
\begin{eqnarray}
	\var(\, \tD_K(\hq;p)\,) &\;\;=\;\;& \var \Big( \sum_{j=0}^K a_j (2\Delta)^{-j} \prod_{k=0}^{j-1} (\hq-\frac{k}{n}) \Big)  \nonumber\\
		&\leq &\Big(\sum_{j=0}^K |a_j| (2\Delta)^{-j}\Big(\var\big(\prod_{k=0}^{j-1} (\hq-\frac{k}{n})\big)\Big)^\frac{1}{2} \Big)^2 \nonumber\\
		&\leq & \max_{0\leq j' \leq K} |a_{j'}|^2 \; \Big(\sum_{j=1}^K (2\Delta)^{-j}\Big(4\Delta q\Big)^\frac{j}{2} \Big)^2 \label{eq:Qopt_var_1} \\
		& = &\max_{0\leq j' \leq K} |a_{j'}|^2 \; \Big(\sum_{j=1}^K \big(\frac{q}{\Delta}\big)^\frac{j}{2} \Big)^2 \nonumber\\
		& = &\max_{0\leq j' \leq K} |a_{j'}|^2 \, \frac{q}{\Delta} \, \Big(\sum_{j=0}^{K-1} \big(\frac{q}{\Delta}\big)^\frac{j}{2} \Big)^2 \nonumber\\
		&\leq &\max_{0\leq j' \leq K} |a_{j'}|^2\, \frac{q}{\Delta}\, \Big(\sum_{j=0}^{K-1} 2^\frac{j}{2} \Big)^2 \nonumber\\
		&\lesssim & e^{2\varepsilon}B^K\Delta^2\frac{q}{\Delta} \label{eq:Qopt_var_2}\\
		&\lesssim &   B^K\frac{e^{\varepsilon} c_1 \ln n}{n}(e^\varepsilon q) \;, \nonumber
\end{eqnarray}
where $B$ is some universal constant, 
we use $q\leq 2\Delta$ and $\Delta=(c_1 \ln n) / n$ in the last inequality, 
and \eqref{eq:Qopt_var_1} follows from Lemma~\ref{lem:Qopt_moment}, 
\eqref{eq:Qopt_var_2}  from Lemma~\ref{lem:Qopt_magnitude}. 
Concretely, it follows from Lemma \ref{lem:Qopt_moment}  
that for $M_2=\max\big\{2n\Delta, K\big\}=2n\Delta$, 
\begin{eqnarray*}
	\var\big(\prod_{k=0}^{j-1} (\hq-\frac{k}{n})\big)
	\;\leq\; \E \big(\prod_{k=0}^{j-1} (\hq-\frac{k}{n})\big)^2
	\;\leq \; \big(\frac{2M_2q}{n}\big)^j
	\;=\; \big(4\Delta q\big)^j\;.
\end{eqnarray*}

To apply Lemma \ref{lem:Qopt_magnitude}, 
we first transform the domain of the polynomial approximation to be symmetric around the origin by change of variables. 
We consider $H_K(z^2) = \sum_{j=0}^Ka_jz^{2j}$, 
which is a polynomial with degree no more than $2K$ and satisfies 
\begin{eqnarray*}
	\sup_{z\in [-1,1]}|H_K(z^2)|\leq M_1 e^\varepsilon\Delta\;.
\end{eqnarray*}
This bound follows from
a triangular inequality applied to 
$\max_{y\in [0,1]}|g(y)|\lesssim e^\varepsilon\Delta$ and $\sup_{y\in [0,1]}|H_K(y) -g(y)|\lesssim e^\varepsilon\Delta$. 
It follows that  there exists a
 universal constant $M_1>0$ such that $\sup_{y\in [0,1]}|H_K(y)|\leq M_1 e^\varepsilon\Delta$. 
It follows from Lemma \ref{lem:Qopt_magnitude} that for all $0\leq j\leq K$,
\begin{eqnarray}
	|a_j|\leq M_1 e^\varepsilon\Delta(\sqrt{2}+1)^{2K}\;.
\end{eqnarray}

\bigskip\noindent{\bf Case 2: $p > e^\varepsilon\Delta$ and $q\in [e^{-\varepsilon}p-\sqrt{e^{-\varepsilon}p\Delta}, e^{-\varepsilon}p+\sqrt{e^{-\varepsilon}p\Delta}]$.} 

First, we analyze the bias. 
In this regime, 
we claim that the best polynomial approximation $D_K(x;p)$ of 
 $[p-e^\varepsilon x]^+$ is given by
\begin{eqnarray}
	D_K(x;p) \;\; =\;\; \frac{e^\varepsilon}{2}\sum_{j=0}^Kr_j \big(\sqrt{e^{-\varepsilon}p\Delta}\big)^{-j+1}(x-e^{-\varepsilon}p)^j+\frac{p-e^\varepsilon x}{2} \;, 
\end{eqnarray} 
where $r_j$'s are defined from  the best  polynomial approximation $R_K(y)$ of $g(y) = |y|$ on $[-1,1]$ with order $K$: 
$R_K(y)  =  \sum_{j=0}^K r_j y^j$\;.
And it is well known (e.g.~\cite[Chapter~9, Theorem~3.3]{devore1993constructive}) 
that there exists a universal constant $M_3$ such that $|R_K(y)-|y||  \leq  {M_3}/{K}$, for all  $y\in [-1,1]$.
As $[a]^+ = (1/2)a + (1/2)|a| $, 
the optimality of $D_K(x;q)$ follows from 
\begin{eqnarray*}
	\big|\, D_K(x;p)-[p-e^\varepsilon x]^+\,\big| 
	&\;\;= \;\;& \big|\frac{e^\varepsilon\sum_{j=0}^Kr_j \big(\sqrt{e^{-\varepsilon}p\Delta}\big)^{-j+1}(x-e^{-\varepsilon}p)^j-|p-e^\varepsilon x|}{2}\\
	&&+\frac{(p-e^\varepsilon x)-(p-e^\varepsilon x)}{2}\big|\\
	&=& \frac{e^\varepsilon}{2}\Big|\, \sum_{j=0}^Kr_j \big(\sqrt{e^{-\varepsilon}p\Delta}\big)^{-j+1}(x-e^{-\varepsilon}p)^j-| e^{-\varepsilon}p-x| \,\Big|\\
	&=&\frac{e^\varepsilon\sqrt{e^{-\varepsilon}p\Delta}}{2}\Big|R_k(\frac{x-e^{-\varepsilon}p}{\sqrt{e^{-\varepsilon}p\Delta}})-|\frac{e^{-\varepsilon}p-x}{\sqrt{e^{-\varepsilon}p\Delta}}|\Big|\;\\
	&=&\frac{\sqrt{e^{\varepsilon}p\Delta}}{2}\Big|R_k(y)-|y|\Big|\;,
\end{eqnarray*}
where we let $x = e^{-\varepsilon}p+y\sqrt{e^{-\varepsilon}p\Delta}$, and we want small approximation error in $y\in[-1,1]$. 
This gives the desired bound on the bias: 
\begin{eqnarray*}
	|D_K(x;p)-[p-e^\varepsilon x]^+| \;=\; \frac{\sqrt{e^{\varepsilon}p\Delta}}{2}\Big|R_k(y)-|y|\Big| 
\;\leq \; \frac{M_3\sqrt{e^{\varepsilon}p\Delta}}{K} \; \lesssim \; \frac{1}{K}\sqrt{\frac{e^\varepsilon \,p\,c_1\, \ln n}{n}}
\end{eqnarray*}

Next, we analyze the variance. 
Recall from Lemma~\ref{lem:Qopt_moment}
 that  $g_{j,c}(\hq)$ defined as
\begin{eqnarray}
	g_{j, c}(\hq) \;\; \triangleq \;\; \sum_{k=0}^{j}{j \choose k}(-c)^{j-k} \prod_{h=0}^{k-1}\left(\hq-\frac{h}{n}\right)
\end{eqnarray}
is the unique uniformly minimum variance unbiased estimator (MVUE) for $(q-c)^j$, $j\geq 0$, $j\in \mathbb{N}$.
Hence, 
\begin{eqnarray}
	\tD_K(x;p) = \frac{e^\varepsilon}{2}\Big (\sum_{j=0}^Kr_j \big(\sqrt{e^{-\varepsilon}p\Delta}\big)^{-j+1}g_{j,e^{-\varepsilon}p}(\hq)+ g_{1,e^{-\varepsilon}p}(\hq)\Big)
\end{eqnarray}
Let $a_j = r_j$ for $j=0,2,3,\ldots, K$ and $a_1 = r_1-1$ and we can write $\tD_K(x;p)$ as
\begin{eqnarray*}
	\tD_K(x;p) \; = \; \frac{e^\varepsilon}{2}\sum_{j=0}^Ka_j \big(\sqrt{e^{-\varepsilon}p\Delta}\big)^{-j+1}g_{j,e^{-\varepsilon}p}(\hq)\;.
\end{eqnarray*}
It is shown in \cite[Lemma~2]{cai2011testing} that $|r_j|\leq 2^{3K}$, for $0\leq j\leq K$. So we can safely say $\max_{0\leq j\leq K} |a_j|^2\leq 4\cdot 2^{6K} $. 
It follows from Lemma~\ref{lem:Qopt_moment}
that for $M_4 = \max\{\frac{n(q-e^{-\varepsilon}p)^2}{q}, K\}$, 
\begin{eqnarray*}
	\var(g_{j,e^{-\varepsilon}p}(\hq))\; \leq \; \E g_{j,e^{-\varepsilon}p}^2(\hq) \; \leq \; (\frac{2M_4q}{n})^j\;.
\end{eqnarray*}
Note that if $q=0$, the variance is $0$. We  consider the case $q\neq 0$. 
The variance is 
\begin{eqnarray}
	\var(\tD_K(x;p)) &\;\;=\;\;& \frac{e^{2\varepsilon}}{4}\var(\sum_{j=0}^Ka_j \big(\sqrt{e^{-\varepsilon}p\Delta}\big)^{-j+1}g_{j,e^{-\varepsilon}p}(\hq)) \nonumber \\
	&\leq & \frac{e^{2\varepsilon}}{4} \Big(\sum_{j=0}^K|a_j|\big(\sqrt{e^{-\varepsilon}p\Delta}\big)^{-j+1}\var^{\frac12}(g_{j,e^{-\varepsilon}p}(\hq)) \Big)^2 \nonumber \\
	&\leq & e^{2\varepsilon}2^{6K}e^{-\varepsilon}p\Delta\Big(\sum_{j=0}^K\big(\sqrt{e^{-\varepsilon}p\Delta}\big)^{-j}(\frac{2M_4q}{n})^{\frac{j}{2}} \Big)^2 \nonumber \\
	&=& e^{\varepsilon}2^{6K}p\Delta\Big(\sum_{j=0}^K(\frac{2M_4q}{ne^{-\varepsilon}p\Delta })^{\frac{j}{2}} \Big)^2 \nonumber \\
	&\leq & e^{\varepsilon}2^{6K}p\Delta \Big(\frac{c^{K+1}-1}{c-1}\Big)^2 \label{eq:Qopt_var_3} \\
	&\leq & \frac{c^{2}}{(c-1)^{2}}(8 c)^{2 K}e^{\varepsilon} p \Delta \nonumber \\
	&\lesssim & B^K \frac{c_1 \ln n}{n} e^\varepsilon p \nonumber \;,
\end{eqnarray}
where $c = \max\{\sqrt{2}, 2\sqrt{c_3/c_1}\} $, and $B>0$ is some universal constant as $c_3<c_1$. 
The inequality in \eqref{eq:Qopt_var_3} follows from $\sqrt{2M_4q / ({ne^{-\varepsilon}p\Delta })}\leq c$, which follows from  
\begin{eqnarray*}
	&&\sqrt{\frac{2Kq}{ne^{-\varepsilon}p\Delta }}
	\;\leq\;  \sqrt{\frac{2K(e^{-\varepsilon}p+\sqrt{e^{-\varepsilon}p\Delta})}{ne^{-\varepsilon}p\Delta }}
	\;\leq\; \sqrt{\frac{2K\cdot 2e^{-\varepsilon}p}{ne^{-\varepsilon}p\Delta }}
	\;=\; \sqrt{\frac{4c_3}{c_1}}\;, \text{ and }\\
	&&\sqrt{\frac{2q}{ne^{-\varepsilon}p\Delta }\cdot \frac{n(q-e^{-\varepsilon}p)^2}{q}}
	\; = \; \sqrt{\frac{2(q-e^{-\varepsilon}p)^2}{e^{-\varepsilon}p\Delta}} 
	\; \leq \;\sqrt{\frac{2e^{-\varepsilon}p\Delta}{e^{-\varepsilon}p\Delta}}
	\; = \; \sqrt{2}\;.
\end{eqnarray*}

\subsection{Proof of Theorem~\ref{thm:Qlb}} 
\label{sec:proof_Qlb}
	
Note that $d_\varepsilon(P\|Q)  = \sum_{i=1}^S [p_i-e^\varepsilon q_i]^+$ is well defined even if $Q$ does not sum to exactly one. 
Define a set of such approximate probability vectors as 
\begin{eqnarray}
	\cM_S(\zeta) \;=\; \left\{Q: \Big|\sum_{i=1}^Sq_i-1\Big| \leq \zeta \right\}\;. 
	\label{eq:def_appxprob}
\end{eqnarray}
Later in this section, 
we use the method of two fuzzy hypotheses from \cite{Tsy08} to show that 
for some 
$\chi \gtrsim  \sum_{j=1}^{S}p_j \wedge \sqrt{{e^{\varepsilon} p_j}/{(n \ln n) }}$ and $\chi\leq e^\varepsilon$, 
the estimation error exceeds $\chi/4$  with a strictly positive probability, 
under a minimax setting over the approximate probability class 
$\cM_S(\zeta)$ with  $\zeta=\chi/(10 e^\varepsilon)$:
\begin{eqnarray}
	\inf_{\hd_\varepsilon(P\|Q_n)}\sup_{Q\in \cM_S(\chi/(10 e^\varepsilon))}P\left(\big|\hd_\varepsilon(P\|Q_n)- d_\varepsilon(P\|Q)\big|\geq \frac{\chi}{4} \right) \;\; \geq \;\; \frac13 \;,
	\label{eq:Qlb_key} 
\end{eqnarray}
for a sufficiently large $n$, 
where we extend the definition of $Q_n$ to be 
 Poisson sampling each alphabet with the appropriate rate. 
This gives a lower bound on the minimax risk for $\zeta=\chi/(10e^\varepsilon)$:
\begin{eqnarray}
	R(S,n,P,\zeta )& \triangleq & \inf_{\hd_\varepsilon(P\|Q_n)}\sup_{Q\in \cM_S(\zeta) }\E_Q\left[\Big(\hd_\varepsilon(P\|Q_n)-d_\varepsilon(P\|Q)\Big)^2 \right]
	\label{eq:def_R}\\
		&\;\;\geq\;\;&
		\frac{\chi^2}{16} \left\{ \inf_{\hd_\varepsilon(P\|Q_n)}\sup_{Q\in \cM_S (\zeta)}Q\left(\big|\hd_\varepsilon(P\|Q_n)-d_\varepsilon(P\|Q)\big|\geq \frac{\chi}{4} \right) \right\} \nonumber\\
		&\;\;\geq\;\;&  \frac{\chi^2}{48}\;,	\nonumber
\end{eqnarray}
As our goal is to prove a lower bound on the minimax error, which is $R(S,n,P,0)$, 
we use the following lemma. We provide a proof in Section~\ref{sec:proof_zeta}. 
\begin{lemma}
	\label{lem:zeta}
	For any $S,n \in \mathbb{N}_+$, $0<\zeta<1$, any distribution $P\in \cM_S$, and 
	any $\varepsilon>0$ that defines the quantity $d_\varepsilon(\cdot\|\cdot)$ used in the definition of $R(\cdot)$ in \eqref{eq:def_R},  
	we have
	\begin{eqnarray}
		R(S,n(1-\zeta)/4,P,0) \;\;\geq\;\; \frac 1 4R(S,n,P,\zeta)-\frac{1}{2}e^{-n(1-\zeta)/8}-\frac{1}{2}e^{2\varepsilon}\zeta^2\;.
	\end{eqnarray}
\end{lemma}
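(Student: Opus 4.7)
The plan is to reduce the minimax problem on the fuzzy class $\cM_S(\zeta)$ with sample size $n$ to the one on the exact class $\cM_S$ with sample size $n':=n(1-\zeta)/4$, by showing that any minimax estimator for the easier problem can be lifted to an estimator for the harder problem at a controlled extra cost. Concretely, I will establish
\begin{eqnarray*}
R(S,n,P,\zeta) \;\leq\; 4\,R(S,n',P,0) \;+\; 2\,e^{-n(1-\zeta)/8} \;+\; 2\,e^{2\varepsilon}\zeta^2,
\end{eqnarray*}
and then divide by $4$ to obtain the stated inequality.

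The heart of the argument is a Poisson-coupling construction that extracts, from a Poisson sample of mean $n$ drawn from $Q\in\cM_S(\zeta)$, a Poisson sample of mean $n'$ from the normalized $\tilde Q := Q/\|Q\|_1 \in \cM_S$. Write $s=\|Q\|_1\in[1-\zeta,1+\zeta]$ and represent the original data as $(N,\xi_1,\ldots,\xi_N)$, where $N\sim\Poi(ns)$ and, conditionally on $N$, the $\xi_j$'s are i.i.d.\ from $\tilde Q$. I then draw an independent $N'\sim\Poi(n')$ and, on the good event $A=\{N'\leq N\}$, output $(N',\xi_1,\ldots,\xi_{N'})$; this truncation is by construction a Poisson sample of mean $n'$ from $\tilde Q$, and it can be coupled with an independently generated $\tilde Q_{n'}^{\mathrm{true}}$ so that the two coincide on $A$. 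Two quantitative estimates then do the work. First, applying Lemma~\ref{lem:poisson_tail} to the lower tail of $N$ and the upper tail of $N'$ about the intermediate level $n(1-\zeta)/2$ (the choice $n'=n(1-\zeta)/4$ is made precisely so that both tails are exponentially small) gives $\prob(A^c)\leq \tfrac12\,e^{-n(1-\zeta)/8}$. Second, the Lipschitz property $|[a]^+-[b]^+|\leq |a-b|$ yields
\begin{eqnarray*}
\bigl|d_\varepsilon(P\|Q)-d_\varepsilon(P\|\tilde Q)\bigr|
\;\leq\; \sum_{i=1}^S e^\varepsilon\,q_i\,|1-1/s|
\;=\; e^\varepsilon\,|s-1|
\;\leq\; e^\varepsilon\,\zeta.
\end{eqnarray*}

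To finish, let $\hat d^{\star}$ attain $R(S,n',P,0)$, projected onto $[0,1]$ without loss of generality since $d_\varepsilon(P\|\tilde Q)\in[0,1]$, and define the induced estimator on $\cM_S(\zeta)$ by applying $\hat d^{\star}$ to the coupled surrogate on $A$ and outputting $0$ on $A^c$. Splitting the squared error according to $A$, the coupling lets me replace the surrogate by a truly Poisson-sampled $\tilde Q_{n'}$ on $A$ and bounds $\E[(\hat d-d_\varepsilon(P\|\tilde Q))^2]$ by $R(S,n',P,0)+\prob(A^c)$; a Young-type step $(x+y)^2\leq(1+\alpha)x^2+(1+1/\alpha)y^2$ with $\alpha=3$ then converts this into a bound on $\E[(\hat d-d_\varepsilon(P\|Q))^2]$ with coefficient $4$ on the first term and $4/3$ on the perturbation term, giving the displayed inequality after taking supremum over $Q\in\cM_S(\zeta)$. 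The main technical difficulty is to ensure that the surrogate is \emph{exactly} (not merely approximately) Poisson-distributed from $\tilde Q$ on the good event, despite the unknown normalizer $s$ forbidding any data-dependent thinning; the ``truncate to the first $N'$ ordered draws'' device resolves this by using the fact that, conditional on the total count $N$, the ordered draws from $Q$ are i.i.d.\ from $\tilde Q$ with no dependence on $s$, so that truncating a Poisson($ns$) count down to a Poisson($n'$) count preserves the distributional identity on $A$.
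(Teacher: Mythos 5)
Your proposal takes a genuinely different route from the paper. The paper's proof conditions the Poisson sample on the total count $m=\sum_i X_i$ (under which the data becomes multinomial with parameters $(m,\tilde Q)$), introduces the multinomial minimax risk $R_B(S,m,P)$ as an intermediary, applies the near-minimax multinomial estimator at the observed sample size, and then converts back to Poisson via the cited relation $R_B(S,n,P)\le 2R(S,n/2,P,0)$ from \cite{JKYT15}. You instead build a direct Poisson-to-Poisson coupling, sidestepping both the multinomial intermediary and the citation: represent the Poisson-$n\|Q\|_1$ data as an infinite i.i.d.\ $\tilde Q$ stream truncated at an independent $N\sim\Poi(n\|Q\|_1)$, draw an independent $N'\sim\Poi(n')$, and use the first $N'$ draws on $A=\{N'\le N\}$. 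This gives a more self-contained argument; the only technical caveat (which you correctly flag) is that because the observed data is the unordered histogram, the estimator must first draw a uniformly random ordering of the observed samples, which is harmless since randomized estimators are admitted in the infimum defining $R$.

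There is, however, a gap in the stated constants. You claim $\prob(A^c)\le\tfrac12 e^{-n(1-\zeta)/8}$ via the two-sided split at the intermediate level $n(1-\zeta)/2$, but that split does not deliver this. The lower tail of $N$ indeed gives $\prob(N<n(1-\zeta)/2)\le e^{-n(1-\zeta)/8}$ (via Lemma~\ref{lem:poisson_tail} with $\delta=1/2$ and rate $\ge n(1-\zeta)$), but the upper tail of $N'$ is $\prob(\Poi(n(1-\zeta)/4)\ge n(1-\zeta)/2)$, which with $\delta=1$ gives only $e^{-n(1-\zeta)/12}$ (and the sharp Poisson tail gives $(e/4)^{n(1-\zeta)/4}\approx e^{-0.097\,n(1-\zeta)}$, still weaker than $e^{-n(1-\zeta)/8}$). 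The sum therefore exceeds $\tfrac12 e^{-n(1-\zeta)/8}$, and after the Young step with $\alpha=3$ the error term would read roughly $8e^{-n(1-\zeta)/12}$ rather than $2e^{-n(1-\zeta)/8}$. The clean fix is to bound $\prob(N'>N)$ directly by a Chernoff bound on the Skellam difference: optimizing $\E[e^{t(N'-N)}]$ gives $\prob(N'>N)\le e^{-(\sqrt{a}-\sqrt{b})^2}$ with $a=n(1-\zeta)/4$, $b\ge n(1-\zeta)$, hence $\prob(A^c)\le e^{-n(1-\zeta)/4}$, which comfortably absorbs the factor of $4$ whenever $n(1-\zeta)\ge 8\ln 2$ (and the lemma is vacuous otherwise since its right-hand side is then nonpositive). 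Alternatively, taking $\alpha=1$ in the Young step gives a stronger intermediate inequality $R(S,n,P,\zeta)\le 2R(S,n',P,0)+2\prob(A^c)+2e^{2\varepsilon}\zeta^2$ that only requires $\prob(A^c)\le e^{-n(1-\zeta)/8}$, again implied by the Chernoff bound. With either of these fixes your argument is complete.
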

This implies that for our choice of $\zeta = \chi/(10e^\varepsilon)$, 
	\begin{eqnarray*}
		R(S,n(1-\zeta)/4, P,0) &\;\;\geq\;\;&
		\frac 1 4R(S,n,P,\zeta)-\frac{1}{2}e^{-n(1-\zeta)/8}-\frac{1}{2}e^{2\varepsilon}\zeta^2\;\\
		&\;\;\geq\;\;& \frac{\chi^2}{192} - \frac{1}{2}e^{-n(1-\chi/(10e^\varepsilon))/8}-\frac{\chi^2}{200}\\
		&\;\;\gtrsim \;\;& \chi^2\\
		&\;\;\gtrsim \;\;&\left( \sum_{j=1}^{S}p_j \wedge \sqrt{\frac{e^{\varepsilon} p_j}{n \ln n }}\right)^2\;,
	\end{eqnarray*}
	where 
$\zeta\leq 1/10$, which follows from $\chi\leq e^\varepsilon$, this proves the desired theorem.

Now, we are left to prove Eq.~\eqref{eq:Qlb_key}, by applying the following Lemma from \cite{Tsy08}. 
The idea is to construct two fuzzy hypotheses, such that 
they are sufficiently close to each other (as measured by total variation) 
to be challenging, while sufficiently separated in  $d_\varepsilon$.  
Translating the theorem into our context, we get the following corollary. 
\begin{lemma}[Corollary of {\cite[Theorem 2.15]{Tsy08}}]
	\label{lem:fuzzy}
	For any $s>0$, $\zeta>0$, $0\le \beta_0,\beta_1<1$, $\lambda \in \reals$, 
	if there exists two distributions $\sigma_0$ and $\sigma_1$ on $Q=[q_1,\ldots,q_S] \in \cM_S(\zeta)$ such that 
	\begin{eqnarray}
		\sigma_0(Q\,:\, d_\varepsilon(P\|Q) \leq \lambda - s ) &\ge& 1-\beta_0 \;, \label{eq:sep0} \\
		\sigma_1(Q\,:\, d_\varepsilon(P\|Q) \geq \lambda + s ) &\ge& 1-\beta_1 \;, \label{eq:sep1} 
	\end{eqnarray}
	and $D_{\rm TV}(F_1,F_0) \leq \eta <1 $, then 
	\begin{eqnarray}
		\inf_{\hd_\varepsilon(P\|Q_n) } \sup_{Q \in \cM_S(\zeta) } \prob_Q \Big( \,| \hd_\varepsilon(P\|Q_n) - d_\varepsilon(P\|Q) | \ge s \, \Big) \;\; \ge \;\; \frac{1-\eta-\beta_0-\beta_1}{2}\;\; \;,
	\end{eqnarray}
	where $F_i$ is the marginal distribution of $Q_n$ 
	given the prior $\sigma_i$ for $i\in\{0,1\}$. 
\end{lemma}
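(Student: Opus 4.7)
The statement is essentially a specialization of Tsybakov's method of two fuzzy hypotheses (Theorem~2.15 in \cite{Tsy08}) to our functional-estimation setting, so the plan is to verify that the hypotheses of that abstract result are met by our setup and then read off the conclusion.

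First I would cast the problem in Tsybakov's framework: take the unknown parameter to be $Q\in\cM_S(\zeta)$, the observation to be the vector $Q_n$ of Poisson counts, and the functional of interest to be $\psi(Q)\triangleq d_\varepsilon(P\|Q)$ with $P$ fixed. The conditions \eqref{eq:sep0} and \eqref{eq:sep1} exactly match Tsybakov's requirement that under prior $\sigma_0$ the functional is at most $\lambda-s$ with $\sigma_0$-probability at least $1-\beta_0$, and under prior $\sigma_1$ it is at least $\lambda+s$ with $\sigma_1$-probability at least $1-\beta_1$.

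Next I would perform the standard estimation-to-testing reduction. For an arbitrary estimator $\hd=\hd(Q_n)$, define the test $\phi=\ind\{\hd\ge\lambda\}$. On the good set under $\sigma_0$ (of $\sigma_0$-mass $\ge 1-\beta_0$), $\psi(Q)\le\lambda-s$, so $\phi=1$ forces $|\hd-\psi(Q)|\ge s$; symmetrically for $\sigma_1$. Upper-bounding the supremum over $Q$ by any Bayes average and discarding the two exceptional sets of prior mass $\beta_0$ and $\beta_1$, one gets
\[
\sup_{Q\in\cM_S(\zeta)}\prob_Q\!\big(|\hd-\psi(Q)|\ge s\big)\;\ge\;\tfrac{1}{2}\big(F_0(\phi=1)+F_1(\phi=0)\big)-\tfrac{1}{2}(\beta_0+\beta_1),
\]
where $F_i$ is the marginal of $Q_n$ under $\sigma_i$. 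The infimum of the bracketed sum of testing errors over all measurable $\phi$---and therefore over all estimators $\hd$---is bounded below by $1-D_{\rm TV}(F_0,F_1)\ge 1-\eta$ by the classical Le Cam two-point inequality. Combining these pieces yields the advertised lower bound $(1-\eta-\beta_0-\beta_1)/2$.

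The only delicate step is handling the fuzziness of the separation correctly: \eqref{eq:sep0}--\eqref{eq:sep1} hold only off exceptional sets of $\sigma_i$-mass $\beta_i$, and these must be discounted on both hypotheses when passing from the testing bound to the estimation bound, which is precisely what produces the $-\tfrac{1}{2}(\beta_0+\beta_1)$ correction above. Everything else is a direct reading of \cite[Theorem~2.15]{Tsy08} applied to the functional $\psi(Q)=d_\varepsilon(P\|Q)$, so I do not anticipate any substantive obstacle beyond stating this reduction carefully.
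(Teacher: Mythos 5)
Your argument is correct; the only slip is the word ``upper-bounding'' where you mean lower-bounding the supremum by a Bayes average (the displayed inequality itself is right). The paper gives no independent proof of this lemma --- it is stated as a direct corollary of \cite[Theorem~2.15]{Tsy08}, ``translating the theorem into our context'' --- and your derivation faithfully reconstructs the testing reduction underlying that cited theorem: pass from estimation error to the test $\phi=\ind\{\hd\ge\lambda\}$, lower-bound the worst-case risk by the Bayes risk under each $\sigma_i$ while discarding the exceptional mass $\beta_i$, and then invoke Le Cam's bound $\inf_\phi\bigl(F_0(\phi=1)+F_1(\phi=0)\bigr)\ge 1-D_{\rm TV}(F_0,F_1)$.
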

We construct two hypotheses, satisfying the assumptions with 
choices of $s = \chi/4 \gtrsim \sum_{j=1}^{S}p_j \wedge \sqrt{{e^{\varepsilon} p_j}/{(n \ln n) }}$ and  $\eta,\beta_0,\beta_1=o(1)$ such that 
Eq.~\eqref{eq:Qlb_key} follows. 
We will first introduce the construction,  
check the separation conditions in Eqs.~\eqref{eq:sep0} and \eqref{eq:sep1}, 
and check the total variation condition. 

\bigskip\noindent
{\bf Constructing two prior distributions.}
Fix the distribution $P\in \cM_S$, and assume $p_S = \min_{1\leq i \leq S}p_i$.
Let $\Bmu_0$, $\Bmu_1$ be two prior distributions on the parameter $Q$ where $Q_n$ will be drawn from, and set 
\begin{eqnarray}
	\Bmu_{0}&\;\;=\;\;&\mu_{0}^{(p_{1})} \otimes \mu_{0}^{(p_{2})} \otimes \ldots \otimes \mu_{0}^{(p_{S-1})} \otimes \delta_{1-\gamma}\;,\\
	\Bmu_{1}&\;\;=\;\;&\mu_{1}^{(p_{1})} \otimes \mu_{1}^{(p_{2})} \otimes \ldots \otimes \mu_{1}^{(p_{S-1})} \otimes \delta_{1-\gamma}\;,
\end{eqnarray}
where 
\begin{eqnarray}
	\gamma \;\;=\;\; \sum_{j : p_{j} \leq \frac{c e^\varepsilon\ln n}{n}} \frac{p_{j}}{e^\varepsilon D}+\sum_{j : p_{j}>\frac{c e^\varepsilon \ln n}{n}} e^{-\varepsilon} p_{j}\;,\label{eq:gammadef}
\end{eqnarray} and $c\in (0,1)$ is a constant, $D$ is the universal constant in Lemma~\ref{lem:tvbound}. 
Note that this does not produce a valid probability distribution, as it will not sum to one almost surely. 
However, this is sufficient as we can bound the difference in the minimax rate between exact and approximate probability distributions 
using Lemma~\ref{lem:zeta}.
This choice of $\gamma$ ensures that the sum $\sum_{j=1}^S q_j $ concentrates 
around one.
For a $p\in (0,1)$, we construct $\mu_i^{(p)}$, $i\in \{0,1\}$ depending on $p$ in two separate cases.
Our goal is to construct two prior distributions, which match in the first $L$ degree moments 
(such that the marginal total variation distance is sufficiently small), 
but at the same time sufficiently different in estimation of $d_\varepsilon$,  
such that they differ approximately as much as the resolution of 
the best polynomial function approximation. 


\bigskip\noindent{\bf Case 1: $p> ( c e^\varepsilon\ln n)/{n}$, for some constant $c\in (0,1)$.} 
	Define function $g(x) = e^{-\varepsilon}p+(\sqrt{({ce^{-\varepsilon} p\ln n})/{n}}) x$, where $x\in [-1,1]$. Let $\nu_i,i=0,1$ be two measures constructed in Lemma~\ref{lem:moment matching relu}. 
\begin{lemma}[ {\cite[Lemma~1]{cai2011testing}}]
\label{lem:moment matching relu}
	For any positive integer $L>0$, there exists two probability measure $\nu_0$ and $\nu_1$ on $[-1,1]$ such that
	\begin{enumerate}
		\item $\int t^{l} \nu_{1}(d t)\;\;=\;\;\int t^{l} \nu_{0}(d t)$, for all $l=0,1,2, \ldots, L$;
		\item $\int [-t]^+ \nu_{1}(d t) - \int [-t]^+ \nu_{0}(d t) \;\;=\;\;\Delta_L[[-t]^+;[-1, 1]]$,
	\end{enumerate}
	where $\Delta_L[[-t]^+;[-1, 1]]$ is the distance in the uniform norm on $[-1,1]$ from the function $[-t]^+$ to the space 
	of polynomial functions of degree $L$: ${\rm poly}_L$.
\end{lemma}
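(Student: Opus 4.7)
The plan is to invoke the classical duality between best polynomial approximation on $C[-1,1]$ and the space of signed Borel measures orthogonal to polynomials. By the Hahn--Banach theorem applied to the finite-dimensional subspace ${\rm poly}_L \subset C[-1,1]$, the best-approximation distance admits the dual characterization
\begin{equation*}
\Delta_L[[-t]^+;[-1,1]] \;=\; \sup\!\Bigl\{ \int_{-1}^1 [-t]^+\, d\mu(t) \;:\; \mu \in M([-1,1]),\ \|\mu\|_{\rm TV}\le 1,\ \int t^l\, d\mu = 0 \text{ for } l=0,\ldots,L \Bigr\},
\end{equation*}
where $M([-1,1])$ denotes the signed Borel measures with total-variation norm. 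Existence of an extremal $\mu^*$ attaining the supremum follows from weak-$*$ compactness of the unit ball of $M([-1,1])$ combined with weak-$*$ continuity of integration against the continuous function $[-t]^+$.

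Given such an extremal $\mu^*$, I would apply its Jordan decomposition $\mu^* = \mu^+ - \mu^-$ into mutually singular non-negative parts. The zeroth-moment constraint $\int 1\, d\mu^* = 0$ forces $\mu^+([-1,1]) = \mu^-([-1,1])$, and extremality saturates the TV bound, so both parts have equal total mass (namely $1/2$). Setting $\nu_1$ and $\nu_0$ to be the corresponding normalized probability measures (with the sign convention chosen so that $\nu_1$ captures the part where $[-t]^+$ integrates against a positive contribution), moment matching on degrees $0,\ldots,L$ is immediate from $\int t^l\, d\mu^* = 0$, which rewrites as $\int t^l\, d\nu_1 = \int t^l\, d\nu_0$. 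Condition~(2) then follows by rewriting $\int [-t]^+\,(d\nu_1 - d\nu_0)$ as a scalar multiple of $\int [-t]^+\, d\mu^* = \Delta_L[[-t]^+;[-1,1]]$, with the scalar absorbed into the normalization adopted in \cite{cai2011testing}.

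The main obstacle is establishing the duality rigorously and verifying that the extremal $\mu^*$ simultaneously saturates the TV bound and annihilates ${\rm poly}_L$; once these facts are in hand, the construction of $(\nu_0,\nu_1)$ is routine bookkeeping. An alternative, more explicit route that avoids abstract duality is to use the Chebyshev alternation theorem: if $p^*$ denotes the best polynomial approximation of $f(t) = [-t]^+$ of degree $\le L$, then $f - p^*$ equioscillates at $L+2$ points $\{x_i\}_{i=0}^{L+1} \subset [-1,1]$ with values $(-1)^i \Delta_L[f;[-1,1]]$. Placing atoms at these $x_i$ with alternating signs and with weights $\{\alpha_i\}$ uniquely determined (up to scale) by the $L+1$ conditions $\int t^l\,d\mu^* = 0$ for $l=0,\ldots,L$, then normalizing $\sum_i |\alpha_i| = 1$, produces an extremal $\mu^*$ directly; its Jordan decomposition and rescaling deliver $\nu_0$ and $\nu_1$.
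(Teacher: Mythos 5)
The paper does not reprove this lemma; it is invoked as an external citation to Cai and Low (2011), Lemma~1, so there is no in-paper proof to compare against. Your reconstruction via Hahn--Banach duality for best uniform approximation in $C[-1,1]$, followed by Jordan decomposition of the extremal annihilating measure, is the standard and correct argument: weak-$*$ compactness of the TV unit ball gives an extremizer $\mu^*$, the zeroth-moment constraint together with saturation of $\|\mu^*\|_{\rm TV}=1$ (which follows since $[-t]^+\notin{\rm poly}_L$, so $\Delta_L>0$, so scaling up any $\mu$ with $\|\mu\|_{\rm TV}<1$ would improve the objective) forces each Jordan piece to have mass $1/2$, and rescaling to probability measures $\nu_i = 2\mu^\pm$ gives moment matching for $l\le L$. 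One point to be candid about: your normalization produces $\int[-t]^+(d\nu_1-d\nu_0)=2\,\Delta_L[[-t]^+;[-1,1]]$, whereas the lemma as transcribed here states $\Delta_L$ with no factor of $2$; you correctly flag this but do not resolve it. This is a benign convention mismatch with the source: Cai--Low state the result with the factor $2$, and one can always shrink the separation by mixing each $\nu_i$ with $(\nu_0+\nu_1)/2$ while preserving moment matching, so the lemma holds as stated. The Chebyshev-alternation route you sketch is also sound and, in fact, produces the same extremal measure explicitly (atoms at the $L+2$ equioscillation points with alternating signs, weights pinned by the $L+1$ vanishing-moment conditions); it is the primal view of the same duality, and either route suffices. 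No gap.
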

	We define two new measures $\mu_i^{(p)},i=0,1$ on $[e^{-\varepsilon}p-\sqrt\frac{ce^{-\varepsilon} p\ln n}{n},e^{-\varepsilon}p+\sqrt\frac{ce^{-\varepsilon} p\ln n}{n}]$ by $\mu_i^{(p)}(A) = \nu_i(g^{-1}(A))$. 
	Note that we need the lower bound on $p$ to ensure that this is non-negative. 
	Let $L = d_2\ln n, d_2>1$. It follows that
	\begin{enumerate}
	\item
		\begin{eqnarray}
			\int t \mu_{1}^{(p)}(d t) \;=\; \int t \mu_{0}^{(p)}(d t) = e^{-\varepsilon}p\; ;\label{eq:case2meanbound}
		\end{eqnarray}
	\item
		\begin{eqnarray}
			\int t^l \mu_{1}^{(p)}(d t) \;=\; \int t^l \mu_{0}^{(p)}(d t), \quad \forall l = 2\ldots,L+1;
		\end{eqnarray}
	\item
		\begin{eqnarray}
			\int [p-e^\varepsilon t]^+ \mu_{1}^{(p)}(d t)-\int [p-e^\varepsilon t]^+ \mu_{0}^{(p)}(d t) &\;=\;& \sqrt{\frac{ce^{\varepsilon} p\ln n}{n}} \, \Delta_L[[-t]^+;[-1, 1]]\label{eq:case2sep}\nonumber\\
			 &\;\gtrsim \;& p \wedge \sqrt{\frac{ce^{\varepsilon} p}{d_2^2n \ln n }}\;.\label{eq:case2estbound}
		\end{eqnarray}
	\end{enumerate}
	The last inequality follows from the following lemma, with a choice of $L=d_2 \ln n$ for some constant $d_2$.  
\begin{lemma}[{\cite{bernstein1914meilleure} }]
	For $L>1$,
	\begin{eqnarray}
		\Delta_L[[-t]^+;[-1, 1]] \;\;=\;\; \Delta_L[|t|;[-1, 1]]\;\;=\;\;\beta_*L^{-1}(1+o(1)) \;\;\asymp \;\; \frac{1}{L}\;,
	\end{eqnarray}
\end{lemma}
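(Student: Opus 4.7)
The lemma makes two claims: (i) $\Delta_L[[-t]^+;[-1,1]]=\Delta_L[|t|;[-1,1]]$, and (ii) both quantities equal $\beta_\star L^{-1}(1+o(1))\asymp 1/L$, where $\beta_\star$ is Bernstein's constant. Only the order $\asymp 1/L$ is actually used downstream in Eq.~\eqref{eq:case2sep}, so my plan targets the two-sided order estimate rather than the sharp asymptotic; identifying the precise constant $\beta_\star$ would then be an independent (and much harder) classical question.

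For claim (i), I would use the identity $[-t]^+=(|t|-t)/2$. Since $-t/2\in\mathrm{poly}_L$ for $L\geq 1$, translating the target by this polynomial leaves the best-approximation distance unchanged, so
\begin{equation*}
\Delta_L[[-t]^+;[-1,1]]\;=\;\tfrac12\,\Delta_L[|t|;[-1,1]]\;.
\end{equation*}
The multiplicative factor $\tfrac12$ is immaterial for the $\asymp$ conclusion, so it suffices to establish the $1/L$ rate for $|t|$.

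For the upper bound $\Delta_L[|t|;[-1,1]]\lesssim 1/L$, my first choice is Jackson's theorem applied to the Lipschitz function $|t|$; alternatively one may truncate the explicit Chebyshev expansion of $|t|$ (whose even coefficients decay as $1/k^2$), leaving an $O(1/L)$ tail. This upper bound is actually already invoked earlier in the paper in the proof of Lemma~\ref{lem:Qopt_key} via \cite[Ch.~9, Thm~3.3]{devore1993constructive}, so I would simply reuse that fact. For the matching lower bound $\Delta_L[|t|;[-1,1]]\gtrsim 1/L$, my plan is a duality argument: exhibit a finite signed measure $\mu$ on $[-1,1]$ such that (a) $\int P\,d\mu=0$ for every $P\in\mathrm{poly}_L$, (b) $\|\mu\|_{\mathrm{TV}}=O(1)$ uniformly in $L$, and (c) $\int|t|\,d\mu\gtrsim 1/L$. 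Then for any $P\in\mathrm{poly}_L$,
\begin{equation*}
\bigl\|\,|t|-P\bigr\|_\infty\;\geq\;\frac{\bigl|\int(|t|-P)\,d\mu\bigr|}{\|\mu\|_{\mathrm{TV}}}\;=\;\frac{|\int|t|\,d\mu|}{\|\mu\|_{\mathrm{TV}}}\;\gtrsim\;\frac{1}{L}\;.
\end{equation*}
A natural candidate is the alternating-sign atomic measure supported on the $L+2$ Chebyshev nodes, for which (a) and (b) hold automatically. The main obstacle is verifying (c); my strategy is the substitution $s=t^2$, which converts the question to a lower bound for the best polynomial approximation of $\sqrt{s}$ on $[0,1]$ by polynomials of degree $\leq L/2$, and there the derivative blow-up of $\sqrt{s}$ at $s=0$ makes the $1/L$ barrier transparent via a standard Markov-type / dual-functional estimate.
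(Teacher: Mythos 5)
The paper does not prove this lemma; it simply cites Bernstein's 1914 theorem, so there is no internal proof to compare against. Your observation about the factor of $\tfrac12$ is correct and in fact exposes a small imprecision in the lemma as stated: since $[-t]^+=\tfrac12(|t|-t)$ and $-t/2\in\mathrm{poly}_L$ for $L\geq1$, one has exactly $\Delta_L[[-t]^+;[-1,1]]=\tfrac12\,\Delta_L[|t|;[-1,1]]$, so the lemma's claimed equality between the two quantities is off by a factor of two. You are also right that this is immaterial downstream, since Eq.~\eqref{eq:case2sep} uses only the order $\asymp 1/L$. Your upper-bound step (Jackson's theorem for the Lipschitz function $|t|$, or equivalently the $O(1/k^2)$ decay of the Chebyshev coefficients of $|t|$) is standard and correct, and the paper already invokes the same fact via \cite[Ch.~9, Thm.~3.3]{devore1993constructive}.

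The lower-bound sketch, however, has a genuine gap. You claim that the alternating-sign atomic measure on the $L+2$ Chebyshev extrema satisfies the annihilation condition (a) and the bounded-total-variation condition (b) ``automatically.'' Neither is automatic: an arbitrary alternating combination of point masses does not annihilate $\mathrm{poly}_L$ --- the annihilating functional on $L+2$ nodes lives in a one-dimensional space and its weights must be solved for (via a divided-difference construction or the sign pattern of the derivative of $T_{L+1}$ at its extrema), and one must then separately verify that $\|\mu\|_{\mathrm{TV}}=O(1)$ uniformly in $L$ and that $\int|t|\,d\mu\gtrsim 1/L$ by explicit computation. The reduction via $s=t^2$ to approximating $\sqrt s$ on $[0,1]$ by degree-$\lfloor L/2\rfloor$ polynomials is correct (the best approximant of $|t|$ is even), but the final appeal to ``a standard Markov-type / dual-functional estimate'' glosses over what is in fact the substantive content of Bernstein's theorem. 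Since this is a classical, well-referenced result, the paper's choice to cite it rather than re-derive it is the right one; your outline captures the correct reductions, but a self-contained proof of the lower bound would require meaningfully more work than the sketch indicates.
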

where $\beta_*  \approx 0.2802$ is the Bernstein constant.

\bigskip\noindent{\bf Case 2: $0<p\leq (c e^\varepsilon\ln n)/{n}$, for some constant $c\in (0,1)$.} 
 When $p$ is too close to zero, 
directly applying  the above strategy only gives a lower bound on the difference in $d_\varepsilon$ under the two hypotheses that scales only as $p/\ln n$, 
and not as $\sqrt{p/(n\ln n)}$ as desired. 
Instead, we construct an approximation of $f(x;a)=([a-e^\varepsilon x  ]^+-a)/(e^\varepsilon x) $.

Our strategy is to first construct two prior distributions $\tnu_i^{\eta,a}$'s on 
$\{0\} \cup [\eta,1]$ which are difference in estimating 
$f(x;a) ={([a-e^\varepsilon x  ]^+-a)}/(e^\varepsilon x) $ (instead of $[a-e^\varepsilon x ]^+$). 
The non-smoothness of $f(x;a)$ near zero allows one to 
control the hardness of this estimation by choosing $\eta$, 
while ensuring the non-negativity of the resulting random variable $p$ drawn from $\mu_i$'s 
and also the expectation is close to $q$. 
Concretely, we let 
$\mu_i^{(p)}$ to be a measure on $\{0\}\cup [p/(De^\varepsilon),M]$, 
where $g(x) = Mx$ and $\mu_i^{(q)} = \tnu_{0}^{\eta, a}(g^{-1}(A))$. 
We first construct two new probability measures $\tnu_i^{\eta, a}$, $i=0,1$ from the two probability measures $\nu_i^{\eta, a}$, $i=0,1$  constructed in Lemma~\ref{lem:moment matching measures}. 

	\begin{lemma}
\label{lem:moment matching measures}
	Let $f(x;a) \triangleq {([a-e^\varepsilon x]^+-a)}/({e^\varepsilon x})$, $a\in [0,1]$. For any $0<\eta<1$ and positive integer $L>0$, there exists two probability measure $\nu_0^{\eta, a}$ and $\nu_1^{\eta, a}$ on $[\eta,1]$ such that
	\begin{enumerate}
		\item $\int t^{l} \nu_{1}^{\eta, a}(d t)\;\;=\;\;\int t^{l} \nu_{0}^{\eta, a}(d t)$, for all $l=0,1,2, \ldots, L$;
		\item $\int f(t;a) \nu_{1}^{\eta, a}(d t) - \int f(t;a) \nu_{0}^{\eta, a}(d t) \;\;=\;\; \Delta_L[f(x;a);[\eta, 1]]$,
	\end{enumerate}
	where $\Delta_L[f(x;a);[\eta, 1]]$ is the distance in the uniform norm on $[\eta,1]$ from the function $f(x;a)$ to the space 
	of polynomial functions of degree $L$: ${\rm poly}_L$.
\end{lemma}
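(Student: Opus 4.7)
My plan is to obtain $\nu_0^{\eta,a}$ and $\nu_1^{\eta,a}$ as the normalized positive and negative parts of an extremal signed measure dual to the best polynomial approximation of $f(\cdot;a)$ on $[\eta,1]$. This is exactly the duality strategy used in the proof of Lemma~\ref{lem:moment matching relu} (see Cai--Low) and similar moment-matching constructions in \cite{WY16,JYT18}; the only difference here is that the function being approximated is $f(x;a)=([a-e^\varepsilon x]^+-a)/(e^\varepsilon x)$ on $[\eta,1]$ rather than $[-t]^+$ on $[-1,1]$, but $f(\cdot;a)\in C[\eta,1]$ so the general duality applies verbatim.

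\textbf{Step 1: LP duality.} I will cast the best-approximation problem $\Delta_L[f(x;a);[\eta,1]]$ as a linear program and write down its dual. By a standard Hahn--Banach / Chebyshev alternation argument,
\begin{eqnarray*}
\Delta_L[f(x;a);[\eta,1]] \;=\; \max\Big\{\, \textstyle\int f(t;a)\,\mu(dt) \,:\, \mu \text{ signed Borel measure on } [\eta,1], \; \|\mu\|_{\mathrm{TV}}\le 1,\; \int t^l\mu(dt)=0 \text{ for } l=0,\ldots,L \,\Big\}.
\end{eqnarray*}
Let $\mu^*$ be the maximizer. The constraint for $l=0$ gives $\mu^*([\eta,1])=0$, so the Jordan decomposition $\mu^*=\mu^*_+-\mu^*_-$ satisfies $\mu^*_+([\eta,1])=\mu^*_-([\eta,1])=:\alpha$, and $\|\mu^*\|_{\mathrm{TV}}=2\alpha\le 1$. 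I will argue that at optimum $\alpha=1/2$ (otherwise one can scale up $\mu^*$ and improve the objective).

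\textbf{Step 2: Construction.} I set $\nu_1^{\eta,a}:=\mu^*_+/\alpha=2\mu^*_+$ and $\nu_0^{\eta,a}:=\mu^*_-/\alpha=2\mu^*_-$. These are probability measures on $[\eta,1]$ by construction. The moment-matching constraints $\int t^l \mu^*(dt)=0$ for $l=0,\ldots,L$ translate immediately into $\int t^l\nu_1^{\eta,a}(dt)=\int t^l\nu_0^{\eta,a}(dt)$ for all $l=0,1,\ldots,L$, which is condition (1). For condition (2),
\begin{eqnarray*}
\int f(t;a)\,\nu_1^{\eta,a}(dt)-\int f(t;a)\,\nu_0^{\eta,a}(dt) \;=\; \frac{1}{\alpha}\int f(t;a)\,\mu^*(dt) \;=\; 2\,\Delta_L[f(x;a);[\eta,1]],
\end{eqnarray*}
up to the normalization factor. (The statement in the lemma writes this without the factor of $2$; I will absorb the constant into the definition of $\mu^*$ by working with $\|\mu\|_{\mathrm{TV}}\le 2$ instead, or equivalently rescaling, so that the separation is exactly $\Delta_L$ as stated.)

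\textbf{Main obstacle and caveats.} The only delicate point is verifying that the dual value is actually attained and that the optimizer has positive and negative parts of equal mass: this follows from $f(\cdot;a)\in C[\eta,1]$ (so the supremum is attained on the weak-$*$ compact unit ball of measures) and from the $l=0$ moment constraint. Everything else is routine bookkeeping. I expect no difficulty from the $a$-dependence, since the duality argument holds pointwise in $a$ and does not require any regularity of $f$ beyond continuity on $[\eta,1]$, which is guaranteed because $\eta>0$ keeps us away from the singularity of $f(\cdot;a)$ at the origin.
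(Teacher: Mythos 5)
Your approach is correct and essentially unpacks the moment-matching black box that the paper invokes: the paper's proof is a one-line reduction to \cite[Lemma~31]{JYT18} applied to the function $(|e^\varepsilon x - a|-a)/(e^\varepsilon x)$, followed by a bookkeeping step translating between $|\cdot|$ and $[\cdot]^+$ (which works because the two differ by the affine polynomial $a - e^\varepsilon x$, and such differences are annihilated by the moment-matching constraints). The cited lemma is itself proved by exactly the Hahn--Banach / LP duality argument you wrote out, so you are re-deriving from scratch what the paper treats as external. Your Step~1 is correct: the dual of the minimax problem $\min_{P\in{\rm poly}_L}\|f-P\|_\infty$ is the supremum of $\int f\,d\mu$ over signed measures $\mu$ with $\|\mu\|_{\rm TV}\le 1$ that annihilate ${\rm poly}_L$, the supremum is attained by weak-$*$ compactness since $f(\cdot;a)\in C[\eta,1]$, and the $l=0$ constraint forces the positive and negative parts to have equal mass, so at optimum each has mass $1/2$ and $\nu_i:=2\mu^*_{\pm}$ are probability measures.

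One caveat worth flagging. The LP-duality construction naturally yields $\int f\,d\nu_1 - \int f\,d\nu_0 = 2\Delta_L[f;[\eta,1]]$, and the rescaling remedy you propose (replacing $\|\mu\|_{\rm TV}\le 1$ with $\|\mu\|_{\rm TV}\le 2$) does not remove the factor of $2$: scaling the constraint scales the dual optimum and the resulting $\mu^*_{\pm}$ proportionally, so the normalized probability measures and their separation are unchanged. If one truly wants the separation to be exactly $\Delta_L$ as written, the correct fix is to mix: set $\tilde\nu_1 = \tfrac34\nu_1 + \tfrac14\nu_0$ and $\tilde\nu_0 = \tfrac14\nu_1 + \tfrac34\nu_0$, which preserves the matched moments and halves the separation. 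In practice this does not matter --- the lemma is used only to produce a separation $\gtrsim \Delta_L$ in the lower bound --- and the same factor-of-$2$ ambiguity is already present in the paper's restatement of \cite[Lemma~1]{cai2011testing} as Lemma~\ref{lem:moment matching relu}. But your parenthetical explanation of how to absorb the constant is incorrect as written and should be replaced by the mixing argument (or simply by keeping the factor of $2$, which is harmless downstream).
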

	We construct $\tnu_i^{\eta,a}$ by scaling down $\nu_i^{\eta,a}$ and putting the remaining probability mass on zero. 
	This ensures that the restriction on $[\eta,1]$ of $\tnu_i^{\eta, a}$ is absolutely continuous with $\nu_i^{\eta, a}$, 
	and we construct the Radon-Nikodym derivative to be 
	\begin{eqnarray}
		\frac{d \tnu_i^{\eta, a}}{d nu_i^{\eta, a}}\;\;=\;\;\frac{\eta}{t} \leq 1, \quad t \in[\eta, 1]\;,
	\end{eqnarray}
	and $\tnu_i^{\eta, a}(\{0\}) = 1- \tnu_i^{\eta, a}([\eta,1])\geq 0$. 
	This choice of scaling ensures that 
	
	It follows that $\tnu_i^{\eta, a}, i=0,1$ are probability measures on $[0,1]$ that satisfy the following properties
	\begin{enumerate}
	\item $\int t \tnu_{1}^{\eta, a}(d t)\;\;=\;\;\int t \tnu_{0}^{\eta, a}(d t) \;\;=\;\; \eta$;
	
	\item $\int t^{l} \tnu_{1}^{\eta, a}(d t)\;\;=\;\;\int t^{l} \tnu_{0}^{\eta, a}(d t)$, for all $l=2, \ldots, L+1$;
	
	\item $\int [a-e^\varepsilon t ]^+ \tnu_{1}^{\eta, a}(d t) - \int [a-e^\varepsilon t ]^+ \tnu_{0}^{\eta, a}(d t) \;\;=\;\; \eta e^\varepsilon \Delta_L[f(x;a);[\eta, 1]]$.	
	\end{enumerate}

	Define 
	\begin{eqnarray}
		L=d_2 \ln n, \quad \eta=\frac{a}{D}, \quad a=\frac{p}{e^\varepsilon M}, \quad M=\frac{2 c \ln n}{n}\;,
	\end{eqnarray}
	where $D$ is a universal constant in Lemma~\ref{lem:tvbound} and $d_2>1$ is a constant.

\begin{lemma}
\label{lem:tvbound}
	Let $f(x;a) := \frac{[a-e^\varepsilon x]^+-a}{e^\varepsilon x}$, $a\in (0,\frac{1}{2}]$, $x\in [0,1]$, there exists universal constant $D$ such that
	\begin{eqnarray}
		\Delta_{L}\left[f(x ; a) ;[\frac{a}{D}, 1]\right]\;\;\gtrsim\;\;  \left(1\wedge \frac{1}{L\sqrt{a}}\right)\;.
	\end{eqnarray}
\end{lemma}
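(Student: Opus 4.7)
The plan is to prove the lemma by a weighted Bernstein-type argument in the high-degree regime $L \gtrsim 1/\sqrt{a}$, and by a Markov-inequality argument in the complementary low-degree regime.

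First I fix $D$ to be an absolute constant with $D > 2 e^\varepsilon$, so that the only nonsmooth point of $f(x;a) = -\min\{1, a/(e^\varepsilon x)\}$, namely $x^* = a/e^\varepsilon$, lies strictly inside $[a/D, 1]$ with $x^* - a/D \asymp a$ and $1 - x^* \asymp 1$. Direct computation gives one-sided derivatives $f'(x^{*-}) = 0$ and $f'(x^{*+}) = e^\varepsilon/a$, so $f$ has a derivative jump $J := e^\varepsilon/a$ at $x^*$ and is real-analytic on each side. I then apply the affine substitution $t = (2x - (1 + a/D))/(1 - a/D)$, which maps $[a/D, 1]$ bijectively onto $[-1, 1]$ and preserves both polynomial degree and uniform norm. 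The kink is sent to $t_0 \in (-1, 1)$ with $1 - t_0^2 = 4(x^* - a/D)(1 - x^*)/(1 - a/D)^2 \asymp a$, so $\sqrt{1 - t_0^2} \asymp \sqrt{a}$. In the $t$-variable, $f$ becomes $F(t)$, analytic on each of $[-1, t_0)$ and $(t_0, 1]$, continuous at $t_0$, with a single first-derivative jump $J' := J(1 - a/D)/2 \asymp 1/a$ at $t_0$.

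For the regime $L \gtrsim 1/\sqrt{a}$, I decompose $F = J' (t - t_0)^+ + F_s$ with $F_s \in C^1([-1,1])$. The kink term is the obstruction to polynomial approximation: by the classical weighted Bernstein-type estimate
\[
\Delta_L[(t - t_0)^+;[-1,1]] \;\gtrsim\; \frac{\sqrt{1 - t_0^2}}{L}
\]
for all $t_0 \in (-1, 1)$, it contributes $\gtrsim (1/a)(\sqrt{a}/L) = 1/(L\sqrt{a})$. The smoother remainder $F_s$ has $r$-th derivative jumps of size $O((1/a)^r)$ at $t_0$ but no lower-order singularities, so the same weighted Bernstein-type scaling iterated at order $r$ bounds its contribution to $\Delta_L$ by $O(a^{-r/2}/L^r)$; the leading $r = 2$ term $O(1/(aL^2))$ is strictly smaller than $1/(L\sqrt{a})$ whenever $L \gtrsim 1/\sqrt{a}$, and can be absorbed into the constant. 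Combining via $\Delta_L[F] \geq J' \Delta_L[(t - t_0)^+] - \Delta_L[F_s]$ then yields $\Delta_L[f; [a/D, 1]] \gtrsim 1/(L\sqrt{a})$ in this regime. In the complementary regime $L \lesssim 1/\sqrt{a}$, I use Markov's inequality: on the subinterval $[x^*, x^* + a] \subset [a/D, 1]$ the function $f$ varies by $e^\varepsilon/(1 + e^\varepsilon) \geq 1/2$, whereas any polynomial $p$ of degree $L$ on $[a/D, 1]$ satisfies $\|p'\|_\infty \lesssim L^2 \|p\|_\infty$, so $|p(x^*+a) - p(x^*)| \lesssim L^2 a \|p\|_\infty$. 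If $\|f - p\|_\infty \leq \epsilon$ then $\|p\|_\infty \leq 1 + \epsilon$ and the triangle inequality gives $1/2 \leq L^2 a (1 + \epsilon) + 2\epsilon$, which for $L^2 a$ bounded by a small absolute constant forces $\epsilon \gtrsim 1$.

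The main obstacle is establishing the weighted Bernstein lower bound $\Delta_L[(t - t_0)^+;[-1,1]] \gtrsim \sqrt{1 - t_0^2}/L$ with the explicit dependence on the kink location $t_0 \in (-1, 1)$, since in our application $t_0 \approx -1 + \Theta(a)$ lies close to the boundary and the naive uniform estimate $1/L$ would be too weak by a factor of $1/\sqrt{a}$. The cleanest route uses the dual characterization of best polynomial approximation: I construct a signed atomic measure supported on $L + 2$ scaled Chebyshev-like nodes clustered near $t_0$ that annihilates $\mathrm{poly}_L$, has total variation of order one, and integrates $(t - t_0)^+$ to a quantity of order $\sqrt{1 - t_0^2}/L$. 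The $\sqrt{1 - t_0^2}$ factor arises because the Chebyshev node density on $[-1, 1]$ is $1/(\pi\sqrt{1 - t^2})$, so the local spacing near $t_0$ is $\asymp \sqrt{1 - t_0^2}/L$; this specializes to Bernstein's classical $1/L$ bound at $t_0 = 0$ and to the correct boundary scaling as $t_0 \to \pm 1$.
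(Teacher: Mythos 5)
Your proposal takes a genuinely different route from the paper. The paper reduces $f(x;a)$ to the function $\frac{|x-a|-a}{x}$ (via the identity $[a-x]^+ = \frac{(a-x)+|a-x|}{2}$), after which the entire lower bound is delegated to an external result, \cite[Lemma~30]{JYT18}, which establishes $\Delta_L[\frac{|x-a|-a}{x};[a/D,1]] \gtrsim 1\wedge \frac{1}{L\sqrt{a}}$; the final step is an (somewhat sloppily stated) interval-monotonicity argument to carry the constant $D$ across the substitution $a \mapsto e^{-\varepsilon}a$. Your argument, by contrast, works directly: affine rescaling to $[-1,1]$, peeling off the dominant kink $J'(t-t_0)^+$, a weighted Bernstein lower bound for $(t-t_0)^+$ with the $\sqrt{1-t_0^2}$ dependence, and a clean Markov-inequality argument in the low-degree regime. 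The Markov argument is correct as written, and identifying $f(x;a) = -\min\{1, a/(e^\varepsilon x)\}$ with kink at $x^* = a/e^\varepsilon$, derivative jump $J=e^\varepsilon/a$, and $\sqrt{1-t_0^2}\asymp\sqrt{a}$ after rescaling are all right.

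Two places in the high-degree regime are underspecified, however, and one of them conceals a subtlety you have not flagged. First, the weighted Bernstein lower bound $\Delta_L[(t-t_0)^+;[-1,1]]\gtrsim \sqrt{1-t_0^2}/L$ is the crux of your argument but only its existence is asserted; the dual-measure sketch is plausible but would have to be carried out (the Ditzian--Totik equivalence only gives this as an upper bound, not a lower bound, so a separate construction is genuinely required). Second, and more substantively, your remainder $F_s = F - J'(t-t_0)^+$ is \emph{not} uniformly bounded: at $t=1$ one has $J'(1-t_0)\asymp 1/a$ while $F(1)\asymp a$, so $\|F_s\|_\infty\asymp 1/a$. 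The claim $\Delta_L[F_s]=O(1/(aL^2))$ is thus nontrivial and cannot follow from any "iterated Bernstein scaling at order $r$" applied to individual derivative jumps, since $F_s$ has jumps in all higher derivatives (it is piecewise analytic, not piecewise polynomial) and is not small. The correct way to close this is a Ditzian--Totik direct estimate: one computes $\varphi^2(t)|F_s''(t)|\asymp a/(a+(t-t_0))^2\lesssim 1/a$ for $t\geq t_0$, so $\Delta_L[F_s]\lesssim \omega_\varphi^2(F_s,1/L)\lesssim L^{-2}\|\varphi^2 F_s''\|_\infty\lesssim 1/(aL^2)$, which indeed dominates the kink term only when $L\gtrsim 1/\sqrt{a}$. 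You arrive at the right bound, but the reasoning given would not survive a careful referee; the modulus argument is the right tool. Finally, you should say a word about stitching the two regimes across a constant-size gap in $L\sqrt{a}$, though this is just constant-chasing.
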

	Let $g(x) = Mx$ and let $\mu_i^{(p)}$ be the measure on $[0,M]$ defined by $\mu_i^{(p)} = \tnu_{0}^{\eta, a}(g^{-1}(A))$. Then we have $\mu_i^{(p)} (A) = M \tnu_{0}^{\eta, a}(A)$. 
	It then follows that
	\begin{enumerate}
	\item
		\begin{eqnarray}
			\int t \mu_{1}^{(p)}(d t) \;=\; \int t \mu_{0}^{(p)}(d t) = \frac{p}{e^\varepsilon D}\; ;\label{eq:case1meanbound}
		\end{eqnarray}
	\item
		\begin{eqnarray}
			\int t^l \mu_{1}^{(p)}(d t) \;=\; \int t^l \mu_{0}^{(p)}(d t), \quad \forall l = 2\ldots,L+1;
		\end{eqnarray}
	\item
		\begin{eqnarray}
			\int [p-e^\varepsilon t]^+ \mu_{1}^{(p)}(d t)-\int [p-e^\varepsilon t]^+ \mu_{0}^{(p)}(d t) &\;=\;& 
			\eta \, e^\varepsilon \, M \, \Delta _L[f(x;a);[\frac a D, 1]]\label{eq:case1sep}\\
			 &\;\gtrsim \;& p \wedge \sqrt{\frac{ce^\varepsilon p}{d_2^2n \ln n }}\;.\label{eq:case1estbound}
		\end{eqnarray}
	\end{enumerate}

\bigskip\noindent{\bf Separation conditions.}
	In both cases, since we set $q_S = 1-\gamma$, which is defined in Eq.~\eqref{eq:gammadef}, it follows from Eq.~\eqref{eq:case1meanbound} and \eqref{eq:case2meanbound} that
	\begin{eqnarray}
		\E_{\Bmu_0}\left[\sum_{j=1}^nq_j\right] \;\;=\;\; \E_{\Bmu_1}\left[\sum_{j=1}^nq_j\right] \;\;=\;\; 1\;.
	\end{eqnarray}
	Let 
	\begin{eqnarray}
		\chi &\;\;=\;\;& \E_{\Bmu_1}\left[d_\varepsilon(P\|Q)\right]-\E_{\Bmu_0}\left[d_\varepsilon(P\|Q)\right]\;,\\
		\zeta &\;\;=\;\;& \frac{\chi}{10 e^\varepsilon }\;.
	\end{eqnarray}
	
	We know from Eq.~\eqref{eq:case1estbound} and \eqref{eq:case2estbound} that, by construction, the estimates are separated in expectation:
	\begin{eqnarray*}
		\chi &\;\;\gtrsim\;\; & \sum_{j=1}^{S-1}p_j \wedge \sqrt{\frac{ce^{\varepsilon} p_j}{d_2^2n \ln n }}\\
		&\;\; \geq \;\;&(1-\frac 1 S) \sum_{j=1}^{S}p_j \wedge \sqrt{\frac{ce^{\varepsilon} p_j}{d_2^2n \ln n }}\\
		&\;\;\gtrsim\;\; & \sum_{j=1}^{S}p_j \wedge \sqrt{\frac{ce^{\varepsilon} p_j}{d_2^2n \ln n }}\;,
	\end{eqnarray*}
	where the second inequality follows from the assumption that $p_S = \min_{1\leq j \leq S}p_j$.
	To show concentration of $\E_{\Bmu_i}[d_\varepsilon(P\|Q)]$ around its mean, for $i=0,1$, we introduce the events
	\begin{eqnarray}
		E_i &\;\;=\;\;& \cM_S(\zeta)\cap \left\{Q:\Big|d_\varepsilon(P\|Q)-\E_{\Bmu_i} \big[d_\varepsilon(P\|Q)\big]\Big|\leq \frac\chi 4\right\}\;.
	\end{eqnarray}
	Introduce
	\begin{eqnarray*}
		F(P) &\;\;=\;\;& \sum_{j:p_j\leq \frac{ce^\varepsilon \ln n}{n}}\left(\frac{2 c\ln n}{n}\right)^2+\sum_{j:p_j>\frac{c e^\varepsilon \ln n}{n}}\left(\frac{4c e^{-\varepsilon}p_j\ln n}{n}\right)\\
		&\;\;\leq\;\;& \frac{4c^2S\ln^2n}{n^2}+\frac{4ce^{-\varepsilon}\ln n}{n}\;.
	\end{eqnarray*}
	It follows from the union bound and Hoeffding bound that 
	\begin{eqnarray*}
		\Bmu_i(E_i^c) &\;\;\leq\;\;& \Bmu_i\left(\Big|\sum_{j=1}^S q_j-1\Big|>\zeta\right)+\Bmu_i\left( \Big|d_\varepsilon(P\|Q)-\E_{\Bmu_i} \big[d_\varepsilon(P\|Q)\big]\Big|> \frac\chi 4 \right)\;\\
		&\;\;\leq\;\;& 2\exp\left(-\frac{2\zeta^2}{F(P)}\right)+2\exp\left(-\frac{\chi^2}{8F(P)}\right)\;.
	\end{eqnarray*}
	
	Then we choose parameter $c$ such that $\Bmu_i(E_i^c)$ can be made arbitrarily small. 
	Since we assumed $c\in (0,1)$ and $d_2>1$, and 
	from the assumption that $\sum_{j=1}^{S}p_j \wedge \sqrt{\frac{e^{\varepsilon} p_j}{n \ln n }}\geq C'\left(\sqrt{\frac{e^{\varepsilon}\ln n}{n}}+\frac{e^\varepsilon\sqrt{S}\ln n}{n}\right)$, 
	we have
	\begin{eqnarray*}
		\zeta &\;\;\asymp\;\;& \frac{\chi}{e^\varepsilon}\\
		&\;\;\gtrsim \;\;&  \frac{1}{e^\varepsilon}\sum_{j=1}^{S}p_j \wedge \sqrt{\frac{ce^{\varepsilon} p_j}{d_2^2n \ln n }}\\
		&\;\;\geq\;\;& \frac{\sqrt c}{e^\varepsilon d_2}\sum_{j=1}^{S}p_j \wedge \sqrt{\frac{e^{\varepsilon} p_j}{n \ln n }}\\
		&\;\;\geq\;\;& \frac{\sqrt{c}}{d_2} C'\left(\sqrt{\frac{e^{-\varepsilon}\ln n}{n}}+\frac{\sqrt{S}\ln n}{n}\right)\\
		&\;\;\gtrsim\;\;& \frac{1}{d_2} C'\left(\sqrt{\frac{ce^{-\varepsilon}\ln n}{n}}+\frac{c\sqrt{S}\ln n}{n}\right)\\
		&\;\;\gtrsim\;\;& \frac{1}{d_2} C'\sqrt{F(P)}\;.
	\end{eqnarray*}
	
	Hence, it suffices to take $C'$ large enough to ensure $\Bmu_i(E_i^c)$, is as small as we desire for $i=0,1$. 
	So with this constant $C'$, we have
	\begin{eqnarray}
	\mu_0 \left( d_\varepsilon(P\|Q)\leq 
	\E_{\Bmu_0}\big[d_\varepsilon(P\|Q)\big]+\frac{\chi}{4} \right) & \;\; \geq \;\; & 1 - \beta_0 \;,\\
	\mu_1 \left( d_\varepsilon(P\|Q)\geq 
	\E_{\Bmu_1}\big[d_\varepsilon(P\|Q)\big]-\frac{\chi}{4} \right) & \;\; \geq \;\; & 1 - \beta_1 \;,
	\end{eqnarray}
	fo any constant $\beta_0$ and $\beta_1$, 
	which satisfy the conditions  of  Lemma~\ref{lem:fuzzy} and $s = {\chi}/{4}$.

	\bigskip\noindent
{\bf Total variation condition.}	
	Let $G_i$ be marginal distribution of $(X_1,X_2,\ldots, X_S)$ under priors $\Bmu_i$ for $i=0,1$.
	Denote by $\pi_i$ the probability measures defined as
	\begin{eqnarray}
		\pi_i(A)=\frac{\Bmu_i\left(E_{i} \cap A\right)}{\Bmu_i\left(E_{i}\right)}, \quad i=0,1\;.
	\end{eqnarray}
	Let $F_i$ be marginal distribution of $(X_1,X_2,\ldots, X_S)$ under priors $\pi_i$ for $i=0,1$.

	Triangle inequality of  total variation yields
	\begin{eqnarray*}
		\TV(F_0, F_1) &\;\;\leq\;\;& \TV(F_0,G_0)+\TV(G_0,G_1)+\TV(G_1,F_1)\\
		&\;\;=\;\;& \sup_A\Big|\Bmu_0(A)-\frac{\Bmu_0(A\cap E_0)}{\Bmu_0(E_0)}\Big|+\TV(G_0,G_1)+\sup_A\Big|\Bmu_1(A)-\frac{\Bmu_1(A\cap E_1)}{\Bmu_1(E_1)}\Big|\\
		&\;\;=\;\;& \TV(G_0,G_1)+\Bmu_0(E_0^c)+\Bmu_1(E_1^c)\;.
	\end{eqnarray*}
	
	In view of fact that $\TV\left(\otimes_{i=1}^{S} P_i, \otimes_{i=1}^{S} Q_i\right) \leq \sum_{i=1}^{S} \TV\left(P_i, Q_i\right)$, we have
	\begin{eqnarray*}
		\TV(G_0,G_1) &\;\;\leq\;\;& \sum_{i=1}^{S-1}\TV(\mu_0^{(p_i)},\mu_1^{(p_i)})\\
		&\;\;\leq\;\;&   \sum_{i=1}^{S-1}2\left(\frac 1 2 \right)^{d_2\ln n}\\
		&\;\;\leq\;\;& \frac{2S}{n^{d_2 \ln 2}}\;,\\
	\end{eqnarray*}
	where in the second inequality we applied the following lemmas, 
	assumming $d_2$ to be large enough. 
	\begin{lemma}[{\cite[Lemma~3]{WY16} when $q_i \leq c e^\varepsilon\ln n/n$}] 
	Suppose $U_0$, $U_1$ are two random variables supported on $[0,M]$, where $M\geq 0$ is constant. Suppose $\E[U_0^j]=\E[U_1^j]$, $0\leq j\leq L$. Denote the marginal distribution of $X$ where $X|\lambda\sim \Poi(\lambda)$, $\lambda\sim U_i$ as $F_i$ for $i = 0,1$. If $L>2eM$, then
	\begin{eqnarray}
		\TV(F_0,F_1)\;\;\leq\;\; \left(\frac{2eM}{L}\right)^L\;.
	\end{eqnarray}
	\end{lemma}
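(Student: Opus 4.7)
The plan is to exploit the moment-matching hypothesis by reducing the bound to a uniform polynomial approximation problem on $[0,M]$. First, write
\[
\TV(F_0, F_1) \;=\; \frac{1}{2}\sum_{k=0}^\infty \big|\,\E_{U_0}[g_k(\lambda)] - \E_{U_1}[g_k(\lambda)]\,\big|, \qquad g_k(\lambda) \;\triangleq\; \frac{e^{-\lambda}\lambda^k}{k!}.
\]
Since the two priors share their first $L$ moments, $\E_{U_0}[P(\lambda)] = \E_{U_1}[P(\lambda)]$ for every polynomial $P$ of degree at most $L$, which yields
\[
\big|\,\E_{U_0}[g_k] - \E_{U_1}[g_k]\,\big| \;=\; \big|\,\E_{U_0}[g_k - P] - \E_{U_1}[g_k - P]\,\big| \;\leq\; 2\sup_{\lambda \in [0,M]} |g_k(\lambda) - P(\lambda)|.
\]
So it suffices to pick, for each $k$, a convenient degree-$L$ polynomial approximation to $g_k$ on $[0,M]$ and sum the resulting errors.

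The natural choice is the degree-$L$ truncation of the Maclaurin expansion of $g_k$, obtained from $e^{-\lambda} = \sum_{j\geq 0}(-\lambda)^j/j!$. For $k \leq L$, set $P_{k,L}(\lambda) = (k!)^{-1}\sum_{j=0}^{L-k}(-\lambda)^{k+j}/j!$, so the residual is the tail $(\lambda^k/k!)\sum_{j\geq L-k+1}(-\lambda)^j/j!$. Under the hypothesis $L > 2eM$, the successive magnitudes $\lambda^j/j!$ in this tail contract by a factor of at most $M/(L-k+2) \leq 1/(2e)$, so a geometric-series bound gives
\[
\sup_{\lambda \in [0,M]} |g_k(\lambda) - P_{k,L}(\lambda)| \;\lesssim\; \frac{M^{L+1}}{k!\,(L-k+1)!}.
\]
For $k > L$, take $P \equiv 0$ and use the trivial bound $|g_k(\lambda)| \leq M^k/k!$ on $[0,M]$.

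Summing the two ranges, the first contributes $\sum_{k=0}^L M^{L+1}/(k!(L-k+1)!) = M^{L+1}(L+1)!^{-1}\sum_{k=0}^L \binom{L+1}{k} \leq (2M)^{L+1}/(L+1)!$, while the second, $\sum_{k>L} M^k/k!$, is again dominated by a constant multiple of $M^{L+1}/(L+1)!$ because $M/(k+1) < 1/(2e)$. Applying Stirling's estimate $(L+1)! \geq ((L+1)/e)^{L+1}$ then produces the target bound $\TV(F_0,F_1) \lesssim (2eM/L)^L$.

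The main obstacle is that the geometric-tail estimate for the Maclaurin remainder genuinely requires $L-k+1 > 2eM$, which fails for a constant-size range of indices $k$ near $L$. I would handle this boundary by reverting to the crude estimate $\sup |g_k(\lambda)| \leq M^k/k!$ on those terms and verifying, using $L > 2eM$, that their aggregate contribution is still absorbed into $(2eM/L)^L$ after Stirling; alternatively, one can replace the truncated Taylor polynomial by the Chebyshev best polynomial approximation of $g_k$ on $[0,M]$ to obtain a cleaner error bound uniformly in $k$, at the cost of replacing factorials by equivalent Bernstein-type constants.
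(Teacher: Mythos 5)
The paper gives no proof of this lemma of its own; it is quoted directly from \cite{WY16}. Your high-level plan matches the standard argument there: write $\TV$ as half the $\ell_1$ distance between the Poisson-mixture pmfs, subtract a degree-$L$ polynomial from $g_k$ using the matched moments, and bound the residual.

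The gap is in the tail estimate, and the fix you propose does not actually close it. The geometric-contraction bound $\sup_{\lambda\in[0,M]}|g_k - P_{k,L}|\lesssim M^{L+1}/(k!\,(L-k+1)!)$ requires $L-k+2\ge 2eM$, i.e.\ $k\le L-2eM+2$; the hypothesis $L>2eM$ secures this only for small $k$, and the range of failing indices is roughly $2eM$, not constant. More importantly, reverting there to $\sup_\lambda g_k\le M^k/k!$ is too lossy: near $k\approx L-2eM$ this crude bound is of order $(eM/(L-2eM))^{L-2eM}$, which for $L=10M$ is about $e^{-2.4M}$ while the target $(2eM/L)^L$ is about $e^{-6.1M}$. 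A short comparison of exponents shows the crude estimate only falls below the target once $L/M$ exceeds roughly $23$, well beyond the lemma's hypothesis; in the stated range the boundary contribution is not absorbed. Your argument also only yields $\lesssim$, not the clean $\le(2eM/L)^L$.

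The repair is to not reduce the Maclaurin tail to its leading term at all, but to exchange the order of summation. From $g_k(\lambda)=\sum_{j\ge k}(-1)^{j-k}\lambda^j/(k!\,(j-k)!)$ and moment matching, with $|\E_{U_0}[\lambda^j]-\E_{U_1}[\lambda^j]|\le 2M^j$,
\begin{eqnarray*}
|F_0(k)-F_1(k)|\;\;\le\;\;\sum_{j\ge\max(k,L+1)}\frac{2M^j}{k!\,(j-k)!}
\end{eqnarray*}
holds uniformly in $k$, and summing over $k$ then swapping via $\sum_{k=0}^j\frac{1}{k!\,(j-k)!}=\frac{2^j}{j!}$ gives
\begin{eqnarray*}
\TV(F_0,F_1)\;\;\le\;\;\sum_{j\ge L+1}\frac{(2M)^j}{j!}\;\;\le\;\;\Big(\frac{2eM}{L+1}\Big)^{L+1}\;\;<\;\;\Big(\frac{2eM}{L}\Big)^L\;,
\end{eqnarray*}
the penultimate step being the Chernoff tail bound for a Poisson variable (valid since $L+1>2M$) and the last using $2eM/(L+1)<1$. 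Note that the unsimplified per-$k$ bound $\sum_{j\ge L+1}M^j/(k!\,(j-k)!)$ equals your $\sup_\lambda$ residual evaluated at $\lambda=M$ and is correct for \emph{all} $k\le L$; the breakdown appears only when you replace this series by its first term before aggregating over $k$. The exchange of summation carries out that aggregation exactly and makes the boundary issue disappear.
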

	
	\begin{lemma}[{\cite[Lemma~32]{JYT18} when $q_i > c e^\varepsilon\ln n/n$}]
	Suppose $U_0$, $U_1$ are two random variables supported on $[a-M, a+M]$, where $a\geq M\geq 0$ are constants. Suppose $\E[U_0^j]=\E[U_1^j]$, $0\leq j\leq L$. Denote the marginal distribution of $X$ where $X|\lambda\sim \Poi(\lambda)$, $\lambda\sim U_i$ as $F_i$ for $i = 0,1$. If $L+1\geq (2eM)^2/a$, then
	\begin{eqnarray}
		\TV(F_0,F_1)\;\;\leq \;\; 2\left(\frac{eM}{\sqrt{a(L+1)}}\right)^{L+1}\;.
	\end{eqnarray} 
	\end{lemma}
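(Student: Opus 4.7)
The plan is to reduce the total variation to a $\chi^2$-type quantity via Cauchy--Schwarz, exploit the moment-matching hypothesis through the Charlier polynomial expansion of the Poisson pmf, and then close the estimate with Stirling's inequality.

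I would first shift variables by setting $V_i \triangleq U_i - a$, so that $V_i$ is supported on $[-M, M]$. Matching moments of $U_i$ through order $L$ transfers to matching moments of $V_i$ through order $L$ by the binomial identity $V_i^j = \sum_{k=0}^j \binom{j}{k}(-a)^{j-k} U_i^k$. Letting $p_k(\lambda) = \lambda^k e^{-\lambda}/k!$, I would then use
\[
 p_k(a+v) \;=\; p_k(a)\, e^{-v}\Bigl(1+\tfrac{v}{a}\Bigr)^k \;=\; p_k(a)\sum_{n \geq 0} \frac{C_n(k;a)}{n!}\, v^n,
\]
where $C_n(\cdot\,;a)$ denote the classical Charlier polynomials, which are orthogonal with respect to the Poisson measure in the sense that $\sum_{k\geq 0} p_k(a)\,C_n(k;a)\,C_m(k;a) = (n!/a^n)\,\delta_{nm}$. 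The assumption $a\geq M$ places $v = V_i$ strictly inside the radius of convergence, so termwise integration is legitimate. Because moments match through order $L$, the first $L+1$ contributions cancel and
\[
 F_0(k) - F_1(k) \;=\; p_k(a) \sum_{n \geq L+1} \frac{C_n(k;a)}{n!}\bigl(\E[V_0^n] - \E[V_1^n]\bigr).
\]

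Next, I would apply Cauchy--Schwarz against the weight $\sqrt{p_k(a)}$ and use $\sum_k p_k(a)=1$ to obtain $\TV(F_0, F_1) \leq (1/2)\bigl(\sum_k (F_0(k)-F_1(k))^2/p_k(a)\bigr)^{1/2}$. Charlier orthogonality then collapses the inner sum to $\sum_{n \geq L+1}(\E[V_0^n] - \E[V_1^n])^2/(n!\,a^n)$, and the crude bound $|\E[V_0^n] - \E[V_1^n]| \leq 2 M^n$, valid since $|V_i|\leq M$, reduces the problem to the scalar series $4\sum_{n \geq L+1}(M^2/a)^n/n!$.

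Finally, the hypothesis $L+1 \geq (2eM)^2/a$ forces the ratio of consecutive terms to satisfy $(M^2/a)/(n+1) \leq 1/(4e^2)$ for $n \geq L+1$, so the tail is bounded by twice its leading term, namely $2(M^2/a)^{L+1}/(L+1)!$. Invoking $(L+1)! \geq ((L+1)/e)^{L+1}$ (Stirling) then yields $\sum_k (F_0(k)-F_1(k))^2/p_k(a) \leq 8\bigl(eM^2/(a(L+1))\bigr)^{L+1}$, and taking square roots gives the claimed bound with room to spare in the constant. The main obstacle, and really the only non-mechanical ingredient, is recognizing the Charlier expansion and using its orthogonality correctly; once these are in place, the moment-matching hypothesis acts simply by annihilating the low-order part of the expansion, and the remaining scalar tail estimate is routine.
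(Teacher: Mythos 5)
The paper does not prove this lemma; it is imported verbatim from \cite[Lemma~32]{JYT18}, so there is no in-paper proof to compare against. Your derivation is correct and self-contained: shifting to $V_i=U_i-a$, expanding $p_k(a+v)=p_k(a)\,e^{-v}(1+v/a)^k$ in Charlier polynomials, killing the first $L+1$ terms by moment matching, passing to the $\chi^2$-type quantity by Cauchy--Schwarz with weight $p_k(a)$, collapsing via the orthogonality relation $\sum_k p_k(a)C_m(k;a)C_n(k;a)=(n!/a^n)\delta_{mn}$, bounding $|\E V_0^n-\E V_1^n|\le 2M^n$, and closing with the geometric-tail estimate and $(L+1)!\ge((L+1)/e)^{L+1}$ does yield $\TV\le\sqrt{2}\,e^{-(L+1)/2}\bigl(eM/\sqrt{a(L+1)}\bigr)^{L+1}$, which is strictly stronger than the stated bound. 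One minor imprecision: the role of $a\ge M$ is not to place $v$ inside a radius of convergence (the Charlier series is entire in $v$ for fixed $k$, and term-by-term integration is justified by the rapid decay of $C_n(k;a)/n!$), but to guarantee $a+V_i\ge 0$ so that $p_k(a+V_i)$ is a genuine Poisson mass; this does not affect the argument.
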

	 Since there exists a constant $C>0$ such that $\ln n \geq C\ln S$, $ S\geq 2$, we can conclude that with chosen parameters, $T(F_0,F_1) = o(1)$.


\subsubsection{Proof of Lemma~\ref{lem:zeta}}
\label{sec:proof_zeta}
	
We define minimax risk under the multinomial sampling model for a fixed $P$ as 
	\begin{eqnarray}
		R_B(S,n,P) \;\; \triangleq \;\; \inf_{\hd_\varepsilon(P\|Q_n)}\sup_{Q\in \cM_S}\E_Q\left[\Big(\hd_\varepsilon(P\|Q_n)-d_\varepsilon(P\|Q)\Big)^2 \right]\;.
	\end{eqnarray}
	Let $\hat{T} = \hat{T}(X_1,X_2,\ldots,X_S)$ be a near-minimax estimator under multinomial model such that for every sample size $n$,
	\begin{eqnarray}
		\sup_{Q\in \cM_S}\E_Q\left[\Big(\hat{T}-d_\varepsilon(P\|Q)\Big)^2 \right]< R_B(S,n,P)+\xi\;,
	\end{eqnarray}
	where $\xi>0$.

	For any $Q\in \cM_S(\zeta)$, let $\sum_{i=1}^S q_i = A$, we have
	\begin{eqnarray*}
		\left|d_\varepsilon\big(P\|\frac{Q}{\sum_{i=1}^S q_i}\big)-d_\varepsilon\big(P\|Q\big)\right|
		&\;\;\leq \;\;&\sum_{i=1}^S \left|[p_i-e^\varepsilon q_i/A]^+-[p_i-e^\varepsilon q_i]^+\right|\\
		&\;\;\leq \;\;&\sum_{i=1}^S e^\varepsilon q_i |1/A-1|\\
		&\;\;= \;\;&e^\varepsilon |1-A|\\
		&\;\;\leq \;\;& e^\varepsilon \zeta\;.
	\end{eqnarray*}
	Now we consider risk of  $\hat{T}$ for $Q\in \cM_S(\zeta)$ under Poisson sampling model, where $X_i$ are mutually independent with marginal distributions $X_i\sim \Poi(n q_i)$. Let $n'=\sum_{i=1}^SX_i$, we know $n'\sim \Poi( n \sum_{i=1}^S q_i)$. In view of fact that conditioned on $n'=m$, $(X_1,X_2,\ldots,X_S)$ follows multinomial distribution parameterized by $\left(m, \frac{Q}{\sum_{i=1}^Sq_i}\right)$, we have
	\begin{eqnarray}
		&\;\;&\E_Q\left[\Big(\hat{T}-d_\varepsilon(P\|Q)\Big)^2 \right]\\
		&\;\;= \;\;& \E_Q\left[\Big(\hat{T}-d_\varepsilon\big(P\|\frac{Q}{\sum_{i=1}^S q_i}\big)+d_\varepsilon\big(P\|\frac{Q}{\sum_{i=1}^S q_i}\big)-d_\varepsilon(P\|Q)\Big)^2 \right] \\
		&\;\;\leq \;\;& 2\E_Q\left[\Big(\hat{T}-d_\varepsilon\big(P\|\frac{Q}{\sum_{i=1}^S q_i}\big)\Big)^2\right]+2\E_Q\left[\Big(d_\varepsilon\big(P\|\frac{Q}{\sum_{i=1}^S q_i}\big)-d_\varepsilon(P\|Q)\Big)^2 \right]\\
		&\;\;\leq \;\;& 2\E_Q\left[\Big(\hat{T}-d_\varepsilon\big(P\|\frac{Q}{\sum_{i=1}^S q_i}\big)\Big)^2\right]+2e^{2\varepsilon}\zeta^2\\
		&\;\; = \;\;& 2\sum_{m=0}^\infty\E_Q\left[\Big(\hat{T}-d_\varepsilon\big(P\|\frac{Q}{\sum_{i=1}^S q_i}\big)\Big)^2|n'=m\right]\prob(n'=m)+2e^{2\varepsilon}\zeta^2\\
		&\;\; \leq \;\;& 2\sum_{m=0}^\infty R_B(S,m,P)\prob(n'=m)+2(e^{2\varepsilon}\zeta^2+\xi)\\
		&\;\; \leq \;\;& 2\left(1\cdot \prob\big(n'\leq n(1-\zeta)/2\big)+R_B\big(S,n(1-\zeta)/2,P\big)\prob\big(n'\geq n(1-\zeta)/2\big)\right)\nonumber\\
		&&+2(e^{2\varepsilon}\zeta^2+\xi)\label{eq:lessthan1}\\
		&\;\; \leq \;\;& 2 R_B\big(S,n(1-\zeta)/2,P\big)+2\prob\big(n'\leq n(1-\zeta)/2\big)+2(e^{2\varepsilon}\zeta^2+\xi)\\
		&\;\; \leq \;\;& 2 R_B\big(S,n(1-\zeta)/2,P\big)+2\prob\big(\Poi(n(1-\zeta)/2)\leq n(1-\zeta)/2\big)+2(e^{2\varepsilon}\zeta^2+\xi)\\
		&\;\; \leq \;\;& 2 R_B\big(S,n(1-\zeta)/2,P\big)+2e^{-n(1-\zeta)/8}+2(e^{2\varepsilon}\zeta^2+\xi)\;,
	\end{eqnarray}
	where \eqref{eq:lessthan1} follows from $R_B(S,m,P)\leq 1$, and the last inequality follows from Lemma~\ref{lem:poisson_tail}.	
	Taking the supremum of $\E_Q\left[\Big(\hat{T}-d_\varepsilon(P\|Q)\Big)^2 \right]$ over $\cM_S(\zeta)$ and using the arbitrariness of $\zeta$, we have
	\begin{eqnarray}
		R(S,n,P,\zeta)\;\;\leq\;\; 2 R_B\big(S,n(1-\zeta)/2,P\big)+2e^{-n(1-\zeta)/8}+2e^{2\varepsilon}\zeta^2\;,
	\end{eqnarray}
	which is equivalent to
	\begin{eqnarray}
		R_B\big(S,n(1-\zeta)/2,P\big) \;\;\geq\;\; \frac{1}{2}R(S,n,Q,\zeta)-e^{-n(1-\zeta)/8}-e^{2\varepsilon}\zeta^2\;.
	\end{eqnarray}
	It follows from \cite[Lemma~16]{JKYT15} that $R_B\big(S,n,P\big)\leq 2R\big(S,n/2,P, 0\big)$. Hence,
	\begin{eqnarray}
		R(S,n(1-\zeta)/4,P,0) &\;\;\geq \;\;& \frac{1}{2} R_B(S,n(1-\zeta)/2,Q)\\
		&\;\;\geq \;\;& \frac{1}{4}R(S,n,P,\zeta)-\frac{1}{2}e^{-n(1-\zeta)/8}-\frac{1}{2}e^{2\varepsilon}\zeta^2\;.
	\end{eqnarray}

\subsubsection{Proof of Lemma~\ref{lem:moment matching measures}}
\label{sec:proof moment matching measures}
	By \cite[Lemma~31]{JYT18}, there are two probability measures $\nu_{1}^{\eta, a}$ and $\nu_{0}^{\eta, a}$ on $[\eta,1]$ such that
	\begin{enumerate}
		\item 
		\begin{eqnarray*}
		\int t^{l} \nu_{1}^{\eta, a}(d t)\;\;=\;\;\int t^{l} \nu_{0}^{\eta, a}(d t)\text{, for all }l=0,1,2, \ldots, L\;.
		\end{eqnarray*}
		
		\item 
		\begin{eqnarray*}
			\int \frac{|e^\varepsilon x-a|-a}{e^\varepsilon x} \nu_{1}^{\eta, a}(d t) - \int\frac{|e^\varepsilon x-a|-a}{e^\varepsilon x} \nu_{0}^{\eta, a}(d t) \;\;=\;\;2 \Delta_L\left[\frac{|e^\varepsilon x-a|-a}{e^\varepsilon x};[\eta, 1]\right]\;,
		\end{eqnarray*}
		which is equivalent to 
		\begin{eqnarray*}
			\int \frac{2[a-e^\varepsilon x]^+-2a}{e^\varepsilon x} \nu_{1}^{\eta, a}(d t) - \int\frac{2[a-e^\varepsilon x]^+-2a}{e^\varepsilon x} \nu_{0}^{\eta, a}(d t) \;\;=\;\;2 \Delta_L\left[\frac{[a-e^\varepsilon x]^+-a}{e^\varepsilon x};[\eta, 1]\right]\;.
		\end{eqnarray*}
		
	\end{enumerate}
		The two desired measures are constructed.
\subsubsection{Proof of Lemma~\ref{lem:tvbound}}
	
Let $g(x;a) = \frac{[ a-x]^+-a}{x}$. We have
\begin{eqnarray}
	g(x;a) 
	&\;\;=\;\;& \frac{|x-a|-a}{2x}-\frac{1}{2}\;.
\end{eqnarray}
By the definition of best polynomial approximation error, for $L>1$, we have
\begin{eqnarray*}
	\Delta_{L}\left[g(x ; a) ;[\frac{a}{D}, 1]\right]
		&\;\;=\;\;& \inf_{h(x)\in {\rm Poly}_L}\sup_{x\in [\frac{a}{D},1]}\left|\frac{|x-a|-a}{2x}-\frac{1}{2}-h(x)\right|\\
		&\;\;=\;\;& \inf_{h(x)\in {\rm Poly}_L}\sup_{x\in [\frac{a}{D},1]}\left|\frac{|x-a|-a}{x}-h(x)\right|\\
		&\;\;=\;\;& \Delta_{L}\left[\frac{|x-a|-a}{x};[\frac{a}{D}, 1]\right]\\
		&\;\;\gtrsim\;\;&
		\begin{cases}
			\frac{1}{L \sqrt{a}} & \frac{1}{L^{2}} \leq a \leq \frac{1}{2} \\ 
			1 & 0<a<\frac{1}{L^{2}}
		\end{cases}\;,
	\end{eqnarray*}
	where $D$ is from \cite[Lemma~30]{JYT18}.
	
	Now we consider $f(x;a) = \frac{[ e^{-\varepsilon}a-x]^+-e^{-\varepsilon}a}{x}$, where $a\in (0,1/2]$. As $e^{-\varepsilon}a \in (0,\frac{1}{2}]$. there exists $D>0$ such that
	\begin{eqnarray}
		\Delta_{L}\left[f(x; a); [\frac{a}{D}, 1]\right]\;\;\geq\;\;\Delta_{L}\left[f(x; a); [\frac{e^{-\varepsilon} a}{D}, 1]\right]\;\;\gtrsim\;\;\left(1\wedge \frac{1}{L\sqrt{e^{-\varepsilon}a}}\right)\;\;\geq \;\;\left(1\wedge \frac{1}{L\sqrt{a}}\right)\;.
	\end{eqnarray}

\subsection{Proof of Theorem~\ref{thm:opt}} 
\label{sec:proof_opt}

Define good events, where our choice of regimes are correct as
\begin{eqnarray}
	E_1 \;=\; \left\{\Big\{i: \hp_{i,1}-e^\varepsilon \hq_{i,1}>\sqrt{\frac{(c_1+c_2)\ln n }{n}}(\sqrt{\hp_{i,1}}+\sqrt{e^\varepsilon \hq_{i,1}})\Big\}\subseteq \Big\{i: e^\varepsilon q_i \leq p_i\Big\}\right\}\;,
\end{eqnarray}
\begin{eqnarray}
	E_2 \;=\; \left\{\Big\{i:\hp_{i,1}- e^\varepsilon \hq_{i,1}<-\sqrt{\frac{(c_1+c_2)\ln n }{n}}(\sqrt{\hp_{i,1}}+\sqrt{e^\varepsilon \hq_{i,1}})\Big\}\subseteq \Big\{i: e^\varepsilon q_i \geq p_i\Big\}\right\}\;,
\end{eqnarray}
\begin{eqnarray}
	E_3 = \left\{\Big\{i: e^\varepsilon \hq_{i,1}+\hp_{i,1}<\frac{c_1 \ln n}{n} \Big\}\subseteq \Big\{i: \big(p_i, e^\varepsilon q_i \big)\in [0,\frac{2c_1\ln n}{n}]^2\Big\}\right\}\;,
\end{eqnarray}
  and 
\begin{eqnarray}
	E_4 = \left\{\Big \{i:(\hp_{i,1}, e^{\varepsilon }\hq_{i,1}) \in U(c_1,c_1), \hp_{i,1}+e^\varepsilon \hq_{i,1}\geq \frac{c_1 \ln n}{n}\Big\} \nonumber
	\right.\\ \left. \;\subseteq\; \Big\{i: (p_i, e^{\varepsilon }q_i) \in U(c_1,c_1), p_i+e^\varepsilon q_i\geq \frac{c_1\ln n}{2n}, \hp_{i,1}+e^{\varepsilon }\hq_{i,1}\geq \frac{p_i+e^{\varepsilon }q_i}{2} \Big\} \right\}\;.
\end{eqnarray}

Denote the overall good event as
\begin{eqnarray}
	E \;\; = \;\; E_1\cap E_2\cap E_3\cap E_4\;.\label{eq:good_event}
\end{eqnarray}

Decompose the error under good events as 
\begin{eqnarray}
	\cE_1 &\triangleq& \sum_{i\in I_1} \big\{ \hp_{i,2}-e^\varepsilon \hq_{i,2} -  [p_i-e^\varepsilon q_i]^+ \big\} \;, \\
	\cE_2 &\triangleq& \sum_{i\in I_2} \big\{ \tD_K^{(1)}(\hp_{i,2},\hq_{i,2})-[p_i-e^\varepsilon q_i]^+ \big\} \;,\\
	\cE_3 &\triangleq& \sum_{i\in I_3} \big\{ \tD_K^{(2)}(\hp_{i,2},\hq_{i,2}; \hp_{i,1},\hq_{i,1})-[p_i-e^\varepsilon q_i]^+ \big\} \;,
\end{eqnarray}

where the indices of those regimes under the good events are
\begin{eqnarray}
	I_1 \triangleq  \left\{ i : \hp_{i, 1}-e^{\varepsilon}\hq_{i, 1}>\sqrt{\frac{(c_{1}+c_{2}) \ln n}{n}}(\sqrt{\hp_{i, 1}}+\sqrt{e^{\varepsilon}\hq_{i, 1}}), e^\varepsilon q_i\leq p_i \right\}\;,\\
	I_2 \triangleq  \left\{ i :  \hp_{i,1}+e^\varepsilon \hq_{i,1}\leq \frac{c_1e^\varepsilon \ln n}{n}, (p_i,e^{\varepsilon }q_i)\in \left[0,\frac{2c_1\ln n}{n}\right]^2 \right\}\;,\\
	I_3 \triangleq  \left\{ i :  \hp_{i,1}+e^\varepsilon \hq_{i,1}\geq \frac{c_1 \ln n}{n}, (\hp_{i,1}, e^{\varepsilon }\hq_{i,1})\in U(c_1,c_2),  \right.\nonumber\\
	  \left. (p_i, e^{\varepsilon }q_i)\in U(c_1,c_1),  p_i+e^\varepsilon q_i\geq \frac{c_1 \ln n}{2n}, \hp_{i,1}+e^{\varepsilon }\hq_{i,1} \geq \frac{p_i+e^{\varepsilon }q_i}{2} \right \}\;.
\end{eqnarray}

We can bound the squared error as 
\begin{eqnarray}
	\E \big[ \, \big( \hd_{\varepsilon,K,c_1,c_2}(P_n \| Q_n) - d_\varepsilon(P\|Q) \big)^2 \, \big] & \leq & 
		\E[ \, \big( \hd_{\varepsilon,K,c_1,c_2}(P_n \| Q_n) - d_\varepsilon(P\|Q) \big)^2\, \ind(E)\, ] + \prob( E^c)    \nonumber \\
		&\leq & \E[(\cE_1+\cE_2+\cE_3)^2] +\prob( E^c)    \nonumber\\
		&\leq& 3 \E[\cE_1^2] + 3  \E[\cE_2^2] + 3  \E[\cE_3^2] +  \prob( E^c)  \;.
	\label{eq:opt_error}
\end{eqnarray}

The last term on the bad event is bounded by $15S/n^\beta$ as shown in following lemma, with a proof in \ref{sec:proof_bad_event}.

\begin{lemma}
\label{lem:bad_event}
	  Assuming $\frac{c_2}{c_1}<\frac{8}{(\sqrt{2}+1)^{2}}-1 \approx 0.373$, and let $\beta = \min \left\{\frac{c_1}{6}, \frac{(c_1-c_2)^2}{96c_1}, \frac{1}{3}\left(\sqrt{2c_1}-\frac{\sqrt{2}+1}{2}\sqrt{c_1+c_2}\right)^2\right\}$, we have
	  \begin{eqnarray}
	  	\prob(E^c)\;\;\leq\;\; \frac{15S}{n^\beta}\;,
	  \end{eqnarray}
	  where good event $E$ is defined in Eq.~\eqref{eq:good_event}.
\end{lemma}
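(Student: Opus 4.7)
The plan is to apply the union bound $\prob(E^c) \leq \sum_{j=1}^4 \prob(E_j^c)$ and then, for each $E_j$, further union-bound over the $S$ coordinates. For each index $i \in [S]$, I would bound the single-coordinate bad event by a $n^{-\beta}$ factor using the Poisson tail bounds of Lemma~\ref{lem:poisson_tail} and Lemma~\ref{lem:weighted_poisson_tail}, combined with the fact that $n\hp_{i,1} \sim \Poi(np_i)$ and $n\hq_{i,1} \sim \Poi(nq_i)$ are independent. The structure will mirror the proof of Lemma~\ref{lem:Qopt_error} in Section~\ref{sec:proof_Qopt_error}, but with the univariate concentration replaced by concentration of the weighted sum $\hp_{i,1} + e^\varepsilon \hq_{i,1}$ and the weighted difference $\hp_{i,1} - e^\varepsilon \hq_{i,1}$.

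First, for $E_1^c$ and $E_2^c$, I would restrict attention to the worst case where $e^\varepsilon q_i = p_i$ (since moving $p_i, q_i$ in the ``true'' direction only decreases the probability of being empirically on the wrong side). Here $\E[\hp_{i,1} - e^\varepsilon \hq_{i,1}] \leq 0$ (respectively $\geq 0$), so we need the empirical deviation to exceed $\sqrt{(c_1+c_2)(\ln n)/n}(\sqrt{\hp_{i,1}} + \sqrt{e^\varepsilon \hq_{i,1}})$. Using Lemma~\ref{lem:poisson_tail} to argue that $\sqrt{\hp_{i,1}}$ and $\sqrt{e^\varepsilon \hq_{i,1}}$ are not too far from $\sqrt{p_i}$ and $\sqrt{e^\varepsilon q_i}$ (an event of probability $\geq 1 - 2n^{-c_1/6}$ via the split $p_i \lessgtr c_1 e^\varepsilon (\ln n)/n$), and then Lemma~\ref{lem:weighted_poisson_tail} on $Z = n\hp_{i,1} + e^\varepsilon n\hq_{i,1}$, the deviation probability is at most $n^{-(c_1-c_2)^2/(96c_1)}$.

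Second, for $E_3^c$, suppose $p_i \vee e^\varepsilon q_i > 2c_1(\ln n)/n$, so that $p_i + e^\varepsilon q_i > 2c_1(\ln n)/n$. Applying Lemma~\ref{lem:weighted_poisson_tail} with $\delta = 1/2$ to the weighted sum gives $\prob(\hp_{i,1} + e^\varepsilon \hq_{i,1} < c_1(\ln n)/n) \leq e^{-c_1(\ln n)/6}$. For $E_4^c$, I would decompose it into three subevents: (a) $(\hp_{i,1}, e^\varepsilon \hq_{i,1}) \in U(c_1,c_2)$ while $(p_i, e^\varepsilon q_i) \notin U(c_1,c_1)$; (b) $\hp_{i,1} + e^\varepsilon \hq_{i,1} \geq c_1(\ln n)/n$ while $p_i + e^\varepsilon q_i < c_1(\ln n)/(2n)$; and (c) $\hp_{i,1} + e^\varepsilon \hq_{i,1} \geq c_1(\ln n)/n$ while $\hp_{i,1} + e^\varepsilon \hq_{i,1} < (p_i + e^\varepsilon q_i)/2$. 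Events (b) and (c) are again direct consequences of Lemma~\ref{lem:weighted_poisson_tail} applied to $n(\hp_{i,1} + e^\varepsilon \hq_{i,1})$ with mean $n(p_i + e^\varepsilon q_i)$, yielding $n^{-c_1/6}$ and $n^{-c_1/8}$ respectively.

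The main obstacle will be subevent (a). The nonsmooth region $U(c_1,c_2)$ is defined in empirical coordinates while the guarantee concerns the population coordinates; the boundary thresholds scale as $\sqrt{\hp_{i,1}} + \sqrt{e^\varepsilon \hq_{i,1}}$ versus $\sqrt{p_i} + \sqrt{e^\varepsilon q_i}$. I would invert using the standard square-root concentration $|\sqrt{\hp_{i,1}} - \sqrt{p_i}| \lesssim \sqrt{(\ln n)/n}$, but the resulting factor introduces $(\sqrt{2}+1)/2$: loosely, if $\sqrt{\hp_{i,1}} + \sqrt{e^\varepsilon \hq_{i,1}} \leq \sqrt{2}(\sqrt{p_i} + \sqrt{e^\varepsilon q_i})$ on a high-probability event, then the empirical threshold $\sqrt{(c_1+c_2)(\ln n)/n}(\sqrt{\hp_{i,1}} + \sqrt{e^\varepsilon \hq_{i,1}})$ on $|\hp_{i,1} - e^\varepsilon \hq_{i,1}|$ is dominated by the population threshold $\sqrt{c_1(\ln n)/n}(\sqrt{p_i} + \sqrt{e^\varepsilon q_i})$ only when $2c_1 > ((\sqrt{2}+1)/2)^2(c_1+c_2)$, i.e.\ $c_2/c_1 < 8/(\sqrt{2}+1)^2 - 1$, explaining the hypothesis. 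Under this condition, the remaining gap $(\sqrt{2c_1} - (\sqrt{2}+1)\sqrt{c_1+c_2}/2)\sqrt{(\ln n)/n}(\sqrt{p_i} + \sqrt{e^\varepsilon q_i})$ must be overcome by the deviation $|(\hp_{i,1} - e^\varepsilon \hq_{i,1}) - (p_i - e^\varepsilon q_i)|$, which by Lemma~\ref{lem:weighted_poisson_tail} has tail probability $n^{-(\sqrt{2c_1} - (\sqrt{2}+1)\sqrt{c_1+c_2}/2)^2/3}$. Collecting the at most $4 + 3 + 4 + 4 = 15$ bad subevents per coordinate, each of probability $\leq n^{-\beta}$ with $\beta$ the minimum of the exponents above, and summing over $i \in [S]$ yields $\prob(E^c) \leq 15S/n^\beta$.
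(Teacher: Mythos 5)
Your high-level plan is the same as the paper's: union-bound $\prob(E^c)\le\sum_j\prob(E_j^c)$, union-bound over coordinates, and handle each coordinate with Poisson concentration plus a geometric containment argument; you also correctly pinpoint why the hypothesis $c_2/c_1<8/(\sqrt 2+1)^2-1$ arises (the $(\sqrt 2+1)/2$ distortion factor between the empirical threshold $\sqrt{\hp_{i,1}}+\sqrt{e^\varepsilon\hq_{i,1}}$ and the population threshold $\sqrt{p_i}+\sqrt{e^\varepsilon q_i}$) and you recover the exponent $\tfrac13\bigl(\sqrt{2c_1}-\tfrac{\sqrt 2+1}{2}\sqrt{c_1+c_2}\bigr)^2$ of the binding case.

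However, there is a genuine gap in the tool you invoke. You repeatedly appeal to Lemma~\ref{lem:weighted_poisson_tail} to control deviations of the \emph{weighted difference} $\hp_{i,1}-e^\varepsilon\hq_{i,1}$ from $p_i-e^\varepsilon q_i$ (for $E_1^c$, $E_2^c$, and your subevent (a) of $E_4^c$). That lemma is a Chernoff bound for the non-negative weighted sum $Z=\alpha X+Y$ with $\alpha>1$, and it says nothing about the Skellam-type difference $\alpha X-Y$; the two have very different moment generating functions and the two-sided Chernoff argument does not transfer. Additionally, the threshold in the bad event depends on $(\hp_{i,1},\hq_{i,1})$ itself, so you cannot simply center the difference and appeal to a fixed-threshold tail. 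The paper's proof circumvents both difficulties with a containment argument: Lemma~\ref{lem:decompose region} (and the companion fact in Eq.~\eqref{eq:decomppq}, that $\bigl(\bigcup_{(p,e^\varepsilon q)\notin U(c_1,c_1)}U(e^\varepsilon p;c,c)\times U(e^{2\varepsilon}q;c,c)\bigr)\cap U(c_1,c_2)=\emptyset$ for an explicit $c$) shows that the 2D bad event forces at least one of $\hp_{i,1}$, $\hq_{i,1}$ individually to leave its own 1D interval $U(\cdot;c,c)$, and each of those univariate deviations is handled independently by the standard Poisson tail bound of Lemma~\ref{lem:pqtail}. This reduction from 2D to two independent 1D deviations is the step your outline is missing; without it, the attempted use of Lemma~\ref{lem:weighted_poisson_tail} on the difference does not close. (A minor arithmetic point as well: for subevent (c) of $E_4^c$ the condition $p_i+e^\varepsilon q_i\ge 2(\hp_{i,1}+e^\varepsilon\hq_{i,1})\ge 2c_1\ln n/n$ gives $\E[Z]\ge 2c_1\ln n$, so the lower-tail exponent is $c_1/4$, not $c_1/8$; $c_1/8<c_1/6$ would otherwise contradict the claimed $\beta$.)
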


For the first term in Eq.~\eqref{eq:opt_error}, we have
\begin{eqnarray}
	\E[\cE_1^2] \;\;=\;\; \E [\var(\cE_1 | I_1) + (E[ \cE_1| I_1])^2] \; = \; \E[\var(\cE_1|I_1)] \; \leq \; \sum_{i\in \{i:e^\varepsilon q_i\leq p_i\}} \frac{p_i+e^{2\varepsilon}  q_i}{n}
	\;\;\lesssim\;\;\frac{e^\varepsilon}{n}
\end{eqnarray}
where we use the fact that $\E[\cE_1|I_1] = 0$ with probability one and the fact that $\hp_{i,2}$ and $\hq_{i,2}$ are independent for indices in $I_1$.

The second term in Eq.~\eqref{eq:opt_error} is bounded by following lemma, with a proof in Section~\ref{sec:proof_biasvariance1}.

\begin{lemma}
\label{lem:biasvariance1}
	Suppose $(p, e^{\varepsilon}q)\in \left[0, \frac{2c_1\ln n}{n}\right]^2$, $(n\hp,n\hq)\sim \Poi(np)\times\Poi(nq)$. Then,
	\begin{eqnarray}
			\left|\E\tD_K^{(1)}(\hp,\hq)-[p-e^\varepsilon q]^+\right|\;\;\lesssim  \frac{1}{K}\sqrt{\frac{c_1  \ln n}{n}}(\sqrt{ p}+\sqrt{e^\varepsilon q})+\frac{1}{K^2}\frac{c_1 \ln n}{n}\;,
	\end{eqnarray}
	and
	\begin{eqnarray}
		\Var\left(\tD_K^{(1)}(\hp,\hq)\right)\;\;\lesssim \;\; \frac{B^K c_1 c_3^4\ln^5 n}{n}(p+e^{\varepsilon} q )\;,
	\end{eqnarray}
	for some constant $B>0$. The estimator $\tD_K^{(1)}$ is introduced in Eq.~\eqref{eq:Opt1_dk} and $K = c_3\ln n$, $c_3 e^{\varepsilon}<c_1$.
\end{lemma}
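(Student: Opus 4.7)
\textbf{Proof plan for Lemma~\ref{lem:biasvariance1}.} The plan is to handle the bias and variance separately, exploiting that $\tD_K^{(1)}(\hp,\hq)$ is, by Lemma~\ref{lem:unbiased_estimate_pq}, the MVUE of $D_K^{(1)}(p,q)$.

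\textit{Bias.} Since $\E\tD_K^{(1)}(\hp,\hq)=D_K^{(1)}(p,q)$, the bias equals $|D_K^{(1)}(p,q)-[p-e^\varepsilon q]^+|$. After the affine change of variables $x'=pn/(2c_1\ln n)$ and $y'=e^\varepsilon qn/(2c_1\ln n)$, both in $[0,1]$ by hypothesis, I would reduce the problem to bounding
\[
\tfrac{2c_1\ln n}{n}\bigl|u_K(x',y')v_K(x',y')-u_K(0,0)v_K(0,0)-[x'-y']^+\bigr|.
\]
Writing $A=\sqrt{x'}+\sqrt{y'}$, $B=[\sqrt{x'}-\sqrt{y'}]^+$ and $\tilde A=u_K$, $\tilde B=v_K$, the identity $[x'-y']^+=AB$ and the product rule $\tilde A\tilde B-AB=(\tilde A-A)B+A(\tilde B-B)+(\tilde A-A)(\tilde B-B)$ would reduce matters to the two univariate approximation errors $\|u_K-A\|_\infty,\|v_K-B\|_\infty\lesssim 1/K$ on $[0,1]^2$. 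These in turn follow from Lemma~\ref{lem:smoothness_sqrt_xy} combined with the bivariate Ditzian--Totik bound in Lemma~\ref{lem:poly_error}. Using $B\leq A$ and $|u_K(0,0)v_K(0,0)|\lesssim 1/K^2$ gives $|h_{2K}(x',y')-[x'-y']^+|\lesssim (\sqrt{x'}+\sqrt{y'})/K+1/K^2$, and undoing the scaling yields the claimed bias bound.

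\textit{Variance.} Expand $h_{2K}(x,y)=\sum_{1\leq i+j\leq 2K}a_{ij}x^iy^j$; note the constant term vanishes by construction. Substituting the MVUE $\prod_{k=0}^{i-1}(\hp-k/n)\prod_{\ell=0}^{j-1}(\hq-\ell/n)$ for each monomial $p^iq^j$ writes $\tD_K^{(1)}$ as a linear combination of products of independent monomial estimators. By triangle inequality for the $L^2$-norm, independence of $\hp$ and $\hq$, and Lemma~\ref{lem:Qopt_moment} with $M\lesssim\ln n$, each factor has variance bounded by $(2Mp/n)^i$ (resp.~$(2Mq/n)^j$). The coefficients $|a_{ij}|$ can be controlled by $C\cdot B^K$ via a bivariate adaptation of Lemma~\ref{lem:Qopt_magnitude}, using that both $u_K$ and $v_K$ are uniformly $O(1)$ on $[0,1]^2$ so $\sup_{[0,1]^2}|h_{2K}|=O(1)$.

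\textit{Main obstacle.} The delicate point is extracting the \emph{single} factor $(p+e^\varepsilon q)$ in the variance rather than $(p+e^\varepsilon q)^{i+j}$. The naive Poisson second-moment bound $\E[X_i^2]\leq(2Mp/n)^i$ only yields $p^{i/2}$, which when combined with the scaling factors $\alpha^i=(n/(2c_1\ln n))^i$ collapses to an $i$-independent $(C_0/c_1)^{i/2}$ and provides no $p$-dependence. The resolution exploits the vanishing constant term: every surviving monomial has $i+j\geq 1$, so one can peel off one factor of $\sqrt{p}$ (when $i\geq 1$) or $\sqrt{e^\varepsilon q}$ (when $j\geq 1$) from the Poisson bound and absorb the remaining $i+j-1$ powers into $p_{\max},q_{\max}\lesssim\ln n/n$. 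Summing the resulting geometric series over $(i,j)$ with $i+j\leq 2K$ contributes the polylogarithmic $c_3^4\ln^4 n=K^4$ factor, producing the stated $B^K c_1 c_3^4\ln^5 n\cdot(p+e^\varepsilon q)/n$ after an appropriate adjustment of the constant $B$.
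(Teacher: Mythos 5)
Your proposal is correct and follows essentially the same route as the paper's proof. For the bias, you use exactly the paper's decomposition: $\tD_K^{(1)}$ is unbiased for $D_K^{(1)}$ by Lemma~\ref{lem:unbiased_estimate_pq}, you rescale to $[0,1]^2$, apply the product rule to $u_Kv_K-AB$, invoke Lemma~\ref{lem:poly_error} together with Lemma~\ref{lem:smoothness_sqrt_xy} for the two $O(1/K)$ univariate-modulus bounds, and note $|u_K(0,0)v_K(0,0)|\lesssim 1/K^2$ by evaluating the error at the origin. For the variance, you correctly identify the key difficulty — extracting a single factor of $p+e^\varepsilon q$ from the monomial expansion rather than a power of it — and resolve it the same way the paper does, by exploiting the vanishing constant term ($i+j\geq1$) to peel off one power of $\sqrt{p/(2\Delta)}$ or $\sqrt{e^\varepsilon q/(2\Delta)}$ while the remaining powers are bounded by $1$. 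Your coefficient bound $|h_{ij}|\lesssim B^K$ is obtained exactly as in the paper, by applying Lemma~\ref{lem:Qopt_magnitude} slice-by-slice (freeze $y$, bound in $x$, then bound the resulting coefficients in $y$).

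One small imprecision: the factor $K^4$ in the variance does not come from ``summing the geometric series over $(i,j)$'' as you state. The sum identity (the paper uses $|\sum_{i+j\geq 1}x^iy^j|\leq 2(2K)^2(x+y)$) contributes only a $K^2$ factor to the $\|\cdot\|_2$ bound; this becomes $K^4$ only after squaring the $L^2$ norm to get the variance. This doesn't affect the correctness of your plan, but the accounting as written would mislead a reader trying to reproduce the computation.
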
 

We have
\begin{eqnarray}
	&&\E[\cE_3^2] \\
	\;\;\lesssim \;\; && \sum_{i=1}^S\frac{B^K c_1 c_3^4\ln^5 n}{n}( p_i + e^{\varepsilon}q_i)+\left(\sum_{i=1}^S \frac{1}{K}\sqrt{\frac{c_1  \ln n}{n}}(\sqrt{ p_i}+\sqrt{e^\varepsilon q_i})+\frac{1}{K^2}\frac{c_1 \ln n}{n}\right)^2\\
	\;\;\lesssim \;\; && \frac{c_1c_3^4\ln^5 n}{n^{1-c_3\ln B}}(e^{\varepsilon}+1)+\frac{c_1(e^\varepsilon+1)S}{c_3^2n \ln n} \vee \left(\frac{c_1 S }{c_3^2n\ln n}\right)^2\;.
\end{eqnarray}

The third term in Eq.~\eqref{eq:opt_error} is bounded by following lemma, with a proof in Section~\ref{sec:proof_biasvariance2}.

\begin{lemma}
\label{lem:biasvariance2}
	Suppose $(p,e^{\varepsilon}q)\in U(c_1,c_1)$, $ p+e^{\varepsilon}q\geq \frac{c_1 \ln n}{2n}$, $x+e^{\varepsilon}y\geq \frac{p+e^{\varepsilon}q}{2}$, $x\in [0,1]$, $y\in [0,1]$. Suppose $(n\hp,n\hq)\sim \Poi(np)\times\Poi(nq)$. Then,
	\begin{eqnarray}
		\left|\E \tD_K^{(2)}(\hp, \hq ; x, y)-[p-e^\varepsilon q]^+ \right| \;\;\lesssim\;\; \frac{1}{K} \sqrt{\frac{c_1 \ln n}{n}}(\sqrt{x}+\sqrt{e^\varepsilon y})\;,
	\end{eqnarray}
	and 
	\begin{eqnarray}
		\Var\left(\tD_K^{(2)}(\hp, \hq ; x, y)\right) \;\;\lesssim\;\; \frac{B^K c_1  \ln n}{n}( x+e^\varepsilon y)\;,
	\end{eqnarray}
	for some constant $B>0$, and $K=c_3\ln n$, $c_3 e^{\varepsilon} < c_1$.
\end{lemma}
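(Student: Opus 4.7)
The plan is to follow the template of Case 2 in the proof of Lemma~\ref{lem:Qopt_key}, adapted to the bivariate setting. Since $\tD_K^{(2)}$ is built from the unbiased estimator $\hat{A}_j(\hp,\hq)$ of $(e^\varepsilon q-p)^j$ from Lemma~\ref{lem:uumvue} together with $\hp-e^\varepsilon\hq$, it is an unbiased estimator of
\begin{eqnarray*}
D_K^{(2)}(p,q;x,y) \;\;=\;\; \frac{1}{2}\sum_{j=0}^K r_j W^{-j+1}(e^\varepsilon q-p)^j +\frac{p-e^\varepsilon q}{2}.
\end{eqnarray*}
Consequently the bias reduces to a pure polynomial approximation error.

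For the bias, decomposing $[p-e^\varepsilon q]^+=(p-e^\varepsilon q)/2+|p-e^\varepsilon q|/2$ yields
\begin{eqnarray*}
D_K^{(2)}(p,q;x,y) - [p-e^\varepsilon q]^+ \;\;=\;\; \frac{W}{2}\bigl(R_K(t)-|t|\bigr), \qquad t\;:=\;(e^\varepsilon q - p)/W,
\end{eqnarray*}
so I must verify $|t|\le 1$ in order to use the uniform bound $|R_K(t)-|t||\le M_3/K$ on $[-1,1]$. The hypothesis $(p,e^\varepsilon q)\in U(c_1,c_1)$ gives $(e^\varepsilon q-p)^2\le (2c_1\ln n/n)(\sqrt{p}+\sqrt{e^\varepsilon q})^2\le 4c_1\ln n(p+e^\varepsilon q)/n$, while the anchoring hypothesis $x+e^\varepsilon y\ge (p+e^\varepsilon q)/2$ yields $W^2 = 8c_1\ln n(x+e^\varepsilon y)/n\ge 4c_1\ln n(p+e^\varepsilon q)/n$, so $|t|\le 1$. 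Hence the bias is at most $WM_3/(2K)\lesssim K^{-1}\sqrt{c_1\ln n/n}(\sqrt{x}+\sqrt{e^\varepsilon y})$ after applying $\sqrt{x+e^\varepsilon y}\le\sqrt{x}+\sqrt{e^\varepsilon y}$.

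For the variance, the triangle inequality for standard deviations gives
\begin{eqnarray*}
\sqrt{\Var(\tD_K^{(2)})} \;\;\le\;\; \sum_{j=0}^K \frac{|r_j|}{2}W^{-j+1}\sqrt{\Var(\hat{A}_j)} + \frac{1}{2}\sqrt{\Var(\hp - e^\varepsilon\hq)}.
\end{eqnarray*}
Lemma~\ref{lem:uumvue} bounds $\sqrt{\Var(\hat{A}_j)}\le M^{j/2}$ with $M=2(e^\varepsilon q-p)^2\vee 8j(e^{2\varepsilon}q\vee p)/n$. The first summand of $M$ is $\lesssim c_1\ln n(p+e^\varepsilon q)/n$ by the bracket bound above, and the second is $\lesssim c_1\ln n(p+e^\varepsilon q)/n$ since $j\le K=c_3\ln n$ and $c_3 e^\varepsilon<c_1$. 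Combining with $p+e^\varepsilon q\le 2(x+e^\varepsilon y)$ produces $M\lesssim W^2$, whence $W^{-j+1}M^{j/2}\lesssim W\cdot c^{j}$ for a constant $c$ independent of $K$. The Cai--Low bound $|r_j|\le 2^{3K}$ then makes the geometric sum collapse to $\sqrt{\Var(\tD_K^{(2)})}\lesssim B^{K/2}W$, i.e.\ $\Var(\tD_K^{(2)})\lesssim B^K c_1\ln n(x+e^\varepsilon y)/n$ for a universal $B$. The leftover term $\sqrt{(p+e^{2\varepsilon}q)/n}$ from $(\hp-e^\varepsilon\hq)/2$ is of smaller order and is absorbed.

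The main obstacle is precisely the verification $M/W^2\lesssim 1$, since the geometric series argument fails otherwise. This simultaneously exploits all three hypotheses: the bracket $(p,e^\varepsilon q)\in U(c_1,c_1)$ controls $(e^\varepsilon q-p)^2$; the lower bound $p+e^\varepsilon q\ge c_1\ln n/(2n)$ together with $c_3e^\varepsilon<c_1$ controls the moment term $8j(e^{2\varepsilon}q\vee p)/n$ for all $j\le K$; and the anchor $x+e^\varepsilon y\ge(p+e^\varepsilon q)/2$ allows replacing the unknown $(p,q)$ by the data-dependent $(x,y)$ inside $W$ without losing a constant factor. Once $M/W^2\lesssim 1$ is established the rest mirrors the univariate analysis.
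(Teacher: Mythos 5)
Your proof is correct and follows essentially the same route as the paper: unbiasedness reduces the bias to a pure Chebyshev-type approximation error, which is controlled by verifying $|e^\varepsilon q - p|/W \le 1$ using the bracket $(p,e^\varepsilon q)\in U(c_1,c_1)$ and the anchoring $x+e^\varepsilon y\ge (p+e^\varepsilon q)/2$; and the variance is bounded via Lemma~\ref{lem:uumvue} and the Cai--Low coefficient bound after showing $M/W^2\lesssim 1$, which consumes the same three hypotheses. The only cosmetic difference is that you keep the $(\hp-e^\varepsilon\hq)/2$ term separate and absorb its small standard deviation at the end, whereas the paper folds it into the polynomial by setting $a_1 = r_1-1$; both are equally valid.
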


We have
\begin{eqnarray}
	&&\E\left[\cE_4^2|\hp_{i,1},\hq_{i,1}:1\leq i\leq S\right] \\
	\;\;\lesssim \;\; && \sum_{i=1}^S\frac{B^K c_1  \ln n}{n}( \hp_{i,1}+e^\varepsilon\hq_{i,1})+\left(\sum_{i=1}^S\frac{1}{K} \sqrt{\frac{c_1 \ln n}{n}}(\sqrt{ \hp_{i,1}}+\sqrt{e^\varepsilon\hq_{i,1}})\right)^2\;,
\end{eqnarray}
where $B$ is the larger constant defined in both Lemma~\ref{lem:biasvariance1} and Lemma~\ref{lem:biasvariance2}.

Taking expectation with respect to $\{\hp_{i,1},\hq_{i,1}:1\leq i\leq S\}$, we have
\begin{eqnarray}
	&&\E\left[\cE_4^2\right]\\
	\;\;\lesssim \;\; && \sum_{i=1}^{S} \frac{c_{1} \ln n}{n^{1-c_3\ln B}} \E\left(\hp_{i, 1}+e^\varepsilon\hq_{i, 1}\right)+\E\left(\sum_{i=1}^{S} \frac{\sqrt{c_{1}\left(\hp_{i, 1}+e^\varepsilon\hq_{i, 1}\right)}}{\sqrt{c_{3}^{2} n \ln n}}\right)^{2}\\
	\;\;\lesssim \;\; &&  \frac{c_{1} \ln n}{n^{1-c_3\ln B}}(e^\varepsilon+1) + \sum_{i=1}^{S} \E\frac{c_{1}\left(\hp_{i, 1}+e^\varepsilon\hq_{i, 1}\right)}{c_{3}^{2} n \ln n}+\\
	&&\sum_{1 \leq i, j \leq S, i \neq j} \sqrt{\frac{\E\left[c_{1}\left(\hp_{i, 1}+e^\varepsilon\hq_{i, 1}\right)\right]}{c_{3}^{2} n \ln n}} \sqrt{\frac{\E\left[c_{1}\left(\hp_{j, 1}+e^\varepsilon\hq_{j, 1}\right)\right]}{c_{3}^{2} n \ln n}}\\
	\;\;\lesssim \;\; &&\frac{c_{1} \ln n}{n^{1-c_3\ln B}}(e^\varepsilon+1) + \frac{c_1}{c_3^2n\ln n}(e^\varepsilon+1)+ \sum_{1 \leq i, j \leq S} \frac{c_{1} \left( p_{i}+e^\varepsilon q_{i}+ p_{j}+e^\varepsilon q_{j}\right)}{c_{3}^{2} n \ln n}\\
	\;\;\lesssim \;\; && \frac{c_{1} \ln n}{n^{1-c_3\ln B}}(e^\varepsilon+1) + \frac{c_1}{c_3^2n\ln n}(e^\varepsilon+1)\;.
	\end{eqnarray}
	
Combing everything together, we have
\begin{eqnarray}
	&&\E \big[ \, \big( \hd_{\varepsilon,K,c_1,c_2}(P_n \| Q_n) - d_\varepsilon(P\|Q) \big)^2 \, \big] \\
	\;\;\lesssim\;\; && \frac{e^{\varepsilon}}{n}+\frac{ (c_1c_3^4+c_1)\ln^5 n}{n^{1-c_3\ln B}}(e^\varepsilon+1)+\frac{c_1(e^\varepsilon+1)S}{c_3^2n \ln n} \vee \left(\frac{c_1 S }{c_3^2n\ln n}\right)^2 +\frac{S}{n^\beta}\;.
\end{eqnarray}
If $\ln n\lesssim \ln S$, as we assume  $\frac{c_2}{c_1}<0.373$, we can take $c_2$ small enough and $c_1,c_3$ large enough to guarantee that $\frac{S}{n^\beta}\lesssim \frac{S}{n\ln n}$, $\frac{\ln ^{5} n}{n^{1-c_3\ln B}} \lesssim \frac{S}{n \ln n}$. 
We have,
\begin{eqnarray}
	\E \big[ \, \big( \hd_{\varepsilon,K,c_1,c_2}(P_n \| Q_n) - d_\varepsilon(P\|Q) \big)^2 \, \big] \;\; \lesssim\;\; \frac{ e^\varepsilon S}{n\ln n}\;.
\end{eqnarray}


\subsubsection{Proof of Lemma~\ref{lem:bad_event}}
\label{sec:proof_bad_event}
The following lemma shows that non-smooth region $U(c_1,c_2)$ contains the region $U(p;c_1,c_2)$ defined previously, which will be later used to bound the probability of bad events.
\begin{lemma}
\label{lem:decompose region}
The two-dimensional set $U(c_1,c_1)$ defined in Eq.~\eqref{eq:Qopt_region1} satisfies
	\begin{eqnarray}
		\cup_{x = e^\varepsilon y, x,y\in [0,1]}U(  e^{\varepsilon}x;c_1,c_1)\times U(  e^{2\varepsilon} y;c_1,c_1) \subset U(c_1,c_1) \;.
	\end{eqnarray}
\end{lemma}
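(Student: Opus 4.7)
The plan is to first make the critical algebraic observation that the condition $x = e^\varepsilon y$ forces $e^\varepsilon x = e^{2\varepsilon} y$, so the two one-dimensional sets appearing in the Cartesian product are literally the same interval; call it $I$. A point in the product set has the form $(u,v) \in I \times I$, and showing $(u,v) \in U(c_1, c_1)$ amounts to verifying the single inequality $|u - v| \leq \sqrt{(2 c_1 \ln n)/n}\,(\sqrt{u} + \sqrt{v})$. Using the factorization $|u - v| = |\sqrt{u} - \sqrt{v}|(\sqrt{u} + \sqrt{v})$, this further reduces to bounding the diameter of $\sqrt{I}$: namely $|\sqrt{u} - \sqrt{v}| \leq \sqrt{(2 c_1 \ln n)/n}$.

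Next I would split on the regime of $I$ prescribed by the definition in Eq.~\eqref{eq:Qopt_region1}. Both factors switch regime at the same threshold, since $e^\varepsilon x \leq c_1 e^\varepsilon (\ln n)/n$ and $e^{2\varepsilon} y \leq c_1 e^\varepsilon (\ln n)/n$ both simplify to $x \leq c_1 (\ln n)/n$. In the small regime, $I = [0,\, 2 c_1 (\ln n)/n]$, so $\sqrt{I} \subset [0,\, \sqrt{2 c_1 (\ln n)/n}]$ and the diameter bound is immediate. In the large regime, $I = [x - w,\, x + w]$ with $w = \sqrt{c_1 x (\ln n)/n}$, and the diameter of $\sqrt{I}$ equals
\[
\sqrt{x+w} - \sqrt{x-w} \;=\; \frac{2 w}{\sqrt{x+w} + \sqrt{x-w}}.
\]
Requiring this to be at most $\sqrt{(2 c_1 \ln n)/n}$ rearranges to $\sqrt{x+w} + \sqrt{x-w} \geq \sqrt{2x}$, and squaring produces the equivalent condition $\sqrt{x^2 - w^2} \geq 0$, i.e., $x \geq w$. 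Since $w^2 = c_1 x (\ln n)/n$, this is precisely the large-regime hypothesis $x \geq c_1 (\ln n)/n$, so the bound holds.

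The argument is essentially a one-line computation once one notices that the two factors of the Cartesian product are the same interval. The only delicate point is that the algebraic condition $x \geq w$, which controls the diameter of $\sqrt{I}$ in the large regime, coincides exactly with the regime boundary; this dovetailing is by design, because the half-width of $I$ in the large regime is chosen so that $I$ just touches zero at the threshold $x = c_1 (\ln n)/n$, where the small- and large-regime descriptions of $I$ match up continuously.
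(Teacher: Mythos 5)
Your proof is correct, and it follows the same broad strategy as the paper's: split on which branch of the definition in Eq.~\eqref{eq:Qopt_region1} applies, then reduce membership in $U(c_1,c_1)$ to a bound of the form $|\sqrt{u}-\sqrt{v}|\le\sqrt{(2c_1\ln n)/n}$ via the factorization $|u-v|=|\sqrt{u}-\sqrt{v}|(\sqrt{u}+\sqrt{v})$. Where you differ is in the details, and your version is cleaner in two respects. First, you give a self-contained computation in the large regime: you write the diameter of $\sqrt{I}$ as $2w/(\sqrt{x+w}+\sqrt{x-w})$, rearrange the target inequality to $\sqrt{x+w}+\sqrt{x-w}\ge\sqrt{2x}$, and square to reduce to $\sqrt{x^2-w^2}\ge 0$, which is automatic on the large-regime range $x\ge c_1\ln n/n$ (equivalently $x\ge w$). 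The paper instead outsources this step to \cite[Lemma~3]{JYT18}. Second, you identify the correct case boundary. Since the switch in $U(p;c_1,c_1)$ occurs at $p=c_1 e^\varepsilon\ln n/n$, substituting $p=e^\varepsilon x$ gives the threshold $x\le c_1\ln n/n$ (and, using $x=e^\varepsilon y$, the same threshold for the second factor). The paper's proof states the case split as $x\le c_1 e^\varepsilon\ln n/n$, which for $\varepsilon>0$ does not match the actual regime boundary of $U(e^\varepsilon x;c_1,c_1)$: on the sliver $c_1\ln n/n< x\le c_1 e^\varepsilon\ln n/n$ the product set is $[x-w,x+w]^2$, not $[0,2c_1\ln n/n]^2$, so the paper's intermediate ``it suffices to show'' claim is misstated there (the lemma is still true, and your argument covers that range correctly). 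Your closing observation that the two regime descriptions of $I$ meet continuously at the threshold is also right, and it explains why the diameter bound is tight at exactly that point.
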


{\bf 1) Analysis of $\prob(E_1^c)$:}

It follows from Lemma~\ref{lem:decompose region} that
\begin{eqnarray*}
	\prob(E_1^c) &\;\; = \;\;& \prob\left(\bigcup_{i=1}^S  \Big\{e^\varepsilon q_i< p_i, e^\varepsilon \hq_{i,1}-\hp_{i,1}>\sqrt{\frac{(c_1+c_2)\ln n }{n}}(\sqrt{e^\varepsilon \hq_{i,1}}+\sqrt{\hp_{i,1}})  \Big\} \right)\\
	&\;\;\leq \;\;& S\max_{i\in [S]}\prob\left(e^\varepsilon q_i< p_i, e^\varepsilon \hq_{i,1}-\hp_{i,1}>\sqrt{\frac{(c_1+c_2)\ln n }{n}}(\sqrt{e^\varepsilon \hq_{i,1}}+\sqrt{\hp_{i,1}})  \right)\\
	&\;\;\leq \;\;& S\max_{i\in [S]}\prob\Big(e^\varepsilon q_i = p_i, e^\varepsilon \hq_{i,1}-\hp_{i,1}>\sqrt{\frac{(c_1+c_2)\ln n }{n}}(\sqrt{e^\varepsilon \hq_{i,1}}+\sqrt{\hp_{i,1}})  \Big)\\
	&\;\;\leq \;\;& S\max_{i\in [S]}\prob\Big(e^\varepsilon q_i = p_i, ( \hp_{i,1},e^{\varepsilon}\hq_{i,1})\notin U(c_1,c_2) \Big)\\
	&\;\;\leq \;\;& S\max_{i\in [S]}\prob\Big(e^{\varepsilon} q_i = p_i, (  \hp_{i,1}, e^{\varepsilon}\hq_{i,1})\notin U(e^{\varepsilon}p_i;\frac{c_1+c_2}{2},\frac{c_1+c_2}{2})\times\\
	& &	 U(e^{2\varepsilon}  q_i;\frac{c_1+c_2}{2},\frac{c_1+c_2}{2}) \Big)\\
	&\;\;\leq \;\;& S\max_{i\in [S]}\Big(1- \prob\big( \hp_{i,1}\in U(  e^{\varepsilon} p_i;\frac{c_1+c_2}{2},\frac{c_1+c_2}{2}) \big)\times \\
	 & &\prob\big( e^\varepsilon q_i = p_i, e^{\varepsilon} \hq_{i,1}\in U( e^{2\varepsilon} q_i;\frac{c_1+c_2}{2},\frac{c_1+c_2}{2})\big)\Big)\\
	 &\;\;\leq \;\;& S\left(1-\left(1-\frac{2}{n^{\frac{c_1+c_2}{6}}}\right)^2\right)
	 \;\;\leq \;\; \frac{4S}{n^{\frac{c_1+c_2}{6}}}\;,
\end{eqnarray*}
where we have applied Lemma~\ref{lem:pqtail} in the last inequality.

{\bf 2) Analysis of $\prob(E_2^c)$:}
Similarly, we have
\begin{eqnarray}
	\prob(E_2^c) \;\;\leq\;\; \frac{4S}{n^{\frac{c_1+c_2}{6}}}\;.
\end{eqnarray}

{\bf 3) Analysis of $\prob(E_3^c)$:}

\begin{eqnarray*}
	\prob(E_3^c) &\;\;= \;\;& \prob\left(\bigcup_{i=1}^S\left\{(p_i, e^{\varepsilon} q_i )\notin \left[0,\frac{2c_1\ln n}{n}\right]^2, \hp_{i,1}+e^\varepsilon \hq_{i,1}<\frac{c_1 \ln n}{n}\right\}\right)\\
	&\;\;\leq \;\;& \prob\left(\bigcup_{i=1}^S\left\{p_i+e^{\varepsilon}q_i > \frac{2c_1\ln n}{n}, \hp_{i,1}+e^\varepsilon \hq_{i,1}<\frac{c_1 \ln n}{n}\right\}\right)\\
	 &\;\;\leq \;\;&S\max_{i\in[S]} \prob\left(p_i+e^{\varepsilon}q_i > \frac{2c_1\ln n}{n}, \hp_{i,1}+e^{\varepsilon}\hq_{i,1}<\frac{c_1 \ln n}{n} \right)\\
	&\;\;\leq \;\;& \frac{S}{n^{c_1/4}}\;,
\end{eqnarray*}
where we have applied Lemma~\ref{lem:weighted_poisson_tail} in the last inequality.

{\bf 4) Analysis of $\prob(E_4^c)$:}

\begin{eqnarray*}
	\prob(E_4^c) &\;\;\leq \;\;& S\max_{i\in[S]}\prob\left((p_i,e^{\varepsilon}q_i)\notin U(c_1,c_1), (\hp_{i,1},e^{\varepsilon}\hq_{i,1} )\in U(c_1,c_2)\right)+\\
	&&S\max_{i\in[S]} \prob\left(\hp_{i,1}+ e^\varepsilon \hq_{i,1}> \frac{c_1 \ln n}{n},  p_i+e^\varepsilon q_i< \frac{c_1 \ln n}{2n}\right)+\\
	&&S\max_{i\in[S]} \prob\left(\hp_{i,1}+ e^\varepsilon \hq_{i,1}\geq \frac{c_1 \ln n}{n},  p_i+e^\varepsilon q_i\geq  2(\hp_{i,1}+e^\varepsilon \hq_{i,1})\right)\;.
\end{eqnarray*}

Using Lemma~\ref{lem:weighted_poisson_tail} again, we have
\begin{eqnarray*}
	S\max_{i\in[S]} \prob\left(  \hp_{i,1}+e^{\varepsilon}\hq_{i,1}> \frac{c_1 \ln n}{n},   p_i+e^{\varepsilon}q_i< \frac{c_1\ln n}{2n}\right)
	\;\;\leq\;\; \frac{S}{n^{c_1/6}}\;,
\end{eqnarray*}
and
\begin{eqnarray*}
	&&S\max_{i\in[S]} \prob\left( \hp_{i,1}+e^\varepsilon \hq_{i,1}\geq\frac{c_1 \ln n}{n},  p_i+e^\varepsilon q_i\geq  2(\hp_{i,1}+e^\varepsilon \hq_{i,1})\right)\\
	\;\;\leq\;\;&& S\max_{i\in[S]}\prob\left( p_i+e^\varepsilon q_i\geq \frac{2c_1 \ln n}{n}, p_i+ e^\varepsilon q_i\geq  2(\hp_{i,1}+e^\varepsilon \hq_{i,1})\right)\\
	\;\;\leq\;\;&& \frac{S}{n^{c_1/4}}\;.
\end{eqnarray*}

It suffices to show that for $p,q \in [0,1]$, there exists some constant $c>0$ such that
\begin{eqnarray}
	\left(\bigcup_{(p, e^{\varepsilon}q)\notin U(c_1,c_1) } U(e^{\varepsilon} p;c,c)\times U( e^{2\varepsilon}q;c,c) \right) \bigcap U(c_1,c_2) = \emptyset.\label{eq:decomppq}
\end{eqnarray}
Indeed, we have
\begin{eqnarray*}
	&&S\max_{i\in[S]}\prob\left((p_i,e^{\varepsilon}q_i )\notin U(c_1,c_1), (\hp_{i,1},e^{\varepsilon}\hq_{i,1})\in U(c_1,c_2)\right)\\
	\;\;\leq\;\;&& S\max_{i\in[S]}\prob\left(   (\hp_{i,1}, e^{\varepsilon}\hq_{i,1})\notin U(e^{\varepsilon} p_i;c,c)\times U(e^{2\varepsilon} q_i;c,c)     \right)\\
	\;\;\leq\;\;&& S\max_{i\in[S]}\Big(1-\prob\big( \hp_{i,1}\in U(e^{\varepsilon} p_i;c,c)\big) \prob\big( e^{\varepsilon} \hq_{i,1}\in U(e^{2\varepsilon} q_i;c,c)    \big)\Big)\\
	\;\;\leq \;\;&& S\left(1-\left(1-\frac{2}{n^{\frac{c}{3}}}\right)^2\right)
	 \;\;\leq \;\; \frac{4S}{n^{c/3}}\;,
\end{eqnarray*}
where the last inequality follows from Lemma~\ref{lem:pqtail}.

Now we work out a $c$ that satisfies \eqref{eq:decomppq}. We prove the case when $\sqrt{p}-\sqrt{e^{\varepsilon}q}\geq \sqrt{\frac{2c_1\ln n}{n}}$. The other case can be proved in a similar way. Assume $c<c_1$. In this case $p\geq \frac{2c_1\ln n}{n}$. We will show that for any point $(x,e^{\varepsilon}y)\in U(e^{\varepsilon} p;c,c)\times U( e^{2\varepsilon}q;c,c)$, we have $\sqrt{x}-\sqrt{e^{\varepsilon}y}\geq \sqrt{\frac{(c_1+c_2)\ln n}{n}}$.

If $q\leq \frac{ce^{-\varepsilon}\ln n}{n}$, for any $(x, e^{\varepsilon}y)\in U(e^{\varepsilon} p;c,c)\times U( e^{2\varepsilon} q;c,c)$, we have 
\begin{eqnarray*}
	\sqrt{x}-\sqrt{e^{\varepsilon}y}&\;\;\geq\;\;& \sqrt{p-\sqrt{\frac{cp\ln n}{n}}}-\sqrt{\frac{2c\ln n}{n}}\\
	&\;\;\geq\;\;& \sqrt{\frac{2c_1\ln n}{n}-\sqrt{2cc_1}\frac{\ln n}{n}}-\sqrt{\frac{2c\ln n}{n}}\\
	&\;\; = \;\;& \sqrt{\frac{\ln n}{n}} \left(\sqrt{2c_1-\sqrt{2cc_1}}-\sqrt{2c}\right)\;,
\end{eqnarray*}
where in the second step, we use the fact that $x-\sqrt{ax}$, $a>0$ is a monotonically increasing function when $x\geq a/4$ and the fact that $p\geq \frac{2c_1\ln n}{n}$. Let $c = \frac{(c_1-c_2)^2}{32c_1}$, we can verify that 
\begin{eqnarray}
	\sqrt{x}-\sqrt{e^{\varepsilon}y}&\;\;\geq\;\;& \sqrt{\frac{\ln n}{n}} \left(\sqrt{2c_1-\sqrt{2cc_1}}-\sqrt{2c}\right)
	\;\;\geq\;\; \sqrt{\frac{\ln n}{n}}\sqrt{c_1+c_2}\;.
\end{eqnarray}

If $q> \frac{ce^{-\varepsilon}\ln n}{n}$, for any $(x,e^{\varepsilon}y)\in U(e^{\varepsilon} p;c,c)\times U( e^{2\varepsilon}q;c,c)$, we have 

\begin{eqnarray*}
	\sqrt{x}-\sqrt{e^{\varepsilon}y}&\;\;\geq\;\;& \sqrt{p-\sqrt{\frac{cp\ln n}{n}}}-\sqrt{e^{\varepsilon}q+\sqrt{\frac{ce^{\varepsilon}q\ln n}{n}}}\\
	&\;\; = \;\;& \frac{p-\sqrt{\frac{c p \ln n}{n}}-e^{\varepsilon}q-\sqrt{\frac{c e^{\varepsilon}q \ln n}{n}}}{\sqrt{p-\sqrt{\frac{c p \ln n}{n}}}+\sqrt{e^{\varepsilon}q+\sqrt{\frac{c e^{\varepsilon}q \ln n}{n}}}}\\
	&\;\; = \;\;& \frac{(\sqrt{p}-\sqrt{e^{\varepsilon}q})(\sqrt{p}+\sqrt{e^{\varepsilon}q})-\sqrt{\frac{c \ln n}{n}}(\sqrt{e^{\varepsilon}q}+\sqrt{p})}{\sqrt{p-\sqrt{\frac{c p \ln n}{n}}}+\sqrt{e^{\varepsilon}q+\sqrt{\frac{c e^{\varepsilon}q \ln n}{n}}}}\\
	&\;\; \geq \;\;& (\sqrt{2c_1}-\sqrt{c})\sqrt{\frac{\ln n}{n}}\frac{\sqrt{p}+\sqrt{e^{\varepsilon}q}}{\sqrt{p-\sqrt{\frac{c p \ln n}{n}}}+\sqrt{e^{\varepsilon}q+\sqrt{\frac{c e^{\varepsilon}q \ln n}{n}}}}\;.
\end{eqnarray*}
Further, since $e^{\varepsilon}q>\frac{c\ln n}{n}$, 
\begin{eqnarray*}
	\frac{\sqrt{p}+\sqrt{e^{\varepsilon}q}}{\sqrt{p-\sqrt{\frac{c p \ln n}{n}}}+\sqrt{e^{\varepsilon}q+\sqrt{\frac{c e^{\varepsilon}q \ln n}{n}}}} &\;\;\geq\;\;& \frac{\sqrt{p}+\sqrt{e^{\varepsilon}q}}{\sqrt{p}+\sqrt{2 e^{\varepsilon}q}}\\
	&\;\;\geq\;\;& \frac{\sqrt{e^{\varepsilon}q}+\sqrt{\frac{2 c_{1} \ln n}{n}}+\sqrt{e^{\varepsilon}q}}{\sqrt{2 e^{\varepsilon}q}+\sqrt{e^{\varepsilon}q}+\sqrt{\frac{2 c_{1} \ln n}{n}}}\\
	&\;\;\geq\;\;& \frac{2}{\sqrt{2}+1}\;,
\end{eqnarray*}
where in the second inequality, we used the fact that $\frac{x+\sqrt{e^{\varepsilon}q}}{x+\sqrt{2e^{\varepsilon}q}}$ is a monotonically increasing function of $x$ when $x\geq 0$, and in the third inequality, we used the fact that $\frac{2 x+a}{(\sqrt{2}+1) x+a}$ is a monotonically decreasing function of $x$ when $a>0$, $x>0$. To guarantee that $\sqrt{x}-\sqrt{e^{\varepsilon}y}\geq \sqrt{\frac{(c_1+c_2)\ln n}{n}}$, we need 
\begin{eqnarray}
	\frac{2}{\sqrt{2}+1}\left(\sqrt{2 c_{1}}-\sqrt{c}\right) \;\;\geq\;\; \sqrt{c_{1}+c_{2}}\;,
\end{eqnarray}
which is equivalent to 
\begin{eqnarray}
	c \leq\left(\sqrt{2 c_{1}}-\frac{\sqrt{2}+1}{2} \sqrt{c_{1}+c_{2}}\right)^{2}\;,
\end{eqnarray}
with the constraint that $\frac{c_{2}}{c_{1}}<\frac{8}{(\sqrt{2}+1)^{2}}-1 \approx 0.373$.
\subsubsection{Proof of Lemma~\ref{lem:decompose region}}
	If $x\leq \frac{c_1 e^\varepsilon \ln n}{n}$ and thus $y \leq \frac{c_1  \ln n}{n}$, it suffices to show $[0, \frac{2c_1\ln n}{n}]^2 \subset U(c_1,c_1)$. For $(u, e^{\varepsilon} v)\in [0, \frac{2c_1\ln n}{n}]^2 $, we have
	\begin{eqnarray}
		| \sqrt{  u}-\sqrt{ e^{\varepsilon}v}| \;\;\leq\;\; \sqrt{\frac{2c_1 \ln n}{n}}\;.
	\end{eqnarray}
	
	For $x> \frac{c_1 e^\varepsilon \ln n}{n} $ and thus $y > \frac{c_1  \ln n}{n}$, it suffices to show
	$\left[x-\sqrt{\frac{c_{1} x\ln n}{n}}, x +\sqrt{\frac{c_{1} x\ln n}{n}}\right]^2\subset U(c_1,c_1)$. It is shown in \cite[Lemma~3]{JYT18} that for any $( u, e^{\varepsilon}v)\in \left[y-\sqrt{\frac{c_{1} y\ln n}{n}}, y +\sqrt{\frac{c_{1} y \ln n}{n}}\right]^2 $, we have
	\begin{eqnarray}
		| \sqrt{ u}-\sqrt{ e^{\varepsilon}v}| \;\;\leq\;\; \sqrt{\frac{2c_1 \ln n}{n}}\;.
	\end{eqnarray}

\subsubsection{Proof of Lemma~\ref{lem:biasvariance1}}
\label{sec:proof_biasvariance1}
	We first analyze the bias. Let $\Delta = \frac{c_1 \ln n}{n}$. As we have applied unbiased estimator $\tD_K^{(1)}(x,y)$ of $D_K^{(1)}(x,y)$, the bias is entirely due to the functional approximation. We show that for $(x,y)\in [0,1]^2$, $\left|u_K(x,y)v_K(x,y)-[x - y]^+\right|\lesssim \left(\frac{\sqrt{x}+\sqrt{y}}{K}+\frac{1}{K^2}\right)$. Indeed, we have
	\begin{eqnarray*}
		&&\left|u_K(x,y)v_K(x,y)-[x - y]^+\right| \\
		\;\;=\;\;&& \left|u_K(x,y)v_K(x,y)-u_K(x,y)[\sqrt{x}-\sqrt{y}]^++u_K(x,y)[\sqrt{x}-\sqrt{y}]^+-[x - y]^+\right|\\
		\;\;\leq \;\;&& |u_K(x,y)||v_K(x,y)-[\sqrt{x}-\sqrt{y}]^+|+[\sqrt{x}-\sqrt{y}]^+|u_K(x,y)-\sqrt{x}-\sqrt{y}|\\
		\;\;\leq \;\;&&  |u_K(x,y)-\sqrt{x}-\sqrt{y}||v_K(x,y)-[\sqrt{x}-\sqrt{y}]^+|+\nonumber\\
		&&|\sqrt{x}+\sqrt{y}||v_K(x,y)-[\sqrt{x}-\sqrt{y}]^+|+[\sqrt{x}-\sqrt{y}]^+|u_K(x,y)-\sqrt{x}-\sqrt{y}|\;.
	\end{eqnarray*}
	
	It follows from  Lemma~\ref{lem:poly_error} and Lemma~\ref{lem:smoothness_sqrt_xy} that 
	\begin{eqnarray}
		\Delta_K\left[\sqrt{x};[0,1]\right] \;\;\lesssim \;\;\frac{1}{K}\;,
	\end{eqnarray}
	which implies 
	\begin{eqnarray}
		|u_K(x,y)-\sqrt{x}-\sqrt{y}| \;\;\lesssim\;\; \frac{1}{K}\;.\label{eq:uapprox}
	\end{eqnarray}
	It follows from Lemma~\ref{lem:poly_error}, Lemma~\ref{lem:smoothness_sqrt_xy} and the fact that $[\sqrt{b}-\sqrt{a}]^+\leq [\sqrt{b}-\sqrt{c}]^++[\sqrt{c}-\sqrt{a}]^+$, we have
	\begin{eqnarray}
		|v_K(x,y)-[\sqrt{x}-\sqrt{y}]^+| \;\;\lesssim\;\; \frac{1}{K}\;.\label{eq:vapprox}
	\end{eqnarray}
	Together with Eq.~\eqref{eq:uapprox}, we have
	\begin{eqnarray}
		\left|u_K(x,y)v_K(x,y)-[x - y]^+\right| &\;\;\lesssim\;\;&\frac{1}{K^2}+\frac{\sqrt{x}+\sqrt{y}+[\sqrt{x}-\sqrt{y}]^+}{K}\\
		&\;\;\lesssim\;\;& \frac{1}{K^2}+\frac{\sqrt{x}+\sqrt{y}}{K}\;,
	\end{eqnarray}
	which implies there exists a constant $M>0$ such that
	\begin{eqnarray}
		\left|u_K(x,y)v_K(x,y)-u_K(0,0)v_K(0,0)-[x - y]^+\right| \;\;\leq\;\;  M\left(\frac{1}{K^2}+\frac{\sqrt{x}+\sqrt{y}}{K}\right)\;.
	\end{eqnarray}
	
	Let $x = p/(2\Delta)$ and $y =e^{\varepsilon}q/(2\Delta)$. We have
	\begin{eqnarray*}
	&&\sup_{\left( p,  e^{\varepsilon}q\right)\in  \left[0, 2\Delta\right]^2} \left|\E\tD_K^{(1)}(\hp,\hq)-[p-e^\varepsilon q ]^+\right|\\
		\;\;=\;\;&&\sup_{\left(p,  e^{\varepsilon}q\right)\in  \left[0, 2\Delta\right]^2} \left|D_K^{(1)}(p,q)-[p-e^\varepsilon q ]^+\right|\\
		\;\;=\;\;&& \sup_{\left(p,  e^{\varepsilon}q\right)\in  \left[0, 2\Delta\right]^2} 2\Delta \left|h_{2K}(\frac{p}{2\Delta}, \frac{e^{\varepsilon}q}{2\Delta})-[\frac{p}{2\Delta}-\frac{e^{\varepsilon}q}{2\Delta} ]^+\right|\\
		\;\;=\;\;&& \sup_{\left(x, y\right)\in  \left[0, 1\right]^2} 2\Delta  \left|h_{2K}(x, y)-[x - y]^+\right|\\
		\;\;=\;\;&& \sup_{\left(x, y\right)\in  \left[0, 1\right]^2} 2\Delta  \left|u_K(x,y)v_K(x,y)-u_K(0,0)v_K(0,0)-[x- y]^+\right|\\
		\;\;\leq \;\; && 2\Delta  M\left(\frac{1}{K^2}+\frac{\sqrt{x}+\sqrt{y}}{K}\right)\\
		\;\;\lesssim\;\;&& \frac{1}{K}\sqrt{\frac{c_1  \ln n}{n}}(\sqrt{p}+\sqrt{e^\varepsilon q})+\frac{1}{K^2}\frac{c_1 \ln n}{n}\;.
	\end{eqnarray*}
	We now analyze the variance. Express the polynomial $h_{2K}(x,y)$ explicitly as
	\begin{eqnarray}
		h_{2K}(x,y) &\;\; = \;\;& \sum_{0 \leq i \leq 2 K, 0 \leq j \leq 2 K, i+j \geq 1} h_{i j} x^{i} y^{j}\\
		&\;\; = \;\;& \sum_{0 \leq i \leq 2 K}\left(\sum_{0 \leq j \leq 2 K, i+j \geq 1} h_{i j} y^{j}\right) x^{i}\;.
	\end{eqnarray}
	For any fixed value of $y$, $h_{2K}(x^2,y^2)$ is a polynomial of $x$ with degree no more than $4K$ that is uniformly bounded by a universal constant on $[0,1]$. It follows from Lemma~\ref{lem:Qopt_magnitude} that for any fixed $y\in [-1,1]$,
	\begin{eqnarray}
		\left|\sum_{0 \leq j \leq 2 K} h_{i j} y^{2 j}\right| \;\;\leq\;\; M(\sqrt{2}+1)^{4 K}\;,
	\end{eqnarray}
	which together with Lemma~\ref{lem:Qopt_magnitude}, implies that
	\begin{eqnarray}
		\left|h_{i j}\right| \;\;\leq\;\; M(\sqrt{2}+1)^{8 K}\;.
	\end{eqnarray}
	Since $\tD_K^{(1)}$ is the unbiased estimator of $2\Delta  h_{2K}(\frac{p}{2\Delta}, \frac{e^{\varepsilon}q}{2\Delta})$, we know
	\begin{eqnarray}
		\tD_{K}^{(1)}(\hp, \hq)\;\;=\;\;  \sum_{0 \leq i, j \leq 2 K, i+j \geq 1} h_{i j}(2 \Delta)^{1-i-j} e^{j\varepsilon} g_{i, 0}(\hp) g_{j, 0}(\hq)\;,\label{eq:opt1_estimator}
	\end{eqnarray}
	where $g_{i,0}(\hp) = \prod_{k=0}^{i-1}(\hp-\frac{k}{m})$ introduced by Lemma~\ref{lem:Qopt_moment}.

	Denote $\|X\|_2 = \sqrt{\E(X-\E X)^2}$ for random variable $X$, and $M_1 = 2K \vee 2n\Delta$, $M_2 = 2K \vee 2ne^{-\varepsilon}\Delta$. Using triangle inequality of the norm $\|\cdot\|_2$ and Lemma~\ref{lem:Qopt_moment}, we know
	\begin{eqnarray*}
		\|\tD_{K}^{(1)}(\hp, \hq)\|_2 &\;\; \leq \;\;& \sum_{0 \leq i, j \leq 2 K, i+j \geq 1}\left|h_{i j}\right|(2 \Delta)^{1-i-j}e^{j\varepsilon}\left\|g_{i, 0}(\hp)\right\|_{2}\left\|g_{j, 0}(\hq)\right\|_{2}\\
		&\;\; \leq \;\;& \sum_{0 \leq i, j \leq 2 K, i+j \geq 1} M(\sqrt{2}+1)^{8 K}(2 \Delta )\left(\frac{1}{2 \Delta } \sqrt{\frac{2 M_{1} p}{n}}\right)^{i}\left(\frac{e^{\varepsilon}}{2 \Delta} \sqrt{\frac{2 M_{2} q}{n}}\right)^{j} \\
		&\;\; \leq \;\;& (\sqrt{2}+1)^{8 K} \frac{c_{1} \ln n}{n} \sum_{0 \leq i, j \leq 2 K, i+j \geq 1}\left(\sqrt{\frac{p}{2 \Delta}}\right)^{i}\left(\sqrt{\frac{e^{\varepsilon}q}{2 \Delta}}\right)^{j}\;.
	\end{eqnarray*}
	Since for any $x\in [0,1]$. $y\in [0,1]$, 
	\begin{eqnarray*}
		\left|\sum_{0 \leq i, j \leq 2 K, i+j \geq 1} x^{i} y^{j}\right| &\;\;\leq\;\;&\left|\sum_{j=1}^{2 K} y^{j}\right|+\left|\sum_{i=1}^{2 K} x^{i}\right|+x y\left|\sum_{0 \leq i, j \leq 2 K-1} x^{i} y^{j}\right|\\
		&\;\;\leq\;\;& y(2 K)+x(2 K)+x y(2 K)^{2}\\
		&\;\;\leq\;\;& 2(2 K)^{2}(x+y)\;,
	\end{eqnarray*}
	we know
	\begin{eqnarray}
		\|\tD_{K}^{(1)}(\hp, \hq)\|_2 &\;\; \lesssim \;\; & (\sqrt{2}+1)^{8 K} \frac{c_{1} K^2 \ln n}{n} \left(\sqrt{\frac{p}{2 \Delta}}+\sqrt{\frac{e^{\varepsilon}q}{2 \Delta}}\right)\\
		&\;\; \lesssim \;\; & \sqrt{B^K\frac{c_1 c_3^5\ln^5 n}{n}\left(p+e^{\varepsilon}q\right)}\;,
	\end{eqnarray}
	for some constant $B>0$.

\subsubsection{Proof of Lemma~\ref{lem:biasvariance2}}
\label{sec:proof_biasvariance2}
	We first analyze the bias. As we apply the unbiased estimator $\tD_K^{(2)}(\hp, \hq ; x, y)$ of $D_K^{(2)}(p,q;x,y)$. The bias is entirely due to the functional approximation error. Namely,	\begin{eqnarray}
	\E\left[\tD_K^{(2)}(\hp,\hq; x,y)|x,y\right] &\;\;= \;\;& D^{(2)}_K\left(p,q; x, y\right) \\ &\;\;=\;\;&\frac{1}{2}\sum_{j=0}^Kr_j W^{-j+1}(e^{\varepsilon}q-p)^j+\frac{p-e^\varepsilon q }{2}\;,
\end{eqnarray}
where $W = \sqrt{\frac{8c_1\ln n}{n}}\left(\sqrt{e^{\varepsilon}x+y}\right)$, and $r_j$ is defined as the coefficient of best polynomial approximation $R_K(t)$ of $|t|$ over $[-1,1]$ with order $K$: $R_K(t) = \sum_{j=0}^Kr_jt^j$.

	Since $(p, e^{\varepsilon}q) \in U(c_1,c_1)$, we know
	\begin{eqnarray*}
		|p-e^{\varepsilon}q| &\;\;\leq\;\;& \sqrt{\frac{2c_1\ln n}{n}}\left(\sqrt{p}+\sqrt{e^{\varepsilon}q}\right)\\
		&\;\;\leq\;\;& \sqrt{\frac{2c_1\ln n}{n}}\sqrt{2}\left(\sqrt{p+e^{\varepsilon}q}\right)\\
		&\;\;\leq\;\;& \sqrt{\frac{2c_1\ln n}{n}}\sqrt{2}\left(\sqrt{2(x+e^{\varepsilon}y)}\right)\\
		&\;\;\leq\;\;& W\,
	\end{eqnarray*}
	where we have used the fact that $\sqrt{p}+\sqrt{e^{\varepsilon}q}\leq \sqrt{2(p+e^{\varepsilon}q)}$ and the assumption that $p+e^{\varepsilon}q\leq 2(x+e^{\varepsilon}y)$.
	
	It is known in \cite[Chapter~9, Theorem~3.3]{devore1993constructive} that 
	\begin{eqnarray}
		\left|R_k(t)-|t| \right|\;\;\lesssim\;\; \frac{1}{K}\;,\label{eq:Rapprox}
	\end{eqnarray}
	 for all $t\in [-1,1]$.
	  
	We show $D_K^{(2)}(p,q;x,y)$ is best polynomial approximation of $[p-e^\varepsilon q]^+$. We have
	\begin{eqnarray*}
		\left|D^{(2)}_K\left(p,q; x, y\right) - [p-e^\varepsilon q]^+\right|&\;\;=\;\;& 
		\frac{1}{2}\left|\sum_{j=0}^Kr_j W^{-j+1}(e^{\varepsilon}q-p)^j - |p-e^{\varepsilon}q|\right|\\
		&\;\;=\;\;& \frac{ W}{2}\left|R_K\Big(\frac{e^{\varepsilon}q-p}{W}\Big) - \Big|\frac{e^{\varepsilon}q-p}{W}\Big|\right|\\
		&\;\;\lesssim\;\;& \frac{ W}{K}\\
		&\;\;\lesssim\;\;&  \frac{1}{K} \sqrt{\frac{c_1 \ln n}{n}}(\sqrt{x}+\sqrt{e^\varepsilon y})\;.
	\end{eqnarray*}
	
	Now we analyze the variance. 
	
	Let $a_j = r_j$ for $j=0,2,3,\ldots, K$ and $a_1 = r_1-1$ we can write $D^{(2)}_K\left(p,q; x, y\right)$ as
	\begin{eqnarray}
		D^{(2)}_K\left(p,q; x, y\right) \;\;=\;\; \frac{1}{2}\sum_{j=0}^Ka_j W^{-j+1}(e^{\varepsilon}q-p)^j\;.
	\end{eqnarray}
	It was shown in \cite[Lemma~2]{cai2011testing} that $r_j\leq 2^{3K}$, $0\leq j\leq K$. So we have $|a_j|\leq 2\cdot 2^{3K}$. Denote the unique uniformly minimum unbiased estimator (MVUE) of $(e^{\varepsilon}q-p)^j$ by $\hat{A}_j(\hp,\hq)$. Then the unbiased estimator $\tD_K^{(2)}$ of polynomial function $D_K^{(2)}$ is
	\begin{eqnarray}
		\tD_K^{(2)}(\hp,\hq; x,y)\;\;=\;\; \frac{1}{2}\sum_{j=0}^Ka_j W^{-j+1}\hat{A}_j(\hp,\hq)\;.\label{eq:opt2_estimator}
	\end{eqnarray}
	
	Denote $\|X\|_2 = \sqrt{\E(X-\E X)^2}$ for random variable $X$. It follows from triangle inequality of $\|\cdot\|_2$ and Lemma~\ref{lem:uumvue} that
	
	\begin{eqnarray}
		\|\tD_K^{(2)}(\hp,\hq; x,y)\|_2 &\;\; \leq \;\;& \frac{1}{2}\sum_{j=1}^K|a_j|W^{-j+1}\|\hat{A}_j\|_2\\
		&\;\; \leq \;\;&  2^{3K} W \sum_{j=1}^K\left(\frac{\sqrt{2}|e^{\varepsilon}q-p|}{W} \vee \frac{\sqrt{8 j(e^{2\varepsilon}q \vee p)}}{\sqrt{n} W}\right)^{j}\;,
	\end{eqnarray}
	where
	\begin{eqnarray*}
		&&\frac{\sqrt{2}|e^{\varepsilon}q-p|}{W} \vee \frac{\sqrt{8 j(e^{2\varepsilon}q \vee p)}}{\sqrt{n} W} \\
		\;\;\leq\;\; && \frac{\sqrt{2} \sqrt{\frac{2 c_{1} \ln n}{n}}(\sqrt{e^{\varepsilon}q}+\sqrt{p})}{\sqrt{\frac{8 c_{1} \ln n}{n}} \sqrt{e^{\varepsilon}y+x}} \vee \frac{\sqrt{8 K(e^{2\varepsilon}q+p)}}{\sqrt{n} \sqrt{\frac{8 c_{1} \ln n}{n}} \sqrt{e^{\varepsilon}y+x}}\\
		\;\;\leq\;\; && \frac{\sqrt{e^{\varepsilon}q}+\sqrt{p}}{\sqrt{2} \sqrt{e^{\varepsilon}y+x}} \vee \sqrt{\frac{c_{3}}{c_{1}}} \frac{\sqrt{e^{2\varepsilon}q+p}}{\sqrt{e^{\varepsilon}y+x}}\\
		\;\;\leq\;\; && \sqrt{2} \vee \sqrt{\frac{2 c_{3}e^{\varepsilon}}{c_{1}}}\\
		\;\;\leq\;\; && \sqrt 2\;.
	\end{eqnarray*}
	Consequently, 
	\begin{eqnarray*}
		\|\tD_K^{(2)}(\hp,\hq; x,y)\|_2 &\;\; \leq \;\;&   2^{3K} W K (\sqrt{2})^K\\
		&\;\; \leq \;\;&  2^{3K} \sqrt{\frac{8 c_{1} \ln n}{n}} \sqrt{x+e^{\varepsilon}y} K (\sqrt{2})^K\\
		&\;\; \leq \;\;& \sqrt{B^{K} \frac{(x+e^{\varepsilon}y) c_{1} \ln n}{n}}\;,
	\end{eqnarray*}
	where $B>0$ is some constant.

\section{Conclusion}

We investigate the fundamental trade-off between accuracy and sample size in estimating differential privacy guarantees from 
a black-box access to a purportedly private mechanism. 
Such a data-driven approach to verifying privacy guarantees will allow us to 
hold accountable the mechanisms in the wild that are not faithful to the claimed privacy guarantees, and 
help find and fix bugs in either the design or the implementation. 
To this end, we propose a polynomial approximation based approach to estimate the differential privacy guarantees. 
We show that in the high-dimensional regime, the proposed estimator achieves 
{\em sample size amplification} effect. 
Compared to the parametric rate achieved by the plug-in estimator, 
we achieve a factor of $\ln n$ gain in the sample size. 
A matching lower bound proves the minimax optimality of our approach. 
Here, we list important remaining challenges that are outside the scope of this paper. 

Since the introduction of differential privacy, there have been 
several innovative notions of privacy, 
such as pufferfish, concentrated DP, zCDP, and Renyi DP, 
proposed in  \cite{KM14,DR16,SWC17,KOV17}. 
Our estimator builds upon the fact that differential privacy guarantee is a divergence between two random outputs.
This is no longer true for the other notions of privacy, which makes it more challenging. 

Characterizing the fundamental tradeoff for continuous mechanisms is an important problem, as 
several popular mechanisms output continuous random variables, 
such as Laplacian and Gaussian mechanisms. 
One could use non-parametric estimators such as $k$-nearest neighbor methods and 
kernel methods, popular for estimating  information theoretic quantities and divergences \cite{BSY19,GOV18,GOV16,JGH18}. 
Further, when the output is a mixture of discrete and continuous variables, 
recent advances in estimating mutual information for mixed variables provide a 
guideline for such complex estimation process \cite{GKOV17}. 

There is a fundamental connection between differential privacy and 
ROC curves, as investigated in \cite{KOV17,KOV14,KOV15}. 
Binary hypothesis testing and ROC curves provide an important 
measure of performance in generative adversarial networks (GAN) \cite{SBL18}. 
This fundamental connection between differential privacy and GAN was first investigated in
\cite{LKFO18}, where it was used to provide an implicit bias for mitigating mode collapse, 
a fundamental challenge in training GANs. 
A DP estimator, like the one we proposed, provides valuable tools to measure performance of GANs. 
The main challenge is that GAN outputs are extremely high-dimensional (popular examples being $1,024\times 1,024,\times 3$ 
dimensional images). 
Non-parametric methods have exponential dependence in the dimension, rendering them useless. 
Even some recent DP approaches have output dimensions that are equally large \cite{HKC18}. 
We need fundamentally different approach to deal with such high dimensional continuous mechanisms. 

We considered a setting where we create synthetic databases $\cD$ and $\cD'$ and 
test the guarantees of a mechanism of interest. 
Instead, \cite{GM18} assumes we do not have such a control, and 
the privacy of the real databases  used in the testing needs to also be preserved. 
It is proven that one cannot test the privacy guarantee of a mechanism without 
revealing the contents of the test databases.  
Such fundamental limits suggest that 
the samples used in estimating DP needs to be destroyed after the estimation. 
However, the estimated $d_\varepsilon(P_{\cQ,\cD}\|P_{\cQ,\cD'})$ 
still leaks some information about the databases used, although limited.  
This is related to a challenging task of 
designing mechanisms with $(\varepsilon,\delta)$-DP guarantees when 
$(\varepsilon,\delta)$ also depends on the databases. 
Without answering any queries, just publishing the guarantee of the mechanism on a set of databases reveal something about the database. Detection and estimation under such complicated constraints is a challenging open question. 

%
%

%
%
\bibliographystyle{alpha}

\bibliography{ref.bib}

\end{document}